\newif\iflong
\newtheorem{theorem}{Theorem}
\newtheorem{corollary}[theorem]{Corollary}
\newtheorem{lemma}[theorem]{Lemma}
\theoremstyle{definition}
\newtheorem{definition}{Definition}
\theoremstyle{remark}	
\newcommand{\alternatingtodo}[2][]{%
    \iftoggle{lmargin}%
    {%
        \todo[#1]{#2}%
        \togglefalse{lmargin}%
    }{%
        {%
            \let\marginpar\marginnote%
            \reversemarginpar%
            \todo[#1]{#2}%
        }%
        \toggletrue{lmargin}%
    }%
    \ignorespaces%
}
\DeclareMathOperator*{\E}{\mathbb{E}}
\DeclareMathOperator{\argmax}{argmax} 
\DeclareMathOperator{\argmin}{argmin} 
\newcommand{\density}{\rho}
\newcommand{\Geom}{\textsf{Geom}}
\newcommand{\AQ}{\mathcal{A}_Q}
\title{Almost Tight Bounds for Differentially Private Densest Subgraph}
\date{}
\author{Michael Dinitz\thanks{Department of Computer Science, Johns Hopkins University, Baltimore, MD.  \texttt{mdinitz@cs.jhu.edu}.  Supported in part by NSF grants CCF-1909111 and CCF-2228995.  Work partially done while a Visiting Researcher at Google Research New York.} 
\and Satyen Kale\thanks{Google Research.  \texttt{satyenkale@google.com}.}
\and Silvio Lattanzi\thanks{Google Research.  \texttt{silviol@google.com}.}
\and Sergei Vassilvitskii\thanks{Google Research.  \texttt{sergeiv@google.com}.}}
\begin{document}

\begin{titlepage}
\maketitle
\begin{abstract}
    We study the Densest Subgraph (DSG) problem under the additional constraint of differential privacy.  DSG is a fundamental theoretical question which plays a central role in graph analytics, and so privacy is a natural requirement.  All known private algorithms for Densest Subgraph lose constant multiplicative factors, despite the existence of non-private exact algorithms.  We show that, perhaps surprisingly, this loss is not necessary: in both the classic differential privacy model and the LEDP model (local edge differential privacy, introduced recently by Dhulipala et al.~[FOCS 2022]), we give $(\epsilon, \delta)$-differentially private algorithms with no multiplicative loss whatsoever.  In other words, the loss is \emph{purely additive}.  Moreover, our additive losses match or improve the best-known previous additive loss (in any version of differential privacy) when $1/\delta$ is polynomial in $n$, and are almost tight: in the centralized setting, our additive loss is $O(\log n /\epsilon)$ while there is a known lower bound of $\Omega(\sqrt{\log n / \epsilon})$.  
    
    We also give a number of extensions.  First, we show how to extend our techniques to both the node-weighted and the directed versions of the problem.  Second, we give a separate algorithm with pure differential privacy (as opposed to approximate DP) but with worse approximation bounds.  And third, we give a new algorithm for privately computing the optimal density which implies a separation between the structural problem of privately computing the densest subgraph and the numeric problem of privately computing the density of the densest subgraph.  
\end{abstract}
\thispagestyle{empty}

\end{titlepage}

\section{Introduction}

Differential privacy has become the \emph{de facto} standard for private data analysis.  It is often applied to data which is inherently numeric, but there is growing interest in providing solutions on more structural inputs, particularly on graphs.  In this setting, much work has focused on private computation of numerical functions of a graph, e.g., the number of triangles.  But there is nothing in the definition of differential privacy which requires numeric output, and indeed, it is often possible to privately compute \emph{objects} which are approximately optimal solutions (sometimes represented explicitly, sometimes implicitly) to combinatorial optimization problems on graphs.  This dates back to~\cite{NRS07,GLMRT10}, and there is now a relatively developed literature on private graph algorithms\cite{gupta2010differentially, farhadi2022differentially, fan2022private, arora2019differentially, dhulipala2022differential, cohen2022near, eliavs2020differentially}.


In this work we study the \emph{densest subgraph problem} (DSG), one of the most fundamental algorithmic problems in data mining and graph analytics~\cite{chen2010dense, dourisboure2007extraction, kumar2006structure,angel2014dense, gibson2005discovering, saha2010dense, akiba2013fast}, in classic edge differential privacy and in the \emph{local edge differential privacy} (LEDP)~\cite{dhulipala2022differential} models.  In the DSG problem we are asked to find a subset $S$ of nodes which maximizes the number of edges in the subgraph induced by $S$ divided by the size of $S$, i.e., which maximizes half of the average induced degree. 
 This value is known as the \emph{density} of $S$, and is denoted by $\density(S) = |E(S)| / |S|$.  Densest subgraph has been studied extensively in the non-private setting, and is well-known to admit an exact polynomial-time algorithm based on flow~\cite{goldberg1984finding}, as well as a fast greedy $2$-approximation~\cite{Charikar00} and fast streaming, parallel and dynamic algorithms~\cite{angel2014dense, esfandiari2016brief, mcgregor2015densest, BKV12, ghaffari2019improved, epasto2015efficient, sawlani2020near, bhattacharya2015space}. 

Due to its importance in data mining and its multiple applications in diverse domains, including bioinformatics, network science, fraud detection, and social network analysis~\cite{cadena2018graph, hooi2016fraudar, cadena2016dense, rozenshtein2014event, gibson2005discovering}, differentially private versions of DSG have recently been developed~\cite{NV21, dhulipala2022differential, FHS22,DLL23}.  This line of work focuses on \emph{edge}-differential privacy, where the private information is the set of edges (not the set of nodes).  This setting is of practical importance for DSG as well as for other clustering problems.  In fact, in many practical scenarios one is interested in detecting something about the structure of the network without leveraging the private information contained in the edges. For example  in social network analysis one is interested in network statistics without revealing information about the connection between two specific nodes, or in messaging networks where one wants to analyze the networks without revealing the frequency of messages between two nodes. In many of these settings we might also want \emph{local} edge-differential privacy (LEDP), where rather than trusting a central curator, the individual nodes communicate with an untrusted curator in a way that protects their private edge information~\cite{dhulipala2022differential} (see Section~\ref{sec:prelims} for  definitions of these models).  For these reasons, even beyond DSG, several other problems have been studied in the local edge differential privacy setting or the edge differential privacy setting in the data mining and machine learning literature~\cite{bun2021differentially, cohen2022near, liu2022better, mohamed2022differentially, chen2023private}.

Letting $S^*$ denote the optimal solution, we say that an algorithm is an $(\alpha, \beta)$-approximation if it outputs a set $S$ with $\density(S) \geq \frac{1}{\alpha} \density(S^*) - \beta$.  There has been significant recent and concurrent work on private DSG with nontrivial approximation guarantees in most model variants: $\epsilon$-DP~\cite{dhulipala2022differential,FHS22}, $\epsilon$-LEDP~\cite{dhulipala2022differential,DLL23}, and $(\epsilon, \delta)$-DP~\cite{NV21}.  See Table~\ref{tab:results} for a summary of these results.  We note that the most interesting regime for $\delta$ is when it is inverse polynomial in $n$, so we will usually think of $\log(1/\delta)$ as being on the same order as $\log n$.

The most notable feature of all of the known results is that \emph{they all incur multiplicative loss}.  This is the case even though exact algorithms with no loss exist in the non-private setting~\cite{goldberg1984finding}.  The smallest multiplicative loss is $1+\eta$ (from~\cite{dhulipala2022differential}), which incurs an additional $O(\frac{1}{\epsilon} \log^4 n)$ additive loss and is not in the local model.  But while it is known that additive loss is necessary even if there is also multiplicative loss~\cite{FHS22}, there is no known lower bound which indicates that multiplicative loss is actually necessary.  This is the fundamental question that we attack: is it possible to design a private DSG algorithm with no multiplicative loss whatsoever, and still only small (polylogarithmic) additive loss?  If so, how small can we make the additive loss?
    
%


    \begin{table}[t!]
    \begin{center}
    {\renewcommand{\arraystretch}{2.0}
    \begin{tabular}{|l|c|c|}
        \hline
       Privacy Model  & Approximation Ratio & Reference / Notes  \\
       \hline
       \multirow{2}{*}{
       $\epsilon$-DP} & $\left(2, O\left(\frac{1}{\epsilon} \log^{3.5} n\right)\right)$ & High probability version in ~\cite{FHS22}\\
       \cline{2-3}
       & $\left(1+\eta, O\left(\frac{1}{\epsilon} \log^4 n\right)\right)$ & ~\cite{dhulipala2022differential}\\
       \hline
       \hline
       \multirow{3}{*}{$\epsilon$-LEDP} & $\left(4 + \eta, O\left(\frac{1}{\epsilon} \log^3 n\right)\right)$ & \cite{dhulipala2022differential}\\
       \cline{2-3}
       &$\left(2, O\left(\frac{1}{\epsilon} \log n\right) \right)$ & \cite{DLL23} (Concurrent with this paper) \\
       \cline{2-3}
       & $\left(2 + \eta, O\left(\frac{1}{\epsilon} \frac{1}{\eta} \log^2 n\right)\right)$ & This work (Corollary~\ref{cor:simple-main})\\ 
       \hline
       \hline
       \multirow{3}{*}{$(\epsilon, \delta)$-DP} & $\left(2, O\left(\frac{1}{\epsilon} \log \frac{1}{\delta} \log n\right)\right)$ & \cite{NV21}\\    
       \cline{2-3}
       & $\left(1, O\left(\frac{1}{\epsilon} \sqrt{\log n\log \frac{n}{\delta}}\right)\right)$ & This work (Theorem~\ref{thm:centralized-main})\\
       \cline{2-3}
       & $\left( \alpha, \Omega\left(\frac{1}{\alpha} \sqrt{\frac{1}{\epsilon} \log n} \right)\right)$ for any $\alpha \geq 1$ & Lower bound: \cite{FHS22} (high probability, $\delta \leq 1/n$) \\
       & $\left(1, \Omega\left(\sqrt{\frac{1}{\epsilon} \log n}\right) \right)$ & Lower bound: \cite{NV21} (expectation, $\delta \leq 1/n$)  \\
       \hline
       \hline
       $(\epsilon, \delta)$-LEDP & $\left(1, O\left(\frac{1}{\epsilon} \log n\sqrt{\log \frac{1}{\delta}}\right)\right)$ & This work (Corollary~\ref{cor:main}) \\
       \hline
       
    \end{tabular}
    \caption{A summary of results on differentially private DSG. Different privacy models represent different assumptions on the attacker. Recall that $\epsilon$-LEDP is a stronger privacy guarantee than $\epsilon$-DP, which is, in turn stronger than $(\epsilon, \delta)$-DP. On the other hand, $(\epsilon, \delta)$-LEDP is stronger than $(\epsilon, \delta)$-DP, but is incomparable with $\epsilon$-DP. }
    }
    \label{tab:results}
    \end{center}
    \end{table}

    \subsection{Our Results and Techniques} \label{sec:results}


    \subsubsection{Main result: $(\epsilon,\delta)$-(L)EDP}
   Our main set of results gives a surprising answer to the above question: in both the $\left(\epsilon, \delta\right)$-DP model and the $\left(\epsilon, \delta\right)$-LEDP model, it is possible to design a private algorithm with no multiplicative loss whatsoever!  Slightly more carefully, we design a polynomial time algorithm in the LEDP model and prove that it is $(\epsilon, \delta)$-differentially private and that it returns a $\left(1, O\left(\frac{1}{\epsilon} \log n\sqrt{\log(1/\delta)}\right)\right)$-approximation with high probability.  Note that this is the first purely additive approximation in \emph{any} version of differential privacy.  

    We can then improve this LEDP algorithm in the classic $(\epsilon,\delta)$-DP setting to achieve nearly optimal (and state of the art) results in term of additive and multiplicative bound at the same time. In particular, we give a $\left(1, O\left(\frac{1}{\epsilon} \sqrt{\log n \log(n/\delta)}\right)\right)$-approximation with high probability.  We emphasize that this is almost tight: in the regime where $\delta$ is an inverse polynomial we are only a $\sqrt{\log n / \epsilon}$ factor away from the known lower bounds.

    While our focus is on achieving no multiplicative loss, we note that our algorithms also have better additive loss than any previous result.  And compared to the independent concurrent result of~\cite{DLL23}, we achieve the same additive loss in the centralized setting and only $\sqrt{\log(1/\delta)}$ more additive loss in the local setting, while not losing anything multiplicatively (compared to their multiplicative factor $2$ loss).  
    
    Finally, we observe that our algorithms are simple to run and implement: all of the complexity lies in the analysis bounding the exact amount of noise we have to add, and in the analysis of the resulting techniques. 

\paragraph{Techniques.}
    At a very high level, our algorithm uses the classical multiplicative weights method to privately solve an LP formulation of DSG.  However, standard settings and analyses of differentially private multiplicative weights do not suit our needs, and standard methods of privately solving LPs using multiplicative weights are also too weak for us.  So we need to design a new private multiplicative weights algorithm, and show via a complex and delicate analysis that it can be used in a noisy version of the classical Plotkin, Shmoys, Tardos framework~\cite{PST95} to privately solve the LP (at least approximately).  While differentially private algorithms have been introduced in the past to solve linear programs~\cite{HRRU14}, we are able to leverage the problem structure to obtain substantially stronger bounds on the quality of the solution. 
    
    We first show (in Section~\ref{sec:DP-hedge}) that the classical Multiplicative Weights Update/Hedge algorithm can be extended to handle \emph{noisy} losses (adversarial losses plus random noise) while still having low regret compared to the adversarial losses.  This allows us to give a differentially private version of Hedge. To the best of our knowledge this is the first analysis of the multiplicative update algorithm with noisy updates for differential privacy and we expect it to be a useful technique of independent interest.\footnote{Note that our paper is not the first paper analyzing multiplicative updates in the field of differential privacy~\cite{HR10,gupta2010differentially, gupta2012iterative, hardt2012simple}, although previous work focused on answering linear queries and the noise is added to determine whether to ``update'' in current round. Importantly, updates in this setting are always exact and so their analysis is orthogonal to our result and cannot be used in our context.}

    Then in Section~\ref{sec:LEDP} we use Hedge with noisy losses in combination with an idea from a recent result of Chekuri, Quanrud, and Torres~\cite{CQT22} in the non-private setting to get our main result.  Their goals were quite different: they wanted to give provable bounds on a heuristic called Greedy++ due to Boob et al.~\cite{Boob20} which seems to work extremely well in practice despite a lack of theoretical guarantees, and they wanted to extend beyond densest subgraph to the more general setting of ``Densest Supermodular Subset'' (DSS).  To do this, they showed that Greedy++ (and an extension to DSS which they call ``Super-Greedy++'') can be interpreted as actually being a multiplicative-weights algorithm which approximately solves a highly non-obvious LP formulation of DSG in the Plotkin, Shmoys, Tardos (PST)~\cite{PST95} LP-solving framework, resulting in a $(1+\epsilon)$-multiplicative approximation.

    Taking this algorithm as our starting point, we want to add privacy.  It turns out that we can again show that a simple combinatorial algorithm based on counting (noisy) degrees can be ``interpreted'' as running multiplicative-weights, and in particular as running Hedge with noisy losses in the PST framework.  Interestingly, there is a well-known differentially private version of multiplicative weights~\cite{HR10}, so one might hope that we can simply plug it into the algorithm and analysis of~\cite{CQT22}.  Unfortunately this fails, since it turns out that the versions of privacy and noise that we want are essentially orthogonal to those achieved by~\cite{HR10}.  In particular, in order to preserve privacy we need to add (carefully chosen) noise to the updates in multiplicative weights, and be able to precisely figure out the true optimum of the Lagrangean relaxation without any noise. In~\cite{HR10}, on the other hand, the authors update the weights \emph{exactly} but choose the query (i.e., solve the Lagrangean relaxation) only approximately.  Moreover, the algorithm of~\cite{HR10} requires knowledge of the ``true'' answer, and so cannot be implemented in the local model.  Similarly, there are well-known ways of privately solving LPs~\cite{HRRU14}, but these algorithms were developed for generic LPs, rather the specific LP arising from DSG, they do not give strong enough bounds for our purposes (and are also not in the local model).
        
    The limitations of~\cite{HRRU14} show the difficulty of ``simply'' applying Hedge with noisy losses to the PST framework: for generic LPs, this approach does not work (see the discussion at the beginning of Section~\ref{sec:LEDP}). 
    Fortunately, we can leverage the special structure of the DSG LP.  Even with that structure, since we have noisy updates in multiplicative weights, we end up with a natural tension. The more iterations that multiplicative weights runs the more accurate its answer (it converges at a rate of roughly $1/\sqrt{T}$, where $T$ is the number of iterations).  But the more iterations it runs the more noise we need to add to maintain privacy due to sequential composition.  It is not clear which of these ``wins out''. For $\epsilon$-DP, unfortunately the noise wins: running for $T$ iterations requires adding noise on the order of $T$, which dominates the $1/\sqrt{T}$ convergence.  But by moving to $(\epsilon, \delta)$-DP, we are able to add significantly less noise, on the order of $\sqrt{T}$.  So now we need to add $\sqrt{T}$ noise but get convergence at a rate of $1/\sqrt{T}$.  These almost exactly balance out, but we can show through a sufficiently delicate analysis that the noise required grows \emph{slower} than the accuracy obtained, giving us the claimed bounds.  

    \subsubsection{Extensions and Other Results}
    \paragraph{Weights.} In Section~\ref{sec:weighted} we show that an extension of our techniques allows us to give similar bounds in the presence of node weights, i.e., where there is a weight $c_v$ for every node $v$ and our goal is to find the set $S \subseteq V$ which maximizes $|E(S)| / \sum_{v \in V} c_v$.  However, we must assume that $c_v \geq 1$ for all $v \in V$.  We note that this is a relatively standard assumption in the literature, see, e.g., \cite{SW20}\footnote{We remark that it is obviously impossible to handle edge weights under edge differential privacy, because with arbitrary edge weights two neighboring databases could have arbitrarily different optimal densities.}.  Unfortunately, in the LEDP setting, we lose both an arbitrarily small multiplicative factor as well as an additional additive loss.  This arises from the fact that in the weighted setting, we have to \emph{actually run} multiplicative weights, unlike in the unweighted setting where we could show that a combinatorial algorithm can be ``interpreted'' as running multiplicative weights.  Doing this requires binary search over an unknown parameter, causing the extra loss. In the centralized setting, on the other hand, we do not suffer any multiplicative error and get the same additive error as in the centralized unweighted setting.

    \paragraph{Directed.} In Section~\ref{sec:directed} we show that a further extension of our main techniques allows us to design an $(\epsilon, \delta)$-LEDP algorithm for \emph{directed} densest subgraph~\cite{KV99,Charikar00,SW20}.  It was recently shown that in the non-private setting there is a reduction from the directed version to the undirected node-weighted version~\cite{SW20}, so we can use our extension to vertex-weighted graphs.  Unfortunately, this reduction requires node weights that are less than $1$, violating our assumption.  So we only get a weaker bound, with additional additive loss that depends on how ``balanced'' the optimal solution is (see Theorem~\ref{thm:directed-utility}).

    \paragraph{$\epsilon$-LEDP.} As a secondary result, in Section~\ref{sec:pure} we also give an algorithm with improved bounds in the $\epsilon$-LEDP model. We show a very simple algorithm which is $\epsilon$-LEDP and is a $(2+\eta, O(\frac{1}{\epsilon \eta} \log^2 n))$-approximation, and which moreover can be parallelized to run in $O(\frac{1}{\eta} \log n)$ rounds.  This is an improvement to the $\epsilon$-LEDP algorithm of~\cite{dhulipala2022differential}, and as a side benefit is significantly simpler than the previous $\epsilon$-DP algorithms. Its bounds are dominated by the concurrent work of~\cite{DLL23}, but on the other hand our algorithm runs in $O(\frac{1}{\eta} \log n)$ rounds, while~\cite{DLL23} requires $O(n)$ rounds.  

    Our main technical idea is to start with Charikar's sequential 2-approximation~\cite{Charikar00}, as was done in~\cite{NV21} and~\cite{FHS22}.  Both of the previous papers show that through very clever analysis, it is possible to prove privacy guarantees much stronger than would be obtained by simply using sequential composition for the $n$ iterations of Charikar's algorithm.  Instead of carefully reasoning about composition of noise as in those papers, we take a different and far simpler approach: we use the parallel version of Charikar's algorithm from~\cite{BKV12} to get an algorithm that only requires $O(\log n)$ rounds, and then use parallel composition~\cite{DR14}.  Since there is a small number of rounds we do not need to add much noise.  A few more relatively straightforward modifications allow us to do this in the $\epsilon$-LEDP model.

    \paragraph{Numeric approximation.} A natural question in many settings is whether it is possible to privately output a \emph{structure} with the same quality guarantees that we would get if we only cared about outputting the \emph{value} of the structure.  For example, it was recently shown that while the \emph{value} of the min $s-t$ cut can be computed privately with small noise (constant in expectation, logarithmic with high probability), actually outputting the cut itself requires a \emph{linear} additive loss~\cite{DMN23}.  One interpretation of our main result is that DSG does not exhibit this phenomenon: outputting the structure requires loss that is ``close'' to the noise required to output the value.

    But how close is close?  That is, how much loss is necessary when outputting the density of the densest subgraph, rather than the node set?  It is not hard to see that the density has global sensitivity of at most $1$, and so the Laplacian or Gaussian mechanism can be used to output a value with additive loss that is at most $1/\epsilon$ in expectation and at most $O(\log n /\epsilon)$ with high probability (for the Gaussian mechanism, assuming that $\delta$ is at most inverse polynomial in $n$).

    We show that it is actually possible to do better.  Intuitively, if the density is small then we are free to give an inaccurate answer, while if it is large then this implies (by the definition of density) that the optimal subgraph actually has many nodes and edges, and hence the sensitivity of the density is significantly less than $1$.  We formalize this intuition, showing that a simple variant of the propose-test-release framework~\cite{DL09} can be used to give algorithms that lose only $\sqrt{1/\epsilon}$ in expectation or $\sqrt{\log n /\epsilon}$ with high probability.  Note that the lower bound of~\cite{NV21} of $\Omega(\sqrt{\log n /\epsilon})$ for computing the densest subgraph actually holds even in expectation, and hence our upper bound of $O(\sqrt{1/\epsilon})$ provides a separation between the value and the structure.

    \subsubsection{Followup and Concurrent Work} Since the initial posting of this paper, there have been two pieces of concurrent or followup work, both of which have focused on the related \emph{$k$-core problem} but have also given results on DSG.  Due to their focus on $k$-core, their techniques are quite different than ours, and in particular cannot be used to obtain a private algorithm for DSG with purely additive loss.
    
    First, as discussed, \cite{DLL23} gave a $(2, O(\log n / \epsilon))$-approximation in the $\epsilon$-LEDP model.  This improves on the accuracy of our $\epsilon$-LEDP algorithm, but compared to to our main $(\epsilon, \delta)$-LEDP algorithm has a multiplicative $2$ loss rather than being purely additive (and has essentially the same additive loss).  Moreover, their algorithm takes a linear number of rounds, while our $\epsilon$-LEDP algorithm is arguably simpler and takes only $O(\log n)$ rounds.
    
    Second, \cite{HSZ24} gave similar but slightly weaker bounds.  They first obtain a $(2, O(\log^2 n))$-approximation in the $\epsilon$-LEDP model. This does not match~\cite{DLL23} or our main $(\epsilon, \delta)$-(LE)DP result), but it slightly beats our $2+\eta$ multiplicative loss in the $\epsilon$-LEDP model.  However, it requires a linear number of rounds.  They also give a $(4+\eta, O(\log n \log \log n))$-approximation in $O(\log^2 n)$ rounds.  This is a factor two multiplicatively worse than our $\epsilon$-LEDP algorithm while also being slower, but improves the additive loss to $O(\log n \log\log n)$ rather than our $O(\log^2 n)$.  



\subsection{Preliminaries and Notation} \label{sec:prelims}

We use $\log(\cdot)$ to denote the natural logarithm. Binary logarithm is specified as $\log_2(\cdot)$. For a natural number $n$, we let $[n] := \{1, 2, \ldots, n\}$.

Given a graph $G = (V, E)$, and a subset $S \subseteq V$, let $\density(S) = |E(S)| / |S|$, where $E(S) \subseteq E$ is the set of edges induced by $S$ (i.e., with both endpoints in $S$). Let $\density(G) = \max_{S \subseteq V} \density(S)$.  

\begin{definition}
The \emph{Densest Subgraph Problem} (DSG) is the optimization problem in which we attempt to find an $S$ maximizing $\density(S)$, i.e., find an $S$ with $\density(S) = \density(G)$.  In the presence of node weights $\{c_v\}_{v \in V}$, we redefine the density to be $\density(S) = |E(S)| / \sum_{v \in S} c_v$.  If the edges are directed, then we let $E(S,T) = \{(u,v) \in E : u \in S, v \in T\}$ and define the density with reference to two sets: $\density(S,T) = |E(S,T)| /\sqrt{|S| \cdot |T|}$.  
\end{definition}

For a vertex $v \in V$ and a set $S \subseteq V$, let $d_S(v)$ denote the degree of $v$ into $S$, i.e., the number of edges incident on $v$ with other endpoint in $S$.  Let $d_{avg}(S) = \sum_{v \in S} d_S(v) / |S|$ denote the average degree in the subgraph induced by $S$.  Note that $d_{avg}(S) = 2 \cdot \density(S)$.

Given a randomized algorithm $\mathcal{A}$ for the DSG problem, differential privacy captures the impact of small changes in the input (captured by the neighboring relation) on the distribution of outputs. See the excellent book by Dwork and Roth ~\cite{DR14} for a thorough introduction to the topic. 

There are two natural notions of differential privacy on graphs: node differential privacy, where two graphs are considered neighboring if they differ in one node, and edge differential privacy, where two graphs are considered neighbors if they differ in one edge.  We will be concerned with edge differential privacy in this paper. 

\begin{definition}[Edge-Neighboring \cite{NRS07}] \label{def:edge-DP}
    Graphs $G_1 = (V_1, E_1)$ and $G_2 = (V_2, E_2)$ are \emph{edge-neighboring} if they differ in one edge, i.e., if $V_1 = V_2$ and the size of the symmetric difference of $E_1$ and $E_2$ is $1$.  
\end{definition}

\begin{definition}[Edge Differential Privacy \cite{NRS07}]
Algorithm $\mathcal A(G)$ that takes as input a graph $G$ and outputs an object in $R(\mathcal A)$ is \emph{$(\epsilon, \delta)$-edge differentially private} ($(\epsilon, \delta)$-edge DP) if for all $S \subseteq R(\mathcal A)$ and all edge-neighboring graphs $G$ and $G'$, 
\[
\Pr[\mathcal A(G) \in S] \leq e^{\epsilon} \Pr[\mathcal A(G') \in S] + \delta
\]
\end{definition}

\noindent If an algorithm is $(\epsilon, 0)$-edge DP then we say that it is $\epsilon$-edge DP (or $\epsilon$-DP). 

In the case of $(\epsilon, \delta)$-DP, most algorithms give a trade-off between the $\epsilon$ and $\delta$ they achieve. To make use of this trade-off in our analysis, we will use the concept of zCDP:

\begin{definition}[Zero-Concentrated Differential Privacy(zCDP) \cite{bun2016concentrated}]
Algorithm $\mathcal A(G)$ that takes as input a graph $G$ and outputs something in $R(\mathcal A)$ is \emph{$\rho$-zCDP} if for all $\alpha\in (0,1)$ and all edge-neighboring graphs $G$ and $G'$, 
\[
D_\alpha( \mathcal A(G) || \mathcal A(G') ) \leq \alpha\rho,
\]
\noindent where $D_\alpha$ is the R\'enyi divergence of order $\alpha$.
\end{definition}

Importantly, zCDP and $(\epsilon, \delta)$-edge DP are connected and a result in one setting can be translated in the other. In essentially all of our analysis we will use zCDP, and then we will translate into $(\epsilon, \delta)$-edge DP using standard results from~\cite{bun2016concentrated,Mironov17}.


\paragraph{Local DP.}
Finally, all of the definitions above assume the existence of a trusted central curator who can execute the algorithm $\mathcal{A}$. 
In \emph{local} differential privacy, instead of assuming a trusted curator we assume that each agent controls its own data and there is an \emph{untrusted} curator who does not initially know the database, but can communicate with the agents.  We require that the entire \emph{transcript} of communication and outputs satisfies differential privacy, so even the curator cannot learn private information.  This model was originally introduced by~\cite{KLNRS11} in the context of learning, and has since become an important model of privacy (including in practice; see, for example, the discussion in~\cite{LDP-practice}).  In the context of graphs and edge-DP, this corresponds to the \emph{local} edge-differential privacy model from~\cite{dhulipala2022differential} (suitably modified to handle $(\epsilon, \delta)$-DP rather than just pure $\epsilon$-DP).  In this model there is a curator which initially knows only the vertex set (and thus the number of vertices $n$), and each node initially knows its incident edges, but nothing else (except possibly $n$, which can be sent by the curator initially).  During each round, the curator first queries a set of nodes for information.  Individual nodes, which have access only to their own (private) adjacency lists (and whatever information was sent by the curator), then send information back to the curator.  However, since we require that the entire transcript satisfy $(\epsilon, \delta)$-DP, without loss of generality all transmissions are really \emph{broadcasts} rather than point-to-point. 
The actual formalization of this model is somewhat involved, so we defer it to Appendix~\ref{app:DP} (or to~\cite{dhulipala2022differential}).

We will use a number of standard DP mechanisms (the Laplacian, geometric, and Gaussian mechanisms in particular) and composition theorems (parallel composition, adaptive sequential composition, and advanced composition), as well as the standard post-processing theorem.  We give all of these formally in Appendix~\ref{app:DP}, but they are all well-known and can be found in standard textbooks~\cite{DR14,Vad17}.  It was observed by~\cite{dhulipala2022differential} that they all hold in the LEDP model as well (see Appendix~\ref{app:DP} for more discussion).  Since we use the Gaussian mechanism frequently, we will use $N(\mu, \tau^2)$ to denote the normal (Gaussian) distribution with mean $\mu$ and variance $\tau^2$ (so with standard deviation $\tau$).

\section{Hedge with Noisy Losses} \label{sec:DP-hedge}

In this section we analyze the classic Hedge/Multiplicative Weights Update (MWU) algorithm for the fundamental online learning setting of prediction with expert advice. The main twist here is that the expert losses can be noisy. This will be useful in developing our  algorithms for DSG which add noise for preserving privacy.


The setting of online learning with expert advice is as follows. In each of $T$ rounds, indexed by $t = 1, 2, \ldots, T$, we have access to $n$ ``experts'', and are required to choose a distribution over the experts, after which losses of all experts are revealed. The losses are given by a random vector $\hat{m}^{(t)} \in \mathbb{R}^n$. Let $m^{(t)} = \mathbb{E}[\hat{m}^{(t)}\ |\ \hat{m}^{(1)}, \hat{m}^{(2)}, \ldots, \hat{m}^{(t-1)}]$. We assume that $m^{(t)} \in [-1, 1]^n$, and that the distribution $\hat{m}^{(t)} - m^{(t)}$, conditioned on all $\hat{m}^{(s)}$ for $s < t$, is sub-Gaussian with variance proxy $\nu^2 \leq 1$. In each round $t$, we are required to output a distribution on the experts $p^{(t)}$, and suffer the expected loss $\langle \hat{m}^{(t)}, p^{(t)}\rangle$. The regret of an online algorithm for this problem is the difference between the expected cumulative loss of the algorithm and the (expected) cumulative loss of the best fixed expert in hindsight. The goal is to develop an online algorithm with regret growing sublinearly in $T$.

Consider the classic Hedge algorithm~\cite{FreundS97} given as Algorithm~\ref{alg:DP-Hedge}. While it is usually used in the setting where losses are adversarial but bounded, we can provide a new regret bound for it in the above setting with unbounded but noisy losses. See Appendix~\ref{app:DP-hedge} for the proof of the following theorem.  The proof is essentially along the lines of the standard proof; the main change needed is the use log-sum-of-weights as a potential function and an application of Jensen's inequality.    

\begin{algorithm}
    \caption{Hedge$(T)$}
    \label{alg:DP-Hedge}
    \begin{algorithmic}[1]
        \STATE Set $\eta = \sqrt{\frac{\log n}{T}}$.
        \STATE Set $w^{(1)}_i = 1$ for all $i \in [n]$.
        \FOR{$t = 1$ to $T$}
            \STATE Output the distribution $p^{(t)}$ such that $p^{(t)}_i \propto w^{(t)}_i$ for all $i \in [n]$.

            \STATE Observe $\hat{m}^{(t)}$.


            \STATE For all $i \in [n]$: set
            \[w^{(t+1)}_i = w^{(t)}_i \cdot \exp(-\eta \hat{m}^{(t)}_i).\]
        \ENDFOR
    \end{algorithmic}
\end{algorithm} 

\begin{theorem} \label{thm:MWU-main}
Suppose $\nu \leq 1$ and $T \geq \log n$. The Hedge algorithm guarantees that after $T$ rounds, for any expert $i \in [n]$, we have
\begin{align*}
    \E\left[\sum_{t=1}^T \langle m^{(t)}, p^{(t)} \rangle\right] \leq \E\left[\sum_{t=1}^T m_i^{(t)}\right] + 4\sqrt{T\log(n) }.
\end{align*}
\end{theorem}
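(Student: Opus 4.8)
The plan is to follow the standard Hedge potential-function argument, but with the log-sum-of-weights potential and Jensen's inequality to handle the noise. First I would set $\Phi^{(t)} = \log\left(\sum_{i=1}^n w^{(t)}_i\right)$ and track how it changes from round to round. Using the update rule $w^{(t+1)}_i = w^{(t)}_i \exp(-\eta \hat m^{(t)}_i)$ and the fact that $p^{(t)}_i = w^{(t)}_i / \sum_j w^{(t)}_j$, we get
\[
\Phi^{(t+1)} - \Phi^{(t)} = \log\left(\sum_{i=1}^n p^{(t)}_i \exp(-\eta \hat m^{(t)}_i)\right).
\]
Unlike the classical analysis, $\hat m^{(t)}_i$ is unbounded, so I cannot directly apply the inequality $e^{-\eta x} \le 1 - \eta x + \eta^2 x^2$ on a bounded range. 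Instead I would use convexity: since $\exp$ is convex, I would bound things via the sub-Gaussian moment generating function. Writing $\hat m^{(t)}_i = m^{(t)}_i + \xi^{(t)}_i$ where $\xi^{(t)}_i$ is the (conditionally) sub-Gaussian noise with variance proxy $\nu^2$, I would split $\exp(-\eta \hat m^{(t)}_i) = \exp(-\eta m^{(t)}_i)\exp(-\eta \xi^{(t)}_i)$.

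Next, I would take conditional expectations. Conditioned on the history through round $t-1$ (which determines $p^{(t)}$ and $m^{(t)}$), I want to bound $\E[\Phi^{(t+1)} - \Phi^{(t)}]$. By Jensen's inequality (concavity of $\log$),
\[
\E\left[\log\left(\sum_i p^{(t)}_i \exp(-\eta \hat m^{(t)}_i)\right)\right] \le \log\left(\sum_i p^{(t)}_i \exp(-\eta m^{(t)}_i)\, \E[\exp(-\eta \xi^{(t)}_i)]\right) \le \log\left(\sum_i p^{(t)}_i \exp(-\eta m^{(t)}_i) e^{\eta^2 \nu^2/2}\right),
\]
using the sub-Gaussian MGF bound $\E[\exp(-\eta \xi^{(t)}_i)] \le \exp(\eta^2\nu^2/2)$. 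Then on the bounded quantity $m^{(t)}_i \in [-1,1]$ I can use $e^{-\eta x} \le 1 - \eta x + \eta^2 x^2 \le 1 - \eta x + \eta^2$ for $|\eta x|$ small (valid since $\eta = \sqrt{\log n / T} \le 1$), giving $\sum_i p^{(t)}_i e^{-\eta m^{(t)}_i} \le 1 - \eta \langle m^{(t)}, p^{(t)}\rangle + \eta^2$, and finally $\log(1+z) \le z$ yields
\[
\E[\Phi^{(t+1)} - \Phi^{(t)} \mid \text{history}] \le -\eta \langle m^{(t)}, p^{(t)}\rangle + \eta^2 + \tfrac{1}{2}\eta^2 \nu^2 \le -\eta \langle m^{(t)}, p^{(t)}\rangle + \tfrac{3}{2}\eta^2.
\]

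Summing over $t = 1, \ldots, T$ and taking total expectation, the potential telescopes: $\E[\Phi^{(T+1)}] - \Phi^{(1)} \le -\eta\, \E[\sum_t \langle m^{(t)}, p^{(t)}\rangle] + \tfrac{3}{2}\eta^2 T$. For the lower bound on $\Phi^{(T+1)}$, I would use $\Phi^{(T+1)} = \log(\sum_i w^{(T+1)}_i) \ge \log(w^{(T+1)}_i) = -\eta \sum_t \hat m^{(t)}_i$ for any fixed expert $i$; taking expectations gives $\E[\Phi^{(T+1)}] \ge -\eta \sum_t m^{(t)}_i$ (using the tower property, $\E[\hat m^{(t)}_i] = \E[m^{(t)}_i]$). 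Also $\Phi^{(1)} = \log n$. Combining and dividing by $\eta$:
\[
\E\left[\sum_{t=1}^T \langle m^{(t)}, p^{(t)}\rangle\right] \le \E\left[\sum_{t=1}^T m^{(t)}_i\right] + \frac{\log n}{\eta} + \tfrac{3}{2}\eta T = \E\left[\sum_{t=1}^T m^{(t)}_i\right] + \sqrt{T\log n} + \tfrac{3}{2}\sqrt{T \log n},
\]
plugging in $\eta = \sqrt{\log n / T}$, which is at most $\E[\sum_t m^{(t)}_i] + 4\sqrt{T \log n}$ as claimed (the constant $4$ comfortably absorbing the $5/2$).

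The main obstacle I anticipate is being careful about the conditioning and the interchange of expectation with the logarithm — specifically, ensuring that $p^{(t)}$ and $m^{(t)}$ are measurable with respect to the history $\hat m^{(1)}, \ldots, \hat m^{(t-1)}$ so that they act as constants inside the conditional expectation, and that the sub-Gaussianity of $\xi^{(t)}_i$ is also conditional on exactly this history. The other delicate point is the direction of Jensen: $\log$ is concave, so $\E[\log(\cdot)] \le \log(\E[\cdot])$, which is the direction we need; and we must push the expectation inside the sum over $i$ before applying the scalar MGF bound coordinatewise, which is fine by linearity. Everything else is the routine $e^x$/$\log(1+x)$ estimates on a bounded range.
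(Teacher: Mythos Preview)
Your proposal is correct and follows essentially the same route as the paper's proof: the log-sum-of-weights potential $\Phi^{(t)}$, Jensen's inequality to push the conditional expectation inside the logarithm, the (sub-)Gaussian MGF bound $\E[\exp(-\eta\xi)] \le e^{\eta^2\nu^2/2}$, and then the standard $e^x \le 1+x+x^2$ and $\log(1+z)\le z$ estimates to telescope. The only cosmetic difference is that the paper applies the MGF identity directly to $\hat m^{(t)}$ (getting $\exp(-\eta m^{(t)} + \tfrac{1}{2}\eta^2\nu^2)$) before the second-order expansion, whereas you factor $\exp(-\eta m)\exp(-\eta\xi)$ first; your bookkeeping actually yields the slightly smaller constant $\tfrac{3}{2}\eta^2 T$ versus the paper's $3\eta^2 T$, both comfortably within the stated $4\sqrt{T\log n}$.
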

\iflong
\fi

We now discuss the application of this algorithm in a privacy-preserving setting. Rather than thinking of the distributions of $\hat m^{(t)}$ as the fundamental object and the $m^{(t)}$ values as their expectations, we can equivalently think of the $m^{(t)}$ values as being the fundamental objects and can add noise to them to get distributions with the correct expectations.   So suppose that the loss vectors $m^{(t)}$ depend upon some underlying private dataset, and the past losses $m^{(s)}$ for $s < t$. 
Furthermore, assume that the $\ell_2$ sensitivity (w.r.t.\ changes of a single item in the underlying dataset) of the loss vectors is bounded by $\Delta$ (see~\cite{DR14} for a formal definition of $\ell_2$-sensitivity). The goal is to design an online learning algorithm with sublinear regret such that sequence of distributions over the experts output by the algorithm are collectively differentially private.

Consider the \textbf{DP-Hedge} algorithm, which simply  uses Hedge with the noisy losses $\hat{m}^{(t)}$ constructed by adding Gaussian noise to the true losses:
\[\hat{m}^{(t)}_i \sim N(m^{(t)}_i, \nu^2), \quad \forall i \in [n].\]
Clearly, $\hat{m}^{(t)}$ satisfies the assumptions of Theorem~\ref{thm:MWU-main}, so the regret bound holds. The privacy guarantee of DP-Hedge is given below:
\begin{theorem} \label{thm:DP-Hedge-privacy}
    The DP-Hedge algorithm is $\frac{ \Delta^2 T}{2\nu^2}$-zCDP.
\end{theorem}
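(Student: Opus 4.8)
The plan is to view DP-Hedge as a sequence of $T$ applications of the Gaussian mechanism, one per round, and then invoke composition for zCDP. In round $t$, the algorithm releases the noisy loss vector $\hat m^{(t)}$, which is obtained from the true loss vector $m^{(t)}$ by adding independent $N(0,\nu^2)$ noise to each of the $n$ coordinates. Since $m^{(t)}$ has $\ell_2$-sensitivity at most $\Delta$ with respect to a change of a single item in the underlying dataset, the Gaussian mechanism with noise standard deviation $\nu$ per coordinate satisfies $\frac{\Delta^2}{2\nu^2}$-zCDP for that round (this is the standard zCDP guarantee of the Gaussian mechanism from~\cite{bun2016concentrated}, which we may cite from Appendix~\ref{app:DP}).

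The key subtlety is adaptivity: the loss vector $m^{(t)}$ is allowed to depend on the past losses $m^{(s)}$ for $s<t$, and indeed on the distributions $p^{(s)}$ the algorithm has output so far (which are themselves deterministic functions of the $\hat m^{(s)}$). So the $t$-th release is not a fixed query but one chosen adaptively based on the outputs of rounds $1,\dots,t-1$. The right tool is therefore \emph{adaptive sequential composition for zCDP}: if each mechanism in a sequence is $\rho_t$-zCDP conditioned on the previous outputs, the composed mechanism is $\left(\sum_t \rho_t\right)$-zCDP. First I would observe that the entire transcript of DP-Hedge is exactly the tuple $(\hat m^{(1)}, \dots, \hat m^{(T)})$ (the $p^{(t)}$ and $w^{(t)}$ are post-processing of this tuple, so by the post-processing property of zCDP it suffices to argue about the transcript). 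Then I would apply adaptive composition with $\rho_t = \frac{\Delta^2}{2\nu^2}$ for each $t \in [T]$, yielding $\frac{\Delta^2 T}{2\nu^2}$-zCDP overall.

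The main obstacle — really the only thing that needs care — is making precise the claim that each round, conditioned on the history, is a clean instance of the Gaussian mechanism with sensitivity $\Delta$. One has to check that on two edge-neighboring datasets, after conditioning on an identical prefix $(\hat m^{(1)},\dots,\hat m^{(t-1)})$ of released noisy losses, the two versions of $m^{(t)}$ differ in $\ell_2$ norm by at most $\Delta$. This follows from the stated sensitivity assumption on the loss vectors together with the fact that $m^{(t)}$'s dependence on the past is only through the \emph{released} quantities $m^{(s)}$ (equivalently $\hat m^{(s)}$) for $s<t$, which are being held fixed in the conditioning — so the only remaining source of variation between the neighbors is the single changed item, bounded by $\Delta$. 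Once this is in place, the result is immediate from the zCDP Gaussian-mechanism bound and adaptive composition; no delicate calculation is required, and in particular there is no need to convert to $(\epsilon,\delta)$ here since the statement is purely in terms of zCDP.
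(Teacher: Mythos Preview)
Your proposal is correct and follows essentially the same approach as the paper: apply the Gaussian mechanism's $\frac{\Delta^2}{2\nu^2}$-zCDP guarantee to each round's release $\hat m^{(t)}$, then invoke adaptive sequential composition for zCDP to sum over the $T$ rounds. Your write-up is more explicit than the paper's two-line proof about the adaptivity and post-processing details, but the argument is the same.
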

\begin{proof}
    This is a direct consequence of the fact that the Gaussian mechanism (Lemma~\ref{lem:gaussian}) ensures that each $\hat{m}^{(t)}$ is $\frac{\Delta^2}{2\nu^2}$-zCDP, and then appealing to the adaptive sequential composition theorem for zCDP (Theorem~\ref{thm:adseqcomp}).
\end{proof}


Note that the regret bound given in Theorem~\ref{thm:MWU-main} is identical, up to constant factors, to that of the standard Hedge/Multiplicative Weights Update algorithm (see, e.g., \cite{AHK12}) despite potentially unbounded losses. This allows us to leverage Hedge with noisy losses in applications where losses can be unbounded but have bounded expectation, including specifically the Lagrangian relaxation approach to solving packing/covering LPs due to Plotkin, Shmoys and Tardos \cite{PST95}, which is relevant to the DSG problem.

%

\section{Main Result: Purely Additive Private DSG} \label{sec:LEDP}

We now show our main result: it is possible to get purely additive logarithmic loss, even in the strong $(\epsilon, \delta)$-LEDP model.  As discussed in Section~\ref{sec:results}, at a very high level, our approach follows the non-private algorithm of~\cite{CQT22}, by replacing the version of multiplicative weights that they use with our DP-Hedge algorithm from Section~\ref{sec:DP-hedge}.  In particular, they use the Plotkin, Shmoys, and Tardos (PST) framework~\cite{PST95} to solve an LP formulation of the problem via multiplicative weights.  We similarly use PST on the same LP formulation, but with noise added to preserve privacy (i.e., with DP-Hedge).  While this may sound straightforward, the interaction between the noise needed for privacy and the ability of the PST framework to solve the LP is technical, and requires significantly extending the standard analysis of PST.  In fact, this is the main reason why we cannot give guarantees for pure $\epsilon$-LEDP using this method: the noise we would have to add to preserve pure differential privacy would overwhelm the ability of PST to find a good solution.



We are certainly not the first to use multiplicative weights to solve LPs privately: most notably, see~\cite{HRRU14}, which explored the limits of this approach.  Unfortunately, the results of~\cite{HRRU14} are not applicable in our setting.  Since they were attempting to solve an extremely general problem (privately solving LPs), they were not able to take advantage of any problem-specific structure, and so in many settings they were only able to give impossibility results.  In particular, for the type of LP that we will be considering, one edge change will result in two constraints having many coefficients change by one.  This is what they call a ``high-sensitivity constraint-private LP'', for which they were able to prove only lower bounds.  

But we of course are not concerned with general LPs, but rather the specific LP corresponding to DSG.  We show that this LP has some very nice features which, if we are careful enough in the analysis, allow us to solve it privately.  Most notably, the PST framework~\cite{PST95} requires an ``oracle'' to solve the Lagrangian relaxation of the problem.  This would normally require extra noise in order to compute it privately, and this extra noise could propagate throughout the computation to cause essentially a complete loss of utility.  But for the specific LP we use, it turns out that this Lagrangian problem can be solved \emph{exactly} even in the private setting!  


We begin in Section~\ref{sec:CQT} with some background from~\cite{CQT22}; most importantly, the definition of the LP formulation.  We then show in Section~\ref{sec:MWU-private} that we can use DP-Hedge inside of the PST framework to solve this LP privately with purely additive loss, but with the limitation that we have to assume knowledge of the optimal value $\lambda^*$ and we only succeed with constant probability.  In Section~\ref{sec:peeling} we give a private version of the useful ``peeling'' primitive.  This allows us in Section~\ref{sec:alg-main} to finally give our full $(\epsilon, \delta)$-LEDP algorithm, and its full analysis in Section~\ref{sec:PST-analysis}.  Finally, in Section~\ref{sec:centralized-unweighted} we show how to give improved bounds in the centralized (rather than local) model.  


\subsection{Background From Chekuri, Quanrud, Torres~\texorpdfstring{\cite{CQT22}}{Chekuri, Quanrud, and Torres}} \label{sec:CQT}
In order to prove the approximation quality of our algorithm, we first need to give some background from~\cite{CQT22}.  They showed that a particularly simple non-private algorithm (essentially a variation of the Greedy++ algorithm of~\cite{Boob20}) can be interpreted as running multiplicative-weights in the dual of a particular LP formulation for DSG.  

We define some notation (mostly taken from~\cite{CQT22}).  Given an ordering $\sigma$ of the vertices $V$, let $q(\sigma) \in \mathbb{R}^n$ be the vector where the coordinate for $v \in V$ is $q(\sigma)_v = |\{\{u,v\} \in E : u \prec_{\sigma} v\}|$.  In other words, $q(\sigma)_v$ is the number of edges from $v$ to nodes that precede $v$ in $\sigma$.  Another way of thinking about this is that if we ``peeled'' (removed) the vertices in the \emph{reverse} order of $\sigma$, then $q(\sigma)_v$ would be the degree of $v$ when it is removed (see Section~\ref{sec:peeling}).  Given an ordering $\sigma$ and a vertex $v$, let $S^{\sigma}_v = \{u \in V : u \preceq_{\sigma} v\}$ be the vertices in the prefix of $\sigma$ defined by $v$, and let $E^{\sigma}_v = \{ \{x,y\} \in E : x,y \in S^{\sigma}_v\}$ be the edges induced by $S^{\sigma}_v$.  Given $\sigma$, let $S^*_{\sigma}$ be the prefix of $\sigma$ with maximum density, i.e., $S^*_{\sigma}$ is the $S^{\sigma}_v$ which maximizes $\density(S^{\sigma}_v) = |E^{\sigma}_v| / |S^{\sigma}_v|$.

A particularly useful lemma from~\cite{CQT22} is the following. 

\begin{lemma}[Lemma 4.2 of~\cite{CQT22}] \label{lem:opt-ordering}
For any $x \in [0,1]^V$, we have that $\sum_{\{u,v\} \in E} \min\{x_u, x_v\} = \min_{\sigma} \langle x, q(\sigma) \rangle = \sum_{v \in V} x_v q(\sigma)_v$.  Moreover, for every $x$ we have that  $\argmin_{\sigma} \langle x, q(\sigma) \rangle$ is the permutation $\sigma$ which orders the vertices in nonincreasing order of values $x_v$.
\end{lemma}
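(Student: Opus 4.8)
The plan is to rewrite the inner product $\langle x, q(\sigma)\rangle$ as a sum over \emph{edges} rather than over vertices. By definition $q(\sigma)_v$ counts the edges $\{u,v\} \in E$ with $u \prec_\sigma v$, so in the sum $\sum_{v\in V} x_v\, q(\sigma)_v$ every edge $\{u,v\}\in E$ is counted exactly once: it contributes $x_w$, where $w$ is whichever of $u,v$ comes \emph{later} in the ordering $\sigma$. Writing $\mathrm{late}_\sigma(u,v)$ for that endpoint, this yields the identity
$\langle x, q(\sigma)\rangle = \sum_{\{u,v\}\in E} x_{\mathrm{late}_\sigma(u,v)}$,
which is valid for every ordering $\sigma$ and every $x$.

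Next I would establish the lower bound. For any edge $\{u,v\}$ and any $\sigma$ we trivially have $x_{\mathrm{late}_\sigma(u,v)} \geq \min\{x_u, x_v\}$, so summing over all edges gives $\langle x, q(\sigma)\rangle \geq \sum_{\{u,v\}\in E}\min\{x_u,x_v\}$ for every $\sigma$, hence $\min_\sigma \langle x, q(\sigma)\rangle \geq \sum_{\{u,v\}\in E}\min\{x_u,x_v\}$.

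For the matching upper bound together with the characterization of the minimizer, I would take $\sigma$ to be any ordering that lists the vertices in nonincreasing order of the values $x_v$ (ties broken arbitrarily). For such a $\sigma$ and any edge $\{u,v\}$, the later endpoint is the one with the smaller $x$-value, i.e.\ $x_{\mathrm{late}_\sigma(u,v)} = \min\{x_u,x_v\}$; this holds in particular when $x_u = x_v$, since then both sides equal the common value. Summing over edges gives $\langle x, q(\sigma)\rangle = \sum_{\{u,v\}\in E}\min\{x_u,x_v\}$, and combined with the inequality of the previous paragraph this shows all three quantities coincide and that this $\sigma$ attains the minimum.

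The argument is essentially bookkeeping, so I do not expect a real obstacle; the only point needing care is tie-breaking when several vertices share an $x$-value, but this is harmless precisely because $\min\{x_u,x_v\}$ is insensitive to the ordering of equal-valued endpoints. I would also remark that neither the identity nor the inequalities use the hypothesis $x \in [0,1]^V$ — that constraint is inherited from the surrounding LP and plays no role in this lemma.
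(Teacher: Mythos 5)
Your proof is correct. The paper does not actually prove this lemma; it is cited directly as Lemma~4.2 of~\cite{CQT22}, so there is no in-paper argument to compare against. Your edge-by-edge decomposition $\langle x, q(\sigma)\rangle = \sum_{\{u,v\}\in E} x_{\mathrm{late}_\sigma(u,v)}$ followed by the pointwise bound $x_{\mathrm{late}_\sigma(u,v)} \geq \min\{x_u,x_v\}$ (tight for any nonincreasing ordering) is the natural and standard way to verify it, and your two side remarks --- that tie-breaking is immaterial and that the hypothesis $x\in[0,1]^V$ is not used --- are both accurate. One small point of precision worth noting: you establish that every nonincreasing ordering attains the minimum, which is all that the lemma (as used downstream, e.g.\ in the oracle step of Noisy-Order-Packing-MWU) actually requires; the statement's phrasing \enquote{the permutation} should not be read as a claim of uniqueness, and indeed the argmin can contain orderings that are not nonincreasing when, say, out-of-order pairs are non-adjacent in $G$.
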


This allowed them to write the following covering LP formulation of DSG, where $\lambda^*$ is the optimal density and $S_V$ is the symmetric group of $V$ (i.e., the set of all permutations of $V$).  
\begin{align*}
    \min\ & \sum_{v \in V} x_v \\ 
    \text{s.t.}\ & \langle x, q(\sigma) \rangle \geq \lambda^* & \forall \sigma \in S_V \\
    & x_v \geq 0 & \forall v \in V
\end{align*}

It was shown in~\cite{CQT22} (and it is not too hard to see) that for the correct value of $\lambda^*$, this LP is feasible with objective $1$ (we refer the interested reader to~\cite{CQT22} for more discussion about this LP, or to Section~\ref{sec:weighted-lp} for a discussion of a related generalized LP for the weighted setting).  The dual of this LP is the following packing LP, which also (by strong duality) has objective value $1$ for the correct value of $\lambda^*$:
\begin{align*}
    \max\ & \lambda^* \sum_{\sigma \in S_V} y_{\sigma} \\
    \text{s.t.}\ & \sum_{\sigma \in S_V} q(\sigma)_v y_{\sigma} \leq 1 & \forall v \in V \\
    & y_{\sigma} \geq 0 & \forall \sigma \in S_V
\end{align*}

\subsection{Noisy PST to Solve LP Privately} \label{sec:MWU-private}

\subsubsection{The Algorithm} \label{sec:PST-alg}

We can now give the first algorithm that we analyze, which (as discussed) is what we would get if we run Hedge with noisy updates in the Plotkin, Shmoys, Tardos (PST) LP solving framework~\cite{PST95} applied to the dual packing LP.  In particular, there will be an ``expert'' for each constraint (node), and we will keep track of a weight for each expert.  We get Algorithm~\ref{alg:DSG-MWU}, which we call ``Noisy-Order-Packing-MWU''.  It takes two parameters: $T$ (the number of iterations to run) and $\tau$ (a parameter which controls the added noise and thus the privacy guarantee).  
We note that we are assuming knowledge of $\lambda^*$, which we do not actually know.  We could do binary search over guesses of $\lambda^*$ (which is what we do in the weighted setting, see Section~\ref{sec:weighted}), but this would incur an additional loss.  Fortunately, we will be able to get around this by eventually using a simpler algorithm with the same behavior that does not need $\lambda^*$ (see Section~\ref{sec:alg-main}), so for now we will assume knowledge of $\lambda^*$ for convenience.  



\begin{algorithm}
    \caption{Noisy-Order-Packing-MWU$(T, \tau)$}
    \label{alg:DSG-MWU}
    \begin{algorithmic}[1]
        \STATE Set $\rho = \frac{n+\tau}{\lambda^*}$ and $\nu = \frac{\tau}{\rho \lambda^*}$.
        \STATE Instantiate Hedge$(T)$ with $n$ experts corresponding to nodes in $V$.
        \FOR{$t = 1$ to $T$}
            \STATE Obtain the distribution $p^{(t)}$ on $V$ from Hedge.
            \STATE \label{line:oracle} Let $\sigma^{(t)}$ be the ordering of $V$ in nonincreasing order of $p^{(t)}$, breaking ties consistently, e.g., by node IDs.  Set $y^{(t)}_{\sigma^{(t)}} = 1/\lambda^*$, and set $y^{(t)}_{\sigma} = 0$ for all $\sigma \neq \sigma^{(t)}$ (this is the Lagrangean oracle from PST~\cite{PST95}).
            \STATE For each $v \in V$, set the loss to be
        \[
        \hat{m}_v^{(t)} \sim N(m_v^{(t)}, \nu^2), \text{ where } m_v^{(t)} = \frac{1}{\rho}\left( 1 - \sum_{\sigma \in S_V} q(\sigma)_v y_{\sigma}^{(t)}\right) = \frac{1}{\rho}\left( 1 - \frac{q(\sigma^{(t)})_v}{\lambda^*}\right).
        \]
        \vspace{-0.5cm}
            \STATE Supply $\hat{m}^{(t)}$ as the loss vector to Hedge.
        \ENDFOR
        \STATE Let $t$ be chosen uniformly at random from $[T]$.  
        \RETURN $(p^{(t)}, \sigma^{(t)})$.
    \end{algorithmic}
\end{algorithm} 

We note that since the algorithm runs Hedge but passes it noisy losses, we will be able to use our previous analysis from Section~\ref{sec:DP-hedge}. 



\paragraph{Oracle.} The PST framework requires an ``oracle'' that solves the Lagrangian relaxation.  In our setting, a crucial feature is that we can implement this oracle \emph{exactly} despite the noise added for privacy. Using the notation of~\cite{AHK12}, we let $\mathcal P$ be the polytope defined by the ``easy'' constraints: $\mathcal P = \{y : y_{\sigma} \geq 0 \text{ for all } \sigma \in S_V \text{ and } \sum_{\sigma \in S_V} y_{\sigma} = 1/\lambda^*\}$.  Let $w$ denote the current weights in Hedge, and $p$ the distribution obtained by dividing eacj weight by the sum of the weights.  Then we need to find a feasible solution for the Lagrangean relaxation, which in our case is the problem of finding a $y \in \mathcal P$ such that $\sum_{v \in V} w_v \sum_{\sigma \in S_V} q(\sigma)_v y_{\sigma} \leq \sum_{v \in V} w_v$.  Equivalently, in terms of $p$, we need to find a $y \in \mathcal P$ such that $\sum_{v \in V} p_v \sum_{\sigma \in S_V} q(\sigma)_v y_{\sigma} \leq 1$).  To do this, it is sufficient to find a $y$ minimizing the left hand side, which (after changing the order of summations) is the same as finding a $y$ minimizing $\sum_{\sigma \in S_V} y_{\sigma} \left(\sum_{v \in V} p_v q(\sigma)_v \right)$.  Since by the definition of $\mathcal P$ we must have that $\sum_{\sigma \in S_V} y_{\sigma} = 1/\lambda^*$, finding such a $y$ is the same thing as finding the $\sigma$ which minimizes $\sum_{v \in V} p_v q(\sigma)_v$ and setting that $y_{\sigma}$ to $1/\lambda^*$, and setting all others to $0$. But we know from Lemma~\ref{lem:opt-ordering} that this $\sigma$ is precisely the permutation which orders the vertices in nonincreasing order of $p$.  So we can actually compute this ordering even without knowing the values of $q(\sigma)_v$ (as these are degrees and so are private).  So our oracle (line~\ref{line:oracle} in the algorithm) computes such a $\sigma$ and sets $y$ appropriately.

\subsubsection{Analysis}

We now analyze this algorithm, showing both privacy and utility bounds.  We first show that Noisy-Order-Packing-MWU is edge-differentially-private, under the assumption that we know $\lambda^*$.  As with our earlier analysis of DP-Hedge, we will use R\'enyi differential privacy. 

\begin{lemma} \label{lem:nopmwu-privacy}
    Noisy-Order-Packing-MWU satisfies $\frac{T}{\tau^2}$-zCDP.
\end{lemma}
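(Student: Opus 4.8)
The plan is to analyze the algorithm round by round via adaptive sequential composition for zCDP (Theorem~\ref{thm:adseqcomp}), the only graph-dependent release in round $t$ being the noisy loss vector $\hat m^{(t)}$. First I would observe that everything the algorithm outputs is a post-processing of the sequence $\hat m^{(1)}, \dots, \hat m^{(T)}$ together with internal randomness that is independent of the graph: the distribution $p^{(t)}$ is the deterministic output of Hedge on $\hat m^{(1)}, \dots, \hat m^{(t-1)}$, the ordering $\sigma^{(t)}$ (line~\ref{line:oracle}) is a deterministic function of $p^{(t)}$ under a fixed tie-breaking rule, $y^{(t)}$ is a deterministic function of $\sigma^{(t)}$, and the final choice of $t \in [T]$ is uniform and graph-independent. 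So by the post-processing theorem it suffices to bound the privacy of the transcript $(\hat m^{(t)})_{t \in [T]}$; in the local model $\hat m^{(t)}$ is exactly the (noisy) vector of messages sent by the nodes in round $t$ (node $v$ knows its own adjacency list and the broadcast ordering $\sigma^{(t)}$, hence can compute $q(\sigma^{(t)})_v$ and add the Gaussian noise), so this transcript is also the right object for the LEDP guarantee.

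Next, fix the history $\hat m^{(1)}, \dots, \hat m^{(t-1)}$ and view round $t$ as a mechanism of the graph. Conditioned on this history, $p^{(t)}$ and therefore $\sigma^{(t)}$ (and $y^{(t)}$) are fixed and identical for any two edge-neighboring graphs, so the only graph-dependent quantity is the true loss vector $m^{(t)}$, which depends on the graph only through $q(\sigma^{(t)})$ via $m^{(t)}_v = \frac{1}{\rho}\bigl(1 - q(\sigma^{(t)})_v/\lambda^*\bigr)$. I would then bound the $\ell_2$-sensitivity of $q(\sigma^{(t)})$: if two graphs differ in the single edge $\{a,b\}$ with, say, $a \prec_{\sigma^{(t)}} b$, then $q(\sigma^{(t)})_b$ changes by exactly $1$ and every other coordinate is unchanged (the edge $\{a,b\}$ only ever contributes to the count of its later endpoint), so the $\ell_2$-sensitivity of $q(\sigma^{(t)})$ is $1$. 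Hence $m^{(t)}$ has $\ell_2$-sensitivity at most $\frac{1}{\rho\lambda^*}$, and by the choice $\nu = \frac{\tau}{\rho\lambda^*}$ this equals $\nu/\tau$.

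Applying the Gaussian-mechanism guarantee (Lemma~\ref{lem:gaussian}), releasing $\hat m^{(t)} \sim N(m^{(t)}, \nu^2 I)$ is $\frac{(\nu/\tau)^2}{2\nu^2} = \frac{1}{2\tau^2}$-zCDP as a mechanism of the graph, for every fixed history. Adaptive sequential composition over the $T$ rounds (Theorem~\ref{thm:adseqcomp}) then gives that the whole transcript is $\frac{T}{2\tau^2}$-zCDP, which is a fortiori $\frac{T}{\tau^2}$-zCDP, and the lemma follows by post-processing.

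The one point that needs care — the real content of the argument — is the reduction in the first paragraph: making precise that within a round the ordering $\sigma^{(t)}$ and the oracle output $y^{(t)}$ are not additional sources of leakage, because they are deterministic post-processings of the previously released noisy losses, so that after conditioning on the history the round-$t$ mechanism is genuinely just a Gaussian perturbation of a vector of $\ell_2$-sensitivity $\frac{1}{\rho\lambda^*}$. Everything after that is a routine sensitivity computation plus invocation of the standard zCDP facts; in particular we do not need $\nu \le 1$ here, as that condition is used only for the utility bound via Theorem~\ref{thm:MWU-main}.
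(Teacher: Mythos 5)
Your proof is correct and follows essentially the same route as the paper: the paper invokes Theorem~\ref{thm:DP-Hedge-privacy} (itself proved by Gaussian mechanism plus adaptive sequential composition on the per-round $\ell_2$-sensitivity $\Delta = \frac{1}{\rho\lambda^*}$ with $\nu = \frac{\tau}{\rho\lambda^*}$), whereas you unroll that theorem and re-derive the same per-round $\frac{1}{2\tau^2}$-zCDP bound directly, then compose over $T$ rounds. Your extra care in arguing that $\sigma^{(t)}$, $y^{(t)}$, and the final uniform index are post-processing of the noisy-loss transcript is a welcome elaboration of what the paper leaves implicit, but the substance is the same; both arguments in fact give the stronger $\frac{T}{2\tau^2}$-zCDP.
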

\begin{proof}
    This is essentially straightforward from our analysis of DP-Hedge.  Consider one iteration $t$ of Noisy-Order-Packing-MWU.  Since changing an edge can affect at most one value of $q(\sigma^{(t)})$, the $\ell_2$ sensitivity of the loss vector in round $t$ is at most $\Delta = \frac{1}{\rho \lambda^*}$. Furthermore, note that $\nu = \frac{\tau}{\rho \lambda^*}$. Thus, by Theorem~\ref{thm:DP-Hedge-privacy}, Noisy-Order-Packing-MWU satisfies $\frac{\Delta^2 T}{2\nu^2} = \frac{T (\rho \lambda^*)^2}{2(\rho \lambda^*)^2 \tau^2} = \frac{T}{2\tau^2}$
    zCDP.
\end{proof}

We now move on to our utility bounds.  We define a key quantity that will be needed in the analysis:
\begin{equation}
\alpha = 8\rho \sqrt{\frac{\log n}{T}}
\end{equation}


\begin{theorem} \label{thm:primal-solve-robust}
Suppose $\alpha \leq \frac{1}{2}$. Then with probability at least $\frac{1}{2}$ over the choice of an index $t \in [T]$ chosen uniformly at random, and over the randomness in the noise, the point 
$(1+2\alpha) p^{(t)}$ is a feasible solution to the primal covering LP with objective value $1+2\alpha$.
\end{theorem}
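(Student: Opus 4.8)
Here is the plan I would follow.

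\medskip

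\noindent\textbf{Overall approach.} The idea is to read off a regret bound for the instance of Hedge running inside Algorithm~\ref{alg:DSG-MWU} from Theorem~\ref{thm:MWU-main}, and then convert it into the feasibility statement by a counting argument over the $T$ rounds. Write $f(p) := \sum_{\{u,v\}\in E}\min\{p_u,p_v\}$ for a nonnegative vector $p$. By Lemma~\ref{lem:opt-ordering}, $f(p) = \min_\sigma \langle p, q(\sigma)\rangle$ and the minimizer is the order that sorts $V$ by nonincreasing $p$-value, which is exactly the $\sigma^{(t)}$ chosen on line~\ref{line:oracle}; hence $\langle p^{(t)}, q(\sigma^{(t)})\rangle = f(p^{(t)})$, and the quantities fed to Hedge are
\[
\langle m^{(t)}, p^{(t)}\rangle = \tfrac1\rho\bigl(1 - \tfrac{f(p^{(t)})}{\lambda^*}\bigr), \qquad m_v^{(t)} = \tfrac1\rho\bigl(1 - \tfrac{q(\sigma^{(t)})_v}{\lambda^*}\bigr).
\]
Since $q(\sigma^{(t)})_v \in \{0,1,\dots,n-1\}$ and $\lambda^* < n$, the setting $\rho = (n+\tau)/\lambda^*$ forces $m^{(t)} \in [-1,1]^n$ (this is exactly why $\rho$ is chosen this way), and $\nu = \tau/(n+\tau) \le 1$; together with $T \ge \log n$ (which follows from $\alpha \le \tfrac12$) the hypotheses of Theorem~\ref{thm:MWU-main} hold, giving, for every node $v$ and with expectation over the Gaussian noise,
\[
\E\bigl[\textstyle\sum_{t=1}^T \langle m^{(t)}, p^{(t)}\rangle\bigr] \le \E\bigl[\textstyle\sum_{t=1}^T m_v^{(t)}\bigr] + 4\sqrt{T\log n}.
\]

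\medskip

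\noindent\textbf{Bounding the right-hand side.} Next I would use that the primal covering LP has optimal value $1$: there is a distribution $x^\star$ on $V$ with $\langle x^\star, q(\sigma)\rangle \ge \lambda^*$ for all $\sigma$. Applying this to $\sigma = \sigma^{(t)}$, summing over $t$, taking expectations, and using that $x^\star$ is a convex combination gives $\max_v \E[\sum_t q(\sigma^{(t)})_v] \ge T\lambda^*$, hence $\min_v \E[\sum_t m_v^{(t)}] \le \tfrac1\rho\bigl(T - \tfrac1{\lambda^*}T\lambda^*\bigr) = 0$. Plugging this into the regret bound and multiplying by $\rho$ yields, with $\alpha = 8\rho\sqrt{\log n/T}$,
\[
\E\bigl[\textstyle\sum_{t=1}^T \bigl(1 - \tfrac{f(p^{(t)})}{\lambda^*}\bigr)\bigr] \le 4\rho\sqrt{T\log n} = \tfrac{\alpha}{2}\,T.
\]

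\medskip

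\noindent\textbf{Key structural fact and the counting argument.} The crucial point is that $f(p) \le \lambda^*$ for \emph{every} distribution $p$ on $V$ — equivalently, by positive homogeneity of $f$, the statement that the covering LP has optimal value at least $1$; or directly from the layer‑cake identity $f(p) = \int_0^\infty |E(S_\theta)|\,d\theta$ with $S_\theta = \{v : p_v > \theta\}$, together with $|E(S_\theta)| \le \lambda^* |S_\theta|$ and $\int_0^\infty |S_\theta|\,d\theta = \sum_v p_v = 1$. Consequently every summand $1 - f(p^{(t)})/\lambda^*$ is \emph{nonnegative}, so the displayed bound is a genuine bound on a sum of nonnegative terms. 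Now call a round $t$ ``bad'' if $f(p^{(t)}) < \lambda^*/(1+2\alpha)$; on a bad round $1 - f(p^{(t)})/\lambda^* > \tfrac{2\alpha}{1+2\alpha} \ge \alpha$ (using $\alpha \le \tfrac12$), so if $B$ is the random set of bad rounds then $\alpha|B| \le \sum_t (1 - f(p^{(t)})/\lambda^*)$, and therefore $\E[|B|] \le T/2$. Finally, $(1+2\alpha)p^{(t)}$ is feasible for the primal covering LP iff $(1+2\alpha)\min_\sigma\langle p^{(t)}, q(\sigma)\rangle = (1+2\alpha)f(p^{(t)}) \ge \lambda^*$, i.e.\ iff $t$ is not bad, and its objective value is $(1+2\alpha)\sum_v p_v^{(t)} = 1+2\alpha$ since $p^{(t)}$ is a distribution. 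Hence the probability over the noise and the uniform choice of $t$ that $(1+2\alpha)p^{(t)}$ is a feasible solution with objective $1+2\alpha$ equals $1 - \E[|B|]/T \ge 1/2$.

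\medskip

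\noindent\textbf{Expected main obstacle.} The two places to be careful are (i) the observation that $f(p) \le \lambda^*$ for all distributions $p$: without it the ``good'' rounds, where $f(p^{(t)})$ may be large, could cancel the bad rounds in $\sum_t (1 - f(p^{(t)})/\lambda^*)$ and the counting argument would collapse; and (ii) keeping the two sources of randomness straight — Theorem~\ref{thm:MWU-main} already integrates out the Gaussian noise, and the uniformly random index $t$ then has to be folded in to turn the expectation bound into the claimed ``probability $\ge\tfrac12$'' statement. The verification that $m^{(t)} \in [-1,1]^n$ (the reason for the particular value of $\rho$) is routine but also needs to be done.
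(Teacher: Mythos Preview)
Your proof is correct and shares the paper's core skeleton: apply Theorem~\ref{thm:MWU-main}, use that $f(p^{(t)}) \le \lambda^*$ so that each term $1 - f(p^{(t)})/\lambda^*$ is nonnegative, and observe that on an infeasible round this term is at least $\alpha$. The one genuine difference is in how the right-hand side of the regret inequality is controlled. You pick a single fixed expert $v$ witnessing $\E[\sum_t m_v^{(t)}] \le 0$, obtained by averaging the primal optimal distribution $x^\star$ against the deterministic numbers $\E[\sum_t q(\sigma^{(t)})_v]$; this immediately gives $\E[\sum_t (1 - f(p^{(t)})/\lambda^*)] \le \tfrac{\alpha}{2}T$ and a direct Markov bound on $\E[|B|]$. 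The paper instead keeps $v$ arbitrary, rewrites the RHS as $\tfrac{T}{\rho}(1 - \sum_\sigma q(\sigma)_v \bar y_\sigma)$ for the averaged dual variable $\bar y$, and argues by contradiction: if fewer than half the rounds were feasible, $\bar y$ would be strictly feasible for the packing LP with objective $1$, contradicting optimality. Your primal-side argument is a bit more direct (no contradiction, no scaling of $\bar y$) and makes the role of $f(p) \le \lambda^*$ as the pivot of the counting argument very explicit; the paper's dual-side version has the advantage of producing, along the way, an approximately feasible averaged dual solution $\bar y$, which is conceptually nice even if not used downstream here.
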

\begin{proof}
We aim to apply Theorem~\ref{thm:MWU-main}, so we first verify that it assumptions are met. Note that the setting $\rho = \frac{n+\tau}{\lambda^*}$ ensures that $\nu = \frac{\tau}{\rho\lambda^*} \leq 1$, and for any $v \in V$, we have $m^{(t)}_v = \frac{1}{\rho}\left(1 - \frac{q(\sigma^{(t)})_v}{\lambda^*}\right) \in [-1, 1]$ since $q(\sigma^{(t)})_v \leq n$, as required.

The objective value is equal to $1+2\alpha$  by construction (for every $t$). So let $r$ be the probability that for a randomly chosen $t \in [T]$, the point 
$(1+2\alpha) p^{(t)}$ is a feasible solution to the primal covering LP. 

For each round $t$, we have the following:
\begin{align*}
    \langle m^{(t)}, p^{(t)} \rangle &= \frac{1}{\rho} \sum_{v \in V} \left(\left(1- \sum_{\sigma \in S_V} q(\sigma)_v y^{(t)}_{\sigma}\right) p^{(t)}_v\right) \tag{definition of $m_v^{(t)}$} \\
    &= \frac{1}{\rho} \left(1 - \sum_{v \in V} p_v^{(t)} \sum_{\sigma \in S_V} q(\sigma)_v y^{(t)}_{\sigma}\right).
\end{align*}
For all $t$, we know that $\frac{1}{\rho} \left(1 - \sum_{v \in V} p_v^{(t)} \sum_{\sigma \in S_V} q(\sigma)_v y_{\sigma}\right) \geq 0$, since the feasibility of the LP implies that the oracle always returns a $y$ such that the Lagrangean relaxation is satisfied, i.e., $1 \geq \sum_{v \in V} p_v^{(t)} \sum_{\sigma \in S_V} q(\sigma)_v y^{(t)}_{\sigma}$.  So for all $t$, we have that $\langle m^{(t)}, p^{(t)} \rangle \geq 0$.

Now if $(1+2\alpha) p^{(t)}$ is an \emph{infeasible} solution to the primal covering LP, then we can prove a stronger statement.  In particular, there exists a permutation $\sigma '$ that $\sum_{v \in V} p^{(t)}_v q(\sigma')_v < \lambda^* / (1+2\alpha)$.  Since $\sigma^{(t)}$ is the permutation which minimizes $\sum_{v \in V} p^{(t)} q(\sigma)_v$ (by the definition of the oracle and Lemma~\ref{lem:opt-ordering}), we have that
\begin{align*}
\sum_{v \in V} p^{(t)}_v q(\sigma^{(t)})_v &\leq \sum_{v \in V} p^{(t)}_v q(\sigma ')_v < \frac{\lambda^*}{1+2\alpha}. 
\end{align*}
Now the definition of $y^{(t)}$ (from the Oracle) implies that
\begin{equation} \label{eq:oracle1}
    \sum_{v \in V} p^{(t)}_v \sum_{\sigma \in S_V} q(\sigma)_v y^{(t)}_{\sigma}  = \sum_{v \in V} p^{(t)}_v q(\sigma^{(t)})_v / \lambda^* < \frac{1}{1+2\alpha}.
\end{equation}
So we have that 
\begin{align*}
    \langle m^{(t)}, p^{(t)} \rangle &= \frac{1}{\rho} \left( 1 - \sum_{v \in V} p_v^{(t)} \sum_{\sigma \in S_V} q(\sigma)_v y^{(t)}_{\sigma}\right) \geq \frac{1}{\rho} \left(1 - \frac{1}{1+2\alpha}\right) \tag{Eq.~\eqref{eq:oracle1}} = \frac{1}{\rho} \cdot \frac{2\alpha}{1+2\alpha} \geq \frac{\alpha}{\rho},
\end{align*}
since $\alpha \leq \frac{1}{2}$. Thus, for all $t$, we have:
\[\langle m^{(t)}, p^{(t)} \rangle \geq \frac{\alpha}{\rho} \cdot \mathbf{1}[(1+2\alpha)p^{(t)} \text{ is infeasible}].\]
Hence, recalling that $r$ is the probability that $(1+2\alpha)p^{(t)}$ is feasible for a randomly chosen index $t$, we get
\begin{equation} \label{eq:mw-lhs}
\E\left[\sum_{t=1}^T \langle m^{(t)}, p^{(t)} \rangle  \right] \geq \frac{\alpha}{\rho} \cdot \sum_{t=1}^T \Pr[(1+2\alpha)p^{(t)} \text{ is infeasible}] = \frac{\alpha}{\rho} \cdot (1-r)T.
\end{equation}
This gives us a bound on the LHS of the inequality in Theorem~\ref{thm:MWU-main}. 

To bound the RHS, we proceed as follows. Let $\bar y = \E[\frac{1}{T} \sum_{t=1}^T y^{(t)}]$, and fix any $v \in V$. By the structure of the oracle, we know that
\[
\sum_{t=1}^T \frac{q(\sigma^{(t)})_v}{\lambda^*} 
= \sum_{t=1}^T \sum_{\sigma \in S_V} q(\sigma)_v y^{(t)}_\sigma 
= \sum_{\sigma \in S_V} q(\sigma)_v \cdot \left(\sum_{t=1}^T y^{(t)}_\sigma\right).
\]
Thus we have
\begin{equation} \label{eq:mw-rhs}
\E\left[\sum_{t=1}^T m^{(t)}_v\right] = \E\left[\sum_{t=1}^T \frac{1}{\rho}\left(1 - \frac{q(\sigma^{(t)})_v}{\lambda^*} \right)\right] = \E\left[\frac{1}{\rho} \left(T - \sum_{\sigma \in S_V} q(\sigma)_v \cdot \left(\sum_{t=1}^T y^{(t)}_\sigma\right)\right)\right] = \frac{T}{\rho} \left(1 - \sum_{\sigma \in S_V} q(\sigma)_v \bar{y}_{\sigma}\right).
\end{equation}
Using \eqref{eq:mw-lhs} and \eqref{eq:mw-rhs} in Theorem~\ref{thm:MWU-main}, we get
\[\frac{\alpha}{\rho} \cdot (1-r)T \leq \frac{T}{\rho} \left(1 - \sum_{\sigma \in S_V} q(\sigma)_v \bar{y}_{\sigma}\right) + 4\sqrt{\log(n) T}.\]
Simplifying, and using the value $\alpha = 8\rho \sqrt{\frac{\log n}{T}}$, we get
\[ \sum_{\sigma \in S_V} q(\sigma)_v \bar{y}_{\sigma} \leq 1 + \frac{\alpha}{2} - \alpha \cdot (1-r). \]
Note that the above inequality holds for \emph{all} $v \in V$. We claim that this implies that $r \geq \frac{1}{2}$. Otherwise, the RHS above is strictly less than $1$. This implies that $\bar{y}$ is strictly feasible: none of the constraints are tight.  Thus we can scale it up by a small amount, to get a new solution (to the dual) which is also feasible but has strictly larger objective value than $\bar{y}$.  Since $\bar{y}$ has objective value $\lambda^* \cdot \langle \vec{1}, \bar{y}\rangle = \frac{\lambda^*}{T}\E[\langle \vec{1}, y^{(t)}\rangle] = 1$, by construction, this means that there is a feasible solution with objective value strictly larger than $1$.  But this is a contradiction, since the primal and dual both have objective value equal to $1$.
\end{proof}

\iflong
We will use the following result from~\cite{CQT22}.

\begin{lemma}[Lemma 4.3 of~\cite{CQT22}] \label{lem:primal-to-discrete}
Given a feasible solution $x$ to the primal LP, there exists a $\tau \in [0,1]$ such that the set $S_{\tau} =\{v \in V : x_v \geq \tau\}$ has density at least $\lambda^* / \sum_{v \in V} x_v$.
\end{lemma}
\fi

We can now prove our main theorem about Noisy-Order-Packing-MWU.  The proof is a relatively straightforward application of Theorem~\ref{thm:primal-solve-robust}\iflong \else, so we defer it to Appendix~\ref{app:LEDP}\fi.  Recall that for a permutation $\sigma$, we defined $S^*_{\sigma}$ to be the prefix of $\sigma$ with maximum density.
\begin{theorem} \label{thm:NOP-MWU}
Noisy-Order-Packing-MWU returns $(p, \sigma)$ such that $\sigma$ is just $V$ in nonincreasing order of $p$, and $\density(S^*_{\sigma}) \geq (1-2\alpha)\lambda^*$ with probability at least $1/2$.
\end{theorem}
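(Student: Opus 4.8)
The plan is to combine the two results we already have in hand: Theorem~\ref{thm:primal-solve-robust}, which says that a scaled copy of the distribution output by Hedge is (with probability at least $\tfrac12$) a feasible solution to the primal covering LP of small objective value, and Lemma~\ref{lem:primal-to-discrete} (Lemma~4.3 of~\cite{CQT22}), which converts any feasible primal solution into a dense \emph{set} by thresholding its coordinates. The only additional work is to observe that the thresholded set is in fact a prefix of the ordering $\sigma$ that the algorithm also returns, so its density lower-bounds $\density(S^*_{\sigma})$.

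Concretely, first note that the claim that $\sigma$ is $V$ in nonincreasing order of $p$ is immediate: the algorithm returns $(p^{(t)}, \sigma^{(t)})$ for the random index $t$, and $\sigma^{(t)}$ is by definition (line~\ref{line:oracle}) the ordering of $V$ in nonincreasing order of $p^{(t)}$. For the density bound, I would condition on the event from Theorem~\ref{thm:primal-solve-robust} (which holds with probability at least $\tfrac12$, using the hypothesis $\alpha \le \tfrac12$): the point $x := (1+2\alpha)\,p^{(t)}$ is feasible for the primal covering LP and has objective value $\sum_{v\in V} x_v = 1+2\alpha$. Applying Lemma~\ref{lem:primal-to-discrete} to $x$, there is a threshold $\theta \in [0,1]$ such that $S := \{v \in V : x_v \ge \theta\}$ satisfies
\[
\density(S) \;\ge\; \frac{\lambda^*}{\sum_{v\in V} x_v} \;=\; \frac{\lambda^*}{1+2\alpha} \;\ge\; (1-2\alpha)\lambda^*,
\]
where the last step uses $(1+2\alpha)(1-2\alpha) = 1-4\alpha^2 \le 1$.

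It remains to argue $\density(S^*_{\sigma}) \ge \density(S)$, i.e., that $S$ is one of the prefixes of $\sigma = \sigma^{(t)}$ over which $S^*_{\sigma}$ is the maximizer. Since $x_v = (1+2\alpha)p^{(t)}_v$ is a positive scaling of $p^{(t)}_v$, we have $S = \{v : p^{(t)}_v \ge \theta/(1+2\alpha)\}$, a super-level set of $p^{(t)}$. Because $\sigma^{(t)}$ orders the vertices in nonincreasing order of $p^{(t)}$ with a fixed consistent tie-breaking rule, every super-level set of $p^{(t)}$ is exactly a prefix $S^{\sigma}_v$ of $\sigma$ (vertices with value strictly above the cutoff come first, followed by the full block of vertices tied at the cutoff, then everything else). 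Hence $S = S^{\sigma}_v$ for some $v$, and by definition of $S^*_{\sigma}$ as the maximum-density prefix, $\density(S^*_{\sigma}) \ge \density(S) \ge (1-2\alpha)\lambda^*$. This holds on the probability-$\ge\tfrac12$ event, completing the proof.

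I do not expect a real obstacle here — the delicate argument was already discharged inside Theorem~\ref{thm:primal-solve-robust}. The only point requiring a line of care is the tie-breaking observation above (that super-level sets of $p^{(t)}$ are genuine prefixes of $\sigma^{(t)}$, which relies on the ``breaking ties consistently'' clause in line~\ref{line:oracle}), together with the elementary inequality $1/(1+2\alpha) \ge 1-2\alpha$.
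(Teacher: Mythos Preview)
Your proposal is correct and takes essentially the same approach as the paper: invoke Theorem~\ref{thm:primal-solve-robust} to get a feasible primal point, apply Lemma~\ref{lem:primal-to-discrete} to threshold it into a dense set, and observe that this super-level set is a prefix of $\sigma$. One minor point: the theorem as stated carries no hypothesis $\alpha \le \tfrac12$, so you should also handle the complementary case, which the paper dispatches in one line by noting that if $\alpha \ge \tfrac12$ then $(1-2\alpha)\lambda^* \le 0$ and the bound holds trivially for any $\sigma$.
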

\iflong
\begin{proof}
It is easy to see from the definition of Noisy-Order-Packing-MWU that it returns $(p, \sigma)$ such that $\sigma$ is just $V$ in nonincreasing order of $p$.

If $\alpha \geq 1/2$ then $(1-2\alpha)\lambda^* \leq 0$, and so the theorem is trivially true; any permutation $\sigma$ will work.  On the other hand, if $\alpha < 1/2$ then we can combine Theorem~\ref{thm:primal-solve-robust} and Lemma~\ref{lem:primal-to-discrete} to get that with probability at least $1/2$, Noisy-Order-Packing-MWU returns weights $p$ such that there is a $\tau \in [0,1]$ where the set $S_{\tau} = \left\{v \in V : (1+2\alpha) p_v \geq \tau \right\}$ has density at least 
\begin{align*}
\frac{\lambda^*}{1+2\alpha} &\geq \lambda^*(1 - 2\alpha).
\end{align*}
Since $\sigma$ is just non-increasing order of $p$, this implies that there is some prefix of $\sigma$ with the same density, i.e., $\density(S^*_{\sigma}) \geq (1-2\alpha)\lambda^*$ as claimed.
\end{proof}
\fi

\subsection{Subroutine: Peeling} \label{sec:peeling}
We now give a useful subroutine that will allow us to find the (approximate) best subset when we do peeling according to a given permutation $\sigma$, as well as an estimate of its density, in the LEDP model.  Consider Algorithm~\ref{alg:peeling}, Peeling($\sigma, \varsigma$).

\begin{algorithm}
    \caption{Peeling($\sigma, \varsigma$)}
    \label{alg:peeling}
    \begin{algorithmic}[1]
        \STATE The curator sends $\sigma$ to all nodes.
        \STATE Each node $v$ computes $\hat q(\sigma)_v = q(\sigma)_v + N_v$, where $N_v \sim N(0, \varsigma^2)$, and sends $\hat q(\sigma)_v$ to the curator.
        \STATE For each $x \in V$, the curator computes $\widehat \density(S^{\sigma}_x) = \frac{\sum_{v \in S^{\sigma}_x} \hat q(\sigma)_v}{|S^{\sigma}_x|}$.
        \STATE Let $u = \argmax_{x \in V} \widehat \density(S^{\sigma}_x)$.  The curator returns $(S^{\sigma}_u, \widehat \density(S^{\sigma}_u))$.
    \end{algorithmic}
\end{algorithm}

\begin{lemma} \label{lem:peeling-DP}
    Peeling($\sigma, \varsigma$) is $\frac{1}{2\varsigma^2}$-zCDP.
\end{lemma}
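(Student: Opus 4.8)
The plan is to observe that Peeling is nothing more than the Gaussian mechanism applied to the vector $q(\sigma) \in \mathbb{R}^V$, followed by post-processing, so that the privacy guarantee reduces to bounding the $\ell_2$-sensitivity of the map $G \mapsto q(\sigma)$ and invoking the standard zCDP guarantee of the Gaussian mechanism.

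First I would isolate the data-dependent part of the transcript. The curator's initial broadcast of $\sigma$ does not depend on the private edge set (it is an input to the subroutine), so it can be ignored. The only messages that depend on the graph are the values $\hat q(\sigma)_v = q(\sigma)_v + N_v$ with $N_v \sim N(0,\varsigma^2)$ independent across $v \in V$; collected together, these form exactly the output of the Gaussian mechanism run on the vector $q(\sigma)$ with per-coordinate noise variance $\varsigma^2$. Everything the curator does afterwards---computing each $\widehat\density(S^\sigma_x)$, taking the argmax $u$, and returning $(S^\sigma_u, \widehat\density(S^\sigma_u))$---is a deterministic function of $\{\hat q(\sigma)_v\}_{v \in V}$, hence post-processing. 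In the LEDP model the full transcript is precisely $\sigma$ together with these broadcasts, so the same reduction applies there.

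Second, I would bound the $\ell_2$-sensitivity $\Delta$ of the map $G \mapsto q(\sigma)$ over edge-neighboring graphs. Let $G$ and $G'$ differ in the single edge $\{a,b\}$, and assume without loss of generality $a \prec_\sigma b$. By definition $q(\sigma)_v$ counts the edges from $v$ to vertices preceding $v$ in $\sigma$, so adding or removing $\{a,b\}$ changes $q(\sigma)_b$ by exactly $1$ and leaves $q(\sigma)_w$ unchanged for every $w \neq b$. Hence $\|q(\sigma) - q'(\sigma)\|_2 = 1$, i.e., $\Delta = 1$.

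Finally, applying the Gaussian-mechanism guarantee (Lemma~\ref{lem:gaussian}) with sensitivity $\Delta = 1$ and per-coordinate noise variance $\varsigma^2$ shows that releasing $\{\hat q(\sigma)_v\}_{v \in V}$ is $\frac{\Delta^2}{2\varsigma^2} = \frac{1}{2\varsigma^2}$-zCDP, and the post-processing property of zCDP extends this to the entire output of Peeling, in both the centralized and the LEDP settings. The only step requiring any care is the sensitivity computation in the third paragraph, and even that is immediate from the structure of $q(\sigma)$; there is no real obstacle beyond making the reduction to the Gaussian mechanism precise.
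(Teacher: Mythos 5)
Your proposal is correct and uses essentially the same approach as the paper: the key observation in both is that an edge change affects $q(\sigma)_v$ for exactly one $v$, so the noisy-degree release is just a Gaussian mechanism with unit sensitivity, and everything else is post-processing. The only cosmetic difference is that you compute the global $\ell_2$-sensitivity of the vector $q(\sigma)$ and invoke the (vector) Gaussian mechanism once, whereas the paper invokes the scalar Gaussian mechanism at each node and then applies parallel composition (Theorem~\ref{thm:parallel-comp}) across nodes --- a framing that also makes the LEDP implementability explicit, but the two arguments are otherwise equivalent.
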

\iflong
\begin{proof}
    Since every edge can contribute to $q(\sigma)_v$ for exactly one $v$, the Gaussian mechanism (Lemma~\ref{lem:gaussian}) 
    and parallel composition (Theorem~\ref{thm:parallel-comp}) 
    imply that step 2 is $\frac{1}{2\varsigma^2}$-zCDP.  Steps 3 and 4 are post-processing, 
    hence the entire algorithm is $\frac{1}{2\varsigma^2}$-zCDP.  Note that this algorithm is implementable in the local model, so is $\frac{1}{2\varsigma^2}$-zCDP. In fact, step 2 is local and the other steps are only aggregation and post-processing steps.
\end{proof}
\fi

Next, we analyze the output of Peeling:
\begin{lemma} \label{lem:peeling-approx}
    Let $S \subseteq V$ be the vertices returned by Peeling($\sigma, \varsigma$). For any constant $c > 0$, with probability at least $1-2n^{-c}$, we have $|\widehat \density(S) - \density(S)| \leq 2 \varsigma \sqrt{(1+c)\log n}$ and $ \density(S) \geq  \density(S^*_{\sigma}) - 2\varsigma \sqrt{(1+c)\log n}$.
\end{lemma}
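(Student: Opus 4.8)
The plan rests on a combinatorial identity. For any prefix $S^{\sigma}_x$ of $\sigma$, summing $q(\sigma)_v$ over $v \in S^{\sigma}_x$ counts each edge of $E^{\sigma}_x$ exactly once (at its $\sigma$-later endpoint), so $\sum_{v \in S^{\sigma}_x} q(\sigma)_v = |E^{\sigma}_x|$, and hence $\frac{1}{|S^{\sigma}_x|}\sum_{v \in S^{\sigma}_x} q(\sigma)_v = \density(S^{\sigma}_x)$. Plugging in the noisy degree estimates from the algorithm, for every $x \in V$ this gives
\[
\widehat\density(S^{\sigma}_x) = \density(S^{\sigma}_x) + Z_x, \qquad Z_x := \frac{1}{|S^{\sigma}_x|}\sum_{v \in S^{\sigma}_x} N_v ,
\]
and since the $N_v$ are i.i.d.\ $N(0,\varsigma^2)$, the error $Z_x$ is a mean-zero Gaussian with variance $\varsigma^2/|S^{\sigma}_x| \le \varsigma^2$.

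Next I would control all $n$ of these errors simultaneously. A standard Gaussian tail bound gives $\Pr[|Z_x| > t] \le 2\exp(-t^2/(2\varsigma^2))$ for each $x$, so a union bound over the $n$ prefixes shows that, with probability at least $1 - 2n^{-c}$, the event $\mathcal{E} = \{\,|Z_x| \le \varsigma\sqrt{(1+c)\log n}\ \text{for all } x \in V\,\}$ holds (the variance reduction $\varsigma^2/|S^{\sigma}_x|$ on longer prefixes provides the slack that lets the crude uniform bound $\varsigma^2$ suffice). Everything else is a deterministic consequence of $\mathcal{E}$.

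Conditioning on $\mathcal{E}$, let $u = \argmax_{x} \widehat\density(S^{\sigma}_x)$, so the returned set is $S = S^{\sigma}_u$. The first claim is immediate: $|\widehat\density(S) - \density(S)| = |Z_u| \le \varsigma\sqrt{(1+c)\log n} \le 2\varsigma\sqrt{(1+c)\log n}$. For the second claim, let $x^{\star}$ be the index with $S^{*}_{\sigma} = S^{\sigma}_{x^{\star}}$. Since $u$ is an $\argmax$, $\widehat\density(S^{\sigma}_u) \ge \widehat\density(S^{\sigma}_{x^{\star}})$, so
\[
\density(S) = \widehat\density(S^{\sigma}_u) - Z_u \ \ge\ \widehat\density(S^{\sigma}_{x^{\star}}) - |Z_u| \ =\ \density(S^{*}_{\sigma}) + Z_{x^{\star}} - |Z_u| \ \ge\ \density(S^{*}_{\sigma}) - |Z_{x^{\star}}| - |Z_u| ,
\]
which on $\mathcal{E}$ is at least $\density(S^{*}_{\sigma}) - 2\varsigma\sqrt{(1+c)\log n}$, as claimed.

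The only genuinely non-routine ingredient is the counting identity $\sum_{v \in S^{\sigma}_x} q(\sigma)_v = |E^{\sigma}_x|$, together with the observation that the noise attached to a prefix is the \emph{average} of $|S^{\sigma}_x|$ i.i.d.\ Gaussians (so its variance never exceeds $\varsigma^2$); once these are in place the result is a one-line union bound plus the optimality of $u$. The main thing to watch is the bookkeeping of the constant in the Gaussian tail, so that the union bound yields failure probability exactly $2n^{-c}$ while both error terms still fit inside the stated $2\varsigma\sqrt{(1+c)\log n}$ budget — adjusting the threshold by a universal constant, or exploiting the $1/|S^{\sigma}_x|$ variance reduction more carefully, absorbs any slack.
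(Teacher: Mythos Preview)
Your proposal is correct and follows essentially the same argument as the paper: establish $\widehat\density(S^{\sigma}_x)=\density(S^{\sigma}_x)+\tfrac{1}{|S^{\sigma}_x|}\sum_{v\in S^{\sigma}_x}N_v$ via the counting identity $\sum_{v\in S^{\sigma}_x} q(\sigma)_v=|E^{\sigma}_x|$, apply Gaussian concentration to each prefix, union bound over the $n$ prefixes, and then read off both claims from the optimality of $u$. The only difference is cosmetic bookkeeping---the paper works with the summed noise $N$ and invokes its Lemma~\ref{lem:gaussian-concentration} (stated there without the usual factor $2$ in the exponent), which is exactly the slack you flag as absorbable into the $2\varsigma\sqrt{(1+c)\log n}$ budget.
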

\iflong
\begin{proof}
    Consider some $x \in V$.  Let $N = \sum_{v \in S^{\sigma}_x} N_v$ be the total noise added to nodes in $S^{\sigma}_x$.  Then $N$ is distributed as $N(0, |S^{\sigma}_x| \varsigma^2)$.  So Lemma~\ref{lem:gaussian-concentration} (the standard Gaussian concentration bound) implies that
    \begin{align*}
    \Pr[|N| \geq \sqrt{(1+c)\log n}   \sqrt{|S^{\sigma}_x|} \varsigma] &\leq 2\cdot \exp\left(-\frac{\left(\sqrt{(1+c)\log n} \sqrt{|S^{\sigma}_x|} \varsigma\right)^2}{|S^{\sigma}_x| \varsigma^2}\right) \\
    &= 2 \cdot \exp(-(1+c) \log n) = 2n^{-(1+c)}.
    \end{align*}
    Note that 
    \begin{align*}
        \widehat \density(S^{\sigma}_x) &= \frac{\sum_{v \preceq_{\sigma} x} \hat q(\sigma)_v}{|S^{\sigma}_x|} = \frac{\sum_{v \preceq_{\sigma} x} (q(\sigma)_x + N_x)}{|S^{\sigma}_x|} = \frac{|E^{\sigma}_x| + N}{|S^{\sigma}_x|} = \density(S^{\sigma}_x) + \frac{N}{|S^{\sigma}_x|}.
    \end{align*}
    Hence we have that
    \begin{align*}
        \Pr\left[|\widehat \density(S^{\sigma}_x) - \density(S^{\sigma}_x)| \geq \frac{2\sqrt{(1+c)\log n}  \varsigma}{\sqrt{|S^{\sigma}_x|}}\right] \leq 2n^{-(1+c)}
    \end{align*}
    Taking a union bound over all $x \in V$ implies that with probability at least $1-2n^{-c}$, we have $|\widehat \density(S^{\sigma}_x) - \density(S^{\sigma}_x)| \leq 2\sqrt{(1+c)\log n} \varsigma$ for all $x \in V$.  This clearly implies the lemma, since every prefix has estimated density within $2\sqrt{(1+c)\log n} \varsigma$ of its true density and the algorithm returns the prefix with the highest estimated density.
\end{proof}
\fi

\subsection{The Final Algorithm} \label{sec:alg-main}

We can now give our true algorithm.  We first give a ``simpler'' version of Noisy-Order-Packing-MWU which does not need to know $\lambda^*$, and which can easily be implemented in the local model.  We call this algorithm DSG-LEDP-core (Algorithm~\ref{alg:core}), and it is essentially a ``noisy'' version of the Greedy++ algorithms of~\cite{Boob20,CQT22}.  For $T$ iterations, we will repeatedly update loads on all of the nodes as a function of each node's (noisy) degree. These judiciously chosen updates are, in fact, simulating Noisy-Order-Packing-MWU.

Since the method only succeeds with constant probability we need to repeat the process $O(\log n)$ times.  When combined with Peeling, this gives our final algorithm, DSG-LEDP (Algorithm \ref{alg:final}). The $c$ parameter in DSG-LEDP is used to specify success probability of at least $1-n^{-c}$ in the utility bound (see Theorem~\ref{thm:dsg-ledp-utility}).  

\begin{algorithm}
    \caption{DSG-LEDP-core($T, \tau$)}
    \label{alg:core}
    \begin{algorithmic}[1]
        \STATE The curator initializes $\ell_v^{(1)} = 0$ for all $v \in V$
        \FOR{$t=1$ to $T$}
            \STATE The curator computes the permutation $\pi^{(t)}$ of $V$ defined by ordering the nodes in non-increasing order of $\{\ell_v^{(t)}\}_{v \in V}$ (breaking ties in some consistent way, e.g., by node ID).  The curator then sends $\pi^{(t)}$ to each node.
            \STATE Each node $v$ computes $\hat q(\pi^{(t)})_v = q(\pi^{(t)})_v +  N_v^{(t)}$, where $N_v^{(t)} \sim N\left(0, \tau^2\right)$
            \STATE Each node $v$ then sends $\hat q(\pi^{(t)})_v$ to the curator.
            \STATE The curator updates all loads by setting $\ell_v^{(t+1)} = \ell_v^{(t)} + \hat q(\pi^{(t)})_v$
        \ENDFOR
        \STATE The curator chooses $t$ uniformly at random from $[T]$
        \RETURN $\pi^{(t)}$
    \end{algorithmic}
\end{algorithm}
\begin{algorithm}
    \caption{DSG-LEDP($T, \varsigma, c$)}
    \label{alg:final}
    \begin{algorithmic}[1]
        \STATE Set $\tau = \sqrt{T}\varsigma$.
        \FOR{$i=1$ to $c\log_2 n$}
            \STATE Let $\pi^{(i)} \leftarrow $ DSG-LEDP-core$(T, \tau)$.
            \STATE Let $(S^{(i)}, \widehat \density(S^{(i)})) \leftarrow \text{Peeling}\left(\pi^{(i)}, \varsigma\right)$
        \ENDFOR
        \STATE The curator computes $i^* = \argmax_{i \in [c \log n]} \widehat \density(S^{(i)})$
        \RETURN $S^{(i^*)}$
    \end{algorithmic}
\end{algorithm}

We now show that there is a tight relationship between Noisy-Order-Packing-MWU and DSG-LEDP-core: they are essentially the same algorithm!  Slightly more carefully, we show that if they are provided with the same random string to use for their sampling, then they will construct the exact same sequence of orderings.  As a corollary, the distributions of their outputs are identical. 

\begin{theorem} \label{thm:alg-relationship}
    If Noisy-Order-Packing-MWU and DSG-LEDP-core are provided with the same random string to use for sampling, then for every $t \in [T]$, the ordering $\sigma^{(t)}$ computed in Noisy-Order-Packing-MWU($T, \tau$) and the ordering $\pi^{(t)}$ computed in DSG-LEDP-core($T, \tau$) are the same.
\end{theorem}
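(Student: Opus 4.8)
The plan is to prove the statement by a straightforward induction on $t$, once we fix the right coupling between the randomness of the two algorithms. Write $g_v^{(t)} := \hat m_v^{(t)} - m_v^{(t)}$ for the Gaussian noise that Noisy-Order-Packing-MWU adds in round $t$, so $g_v^{(t)} \sim N(0,\nu^2)$ with $\nu = \tfrac{\tau}{\rho\lambda^*}$, and recall that DSG-LEDP-core adds noise $N_v^{(t)} \sim N(0,\tau^2)$. Since $\rho\lambda^*\nu = \tau$, the random variable $-\rho\lambda^* g_v^{(t)}$ has exactly the law $N(0,\tau^2)$, so the two ``random strings'' can be coupled so that $N_v^{(t)} = -\rho\lambda^* g_v^{(t)}$ for every $t$ and $v$ (the sign is irrelevant by symmetry of the Gaussian; this is the precise meaning of ``provided with the same random string''). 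Under this coupling everything becomes deterministic and it suffices to show $\sigma^{(t)} = \pi^{(t)}$ for all $t$.

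The base case $t=1$ is immediate: $w_i^{(1)} = 1$ for all $i$, so $p^{(1)}$ is uniform and $\sigma^{(1)}$ is just $V$ sorted by node ID; likewise $\ell_v^{(1)} = 0$ for all $v$, so $\pi^{(1)}$ is $V$ sorted by node ID, and the two tie-breaking rules agree. For the inductive step, assume $\sigma^{(s)} = \pi^{(s)}$ for all $s \le t$. The key computation is to unroll the multiplicative update of Hedge (Algorithm~\ref{alg:DP-Hedge}): since $\hat m_v^{(s)} = \tfrac1\rho\bigl(1 - \tfrac{q(\sigma^{(s)})_v}{\lambda^*}\bigr) + g_v^{(s)}$ and $w_v^{(1)}=1$,
\[
w_v^{(t+1)} = \exp\Bigl(-\eta \sum_{s=1}^{t} \hat m_v^{(s)}\Bigr) = \exp\Bigl(-\frac{\eta t}{\rho} + \frac{\eta}{\rho\lambda^*}\sum_{s=1}^{t} q(\sigma^{(s)})_v - \eta\sum_{s=1}^{t} g_v^{(s)}\Bigr).
\]
Because $\exp$ is increasing and the term $-\tfrac{\eta t}{\rho}$ and the positive factor $\tfrac{\eta}{\rho\lambda^*}$ do not depend on $v$, the value $p_v^{(t+1)} \propto w_v^{(t+1)}$ is an increasing function of the scalar $\sum_{s=1}^{t} q(\sigma^{(s)})_v - \rho\lambda^*\sum_{s=1}^{t} g_v^{(s)}$; in particular this map preserves both the strict order and the ties among the coordinates.

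On the other side, unrolling DSG-LEDP-core gives $\ell_v^{(t+1)} = \sum_{s=1}^{t} \hat q(\pi^{(s)})_v = \sum_{s=1}^{t} q(\pi^{(s)})_v + \sum_{s=1}^{t} N_v^{(s)}$; invoking the inductive hypothesis $q(\pi^{(s)})_v = q(\sigma^{(s)})_v$ and the coupling $N_v^{(s)} = -\rho\lambda^* g_v^{(s)}$, this equals $\sum_{s=1}^{t} q(\sigma^{(s)})_v - \rho\lambda^*\sum_{s=1}^{t} g_v^{(s)}$, which is exactly the quantity that governs the order of $p^{(t+1)}$. Hence $\sigma^{(t+1)}$ (nonincreasing order of $p^{(t+1)}$, ties by node ID) and $\pi^{(t+1)}$ (nonincreasing order of $\ell^{(t+1)}$, ties by node ID) are the same permutation, completing the induction. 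The only step that needs care is the sign/scaling bookkeeping: the negative exponent in Hedge, combined with the negative coefficient $-\tfrac1{\rho\lambda^*}$ on the degree term inside $m^{(s)}_v$, is exactly what forces $N^{(s)}_v$ to couple to $-\rho\lambda^* g^{(s)}_v$ rather than to $+\rho\lambda^* g^{(s)}_v$; one should also note explicitly that $q(\sigma)_v$ is a function of $\sigma$ and the fixed graph alone, so the inductive hypothesis makes the deterministic parts of the two updates literally equal. Finally, since DSG-LEDP-core returns precisely the ordering $\pi^{(t)}$ for a uniformly random $t \in [T]$, coupling that index as well yields the corollary that the two algorithms produce identically distributed orderings.
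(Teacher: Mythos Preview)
Your proof is correct and follows essentially the same approach as the paper: induction on $t$, with the key step being to unroll the Hedge weights and observe that $w_v^{(t+1)}$ is a strictly increasing function of $\ell_v^{(t+1)}$ once the Gaussian noises are coupled via the scaling $N_v^{(s)} = -\rho\lambda^* g_v^{(s)}$. The paper carries out the identical computation (phrasing the coupling as $N_v^{(t)} = J_v^{(t)}$ with $J_v^{(t)}$ defined so that $\hat m_v^{(t)} = \tfrac{1}{\rho}(1 - \tfrac{q(\sigma^{(t)})_v + J_v^{(t)}}{\lambda^*})$, which is exactly your coupling).
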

\iflong
\begin{proof}
    We use induction on $t$.  This is obviously true for $t=1$, since for all $v$ we have that $w_v^{(1)}  = \ell_v^{(1)} = 1$ and so $p_v^{(1)} = 1/n$ and both algorithms use the same consistent tiebreaking.  
    
    Now consider some $t > 1$.  The loss that Noisy-Order-Packing-MWU supplies to Hedge for node $v$ is, by construction, 
    \begin{align*}
        \hat m_v^{(t)} & \sim N\left(\frac{1}{\rho}\left( 1 - \frac{q(\sigma^{(t)})_v}{\lambda^*}\right), \left(\frac{\tau}{\rho \lambda^*}\right)^2\right). 
    \end{align*}
    By standard properties of the normal distribution, we know that this distribution is identical to 
    \begin{align*}
        & \frac{1}{\rho}\left( 1 - \frac{q(\sigma^{(t)})_v}{\lambda^*}\right) - N\left(0,\left(\frac{\tau}{\rho \lambda^*}\right)^2\right) 
        = \frac{1}{\rho}\left( 1 - \frac{q(\sigma^{(t)})_v}{\lambda^*}\right) - \frac{N(0, \tau^2)}{\rho \lambda^*}. 
    \end{align*}

    Hence we may assume without loss of generality that Noisy-Order-Packing-MWU compute $\hat m_v^{(t)}$ by sampling a value $J_v^{(t)} \sim N(0,\tau^2)$ and setting $\hat m_v^{(t)} = \frac{1}{\rho}\left( 1 - \frac{q(\sigma^{(t)})_v}{\lambda^*}\right) - \frac{J_v^{(t)}}{\rho \lambda^*}$.  Since $N_v^{(t)} \sim N(0, \tau^2)$ in DSG-LEDP-core, this implies that if the two algorithms are given the same random bits then $N_v^{(t)} = J_v^{(t)}$.  Thus $\hat m_v^{(t)} = \frac{1}{\rho}\left( 1 - \frac{q(\sigma^{(t)})_v +N_v^{(t)}}{\lambda^*}\right)$.
    
    Now by definition of the weight updates in Hedge used in Noisy-Order-Packing-MWU, we know that 
    \begin{align*}
        w^{(t)}_v &= \prod_{i=1}^{t-1} e^{-\eta \widehat m_v^{(i)}} = \exp\left(-\eta \sum_{i=1}^t \hat m_v^{(i)}\right) = \exp\left( -\eta \sum_{i=1}^{t-1} \frac{1}{\rho}\left( 1 - \frac{q(\sigma^{(i)})_v + N_v^{(i)}}{\lambda^*}\right) \right) \\
        &= \exp\left(-\frac{\eta}{\rho} (t-1) + \eta \sum_{i=1}^{t-1} \frac{q(\sigma^{(i)})_v + N_v^{(i)}}{\rho\lambda^*}\right) \\
        &= \exp\left(-\frac{\eta}{\rho} (t-1) + \eta \sum_{i=1}^{t-1} \frac{q(\pi^{(i)})_v + N_v^{(i)}}{\rho\lambda^*}\right) \tag{induction} \\
        &= \exp\left(-\frac{\eta}{\rho} (t-1) + \eta \sum_{i=1}^{t-1} \frac{\hat q(\pi^{(i)})_v}{\rho\lambda^*}\right) \tag{def of $\hat q(\pi^{(i)})_v$} \\
        &= \exp\left(-\frac{\eta}{\rho} (t-1) + \frac{\eta}{\rho\lambda^*} \ell_v^{(t)} \right). \tag{def of $\ell_v^{(t)}$}
    \end{align*}

  Since $\eta, \rho$, and $\lambda^*$ are independent of $v$, this means that $w_v^{(t)} < w_{v'}^{(t)}$ if and only if $\ell_v^{(t)} < \ell_{v'}^{(t)}$, and hence $p_v^{(t)} < p_{v'}^{(t)}$ if and only if $\ell_v^{(t)} < \ell_{v'}^{(t)}$.  Since $\sigma^{(t)}$ is by definition the ordering of $V$ in non-increasing order of $p_v^{(t)}$, and $\pi^{(t)}$ is the ordering of $V$ in non-increasing order of $\ell_v^{(t)}$, and we break ties in the same consistent way in both algorithms, this implies that $\sigma^{(t)} = \pi^{(t)}$. 
\end{proof}
\fi

\subsection{Final Analysis} \label{sec:PST-analysis}

Now that we have shown that DSG-LEDP-core and Noisy-Order-Packing-MWU are essentially the same algorithm, we can finally analyze our complete algorithm DSG-LEDP.

\subsubsection{Privacy} 

We first discuss the privacy of the algorithm.  This is not quite direct from the fact that Noisy-Order-Packing-MWU is private, since Theorem~\ref{thm:alg-relationship} only implies that the \emph{outputs} are the same, not the computation itself, and the LEDP model requires that the full transcript be private (not just the output).  But the intuition and analysis is essentially identical.

\begin{lemma} \label{lem:DSG-LEDP-privacy}
    DSG-LEDP-core is $\frac{T}{2\tau^2}$-zCDP.
\end{lemma}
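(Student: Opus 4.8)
The plan is to reprove, at the level of the full communication transcript, the same bound obtained for Noisy-Order-Packing-MWU in Lemma~\ref{lem:nopmwu-privacy}, reusing the sensitivity observation already used for Peeling in Lemma~\ref{lem:peeling-DP}. The key point is that in any fixed round $t$ the only part of the transcript that depends on the private edge set is the collection of messages $\hat q(\pi^{(t)})_v = q(\pi^{(t)})_v + N_v^{(t)}$ with $N_v^{(t)} \sim N(0,\tau^2)$: the permutation $\pi^{(t)}$ is computed by the curator solely from the loads $\{\ell_v^{(t)}\}$, and these loads are deterministic functions of the noisy messages sent in rounds $1, \ldots, t-1$, hence post-processing of the earlier transcript. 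So, conditioned on the transcript through round $t-1$, round $t$ amounts to releasing the query vector $(q(\pi^{(t)})_v)_{v \in V}$ under per-coordinate independent Gaussian noise of variance $\tau^2$.

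I would then bound the $\ell_2$-sensitivity of this vector. Since an edge $\{u,w\}$ with $u \prec_{\pi^{(t)}} w$ contributes to $q(\pi^{(t)})_w$ and to no other coordinate, flipping a single edge changes $(q(\pi^{(t)})_v)_v$ in exactly one coordinate by $1$; equivalently the per-node queries partition the edge set. By the Gaussian mechanism (Lemma~\ref{lem:gaussian}) --- or parallel composition (Theorem~\ref{thm:parallel-comp}) over the nodes --- round $t$ is $\frac{1}{2\tau^2}$-zCDP, and it is a legal LEDP round since each node's message is a function of only its own adjacency list, the broadcast $\pi^{(t)}$, and fresh private randomness. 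Applying adaptive sequential composition for zCDP (Theorem~\ref{thm:adseqcomp}) over the $T$ rounds (adaptive because $\pi^{(t)}$ depends on earlier outputs, and valid in the LEDP model by~\cite{dhulipala2022differential}) gives that the full transcript through round $T$ is $T \cdot \frac{1}{2\tau^2} = \frac{T}{2\tau^2}$-zCDP. The final step of the algorithm --- drawing $t$ uniformly from $[T]$ and returning $\pi^{(t)}$ --- is post-processing of the transcript and does not change the guarantee.

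There is no genuinely hard step here, only bookkeeping: the thing to be careful about is that every object appearing in the transcript other than the noisy degree vectors (the broadcast permutations $\pi^{(t)}$, the loads $\ell_v^{(t)}$, the final returned index) is post-processing of previously released noisy values, so that the per-round cost is exactly $\frac{1}{2\tau^2}$-zCDP and the $T$-fold composition is clean. All of the real work has already been absorbed into the composition argument behind Theorem~\ref{thm:DP-Hedge-privacy} and into the ``each edge hits exactly one coordinate of $q(\sigma)$'' observation used for Peeling.
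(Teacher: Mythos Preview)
Your proposal is correct and follows essentially the same approach as the paper: per-round, the permutation is post-processing of prior outputs, the noisy degree vector has $\ell_2$-sensitivity $1$ (each edge hits one coordinate), so the Gaussian mechanism / parallel composition gives $\frac{1}{2\tau^2}$-zCDP, and adaptive sequential composition over $T$ rounds plus post-processing for the final return yields $\frac{T}{2\tau^2}$-zCDP. Your write-up is somewhat more explicit about the LEDP transcript and the adaptive dependence of $\pi^{(t)}$ on earlier rounds, but the argument is the same.
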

\begin{proof}
    Fix an iteration $t$.  Computing the permutation $\pi^{(t)}$ does not require using any private information, so it is private by post-processing (Theorem~\ref{thm:post-processing}).  For each $v$, the Gaussian Mechanism (Lemma~\ref{lem:gaussian}) implies that $\hat q(\pi^{(t)})$ is $1/2\tau^2$-zCDP.  Since every edge contributes to $\hat q(\pi^{(t)}))_v$ for a single $v$, parallel composition (Theorem~\ref{thm:parallel-comp}) then implies that the full vector $q(\pi^{(t)})$ is $1/2\tau^2$-zCDP, and then post-processing implies that iteration $t$ as a whole is $1/2\tau^2$-zCDP.  
    
    Finally, sequential composition (Theorem~\ref{thm:adseqcomp}) implies that all $T$ iterations combined is $T/2\tau^2$-zCDP, and then post-processing implies that DSG-LEDP-core is $T/2\tau^2$-zCDP.
\end{proof}

\begin{lemma} \label{lem:privacy-overall}
    DSG-LEDP($T, \varsigma, c)$ is $\frac{c\log_2(n)}{\varsigma^2}$-zCDP.
\end{lemma}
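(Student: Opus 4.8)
The plan is to derive this directly from the privacy guarantees already established for the two subroutines DSG-LEDP-core and Peeling, combined with adaptive sequential composition and post-processing for zCDP. First I would instantiate the parameters: DSG-LEDP sets $\tau = \sqrt{T}\varsigma$, so Lemma~\ref{lem:DSG-LEDP-privacy} gives that each call to DSG-LEDP-core$(T,\tau)$ is $\frac{T}{2\tau^2} = \frac{T}{2T\varsigma^2} = \frac{1}{2\varsigma^2}$-zCDP. Similarly, Lemma~\ref{lem:peeling-DP} gives that each call to Peeling$(\pi^{(i)},\varsigma)$ is $\frac{1}{2\varsigma^2}$-zCDP, independent of the permutation it is handed.

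Next I would analyze a single iteration $i$ of the main loop, which consists of a call to DSG-LEDP-core followed by a call to Peeling whose first argument $\pi^{(i)}$ is the (private) output of that DSG-LEDP-core call. This is exactly the setting covered by \emph{adaptive} composition: the second mechanism is allowed to depend arbitrarily on the output of the first. Applying the adaptive sequential composition theorem for zCDP (Theorem~\ref{thm:adseqcomp}) to these two $\frac{1}{2\varsigma^2}$-zCDP mechanisms shows that iteration $i$ as a whole is $\frac{1}{\varsigma^2}$-zCDP.

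Then I would compose across the $c\log_2 n$ iterations of the loop. By Theorem~\ref{thm:adseqcomp} again, running $c\log_2 n$ mechanisms each of which is $\frac{1}{\varsigma^2}$-zCDP yields a $\frac{c\log_2 n}{\varsigma^2}$-zCDP mechanism. Finally, the last two lines of DSG-LEDP — computing $i^* = \argmax_{i} \widehat\density(S^{(i)})$ and returning $S^{(i^*)}$ — depend only on the already-released pairs $(S^{(i)}, \widehat\density(S^{(i)}))$, so they are post-processing and preserve zCDP by Theorem~\ref{thm:post-processing}. Since every component mechanism is implementable in the LEDP model, and composition and post-processing hold there as well, the whole algorithm is $\frac{c\log_2 n}{\varsigma^2}$-zCDP in the LEDP model.

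I do not expect a genuine obstacle here: the statement is bookkeeping over the two lemmas just proved. The only two points needing a moment's care are (i) that the Peeling call consumes the private output $\pi^{(i)}$ of the preceding DSG-LEDP-core call, which is why one must invoke the \emph{adaptive} composition theorem rather than plain composition, and (ii) the observation that the choice $\tau = \sqrt{T}\varsigma$ is precisely what makes the per-call privacy cost of DSG-LEDP-core equal to $\frac{1}{2\varsigma^2}$, matching that of Peeling.
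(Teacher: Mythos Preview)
Your proposal is correct and follows essentially the same approach as the paper: invoke Lemma~\ref{lem:DSG-LEDP-privacy} and Lemma~\ref{lem:peeling-DP}, substitute $\tau = \sqrt{T}\varsigma$ to get $\frac{1}{2\varsigma^2}$-zCDP for each subroutine call, and then apply sequential composition over the $c\log_2 n$ iterations. You are slightly more explicit than the paper about the adaptive nature of the composition and the post-processing step at the end, but these are the same ingredients.
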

\iflong
\begin{proof}
    Note that DSG-LEDP runs $c\log_2 n$ copies of DSG-LEDP-core followed by a run of Peeling.  Lemma~\ref{lem:DSG-LEDP-privacy} implies that each run of DSG-LEDP-core is $\frac{T}{2\tau^2}$-zCDP. Since $\tau = \sqrt{T} \varsigma$, this can be rewritten as $\frac{1}{2\varsigma^2}$-zCDP. Lemma~\ref{lem:peeling-DP} shows that each call to Peeling is $\frac{1}{2\varsigma^2}$-zCDP. Thus, by the sequential composition for zCDP (Theorem~\ref{thm:adseqcomp}), we get that DSG-LEDP is $\frac{c\log_2(n)}{\varsigma^2}$-zCDP, as required.
\end{proof}

\begin{corollary} \label{cor:dsg-ledp-privacy}
    Let $\delta \in (0, 1)$ and $\epsilon \in (0, 8\log(1/\delta))$ be given privacy parameters. Set $\varsigma = \frac{4\sqrt{c\log_2(n) \log(1/\delta)}}{\epsilon}$ in DSG-LEDP. Then DSG-LEDP is $(\epsilon, \delta)$-LEDP.
\end{corollary}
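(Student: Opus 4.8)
The statement is a direct consequence of Lemma~\ref{lem:privacy-overall} together with the standard conversion from zero-concentrated differential privacy to approximate differential privacy (from~\cite{bun2016concentrated,Mironov17}), which says that any $\rho$-zCDP mechanism is $(\rho + 2\sqrt{\rho \log(1/\delta)}, \delta)$-DP for every $\delta \in (0,1)$. So the whole proof is: compute the zCDP parameter $\rho$ that results from the choice of $\varsigma$, feed it into this conversion, and check that the resulting $\epsilon$-parameter is bounded by the target $\epsilon$ under the hypothesis $\epsilon \in (0, 8\log(1/\delta))$. Since every step of DSG-LEDP (and of its subroutines DSG-LEDP-core and Peeling) is implementable in the local model and the zCDP composition and post-processing machinery holds in LEDP as well (as observed by~\cite{dhulipala2022differential}), the conclusion will automatically be about LEDP and not just centralized DP, so there is no extra work there.

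\textbf{The computation.} By Lemma~\ref{lem:privacy-overall}, DSG-LEDP$(T,\varsigma,c)$ is $\rho$-zCDP with $\rho = \frac{c\log_2(n)}{\varsigma^2}$. Substituting $\varsigma = \frac{4\sqrt{c\log_2(n)\log(1/\delta)}}{\epsilon}$ gives
\[
\rho = \frac{c\log_2(n)\,\epsilon^2}{16\, c\log_2(n)\log(1/\delta)} = \frac{\epsilon^2}{16\log(1/\delta)}.
\]
Applying the zCDP-to-DP conversion, DSG-LEDP is $(\epsilon', \delta)$-DP with $\epsilon' = \rho + 2\sqrt{\rho\log(1/\delta)}$. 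The square-root term is exactly $2\sqrt{\frac{\epsilon^2}{16\log(1/\delta)}\cdot\log(1/\delta)} = 2\cdot\frac{\epsilon}{4} = \frac{\epsilon}{2}$. For the first term, the hypothesis $\epsilon < 8\log(1/\delta)$ gives $\frac{\epsilon}{16\log(1/\delta)} < \frac12$, hence $\rho = \epsilon\cdot\frac{\epsilon}{16\log(1/\delta)} < \frac{\epsilon}{2}$. Adding the two bounds, $\epsilon' < \frac{\epsilon}{2} + \frac{\epsilon}{2} = \epsilon$, and since $(\epsilon',\delta)$-DP implies $(\epsilon,\delta)$-DP for $\epsilon' \le \epsilon$, we conclude DSG-LEDP is $(\epsilon,\delta)$-DP.

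\textbf{The LEDP claim and the main obstacle.} To finish, I would note that Lemma~\ref{lem:privacy-overall} was itself proved via the LEDP-compatible primitives (the Gaussian mechanism on locally-held values, parallel composition across nodes, sequential composition across iterations, and post-processing), so the entire transcript — not just the output — satisfies $\rho$-zCDP in the local model; the conversion statement is a pure post-processing/divergence fact that is agnostic to centralized-versus-local, so the final guarantee is $(\epsilon,\delta)$-LEDP. There is no real obstacle here: the only thing to be careful about is bookkeeping the constants in the conversion formula and using the hypothesis on $\epsilon$ in the right place (to control the $\rho$ term), rather than discarding it. The heavy lifting has already been done in Lemma~\ref{lem:DSG-LEDP-privacy}, Lemma~\ref{lem:peeling-DP}, and Lemma~\ref{lem:privacy-overall}; this corollary is just the translation step.
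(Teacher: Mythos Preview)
Your proof is correct and follows the same approach as the paper: invoke Lemma~\ref{lem:privacy-overall} to get $\rho$-zCDP with $\rho = \epsilon^2/(16\log(1/\delta))$, then convert to $(\epsilon,\delta)$-DP. The only cosmetic difference is that the paper routes through RDP (zCDP $\to$ $(\alpha,\alpha\rho)$-RDP $\to$ $(\epsilon,\delta)$-DP via Mironov's Proposition~3, optimizing over $\alpha$), whereas you use the direct Bun--Steinke conversion $\rho\text{-zCDP} \Rightarrow (\rho + 2\sqrt{\rho\log(1/\delta)},\delta)$-DP; optimizing the paper's expression over $\alpha$ yields exactly your formula, so the two are equivalent.
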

\begin{proof}
The zCDP guarantee for DSG-LEDP translates to $(\alpha, \frac{\alpha c\log_2(n)}{\varsigma^2})$-RDP guarantee for any $\alpha \geq 1$, which in turn translates to $(\frac{\alpha c \log_2(n)}{\varsigma^2} + \frac{\log(1/\delta)}{\alpha-1}, \delta)$-LEDP for any $\alpha \geq 1$ (see Proposition 3 in \cite{Mironov17}). Choosing $\alpha = \varsigma\sqrt{\frac{\log(1/\delta)}{c\log_2(n)}} + 1$ for $\varsigma$ as specified above in the lemma statement, we get that DSG-LEDP is $(\epsilon, \delta)$-DP.

Furthermore it is easy to note that DSG-LEDP-core only does local operations. In fact, in step 5 each node computes a noisy estimate of its $q(\pi^{(t)})$ and all the remaining steps are post-processing done by a central coordinator. Thus DSG-LEDP is $(\epsilon, \delta)$-LEDP.
\end{proof}
\fi

\subsubsection{Utility}

We can now analyze the utility of our full algorithm, DSG-LEDP (Algorithm~\ref{alg:final}).  \iflong \else Again, the proof can be found in Appendix~\ref{app:LEDP}. \fi

\begin{theorem} \label{thm:dsg-ledp-utility} Suppose $T$ and $\varsigma$ in DSG-LEDP are set so that $\tau = \sqrt{T}\varsigma \geq n$. Then with probability at least $1 - 3n^{-c}$, DSG-LEDP returns a set $S$ with $\density(S) \geq \lambda^* - O\left(\sqrt{\log n}\varsigma\right)$.
\end{theorem}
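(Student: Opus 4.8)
The plan is to combine three ingredients that are already in place: the exact equivalence of DSG-LEDP-core and Noisy-Order-Packing-MWU (Theorem~\ref{thm:alg-relationship}), the utility guarantee for Noisy-Order-Packing-MWU (Theorem~\ref{thm:NOP-MWU}), and the guarantee for Peeling (Lemma~\ref{lem:peeling-approx}), glued together by a standard probability-amplification-by-repetition argument. First I would observe that by Theorem~\ref{thm:alg-relationship} the distribution of the ordering $\pi^{(i)}$ output by the $i$-th call to DSG-LEDP-core$(T,\tau)$ is identical to that of the ordering $\sigma$ output by Noisy-Order-Packing-MWU$(T,\tau)$; since $\density(S^*_{\pi^{(i)}})$ is a deterministic function of the ordering and the fixed graph, Theorem~\ref{thm:NOP-MWU} gives $\density(S^*_{\pi^{(i)}}) \ge (1-2\alpha)\lambda^*$ with probability at least $1/2$, where $\alpha = 8\rho\sqrt{(\log n)/T}$. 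The $c\log_2 n$ calls use independent randomness, so the probability that \emph{every} one of them produces an ordering with $\density(S^*_{\pi^{(i)}}) < (1-2\alpha)\lambda^*$ is at most $2^{-c\log_2 n} = n^{-c}$; hence with probability at least $1 - n^{-c}$ there is an index $i_0$ with $\density(S^*_{\pi^{(i_0)}}) \ge (1-2\alpha)\lambda^*$.

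Next I would simplify $\alpha$ using the hypothesis $\tau = \sqrt{T}\varsigma \ge n$. Since $\rho = (n+\tau)/\lambda^*$ and $n+\tau \le 2\tau$, we get $\alpha = 8\frac{n+\tau}{\lambda^*}\sqrt{(\log n)/T} \le 16\frac{\tau}{\lambda^*}\sqrt{(\log n)/T} = 16\frac{\varsigma\sqrt{\log n}}{\lambda^*}$, so $2\alpha\lambda^* \le 32\varsigma\sqrt{\log n} = O(\sqrt{\log n}\,\varsigma)$ and therefore $(1-2\alpha)\lambda^* \ge \lambda^* - O(\sqrt{\log n}\,\varsigma)$. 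If it happens that $\alpha \ge 1/2$, then $\lambda^* \le 2\alpha\lambda^* \le 32\varsigma\sqrt{\log n}$, so the conclusion $\density(S) \ge \lambda^* - O(\sqrt{\log n}\,\varsigma)$ holds vacuously from $\density(S) \ge 0$ (choosing the hidden constant $\ge 32$); thus we may assume $\alpha < 1/2$, which is exactly the regime where the nontrivial case of Theorem~\ref{thm:NOP-MWU} applies.

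Then I would bring in Peeling. Applying Lemma~\ref{lem:peeling-approx} with constant $c+1$ to each of the $c\log_2 n$ calls and union-bounding (using $c\log_2 n \le n$ for $n$ large enough), with probability at least $1 - 2n^{-c}$ we have, simultaneously for all $i$, both $|\widehat\density(S^{(i)}) - \density(S^{(i)})| \le \gamma$ and $\density(S^{(i)}) \ge \density(S^*_{\pi^{(i)}}) - \gamma$, where $\gamma = 2\varsigma\sqrt{(2+c)\log n} = O(\sqrt{\log n}\,\varsigma)$. Conditioning on this event and on the existence of the good index $i_0$ (a total failure probability of at most $n^{-c} + 2n^{-c} = 3n^{-c}$), the returned set $S = S^{(i^*)}$ with $i^* = \argmax_i \widehat\density(S^{(i)})$ satisfies $\density(S^{(i^*)}) \ge \widehat\density(S^{(i^*)}) - \gamma \ge \widehat\density(S^{(i_0)}) - \gamma \ge \density(S^{(i_0)}) - 2\gamma \ge \density(S^*_{\pi^{(i_0)}}) - 3\gamma \ge (1-2\alpha)\lambda^* - 3\gamma \ge \lambda^* - O(\sqrt{\log n}\,\varsigma)$, which is the claim.

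A remark on difficulty: essentially all of the technical content lives in the earlier sections — the noisy-Hedge regret bound (Theorem~\ref{thm:MWU-main}), the delicate PST analysis (Theorem~\ref{thm:primal-solve-robust}), the equivalence (Theorem~\ref{thm:alg-relationship}), and the Peeling analysis — so this proof is mostly bookkeeping. The only points needing care are (i) plugging the assumption $\tau \ge n$ into the definition of $\alpha$ to see that the additive loss is genuinely $O(\sqrt{\log n}\,\varsigma)$, and (ii) the probability accounting across the $c\log_2 n$ repetitions: to keep the Peeling failure probability below $2n^{-c}$ one must invoke Lemma~\ref{lem:peeling-approx} with the inflated exponent $c+1$, and the $\widehat\density$-concentration must be taken over \emph{all} runs because the maximizing index $i^*$ is data-dependent, not just over the (also data-dependent) good index $i_0$.
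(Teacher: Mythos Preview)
Your proposal is correct and follows essentially the same approach as the paper's own proof: combine Theorem~\ref{thm:alg-relationship} and Theorem~\ref{thm:NOP-MWU} to get a per-run success probability of $1/2$, amplify over $c\log_2 n$ independent runs, simplify $\alpha$ via $\tau \ge n$, and finish with Lemma~\ref{lem:peeling-approx} plus a union bound. If anything, your bookkeeping is slightly more careful than the paper's (you explicitly dispose of the $\alpha \ge 1/2$ case and invoke Lemma~\ref{lem:peeling-approx} with exponent $c+1$ so the union bound over the $c\log_2 n$ Peeling calls cleanly yields $2n^{-c}$), but the structure and ingredients are identical.
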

\iflong
\begin{proof}
    Since DSG-LEDP-core chooses a random $t \in [T]$ and returns the permutation from it, Theorem~\ref{thm:alg-relationship} and Theorem~\ref{thm:NOP-MWU} imply that with probability at least $1/2$, DSG-LEDP-core returns a permutation $\sigma$ with 
    \[\density(S^*_{\sigma}) \geq (1-2\alpha)\lambda^* \geq \lambda^* - O\left(\sqrt{\log n}\varsigma\right),\]
    where the second inequality follows since
    \[
    \alpha = 8\rho \sqrt{\frac{\log n}{T}} 
    = O\left(\frac{n+\tau}{\lambda^*} \sqrt{\frac{\log n}{T}}\right) 
    = O\left(\frac{\sqrt{T}\varsigma}{\lambda^*} \sqrt{\frac{\log n}{T}}\right)
    = O\left(\frac{\sqrt{\log n}\varsigma}{\lambda^*}\right).
    \]
     Since DSG-LEDP runs $c \log_2 n$ independent copies of DSG-LEDP-core, we conclude that with probability at least $1 - n^{-c}$, there is at least one index $i \in [c \log_2 n]$ in which $\density(S^*_{\pi^{(i)}}) \geq \lambda^* - O\left(\sqrt{\log n}\varsigma\right)$.  We do not know which iteration this is, but DSG-LEDP then runs Peeling (Algorithm~\ref{alg:peeling}) on each of these permutations. Using Lemma~\ref{lem:peeling-approx}, we conclude that with probability at least $1-2n^{-c}$, for all $i \in [c \log_2 n]$, the set $S^{(i)}$ that we get from calling Peeling has both true and estimated density within $O\left(\sqrt{\log n}\varsigma\right)$. of $S^*_{\pi^{(i)}}$.  Thus with probability at least $1-3n^{-c}$, we have
    \begin{align*}
        \density(S) &\geq \max_{i \in [c \log n]} \density(S^*_{\pi^{(i)}}) - O\left(\sqrt{\log n}\varsigma\right) \\
        &\geq \lambda^* - O\left(\sqrt{\log n}\varsigma\right) - O\left(\sqrt{\log n}\varsigma\right) \\
        &\geq \lambda^* - O\left(\sqrt{\log n}\varsigma\right)
    \end{align*}
    as claimed.
\end{proof}
\fi
\begin{corollary} \label{cor:main}
    Let $\delta \in (0, 1)$ and $\epsilon \in (0, 8\log(1/\delta)$ be given privacy parameters. Set $\varsigma = \frac{4\sqrt{c\log_2(n) \log(1/\delta)}}{\epsilon}$ and $T = \lceil\frac{n^2}{\varsigma^2}\rceil$ in DSG-LEDP. Then DSG-LEDP is $(\epsilon,\delta)$-LEDP, and with probability at least $1 - 3n^{-c}$, DSG-LEDP returns a set $S$ with $\density(S) \geq \lambda^* - O\left( \frac{\log(n) \sqrt{\log(1/\delta)} }{\epsilon}\right)$.
\end{corollary}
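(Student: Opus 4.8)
The plan is to obtain Corollary~\ref{cor:main} by combining the two results already proved about DSG-LEDP with the stated parameter choices, so the proof is essentially a bookkeeping specialization. The privacy half is immediate: the value $\varsigma = \frac{4\sqrt{c\log_2(n)\log(1/\delta)}}{\epsilon}$ is exactly the one for which Corollary~\ref{cor:dsg-ledp-privacy} guarantees $(\epsilon,\delta)$-LEDP, so nothing new is needed there. The one structural point worth spelling out is \emph{why} we are free to pick $T$ afterwards without penalty: by Lemma~\ref{lem:DSG-LEDP-privacy} each call to DSG-LEDP-core is $\frac{T}{2\tau^2}$-zCDP, which equals $\frac{1}{2\varsigma^2}$-zCDP once $\tau = \sqrt{T}\varsigma$, i.e.\ the privacy cost does not depend on $T$ at all. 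Hence we may take $T$ as large as we like to satisfy any utility-side hypothesis while keeping the privacy guarantee fixed.

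For the utility half, I would first verify the hypothesis of Theorem~\ref{thm:dsg-ledp-utility}, namely $\tau = \sqrt{T}\varsigma \geq n$. With $T = \lceil n^2/\varsigma^2\rceil \geq n^2/\varsigma^2$ we get $\sqrt{T}\,\varsigma \geq (n/\varsigma)\,\varsigma = n$, so the hypothesis holds (and this is precisely why that particular, smallest-possible $T$ is chosen). Theorem~\ref{thm:dsg-ledp-utility} then yields, with probability at least $1 - 3n^{-c}$, a set $S$ with $\density(S) \geq \lambda^* - O(\sqrt{\log n}\,\varsigma)$.

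It remains to plug in $\varsigma$: $\sqrt{\log n}\,\varsigma = \frac{4\sqrt{c}}{\epsilon}\,\sqrt{\log n}\,\sqrt{\log_2 n}\,\sqrt{\log(1/\delta)}$, and since $c$ is a fixed constant and $\log_2 n = \Theta(\log n)$, this simplifies to $O\!\left(\frac{\log n\,\sqrt{\log(1/\delta)}}{\epsilon}\right)$, matching the claimed additive loss. I do not expect any real obstacle: the corollary is a direct specialization, with all the genuine content already carried by Theorem~\ref{thm:dsg-ledp-utility} (which itself rests on Theorem~\ref{thm:NOP-MWU} and the equivalence Theorem~\ref{thm:alg-relationship}) and by the zCDP-to-$(\epsilon,\delta)$ conversion inside Corollary~\ref{cor:dsg-ledp-privacy}. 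The only thing to be mildly careful about is the compatibility of the two parameter constraints — the privacy-side choice of $\varsigma$ and the utility-side requirement $\tau \geq n$ — which is resolved exactly by the $T$-independence of the privacy noise noted in the first paragraph.
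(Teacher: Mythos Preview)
Your proposal is correct and follows essentially the same approach as the paper: the paper's proof is a one-liner citing Corollary~\ref{cor:dsg-ledp-privacy} for privacy and Theorem~\ref{thm:dsg-ledp-utility} for utility, and you have simply spelled out the hypothesis check $\tau=\sqrt{T}\varsigma\geq n$ and the substitution of $\varsigma$ that the paper leaves implicit.
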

\begin{proof}
The privacy guarantee follows from Corollary~\ref{cor:dsg-ledp-privacy}, and the output guarantee from Theorem~\ref{thm:dsg-ledp-utility}.
\end{proof}

\subsection{Improvement in the Centralized Setting} \label{sec:centralized-unweighted}

In this section we provide improved approximation guarantees in the centralized (not local) edge differential-privacy setting via the techniques of Papernot and Steinke~\cite{papernot-steinke}, which builds upon the work of Liu and Talwar~\cite{liu-talwar}. This will allow us to improve our additive loss in our main result. 

The setup of that paper, adapted to our setting, is as follows. Suppose $Q$ is a randomized mechanism operating on datasets $D$ with output $Q(D)$ of the form $(s, q)$ where $s$ is the actual desired output (e.g., a subset of nodes in the DSG problem) and $q \in \mathbb{R}$ is a measure of its quality (e.g., the density of the output subset of nodes in the DSG problem) -- higher quality is more desirable. 
Then, the results of Papernot and Steinke~\cite{papernot-steinke} imply the following result (which we note requires the centralized model, since the intermediate computation which it does is not private, and so the requirement in the LEDP model that the entire transcript be DP would be violated): 

\begin{theorem} \label{thm:papernot-steinke}
Suppose $Q$ is $\rho$-zCDP. Given  $\gamma \in (0, 1)$, consider the algorithm $\AQ$ that samples $J$ from the standard geometric distribution with success probability $\gamma$, runs $J$ copies of $Q$ with independent random seeds and return the output $(s, q)$ of $Q$ with the highest value of $q$ among all outputs. Then for any $\delta \in (0, 1)$, the algorithm is $(6\sqrt{\rho \log(\frac{1}{\gamma\delta})}, \delta)$-DP. Furthermore, the quality of the output of $\AQ$ is at least $q^*$ with probability at least $1 - \frac{\gamma}{\Pr_{(s, q) \sim Q(D)}[q \geq q^*]}$.
\end{theorem}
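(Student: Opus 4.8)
The plan is to prove the two assertions separately: the privacy bound by instantiating the Papernot--Steinke~\cite{papernot-steinke} framework (which itself builds on Liu--Talwar~\cite{liu-talwar}) with the zCDP guarantee of $Q$, and the utility bound by a short direct computation with the geometric distribution. These are essentially independent.

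For privacy, I would first restate the hypothesis in R\'enyi form: $\rho$-zCDP of $Q$ says precisely that $D_\lambda(Q(D)\|Q(D')) \le \lambda\rho$ for every order $\lambda>1$ and every neighboring pair. The core ingredient is the random-stopping selection bound of~\cite{papernot-steinke}: if each invocation of $Q$ is $(\lambda,\epsilon)$-RDP, then running $Q$ a geometric$(\gamma)$ number of times and returning the best-scoring output is $(\lambda,\epsilon + O(\log(1/\gamma)/(\lambda-1)))$-RDP (up to the low-order terms in their statement). The point that makes this nontrivial is that the overhead is $\tilde O(\log(1/\gamma))$ and does \emph{not} grow with the a priori unbounded number of runs $J$. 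Plugging in $\epsilon=\lambda\rho$ for all $\lambda$ and then converting back via the standard implication $(\lambda,\epsilon_\lambda)$-RDP $\Rightarrow$ $(\epsilon_\lambda+\log(1/\delta)/(\lambda-1),\delta)$-DP leaves a bound of the form $\lambda\rho + O(\log(1/(\gamma\delta))/(\lambda-1))$; optimizing over $\lambda$ by taking $\lambda-1 \asymp \sqrt{\log(1/(\gamma\delta))/\rho}$ collapses this to $O(\sqrt{\rho\log(1/(\gamma\delta))})$, and conservative bookkeeping of the constants yields the stated $6\sqrt{\rho\log(1/(\gamma\delta))}$.

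For utility, write $p = \Pr_{(s,q)\sim Q(D)}[q \ge q^*]$. Conditioned on $J=k$, the $k$ runs use independent seeds, so the probability that none achieves quality at least $q^*$ is $(1-p)^k$; since $\AQ$ returns the run of highest quality, this equals the probability that $\AQ$'s output has quality below $q^*$ given $J=k$. Averaging over $J$, whose pmf is $\Pr[J=k]=(1-\gamma)^{k-1}\gamma$, gives $\Pr[\text{quality}<q^*] = \sum_{k\ge 1}(1-\gamma)^{k-1}\gamma(1-p)^k = \gamma(1-p)/\big(1-(1-\gamma)(1-p)\big)$. Since $1-(1-\gamma)(1-p) = p + \gamma(1-p) \ge p$, this is at most $\gamma(1-p)/p \le \gamma/p$, which is exactly the claim.

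The main obstacle is that essentially all the work lives inside the cited Papernot--Steinke bound: showing that taking the best of a random, unbounded number of independent private runs costs only $\tilde O(\log(1/\gamma))$ additional R\'enyi privacy rather than blowing up with the run count. The mechanism behind that is the memorylessness of the geometric distribution, which permits a coupling/telescoping comparison of the probability that $\AQ$ emits an output in a given quality band on $D$ versus $D'$ against the single-run ratio controlled by RDP. On our side the only real care needed is the bookkeeping of the zCDP $\to$ RDP $\to$ $(\epsilon,\delta)$ conversions and the order optimization, together with verifying that the universal constant is indeed at most $6$.
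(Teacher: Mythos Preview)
Your proposal is correct and follows essentially the same approach as the paper: the privacy bound is obtained by invoking the Papernot--Steinke RDP bound for the random-stopping mechanism (their Corollary~4), converting to $(\epsilon,\delta)$-DP via~\cite{Mironov17}, and optimizing the R\'enyi order at $\lambda - 1 = \sqrt{\log(1/(\gamma\delta))/\rho}$, exactly as the paper does. For utility the paper simply cites Theorem~3.3 of~\cite{liu-talwar} rather than spelling out the geometric-series computation you give, but your direct argument is precisely that result.
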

\begin{proof}
In the terminology of Papernot and Steinke~\cite{papernot-steinke}, the mechanism $\AQ$ is obtained by using the distribution $\mathcal{D}_{1,\gamma}$, which is the geometric distribution with success probability $\gamma$. Then using Corollary 4 of their paper, we get that for any $\lambda \geq 1 + \sqrt{\frac{1}{\rho}\log(1/\gamma)}$, the mechanism $\AQ$ satisfies $(\lambda, \epsilon')$-RDP where
\[\epsilon' = \rho\cdot(\lambda-1) + \tfrac{1}{\lambda-1}\log(1/\gamma) + 4\sqrt{\rho\log(1/\gamma)} - \rho.\]
For any $\delta \in (0, 1)$, this translates (see Proposition 3 in \cite{Mironov17}) to 
\[\left(\rho\cdot(\lambda-1) + \tfrac{1}{\lambda-1}\log(1/\gamma) + 4\sqrt{\rho\log(1/\gamma)} - \rho + \tfrac{1}{\lambda-1}\log(1/\delta), \delta\right)\text{-DP}.\]
Setting $\lambda = 1 + \sqrt{\frac{1}{\rho}\log(\frac{1}{\gamma\delta})}$, this simplifies to
\[\left(2\sqrt{\rho\log(\tfrac{1}{\gamma\delta})}+4\sqrt{\rho\log(1/\gamma)} - \rho, \delta\right)\text{-DP}.\]
The privacy guarantee stated in the theorem statement is a weaker, but simpler, form of the above guarantee.

Next, the utility guarantee follows directly from Theorem 3.3 of \cite{liu-talwar}.
\end{proof}

Clearly any mechanism in the LEDP model can also be run in the centralized model, so consider the following centralized mechanism which first runs DESG-LEDP-core (centralized) and then 
Peeling:

\begin{algorithm}
    \caption{Centralized-DSG-core($T, \varsigma$)}
    \label{alg:centralized-dsg}
    \begin{algorithmic}[1]
        \STATE Set $\tau = \sqrt{T}\varsigma$.

        \STATE Run DSG-LEDP-core($T, \tau$) and obtain a permutation $\sigma$.

        \STATE Output $(S, \tilde \density(S)) \leftarrow$ Peeling($\sigma, \varsigma$).
    \end{algorithmic}
\end{algorithm}

\begin{lemma} \label{lem:centralized-privacy}
    Centralized-DSG-core($T, \varsigma)$ is $\frac{1}{\varsigma^2}$-zCDP.
\end{lemma}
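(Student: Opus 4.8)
The plan is to apply sequential composition of zCDP mechanisms to the two privacy-consuming steps of Centralized-DSG-core. First I would observe that step 2 runs DSG-LEDP-core($T, \tau$) with $\tau = \sqrt{T}\varsigma$; by Lemma~\ref{lem:DSG-LEDP-privacy} this is $\frac{T}{2\tau^2}$-zCDP, and substituting $\tau = \sqrt{T}\varsigma$ gives $\frac{T}{2T\varsigma^2} = \frac{1}{2\varsigma^2}$-zCDP. Next I would note that step 3 runs Peeling($\sigma, \varsigma$), which by Lemma~\ref{lem:peeling-DP} is $\frac{1}{2\varsigma^2}$-zCDP.

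Then I would combine these via adaptive sequential composition for zCDP (Theorem~\ref{thm:adseqcomp}): the total privacy cost is $\frac{1}{2\varsigma^2} + \frac{1}{2\varsigma^2} = \frac{1}{\varsigma^2}$, so Centralized-DSG-core($T, \varsigma$) is $\frac{1}{\varsigma^2}$-zCDP as claimed. Step 1 (setting $\tau$) does not touch the data and step 3's post-processing of the Peeling output is already absorbed into Lemma~\ref{lem:peeling-DP}, so no additional privacy is spent.

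The only point requiring a bit of care — and the closest thing to an obstacle — is that Peeling's input $\sigma$ is itself the (data-dependent) output of step 2, so one must invoke the \emph{adaptive} version of sequential composition rather than the basic version; this is exactly what Theorem~\ref{thm:adseqcomp} provides, so the argument goes through cleanly. Overall this is a short routine composition argument with no substantive technical difficulty.
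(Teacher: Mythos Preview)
Your proposal is correct and follows essentially the same approach as the paper's own proof: invoke Lemma~\ref{lem:DSG-LEDP-privacy} for DSG-LEDP-core (yielding $\frac{1}{2\varsigma^2}$-zCDP after the substitution $\tau=\sqrt{T}\varsigma$), Lemma~\ref{lem:peeling-DP} for Peeling (another $\frac{1}{2\varsigma^2}$-zCDP), and then compose. Your explicit remark that adaptive composition is needed because $\sigma$ is data-dependent is a nice clarification the paper leaves implicit.
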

\begin{proof}
    Since DSG-LEDP-core is equivalent to Noisy-Order-Packing-MWU, Lemma~\ref{lem:DSG-LEDP-privacy} implies that DSG-LEDP-core is $\frac{T}{2\tau^2}$-zCDP. Since $\tau = \sqrt{T} \varsigma$, this can be rewritten as $\frac{1}{2\varsigma^2}$-zCDP. Lemma~\ref{lem:peeling-DP} shows that the call to Peeling is $\frac{1}{2\varsigma^2}$-zCDP. Thus, by the composition results for zCDP~\cite{bun2016concentrated}, we get that DSG-LEDP is $\frac{1}{\varsigma^2}$-zCDP, as required.   
\end{proof}

\begin{lemma} \label{lem:centralized-dsg-ledp-utility} Suppose $T$ and $\varsigma$ in Centralized-DSG-core are set so that $\tau = \sqrt{T}\varsigma \geq n$. Then with probability at least $1/4$, Centralized-DSG-core($T, \varsigma$) returns a set $S$ with $\tilde\density(S) \geq \lambda^* - O\left(\sqrt{\log n}\varsigma\right)$.
\end{lemma}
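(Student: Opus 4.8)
The plan is to reuse the analysis already developed for DSG-LEDP, stripped of its $O(\log n)$-fold repetition loop; the absence of that loop is exactly why the conclusion holds only with constant probability rather than with high probability. First I would invoke Theorem~\ref{thm:alg-relationship} to replace the call to DSG-LEDP-core$(T,\tau)$ inside Centralized-DSG-core by a call to Noisy-Order-Packing-MWU$(T,\tau)$: since the two subroutines produce identically distributed orderings, it suffices to prove the claim when $\sigma$ is the ordering returned by Noisy-Order-Packing-MWU. Theorem~\ref{thm:NOP-MWU} then gives that, with probability at least $1/2$, the returned $\sigma$ satisfies $\density(S^*_\sigma) \geq (1-2\alpha)\lambda^*$, where $\alpha = 8\rho\sqrt{\log n / T}$.

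Next I would bound $\alpha$ using the hypothesis $\tau = \sqrt{T}\,\varsigma \geq n$. Since $\rho = \frac{n+\tau}{\lambda^*} \leq \frac{2\tau}{\lambda^*} = \frac{2\sqrt{T}\,\varsigma}{\lambda^*}$, we get $\alpha = 8\rho\sqrt{\log n / T} \leq \frac{16\,\varsigma\sqrt{\log n}}{\lambda^*}$, so $2\alpha\lambda^* = O(\varsigma\sqrt{\log n})$ and hence, on the probability-$1/2$ event, $\density(S^*_\sigma) \geq \lambda^* - O(\varsigma\sqrt{\log n})$ (this also covers the degenerate case $\alpha \geq 1/2$, where $(1-2\alpha)\lambda^* \leq 0$ and the bound is vacuous).

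Finally I would feed $\sigma$ to Peeling$(\sigma,\varsigma)$ and apply Lemma~\ref{lem:peeling-approx}. That lemma holds for every fixed permutation, so I can condition on the random $\sigma$ produced above and conclude that for any constant $c>0$, with probability at least $1-2n^{-c}$ the set $S$ returned by Peeling satisfies both $|\widehat\density(S) - \density(S)| \leq 2\varsigma\sqrt{(1+c)\log n}$ and $\density(S) \geq \density(S^*_\sigma) - 2\varsigma\sqrt{(1+c)\log n}$, where $\widehat\density(S)$ is exactly the quantity Centralized-DSG-core outputs as $\tilde\density(S)$. Chaining these two inequalities with the bound on $\density(S^*_\sigma)$ and taking a union bound over the two failure events, with probability at least $1/2 - 2n^{-c}$ we obtain $\tilde\density(S) \geq \lambda^* - O(\varsigma\sqrt{\log n})$; choosing $c$ to be a large enough constant makes $2n^{-c} \leq 1/4$, yielding the claimed probability $1/4$.

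The only step needing genuine care is the bound on $\alpha$: the hypothesis $\tau \geq n$ is precisely what forces $\rho = O(\tau/\lambda^*)$ rather than merely $\rho = O(n/\lambda^*)$, which in turn is what converts the multiplicative slack $2\alpha$ into a clean additive loss of $O(\varsigma\sqrt{\log n})$. Everything else is a direct composition of the cited results, and the constant (rather than high) success probability is inherent, since Theorem~\ref{thm:NOP-MWU} only succeeds with probability $1/2$ and Centralized-DSG-core deliberately does not repeat its core subroutine — that amplification is deferred to the outer mechanism of Theorem~\ref{thm:papernot-steinke}.
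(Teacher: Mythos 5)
Your proof is correct and follows essentially the same route as the paper's: invoke Theorem~\ref{thm:alg-relationship} and Theorem~\ref{thm:NOP-MWU} to get $\density(S^*_\sigma) \geq (1-2\alpha)\lambda^*$ with probability at least $1/2$, bound $\alpha$ via $\tau = \sqrt{T}\varsigma \geq n$, then apply Lemma~\ref{lem:peeling-approx} and combine. The only cosmetic difference is in how the two failure probabilities are combined — you use a union bound with a large constant $c$ (so $2n^{-c} \leq 1/4$), whereas the paper sets $c = 2/\log n$ in Lemma~\ref{lem:peeling-approx} to get a Peeling success probability of $1 - 2e^{-2} \geq 1/2$ and then multiplies the two $1/2$'s using independence of the two subroutines' randomness; both yield $\geq 1/4$.
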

\iflong
\begin{proof}
    Exactly as argued in Theorem~\ref{thm:dsg-ledp-utility}, with probability at least $1/2$, DSG-LEDP-core returns a permutation $\sigma$ with 
    \[\density(S^*_{\sigma}) \geq (1-2\alpha)\lambda^* \geq \lambda^* - O\left(\sqrt{\log n}\varsigma\right).\]
    Using Lemma~\ref{lem:peeling-approx} with $c = \frac{2}{\log n}$, we conclude that with probability at least $1-2e^{-2} \geq 1/2$, the set $S$ that we get from calling Peeling has both true and estimated density within $O\left(\sqrt{\log n}\varsigma\right)$ of $S^*_{\sigma}$.  Thus with probability at least $1/4$, we have
    \begin{align*}
        \tilde\density(S) &\geq \density(S^*_{\sigma}) - O\left(\sqrt{\log n}\varsigma\right) \\
        &\geq \lambda^* - O\left(\sqrt{\log n}\varsigma\right) - O\left(\sqrt{\log n}\varsigma\right) \\
        &\geq \lambda^* - O\left(\sqrt{\log n}\varsigma\right)
    \end{align*}
    as claimed.
\end{proof}
\fi



Now, let $c > 0$ be a given failure probability parameter, and define $\gamma = n^{-c}$ for notational convenience. Let Centralized-DSG be the mechanism $\AQ$ obtained by applying the mechanism of Theorem~\ref{thm:papernot-steinke} with $\gamma = n^{-c}$ as specified above, $Q = $ Centralized-DSG-core, $s = S$, the set it outputs, and $q = \tilde{\density}(S)$, the estimated density it outputs, and $q^*$ set to $\lambda^* - O\left(\sqrt{\log n}\varsigma\right)$ as specified in Lemma~\ref{lem:centralized-dsg-ledp-utility}.  

\begin{theorem} \label{thm:centralized-main}
    Let $\delta \in (0, 1)$ and $\epsilon > 0$ be given privacy parameters. Set $\varsigma = \frac{6\sqrt{\log(n^c/\delta)}}{\epsilon}$ and $T = \lceil \frac{n^2}{\varsigma^2}\rceil$ in Centralized-DSG-core. Then Centralized-DSG is $(\epsilon, \delta)$-DP, and with probability at least $1 - 6n^{-c}$, the density of the set it outputs is at least $\lambda^* -O\left(\frac{\sqrt{\log(n)\log(n/\delta)}}{\epsilon}\right)$.
\end{theorem}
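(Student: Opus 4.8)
The plan is to obtain both the privacy and the utility guarantee by feeding the three ingredients already in place --- Lemma~\ref{lem:centralized-privacy} ($\frac{1}{\varsigma^2}$-zCDP for Centralized-DSG-core), Lemma~\ref{lem:centralized-dsg-ledp-utility} (constant-probability utility for a single run), and the amplification theorem of Theorem~\ref{thm:papernot-steinke} --- through the mechanism $\AQ=$ Centralized-DSG, instantiated with $Q=$ Centralized-DSG-core, $\gamma = n^{-c}$, output $s = S$, and quality $q = \tilde\density(S)$.

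Privacy is essentially immediate. By Lemma~\ref{lem:centralized-privacy}, $Q$ is $\rho$-zCDP with $\rho = 1/\varsigma^2$, so Theorem~\ref{thm:papernot-steinke} gives that $\AQ$ is $\big(6\sqrt{\rho\log(1/(\gamma\delta))},\,\delta\big)$-DP. Since $\gamma\delta = \delta/n^c$, we have $\log(1/(\gamma\delta)) = \log(n^c/\delta)$, so the first parameter is $6\sqrt{\log(n^c/\delta)}/\varsigma$, which equals $\epsilon$ exactly for $\varsigma = 6\sqrt{\log(n^c/\delta)}/\epsilon$. Along the way I would check the hypothesis of Lemma~\ref{lem:centralized-dsg-ledp-utility}: the choice $T = \lceil n^2/\varsigma^2\rceil$ forces $\tau = \sqrt{T}\varsigma \ge n$, as needed.

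For utility, Lemma~\ref{lem:centralized-dsg-ledp-utility} gives that a single run of $Q$ outputs a set $S$ with $\tilde\density(S) \ge q^* := \lambda^* - O(\sqrt{\log n}\varsigma)$ with probability at least $1/4$, i.e., $\Pr_{(s,q)\sim Q(D)}[q \ge q^*] \ge 1/4$. Feeding this into the utility half of Theorem~\ref{thm:papernot-steinke}, the pair $(S, \tilde\density(S))$ returned by $\AQ$ satisfies $\tilde\density(S) \ge q^*$ except with probability at most $\gamma / (1/4) = 4n^{-c}$. Rescaling, $\sqrt{\log n}\varsigma = 6\sqrt{\log n\,\log(n^c/\delta)}/\epsilon = O\!\big(\sqrt{\log n\,\log(n/\delta)}/\epsilon\big)$ since $c$ is a constant and $\log(n^c/\delta) = c\log n + \log(1/\delta) = O(\log(n/\delta))$, so $q^* = \lambda^* - O\!\big(\sqrt{\log n\,\log(n/\delta)}/\epsilon\big)$.

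The step that requires genuine care --- and the main obstacle --- is that the quantity $\AQ$ maximizes over runs is the \emph{Peeling estimate} $\tilde\density(S)$, not the true density, and $\AQ$ runs a random number $J$ of independent copies of $Q$ with $\E[J] = 1/\gamma = n^c$; a copy whose internal Peeling noise happened to overshoot could be selected, inflating the reported estimate, so I must separately control the gap $\tilde\density(S) - \density(S)$ for the returned set. I would invoke Lemma~\ref{lem:peeling-approx} with failure parameter $2c$ for each of the $J$ internal Peeling calls, making each one accurate to within $2\varsigma\sqrt{(1+2c)\log n} = O(\sqrt{\log n}\varsigma)$ with probability $1 - 2n^{-2c}$; the expected number of inaccurate calls is then at most $\E[J]\cdot 2n^{-2c} = 2n^{-c}$, so by Markov all $J$ calls --- in particular the one producing the returned set --- are accurate except with probability $2n^{-c}$, on which event $\density(S) \ge \tilde\density(S) - O(\sqrt{\log n}\varsigma)$. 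Combining with the $4n^{-c}$ failure probability from the boosting step yields $\density(S) \ge q^* - O(\sqrt{\log n}\varsigma) = \lambda^* - O\!\big(\sqrt{\log n\,\log(n/\delta)}/\epsilon\big)$ with probability at least $1 - 6n^{-c}$, as claimed.
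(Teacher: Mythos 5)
Your proof is correct and follows essentially the same route as the paper: feed the zCDP guarantee of Lemma~\ref{lem:centralized-privacy} and the constant-probability utility of Lemma~\ref{lem:centralized-dsg-ledp-utility} into Theorem~\ref{thm:papernot-steinke}, then separately control the gap between the reported Peeling estimate and the true density across all $J$ internal calls. The only cosmetic variation is in that last step: the paper first bounds $J \le \log(1/\gamma)/\gamma$ with probability $1-\gamma$ and then union bounds over the calls using an appropriately inflated constant in Lemma~\ref{lem:peeling-approx}, while you apply Markov's inequality directly to the expected number of inaccurate calls (using Wald/linearity with $\E[J] = 1/\gamma$); both give the same $1 - 6n^{-c}$ failure probability.
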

\begin{proof}
    With the notation as defined in the paragraph right before this theorem, Lemma~\ref{lem:centralized-dsg-ledp-utility} implies that
    \[\Pr_{(s, q) \sim Q(D)}[q \geq q^*] \geq \frac{1}{4}.\]
    Since Centralized-DSG-core($T, \varsigma$) is $\frac{1}{\varsigma^2}$-zCDP, applying Theorem~\ref{thm:papernot-steinke}, we conclude that Centralized-DSG is 
    \[
    \left(\tfrac{6\sqrt{\log(1/(\gamma\delta))}}{\varsigma}, \delta\right)\text{-DP} = (\epsilon, \delta)\text{-DP}
    \]
    for the specified value of $\varsigma$, and the estimated density $\tilde\density(S)$ of the set $S$ it outputs is at least $q^*$ with probability at least $1 - 4\gamma$. Now, recall the random variable $J$ used in Centralized-DSG which is drawn from the geometric distribution with success probability $\gamma$. Since $\Pr[J > k] = (1-\gamma)^k \leq \exp(-\gamma k)$, we conclude that $\Pr[J \leq \frac{\log(1/\gamma)}{\gamma}] \geq 1 - \gamma$. Conditioned on $J \leq \frac{\log(1/\gamma)}{\gamma}$, using Lemma~\ref{lem:peeling-approx} and a union bound over the $J$ calls to $Q$ in $\AQ$, we conclude that with probability at least $1-\gamma$, in each call to $Q$, we have
    \[|\density(S) - \tilde\density(S)| \leq O\left(\sqrt{\log n}\varsigma\right)\]
    for an appropriately chosen constant in the $O(\cdot)$ notation. Thus, overall, using the union bound, with probability at least $1-6\gamma$, the true density of the set output by Centralized-DSG is at least 
    \[\lambda^* - O\left(\sqrt{\log n}\varsigma\right) = \lambda^* - O\left(\frac{\sqrt{\log(n)\log(n/\delta)}}{\epsilon}\right). \qedhere\]
\end{proof}

\section{Node-Weighted Densest Subgraph} \label{sec:weighted}

\newcommand{\cmax}{C_{\max}}

Recall that in the node-weighted setting the edges are unweighted, but vertices can have weights.  To fix notation, we will say that every node $v$ has cost $c_v \geq 1$ and $\cmax = \max_{v \in V} c_v$.  We note that the assumption that every $c_v \geq 1$ is not without loss of generality: rescaling weights to all be at least $1$ will incur a cost due to the fact that we have an additive part to our approximation (if our approximation were purely multiplicative then this would be WLOG).  However, this is a relatively standard assumption; see, e.g., \cite{SW20}.  We let $c(S) = \sum_{v \in S} c_v$ for any $S \subseteq V$, and define the density as $\density(S) = |E(S)| / c(S)$.  Our goal is to find the densest subgraph.

This is in some sense a relatively minor change: as we will show, it is not too hard to adapt our version of noisy PST, Noisy-Order-Packing-MWU (Algorithm~\ref{alg:DSG-MWU}), to the node-weighted setting.  However, recall that in the unweighted setting, we didn't actually \emph{run} Noisy-Order-Packing-MWU.  We just showed that every iteration of it was identical to an iteration of our \emph{real} algorithm, DSG-LEDP-core (Algorithm~\ref{alg:core}).  The presence of node weights, unfortunately, destroys this approach: we cannot simply keep track of ``loads'' (i.e., degrees) and build a combinatorial algorithm using them which simulates the Hedge algorithm.  Intuitively, this is because in the proof of Theorem~\ref{thm:alg-relationship}, the expression for $w_v^{(t)}$ has in the exponent a term that is independent of $v$ and a term that depends on the degrees.  But in the presence of node weights, it turns out that the first term actually becomes a function of the node weight.  So to give the same ordering, we would have to trade these off in the same way as in the Hedge algorithm.  But that tradeoff depends on $\lambda^*$, which we do not actually know.  So we cannot simulate the Hedge algorithm via a combinatorial, degree-based algorithm.    

Instead, we will directly run a node-weighted version of Noisy-Order-Packing-MWU.  Unfortunately, doing so requires the curator to know $\lambda^*$.  If we were in the centralized edge-DP model we could of course just compute $\lambda^*$ and add noise to preserve privacy, but in the local model this is significantly more difficult.  Instead, we ``guess'' $\lambda^*$ and run our algorithm assuming our guess is correct.  Then, by running a grid search  for exponentially increasing guesses of $\lambda^*$, we are able to assume that we have an approximation of $\lambda^*$.  Compared to the unweighted case, this incurs a small multiplicative loss and an additional logarithmic additive loss.

\subsection{Peeling} \label{sec:peeling-weighted}


The Peeling algorithm can be easily modified to work for the node-weighted setting: 
\begin{algorithm}
    \caption{Weighted-Peeling($G, \sigma, \varsigma$)}
    \label{alg:peeling-weighted}
    \begin{algorithmic}[1]
        \STATE The curator sends $\sigma$ to all nodes.
        \STATE Each node $v$ computes $\hat q(\sigma)_v = q(\sigma)_v + N_v$, where $N_v \sim N(0, \varsigma^2)$, and sends $\hat q(\sigma)_v$ to the curator.
        \STATE For each $x \in V$, the curator computes $\widehat \density(S^{\sigma}_x) = \frac{\sum_{v \in S^{\sigma}_x} \hat q(\sigma)_v}{c(S^{\sigma}_x)}$.
        \STATE Let $u = \argmax_{x \in V} \widehat \density(S^{\sigma}_x)$.  The curator returns $(S^{\sigma}_u, \widehat \density(S^{\sigma}_u))$.
    \end{algorithmic}
\end{algorithm}

Lemmas~\ref{lem:peeling-DP} and \ref{lem:peeling-approx} hold without change for Weighted-Peeling: the proof of the former is unchanged, and the proof of the latter follows using the fact that $c(S^{\sigma}_x) \geq |S^\sigma_x|$ since all node capacities $c_v$ are at least $1$. For concreteness, we provide the following lemma statements but omit proofs:
\begin{lemma} \label{lem:weighted-peeling-DP}
    Weighted-Peeling($\sigma, \varsigma$) is $\frac{1}{2\varsigma^2}$-zCDP for any $\alpha \geq 1$.
\end{lemma}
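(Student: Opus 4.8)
The plan is to simply replay the proof of Lemma~\ref{lem:peeling-DP} verbatim, since Weighted-Peeling differs from Peeling only in the denominator used in step~3 (using $c(S^\sigma_x)$ in place of $|S^\sigma_x|$), and the node costs $\{c_v\}$ are public information rather than part of the private edge set. So the only privacy-relevant step remains step~2, the noisy release of the $q(\sigma)_v$ values.

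First I would observe that for a fixed ordering $\sigma$, each edge $\{u,v\} \in E$ contributes to $q(\sigma)_w$ for exactly one vertex $w$ (namely $w = \max_\sigma\{u,v\}$). Hence two edge-neighboring graphs differ in the vector $(q(\sigma)_v)_{v \in V}$ in at most one coordinate, and in that coordinate by at most $1$; moreover the identity of that coordinate is determined by $\sigma$, which is broadcast before any private computation and thus carries no private information. Therefore each individual release $\hat q(\sigma)_v = q(\sigma)_v + N_v$ with $N_v \sim N(0,\varsigma^2)$ is, by the Gaussian mechanism (Lemma~\ref{lem:gaussian}), $\frac{1}{2\varsigma^2}$-zCDP (the $\ell_2$-sensitivity being $1$), and since the affected coordinate is unique the whole vector $(\hat q(\sigma)_v)_{v \in V}$ is $\frac{1}{2\varsigma^2}$-zCDP by parallel composition (Theorem~\ref{thm:parallel-comp}).

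Then I would note that steps~3 and~4 are pure post-processing: the curator computes $\widehat\density(S^\sigma_x) = \frac{\sum_{v \in S^\sigma_x} \hat q(\sigma)_v}{c(S^\sigma_x)}$ using only the already-released noisy vector and the public costs $c(S^\sigma_x) = \sum_{v\in S^\sigma_x} c_v$, and then selects the best prefix. By the post-processing theorem (Theorem~\ref{thm:post-processing}), the overall algorithm remains $\frac{1}{2\varsigma^2}$-zCDP. Finally, as in Lemma~\ref{lem:peeling-DP}, step~2 is a purely local operation (each node adds noise to its own count) and all remaining steps are aggregation and post-processing by the curator, so the guarantee holds in the LEDP model as well.

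I do not expect any real obstacle here — the only point worth being slightly careful about is confirming that the change of denominator introduces no additional leakage, which follows immediately from the fact that edge-neighboring graphs have the same vertex set and the same costs, so $c(S^\sigma_x)$ is identical for the two neighbors and depends on no private data.
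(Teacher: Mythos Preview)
Your proposal is correct and follows exactly the approach the paper takes: the paper in fact omits the proof entirely, remarking that the proof of Lemma~\ref{lem:peeling-DP} carries over unchanged to Weighted-Peeling, and your argument spells out precisely that, with the additional (correct) observation that the public node costs $c_v$ in the denominator introduce no extra privacy leakage.
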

\begin{lemma} \label{lem:weighted-peeling-approx}
    Let $S \subseteq V$ be the vertices returned by Weighted-Peeling($\sigma, \varsigma$). For any constant $c > 0$, with probability at least $1-2n^{-c}$, we have $|\widehat \density(S) - \density(S)| \leq 2\sqrt{(1+c)\log n} \varsigma$ and $ \density(S) \geq  \density(S^*_{\sigma}) - 2\sqrt{(1+c)\log n} \varsigma$.
\end{lemma}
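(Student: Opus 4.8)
The plan is to follow the proof of Lemma~\ref{lem:peeling-approx} essentially verbatim, since the only difference between Peeling and Weighted-Peeling is that the curator divides the (noisy) induced edge count of a prefix by $c(S^\sigma_x)$ rather than $|S^\sigma_x|$. Fix any vertex $x \in V$ and set $N = \sum_{v \in S^\sigma_x} N_v$; because the $N_v$ are independent $N(0,\varsigma^2)$, we have $N \sim N(0, |S^\sigma_x|\varsigma^2)$, so the standard Gaussian concentration bound (Lemma~\ref{lem:gaussian-concentration}) gives $\Pr\!\left[|N| \geq \sqrt{(1+c)\log n}\,\sqrt{|S^\sigma_x|}\,\varsigma\right] \leq 2n^{-(1+c)}$, exactly as in the unweighted case.

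Next I would observe that $\sum_{v \in S^\sigma_x} q(\sigma)_v = |E^\sigma_x|$ by definition of $q(\sigma)$, so $\widehat\density(S^\sigma_x) = \frac{|E^\sigma_x| + N}{c(S^\sigma_x)} = \density(S^\sigma_x) + \frac{N}{c(S^\sigma_x)}$. This is the one and only place the node weights enter: since every $c_v \geq 1$ we have $c(S^\sigma_x) \geq |S^\sigma_x| \geq 1$, hence on the concentration event $\left|\frac{N}{c(S^\sigma_x)}\right| \leq \frac{|N|}{|S^\sigma_x|} \leq \frac{\sqrt{(1+c)\log n}\,\varsigma}{\sqrt{|S^\sigma_x|}} \leq \sqrt{(1+c)\log n}\,\varsigma$, which is at most the $2\sqrt{(1+c)\log n}\,\varsigma$ bound from the unweighted proof. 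A union bound over the at most $n$ prefixes $S^\sigma_x$ then shows that with probability at least $1 - 2n^{-c}$, every prefix has estimated density within $2\sqrt{(1+c)\log n}\,\varsigma$ of its true density.

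Finally, both claimed inequalities follow on this event exactly as before. Since $S$ is by construction the prefix maximizing $\widehat\density$, the bound $|\widehat\density(S) - \density(S)| \leq 2\sqrt{(1+c)\log n}\,\varsigma$ is immediate; and since $S^*_\sigma$ is itself a prefix of $\sigma$, we get $\widehat\density(S) \geq \widehat\density(S^*_\sigma) \geq \density(S^*_\sigma) - 2\sqrt{(1+c)\log n}\,\varsigma$, so $\density(S) \geq \widehat\density(S) - 2\sqrt{(1+c)\log n}\,\varsigma \geq \density(S^*_\sigma) - 2\sqrt{(1+c)\log n}\,\varsigma$ once one carries the tighter per-prefix bound $\sqrt{(1+c)\log n}\,\varsigma$ (without the spare factor of $2$) through the chain. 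There is no real obstacle here; the entire content is the inequality $c(S^\sigma_x) \geq |S^\sigma_x|$, which is precisely why the assumption $c_v \geq 1$ is required and cannot be removed.
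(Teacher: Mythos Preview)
Your proposal is correct and matches the paper's approach exactly: the paper omits the proof, stating only that it ``follows using the fact that $c(S^{\sigma}_x) \geq |S^\sigma_x|$ since all node capacities $c_v$ are at least $1$,'' which is precisely the one new ingredient you identify and use. Your careful tracking of the spare factor of $2$ to recover the stated bound on $\density(S)$ via the chain $\density(S) \geq \widehat\density(S) - \sqrt{(1+c)\log n}\,\varsigma \geq \widehat\density(S^*_\sigma) - \sqrt{(1+c)\log n}\,\varsigma \geq \density(S^*_\sigma) - 2\sqrt{(1+c)\log n}\,\varsigma$ is a nice clarification the paper leaves implicit.
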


\subsection{Weighted versions of LPs from~\cite{CQT22}}
\label{sec:weighted-lp}
In the weighted context, the equivalent of Charikar's LP~\cite{Charikar00} is the following:
\begin{align*}
    \max\ & \sum_{e \in E} y_e \\
    \text{s.t.}\ & \sum_{v \in V} c_v x_v \leq 1 \\
    & y_{u,v} \leq x_u & \forall \{u,v\} \in E \\
    & y_{u,v} \leq x_v & \forall \{u,v\} \in E
\end{align*}

Note that since we are assuming that $c_v \geq 1$ for all $e \in E$, the first constraint implies that $x_v \leq 1$ for all $v \in V$.  It is not hard to see that this precisely characterizes the node-weighted DSG problem; we prove this for completeness.  
\begin{theorem}
    This LP is an exact formulation: for every set $S \subseteq V$ there is a feasible LP solution with objective at least $\density(S)$, and for every feasible solution $(x,y)$ there is a set $S$ with $\density(S) \geq \sum_{e \in E} y_e$.
\end{theorem}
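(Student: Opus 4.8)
The plan is to prove the two directions separately, mirroring the classical equivalence between Charikar's LP and DSG.

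\emph{From sets to LP solutions.} Given a nonempty $S \subseteq V$, I would exhibit an explicit feasible solution: set $x_v = 1/c(S)$ for $v \in S$ and $x_v = 0$ otherwise, and set $y_{u,v} = 1/c(S)$ when $\{u,v\} \in E(S)$ (both endpoints in $S$) and $y_{u,v} = 0$ otherwise. Then $\sum_{v\in V} c_v x_v = c(S)/c(S) = 1$, so the budget constraint holds; the constraints $y_{u,v} \le x_u$ and $y_{u,v} \le x_v$ hold coordinatewise by construction; and the objective equals $\sum_{e \in E(S)} 1/c(S) = |E(S)|/c(S) = \density(S)$. (The case $S = \emptyset$ is handled by the zero solution, which is feasible with objective $0$; note also that this direction does not use $c_v \ge 1$.)

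\emph{From LP solutions to sets.} Fix a feasible $(x,y)$. First, we may assume $y_{u,v} = \min\{x_u,x_v\}$ for every edge, since raising each $y_e$ to this value preserves feasibility and cannot decrease the objective; hence it suffices to find $S$ with $\density(S) \ge D := \sum_{\{u,v\}\in E}\min\{x_u,x_v\}$. If $D = 0$ the claim is trivial (any singleton $S$ works), so assume $D > 0$. For a threshold $\theta \ge 0$ let $S_\theta = \{v \in V : x_v \ge \theta\}$; an edge $\{u,v\}$ lies in $E(S_\theta)$ exactly when $\min\{x_u,x_v\} \ge \theta$. The layer-cake identities then give
\[
\int_0^\infty |E(S_\theta)|\, d\theta = \sum_{\{u,v\}\in E}\min\{x_u,x_v\} = D, \qquad \int_0^\infty c(S_\theta)\, d\theta = \sum_{v\in V} c_v x_v \le 1.
\]
Now suppose for contradiction that $\density(S_\theta) < D$, i.e.\ $|E(S_\theta)| < D\cdot c(S_\theta)$, for every $\theta$ with $S_\theta \ne \emptyset$, while both sides vanish when $S_\theta = \emptyset$. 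Since $D > 0$ forces some $x_v > 0$, the set $\{\theta : S_\theta \ne \emptyset\} = (0, \max_v x_v]$ has positive measure, so integrating the pointwise inequality yields $\int_0^\infty |E(S_\theta)|\, d\theta < D\int_0^\infty c(S_\theta)\, d\theta \le D$, contradicting $\int_0^\infty |E(S_\theta)|\, d\theta = D$. Therefore some threshold set $S = S_\theta$ satisfies $\density(S) \ge D \ge \sum_{e\in E} y_e$.

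I expect no real obstacle: this is essentially a node-weighted analogue of Lemma 4.3 of~\cite{CQT22}. The only points needing care are the degenerate case $D = 0$ and making sure the strict pointwise inequality survives integration, which is why I isolate the positive-measure set of thresholds on which it holds. One could equivalently replace the integrals by finite sums over the distinct coordinate values of $x$, but the continuous layer-cake version keeps the bookkeeping shortest.
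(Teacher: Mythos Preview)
Your proof is correct and follows essentially the same approach as the paper: the first direction is identical, and for the second direction both you and the paper use the threshold-set averaging argument, with the paper phrasing it probabilistically (pick $\tau$ uniformly from $[0,1]$ and compare $\E[|E(S_\tau)|]$ with $\E[c(S_\tau)]$) while you phrase it via layer-cake integrals over $[0,\infty)$. Your treatment of the degenerate case $D=0$ and of the strict inequality surviving integration is in fact slightly more careful than the paper's, but the underlying argument is the same.
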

\begin{proof}
    We begin with the first part.  Let $S \subseteq V$.  Set $x_v = \frac{1}{c(S)}$ for each $v \in S$, and $x_v = 0$ otherwise.  Set $y_{u,v} = \min(x_u, x_v)$ for each $\{u,v\} \in E$.  Then $\sum_{v \in V} c_v x_v = \sum_{v \in S} c_v / c(S) = 1$, so the solution is feasible, and $\sum_{e \in E} y_e = \sum_{e \in E(S)} 1/c(S) = |E(S)| / c(S) = \density(S)$ as claimed.

    For the other direction, let $(x,y)$ be a feasible solution.  For every $\tau \in [0,1]$, let $S_{\tau} = \{v \in V : x_v \geq \tau\}$.  We claim that there exists some $\tau$ such that $\rho(S_{\tau}) = |E(S_{\tau})| / c(S_{\tau}) \geq \sum_{e \in E} y_e$, which is enough to prove the theorem.  To see this, suppose for contradiction that it is false.  Then $\rho(S_{\tau}) < \sum_{e \in E} y_e$ for all $\tau$.  Consider choosing $\tau$ uniformly at random from $[0,1]$.  Then we have that 
    \begin{align*}
        \E[|E(S_{\tau})|] &< \E\left[ c(S_{\tau}) \sum_{e \in E} y_e \right]  = \left(\sum_{e \in E} y_e\right) \E[c(S_{\tau})]\tag{assumption for contradiction} \\
        &= \left( \sum_{e \in E} y_e \right) \sum_{v \in V} c_v x_v \tag{linearity of expectations} \\
        &\leq \sum_{e \in E} y_e \leq \sum_{\{u,v\} \in E} \min(x_u, x_v) \tag{feasible} \\
        &= \sum_{\{u,v\} \in E} \Pr[\{u,v\} \in E(S_{\tau})] \\
        &= \E[|E(S_{\tau})|] \tag{linearity of expectations}
    \end{align*}
    Since there is a strict inequality, this is a contradiction.  Hence some such $\tau$ exists (and we can find it efficiently since there are at most $n$ distinct values for $\tau$).  
\end{proof}

To put this into the language used by~\cite{CQT22}, we can define $\hat f(x) = \E[|E(S_{\tau})|]$.  Then the proof of the previous theorem implies that Charikar's LP is identical to the following:
\begin{align*}
    \max\ & \hat f(x) \\
    \text{s.t.}\ & \sum_{v \in V} c_v x_v \leq 1 \\
    & x_v \geq 0 & \forall v \in V
\end{align*}

Since $\hat f$ is independent of weights, Lemma~\ref{lem:opt-ordering} is still true.  Hence we have yet another equivalent formulation:
\begin{align*}
    \max\ & \min_{\sigma \in S_V} \langle x, q(\sigma) \rangle \\
    \text{s.t.}\ & \sum_{v \in V} c_v x_v \leq 1 \\
    & x_v \geq 0 & \forall v \in V
\end{align*}

Let $\lambda^*$ be the value of the optimal solution.  Then we can switch the objective and the constraint to get the equivalent of the LP from~\cite{CQT22}, i.e., the equivalent of the covering LP discussed in Section~\ref{sec:CQT}:
\begin{align*}
    \min\ & \sum_{v \in V} c_v x_v \\
    \text{s.t.}\ &\langle x, q(\sigma) \rangle \geq \lambda^* & \forall \sigma \in S_V \\
    & x_v \geq 0 & \forall v \in V
\end{align*}

By the above discussion and the definition of $\lambda^*$, we know that the optimal value of this LP is equal to $1$.  When we take its dual, we get the following packing LP, which by strong duality also has optimal solution equal to $1$.
\begin{align*}
    \max\ & \lambda^* \sum_{\sigma \in S_V} y_{\sigma} \\
    \text{s.t.}\ & \sum_{\sigma \in S_V} q(\sigma)_v y_{\sigma} \leq c_v & \forall v \in V \\
    & y_{\sigma} \geq 0 & \forall \sigma \in S_V
\end{align*}

This is the main LP which we will be arguing about.  Note that it's the exact same as in the unweighted case, except for the right hand side being the costs rather than just $1$.

\subsection{Noisy PST in the Presence of Weights}
Now let's apply the PST framework~\cite{PST95}, as described by~\cite{AHK12}, to the above LP in the presence of noise.  This is quite similar to Section~\ref{sec:PST-alg}; we just have to be careful how the costs on the right hand side affect the definitions and the algorithm. The big difference here is that unlike in the unweighted case we were able to get away without knowing the exact value of $\lambda^*$ in the algorithm due to the special structure of the problem, here we do not have a way to use the same strategy and we must resort to a search to guess the value of $\lambda^*$. To simplify the discussion, in the rest of this section, we discuss the analysis where we have a particular guess $\lambda$ for $\lambda^*$ and the consequences of it being too high or too low. Then in the next section we discuss the  grid search algorithm to get the correct $\lambda$.

\newcommand{\plp}{\text{PLP}}
\newcommand{\dlp}{\text{DLP}}

\paragraph{Feasibility LPs.} We use the following pair of feasibility LPs to decide whether the guess $\lambda$ is too high or too low. The primal LP is the following:
\begin{align}
    &\sum_{v \in V} c_v x_v = 1 \notag \\
    \text{s.t.}\ &\langle x, q(\sigma) \rangle \geq \lambda & \forall \sigma \in S_V \notag \\
    & x_v \geq 0 & \forall v \in V \tag{$\plp(\lambda)$} \label{eq:plp-lambda}
\end{align}
It is not hard to see based on the discussion in Section~\ref{sec:weighted-lp} that \eqref{eq:plp-lambda} is feasible if and only if $\lambda \leq \lambda^*$.  

To produce feasible solutions for the primal LP, we will actually use the following dual LP:
\begin{align}
    & \lambda \sum_{\sigma \in S_V} y_{\sigma} = 1 \notag \\
    & \frac{1}{c_v}\sum_{\sigma \in S_V} q(\sigma)_v y_{\sigma} \leq 1  & \forall v \in V \notag \\
    & y_{\sigma} \geq 0 & \forall \sigma \in S_V \tag{$\dlp(\lambda)$} \label{eq:dlp-lambda}
\end{align}
It is easy to see that \eqref{eq:dlp-lambda} is feasible if and only if $\lambda \geq \lambda^*$. Note that this is the exact opposite criterion as for the feasibility of \eqref{eq:plp-lambda}.

\paragraph{Oracle.}  To check for feasibility for \eqref{eq:dlp-lambda}, we proceed as in the unweighted case. Define $\mathcal P = \{y : y_{\sigma} \geq 0 \text{ for all } \sigma \in S_V \text{ and } \sum_{\sigma \in S_V} y_{\sigma} = 1/\lambda\}$. In each iteration of Hedge with noisy losses, if $p$ is the current distribution on $V$, then we need to find a feasible solution for the Lagrangean relaxation i.e., we need to find a $y \in \mathcal P$ such that $\sum_{v \in V} \frac{p_v}{c_v} \sum_{\sigma \in S_V} q(\sigma)_v y_{\sigma} \leq 1$.  So it is sufficient to find a $y$ minimizing the left hand side. Exactly as in unweighted case, this $y$ can be found by choosing the permutation $\sigma$ that orders nodes in nonincreasing order of $\langle \frac{p_v}{c_v}\rangle_v$, and setting $y_\sigma = \frac{1}{\lambda}$, and $y_\sigma' = 0$ for all permutations $\sigma' \neq \sigma$. This defines our Oracle.



\paragraph{Algorithm.} Plugging these costs into the Plotkin-Shmoys-Tardos framework leads to a version of Noisy-Order-Packing-MWU for weighted graphs; see Algorithm~\ref{alg:Weighted-DSG-MWU}. 
Note that this algorithm can be directly implemented in the LEDP model just as in the unweighted setting.

\begin{algorithm}
    \caption{Weighted-Noisy-Order-Packing-MWU($G, \lambda, T, \tau$)}
    \label{alg:Weighted-DSG-MWU}
    \begin{algorithmic}[1]
        \STATE Set 
        $\rho = \frac{n+\tau}{\lambda}$ and $\nu = \frac{\tau}{\rho \lambda}$.
        \STATE Instantiate Hedge$(T, \nu)$ with $n$ experts corresponding to nodes in $V$.
        \FOR{$t = 1$ to $T$}
            \STATE Obtain the distribution $p^{(t)}$ on $V$ from Hedge.
            \STATE Use the Oracle, as discussed: let $\sigma^{(t)}$ be the ordering of $V$ in nonincreasing order of $\langle \frac{p^{(t)}_v}{c_v}\rangle_v$ (breaking ties consistently, e.g., by node IDs)
            \STATE For each $v \in V$, set the  costs:
        \[
        \hat{m}_v^{(t)} \sim N(m_v^{(t)}, \nu^2), \text{ where } m_v^{(t)} = \frac{1}{\rho}\left(1 - \frac{1}{c_v}\sum_{\sigma \in S_V} q(\sigma)_v y_{\sigma}^{(t)}\right) = \frac{1}{\rho}\left(1 - \frac{q(\sigma^{(t)})_v}{c_v\lambda}\right).
        \]
            \STATE Supply $\hat{m}^{(t)}$ as the loss vector to Hedge.
        \ENDFOR
        \STATE Let $t$ be chosen uniformly at random from $[T]$.  
        \STATE Define $x^{(t)}_v := \frac{p^{(t)}_v}{c_v}$ for all $v \in V$.
        \RETURN $(x^{(t)}, \sigma^{(t)})$.
    \end{algorithmic}
\end{algorithm}

\paragraph{Analysis.}
We're going to follow the analysis for the unweighted case from Section~\ref{sec:PST-analysis} 
and prove the following theorem:
\begin{theorem} \label{thm:weighted-primal-solve-robust}
Let $L \geq 4(n +\tau)\sqrt{\frac{\log n}{T}}$. Suppose $\lambda^* \leq \lambda < \lambda^* + L$. Then with probability at least $\frac{1}{2}$ over the choice of an index $t \in [T]$ chosen uniformly at random, and over the randomness in the noise, the output point $x$ of Weighted-Noisy-Order-Packing-MWU$(G, \lambda, T, \tau)$ is a feasible solution to $\plp(\lambda - 4L)$.
\end{theorem}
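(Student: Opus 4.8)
The plan is to mirror the unweighted analysis of Theorem~\ref{thm:primal-solve-robust}, carrying the node costs $c_v$ through every definition, and to replace its final ``scale up to contradict LP optimality'' step with a quantitative variant that tolerates the gap between the guess $\lambda$ and the true optimum $\lambda^*$.

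First I would verify the hypotheses of the Hedge regret bound (Theorem~\ref{thm:MWU-main}) for Weighted-Noisy-Order-Packing-MWU: the setting $\rho = \frac{n+\tau}{\lambda}$ makes $\nu = \frac{\tau}{\rho\lambda} = \frac{\tau}{n+\tau}\le 1$, the assumption $T\ge\log n$ is immediate, and since $0\le q(\sigma^{(t)})_v\le n$ and $c_v\ge 1$ we get $m_v^{(t)} = \frac{1}{\rho}\bigl(1-\frac{q(\sigma^{(t)})_v}{c_v\lambda}\bigr)\in\bigl[\frac{\lambda-n}{n+\tau},\frac{\lambda}{n+\tau}\bigr]\subseteq[-1,1]$, using $\lambda\le n+\tau$ (which holds in the parameter regime where the theorem is applied, e.g.\ $\tau\ge n$ and $L$ small). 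I would then record two structural facts used throughout: the returned point $x^{(t)}_v = p^{(t)}_v/c_v$ automatically satisfies $\sum_v c_v x^{(t)}_v = \sum_v p^{(t)}_v = 1$, so the equality of $\plp$ always holds and the only thing that can fail for $\plp(\lambda-4L)$ is a covering constraint $\langle x^{(t)},q(\sigma)\rangle\ge\lambda-4L$; and, by Lemma~\ref{lem:opt-ordering} applied to $x^{(t)}\in[0,1]^V$, the Oracle's ordering $\sigma^{(t)}$ equals $\argmin_\sigma\langle x^{(t)},q(\sigma)\rangle$.

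Next is the per-round loss bound. Because $\lambda\ge\lambda^*$, the feasibility LP $\dlp(\lambda)$ is feasible, so (as in the unweighted case) the Oracle always returns a $y^{(t)}$ satisfying the Lagrangean relaxation $\sum_v\frac{p^{(t)}_v}{c_v}\sum_\sigma q(\sigma)_v y^{(t)}_\sigma\le 1$, which gives $\langle m^{(t)},p^{(t)}\rangle = \frac{1}{\rho}\bigl(1-\sum_v\frac{p^{(t)}_v}{c_v}\sum_\sigma q(\sigma)_v y^{(t)}_\sigma\bigr)\ge 0$. If in addition $x^{(t)}$ is infeasible for $\plp(\lambda-4L)$, some $\sigma'$ has $\langle x^{(t)},q(\sigma')\rangle<\lambda-4L$, hence so does $\sigma^{(t)}$; since $y^{(t)}_{\sigma^{(t)}}=1/\lambda$ this gives $\sum_v\frac{p^{(t)}_v}{c_v}\sum_\sigma q(\sigma)_v y^{(t)}_\sigma = \frac{1}{\lambda}\langle x^{(t)},q(\sigma^{(t)})\rangle<1-\frac{4L}{\lambda}$, so $\langle m^{(t)},p^{(t)}\rangle>\frac{4L}{\rho\lambda}$. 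Hence $\langle m^{(t)},p^{(t)}\rangle\ge\frac{4L}{\rho\lambda}\,\mathbf{1}[x^{(t)}\text{ infeasible for }\plp(\lambda-4L)]$ for every $t$; writing $r$ for the probability (over a uniform $t\in[T]$ and the noise) that $x^{(t)}$ is feasible for $\plp(\lambda-4L)$, summing over $t$ and taking expectations yields $\E\bigl[\sum_{t=1}^T\langle m^{(t)},p^{(t)}\rangle\bigr]\ge\frac{4L}{\rho\lambda}(1-r)T$.

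For the other side of the Hedge inequality, the computation from the unweighted proof (now with the $1/c_v$ factors) gives, for every $v$, $\E[\sum_t m^{(t)}_v] = \frac{T}{\rho}\bigl(1-\frac{1}{c_v}\sum_\sigma q(\sigma)_v\bar y_\sigma\bigr)$ with $\bar y = \E[\frac{1}{T}\sum_t y^{(t)}]$ and $\sum_\sigma\bar y_\sigma = 1/\lambda$. Feeding both bounds into Theorem~\ref{thm:MWU-main} and using $4\rho\sqrt{\log n/T}=\alpha/2$ (with $\alpha=8\rho\sqrt{\log n/T}$) gives $\frac{1}{c_v}\sum_\sigma q(\sigma)_v\bar y_\sigma\le 1+\frac{\alpha}{2}-\frac{4L}{\lambda}(1-r)=:\mu$ for all $v$, i.e.\ $\bar y/\mu$ is feasible for the weighted dual packing LP $\{\sum_\sigma q(\sigma)_v y_\sigma\le c_v,\ y\ge 0\}$ of Section~\ref{sec:weighted-lp}, which has optimal value $\max\lambda^*\sum_\sigma y_\sigma = 1$. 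Therefore $\lambda^*\sum_\sigma(\bar y/\mu)_\sigma = \lambda^*/(\mu\lambda)\le 1$, forcing $\mu\ge\lambda^*/\lambda$; rearranging gives $1-r\le\frac{\lambda-\lambda^*}{4L}+\frac{\alpha\lambda}{8L}$, and since $\lambda-\lambda^*<L$ by hypothesis and $\alpha\lambda = 8(n+\tau)\sqrt{\log n/T}\le 2L$ by the assumed lower bound on $L$, each term is at most $\frac{1}{4}$, so $r\ge\frac{1}{2}$, which is the claim. The main obstacle is exactly this last step: in the unweighted case $r\ge\frac{1}{2}$ follows because strict feasibility of $\bar y$ would contradict the dual having optimal value $1$, whereas here $\bar y$ has dual objective only $\lambda^*/\lambda<1$, so there is no clean contradiction and one must track the feasibility slack quantitatively, absorbing it via the two inequalities $\lambda-\lambda^*<L$ and $L\ge 4(n+\tau)\sqrt{\log n/T}$.
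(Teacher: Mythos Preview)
Your proof is correct and follows essentially the same route as the paper's: verify the Hedge hypotheses, use feasibility of $\dlp(\lambda)$ to get $\langle m^{(t)},p^{(t)}\rangle\ge 0$, strengthen this to $\langle m^{(t)},p^{(t)}\rangle\ge \frac{4L}{\rho\lambda}$ on infeasible rounds, and combine with the per-coordinate RHS expression for $\bar y$ via Theorem~\ref{thm:MWU-main}.  The only substantive difference is cosmetic: in the final step the paper argues by contradiction (if $r<\tfrac12$ then $\tilde y=\frac{\lambda}{\lambda-L}\bar y$ would witness feasibility of $\dlp(\lambda-L)$ with $\lambda-L<\lambda^*$), whereas you phrase the same content directly, reading off $\mu\ge\lambda^*/\lambda$ from the optimality of the dual packing LP and then bounding $1-r$ term by term.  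One small omission: you should dispose of the case $\lambda\le 4L$ up front (where $\plp(\lambda-4L)$ is trivially feasible), since otherwise $\mu$ need not be positive and dividing by it is unjustified; the paper handles this in its first sentence.
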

\begin{proof}
First, note that if $\lambda < 4L$, then the statement is true with probability $1$ since any solution $x$ returned by Weighted-Noisy-Order-Packing-MWU is feasible. So in the following, we assume that $\lambda > 4L$.

Since $\lambda^* \leq \lambda < \lambda^* + L$, \eqref{eq:dlp-lambda} is feasible. Now we follow the analysis in the proof of Theorem~\ref{thm:primal-solve-robust}. Most calculations are identical, so we skip most details. Let $r$ be the probability that the output $x^{(t)}$ is infeasible for $\plp(\lambda - 4L)$. Now, for any round $t$, we have the following:
\begin{align*}
    \langle m^{(t)}, p^{(t)} \rangle &= \sum_{v \in V} m_v^{(t)} p_v^{(t)} = \frac{1}{\rho} \sum_{v \in V} \left(\left(1- \frac{1}{c_v}\sum_{\sigma \in S_V} q(\sigma)_v y^{(t)}_{\sigma})\right) p^{(t)}_v\right) \\
    &= \frac{1}{\rho} \sum_{v \in V} \left(1 - \frac{p^{(t)}_v}{c_v} \sum_{\sigma \in S_V} q(\sigma)_v y^{(t)}_{\sigma}) \right) \\
    &= \frac{1}{\rho} \left( 1 - \sum_{v \in V} \frac{p_v^{(t)}}{c_v} \sum_{\sigma \in S_V} q(\sigma)_v y^{(t)}_{\sigma}\right).
\end{align*}
Since \eqref{eq:dlp-lambda} is feasible, the oracle (by definition) always returns a $y$ such that 
\[\sum_{v \in V} \frac{p_v^{(t)}}{c_v} \sum_{\sigma \in S_V} q(\sigma)_v y_{\sigma} \leq 1.\] 
Hence, we conclude that $\langle m^{(t)}, p^{(t)} \rangle  \geq 0$ for all $t$. 

Next, suppose $t$ is a round where $x^{(t)}$ is infeasible for $\plp(\lambda - 4L)$. Note that by definition, $\sum_v x^{(t)}_vc_v = \sum_v p^{(t)}_v = 1$. Thus, the infeasibility must arise because there exists a $\sigma' \in S_V$ such that $\langle x^{(t)}, q(\sigma')\rangle < \lambda - 4L$, i.e., $\sum_v \frac{p^{(t)}_v}{c_v}q(\sigma'')_v < \lambda - 4L$.
Since $\sigma^{(t)}$ is the permutation that minimizes $\sum_v \frac{p^{(t)}_v}{c_v}q(\sigma'')_v$ over all $\sigma'' \in S_V$, we must have $\sum_v \frac{p^{(t)}_v}{c_v}q(\sigma^{(t)})_v < (\lambda - 4L)$, which is equivalent to 
\[\sum_{v \in V} \frac{p_v^{(t)}}{c_v} \sum_{\sigma \in S_V}  q(\sigma)_v y_{\sigma^{(t)}} < 1 - \frac{4L}{\lambda}.\] This implies that
\begin{align*}
    \langle m^{(t)}, p^{(t)} \rangle &= \frac{1}{\rho} \left( 1 - \sum_{v \in V} \frac{p_v^{(t)}}{c_v} \sum_{\sigma \in S_V} q(\sigma)_v y_{\sigma^{(t)}}\right) \geq \frac{4L}{\lambda\rho}.
\end{align*}
Thus, for all $t$, we have, exactly as in the proof of Theorem~\ref{thm:primal-solve-robust} that
\[\langle m^{(t)}, p^{(t)} \rangle \geq \frac{4L}{\lambda\rho} \cdot \mathbf{1}[x^{(t)} \text{ is infeasible}],\]
and hence,
\begin{equation} \label{eq:weighted-mw-lhs}
\E\left[\sum_{t=1}^T \langle m^{(t)}, p^{(t)} \rangle  \right] \geq \frac{4L}{\lambda\rho} \cdot \sum_{t=1}^T \Pr[x^{(t)} \text{ is infeasible}] = \frac{4L}{\lambda\rho} \cdot (1-r)T.
\end{equation}

Then following the proof of Theorem~\ref{thm:primal-solve-robust} from that point, we end up with the following version of \eqref{eq:mw-rhs}:
\begin{equation} \label{eq:mw-rhs-weighted}
\E\left[\sum_{t=1}^T m^{(t)}_v\right] = \E\left[\sum_{t=1}^T \frac{1}{\rho}\left(1 - \frac{q(\sigma^{(t)})_v}{c_v\lambda} \right)\right] = \E\left[\frac{T}{\rho} \left(1 - \frac{1}{c_v}\sum_{\sigma \in S_V} q(\sigma)_v \bar y_{\sigma}\right)\right] = \frac{T}{\rho} \left(1 - \frac{1}{c_v}\sum_{\sigma \in S_V} q(\sigma)_v \bar{y}_{\sigma}\right).
\end{equation}
Continuing with the analysis, we get the following bound via Theorem~\ref{thm:MWU-main} for $\bar{y} = \E[\frac{1}{T} \sum_{t=1}^T y^{(t)}]$:
\[\frac{4L}{\lambda\rho} \cdot (1-r)T \leq \frac{T}{\rho} \left(1 - \frac{1}{c_v}\sum_{\sigma \in S_V} q(\sigma)_v \bar{y}_{\sigma}\right) + 4\sqrt{\log(n) T}.\]
Simplifying, we get
\[ \frac{1}{c_v}\sum_{\sigma \in S_V} q(\sigma)_v \bar{y}_{\sigma} \leq 1 + 4\rho\sqrt{\frac{\log n}{T}} - \frac{4L}{\lambda} \cdot (1-r) 
\leq 1 + \frac{L}{\lambda} - \frac{4L}{\lambda} \cdot (1-r), \]
where the last inequality follows from the assumption that $L \geq 4(n +\tau)\sqrt{\frac{\log n}{T}}$ and the setting $\rho = \frac{n+\tau}{\lambda}$
Note that the inequality displayed above holds for \emph{all} $v \in V$. We claim that this implies that $r \geq \frac{1}{2}$. Otherwise, the RHS above is less than $1-\frac{L}{\lambda}$. Now consider $\tilde{y} = \frac{\lambda}{\lambda-L}\bar{y}$ and define $\lambda' = \lambda - L < \lambda^*$. Note that $\dlp(\lambda')$ is infeasible. But $\tilde{y}$ satisfies all the inequality constraints of $\dlp(\lambda')$, and furthermore we have
\[\lambda' \sum_{\sigma \in S_V} \tilde{y}_\sigma = (\lambda - L) \cdot \frac{\lambda}{\lambda-L} \sum_{\sigma \in S_V} \bar{y}_\sigma = 1,\]
since $\lambda \sum_{\sigma \in S_v} \bar{y}_\sigma = 1$ by construction. But this is a contradiction to the infeasibility of $\dlp(\lambda')$, and hence we conclude that $r \geq \frac{1}{2}$ as claimed.
\end{proof}

\subsection{Full Algorithm and Analysis}
We now can give our full algorithm, Weighted-DSG-LEDP (Algorithm~\ref{alg:weighted-DSG-LEDP}), which essentially wraps Weighted-Noisy-Order-Packing-MWU in a grid search for $\lambda^*$, and repeats each guess a number of times in order to amplify the probability of getting a good solution from $\frac{1}{2}$ to $1-n^{-c}$. The grid points in the search for $\lambda^*$ are arranged in a geometric progression with factor $1+\beta$ covering the entire possible range for $\lambda^*$. The $c$ parameter in Weighted-DSG-LEDP is used to specify success probability of at least $1-3n^{-c}$ in the utility bound (see Theorem~\ref{thm:weighted-dsg-ledp-utility}).

\begin{algorithm}
    \caption{Weighted-DSG-LEDP($G, T, \varsigma, c, \beta$)}
    \label{alg:weighted-DSG-LEDP}
    \begin{algorithmic}[1]
        \STATE Let $\lambda_0 = 1/(2\cmax)$, $K = c\log_2(n) \cdot (\log_{1+\beta}(2\cmax n)+1)$, 
        and $\tau = \sqrt{T}\varsigma$
        \FOR{$i=1$ \TO $\log_{1+\beta}(2\cmax n)+1$}
            \STATE Let $\lambda_i = (1+\beta)^{i-1} \lambda_0$
            \FOR{$j=1$ \TO $c \log_2 n$}
                \STATE Let $(\pi^{(i,j)}, x^{(i,j)}) \leftarrow$ Weighted-Noisy-Order-Packing-MWU$(G, \lambda_i, T, \tau)$.
                \STATE Let $(S^{(i,j)}, \widehat \density(S^{(i,j)})) \leftarrow $ Weighted-Peeling$\left(G, \pi^{(i,j)}, 
                \varsigma
                \right)$
            \ENDFOR     
        \ENDFOR
        \STATE The curator computes $(i^*, j^*) = \argmax_{i,j} \widehat \density(S^{(i,j)})$ and returns $S^{(i^*, j^*)}$.
    \end{algorithmic}
\end{algorithm}

\subsubsection{Privacy} 

The following lemma has the exact same proof as Lemma~\ref{lem:nopmwu-privacy}:
\begin{lemma} \label{lem:weighted-nopmwu-privacy}
    Weighted-Noisy-Order-Packing-MWU is $\frac{T}{\tau^2}$-zCDP.
\end{lemma}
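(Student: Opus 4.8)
The plan is to mirror the proof of Lemma~\ref{lem:nopmwu-privacy} almost verbatim, since the only structural change in the weighted setting is the appearance of the factors $1/c_v$ in the loss vector, and because we assume $c_v \geq 1$ these factors only help. First I would fix an iteration $t$ and condition on the prefix of the transcript revealed so far, in particular on the distribution $p^{(t)}$ produced by Hedge and hence on the ordering $\sigma^{(t)}$; the latter is a deterministic post-processing of $p^{(t)}$ (and of the public node IDs), so it carries no additional privacy cost. Conditioned on $\sigma^{(t)}$, the true loss vector $m^{(t)}$ with $m_v^{(t)} = \tfrac{1}{\rho}\bigl(1 - \tfrac{q(\sigma^{(t)})_v}{c_v\lambda}\bigr)$ is a deterministic function of the graph.

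The next step is to bound the $\ell_2$-sensitivity of $m^{(t)}$ under an edge change. Exactly as in the unweighted case, adding or deleting an edge $\{u,w\}$ (say with $u \prec_{\sigma^{(t)}} w$) changes exactly one coordinate of $q(\sigma^{(t)})$, namely $q(\sigma^{(t)})_w$, and by exactly $1$. Hence $m^{(t)}$ changes only in coordinate $w$, by $\tfrac{1}{\rho c_w \lambda} \leq \tfrac{1}{\rho\lambda}$ since $c_w \geq 1$, so the $\ell_2$-sensitivity of the loss vector is at most $\Delta = \tfrac{1}{\rho\lambda}$ — the weighted analogue of the bound $\tfrac{1}{\rho\lambda^*}$ used in Lemma~\ref{lem:nopmwu-privacy}.

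Finally I would plug into the privacy accounting. The loss released in round $t$ is $\hat m^{(t)} \sim N(m^{(t)}, \nu^2 I)$ with $\nu = \tfrac{\tau}{\rho\lambda}$, so the Gaussian mechanism (Lemma~\ref{lem:gaussian}) makes this round $\tfrac{\Delta^2}{2\nu^2} = \tfrac{1}{2\tau^2}$-zCDP; equivalently, one can appeal directly to Theorem~\ref{thm:DP-Hedge-privacy}, which already bundles the per-round Gaussian guarantee with adaptive sequential composition over the $T$ rounds and yields $\tfrac{\Delta^2 T}{2\nu^2} = \tfrac{T}{2\tau^2}$-zCDP. Since all ordering computations are pure post-processing, the whole of Weighted-Noisy-Order-Packing-MWU inherits this bound, which is at least as strong as the claimed $\tfrac{T}{\tau^2}$-zCDP.

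There is essentially no obstacle here; the only subtlety, already handled in Lemma~\ref{lem:nopmwu-privacy}, is that $\sigma^{(t)}$ depends on the graph through the earlier losses, so the single-round bound must be phrased for the mechanism conditioned on the transcript prefix — precisely what the adaptive composition theorem for zCDP (Theorem~\ref{thm:adseqcomp}) is designed to absorb. (As in the unweighted statement, there is a harmless factor-of-two gap between the $\tfrac{T}{2\tau^2}$ the argument produces and the $\tfrac{T}{\tau^2}$ in the lemma.)
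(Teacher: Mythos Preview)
Your proposal is correct and follows essentially the same approach as the paper, which simply notes that the proof is identical to that of Lemma~\ref{lem:nopmwu-privacy}. Your explicit observation that the $1/c_v$ factors can only shrink the $\ell_2$-sensitivity (since $c_v \geq 1$) is the one small addition needed in the weighted setting, and your remark about the harmless factor-of-two slack matches the paper's own computation.
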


\begin{lemma} \label{lem:weighted-privacy-overall}
    Weighted-DSG-LEDP($G, T, \varsigma, c, \beta)$ is $\frac{K}{\varsigma^2}$-zCDP.
\end{lemma}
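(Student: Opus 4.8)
The plan is to follow the privacy argument for the unweighted algorithm (Lemma~\ref{lem:privacy-overall}) essentially verbatim: view Weighted-DSG-LEDP as an adaptive sequential composition of the private primitives it invokes, and observe that the terminating $\argmax$ step is data-oblivious post-processing.

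First I would count how many randomized, data-dependent sub-routines are run. Weighted-DSG-LEDP has two nested loops: the outer loop over guesses $\lambda_i = (1+\beta)^{i-1}\lambda_0$ runs $\log_{1+\beta}(2\cmax n)+1$ times, and for each guess the inner loop runs $c\log_2 n$ times, so the inner body executes exactly $K = c\log_2(n)\cdot(\log_{1+\beta}(2\cmax n)+1)$ times. Each execution of the inner body makes one call to Weighted-Noisy-Order-Packing-MWU$(G,\lambda_i,T,\tau)$ followed by one call to Weighted-Peeling$(G,\pi^{(i,j)},\varsigma)$. It is worth stating explicitly that the grid of guesses $\{\lambda_i\}$ is a fixed schedule determined only by $n$, $\beta$, and $\cmax$, and that node costs are public in the (L)EDP model; hence iterating over the guesses leaks nothing and contributes nothing to the privacy cost, and the only operations that touch the private edge set are the $K$ calls to Weighted-Noisy-Order-Packing-MWU and Weighted-Peeling together with the Gaussian noise they add.

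Next I would bound the cost of a single inner iteration and then compose. By Lemma~\ref{lem:weighted-nopmwu-privacy} (whose proof mirrors that of Lemma~\ref{lem:nopmwu-privacy} and, exactly as used in the proof of Lemma~\ref{lem:privacy-overall}, gives $\frac{T}{2\tau^2}$-zCDP), and using $\tau = \sqrt{T}\varsigma$, each call to Weighted-Noisy-Order-Packing-MWU is $\frac{1}{2\varsigma^2}$-zCDP; by Lemma~\ref{lem:weighted-peeling-DP} each call to Weighted-Peeling$(\cdot,\varsigma)$ is $\frac{1}{2\varsigma^2}$-zCDP. Since the noise in the Peeling call may depend on the permutation returned by the preceding MWU call (and later guesses' transcripts may depend on earlier outputs), I would invoke the adaptive form of sequential composition for zCDP (Theorem~\ref{thm:adseqcomp}): over the $2K$ private mechanisms this gives $K\cdot\big(\frac{1}{2\varsigma^2}+\frac{1}{2\varsigma^2}\big) = \frac{K}{\varsigma^2}$-zCDP for everything through the last Weighted-Peeling call. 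Finally, computing $(i^*,j^*) = \argmax_{i,j}\widehat\density(S^{(i,j)})$ and returning $S^{(i^*,j^*)}$ is a deterministic function of outputs already released, so it is free by the post-processing theorem (Theorem~\ref{thm:post-processing}), yielding the claimed $\frac{K}{\varsigma^2}$-zCDP bound.

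I do not anticipate a genuine obstacle here — the argument is pure composition bookkeeping. The only points requiring care are (i) reading the iteration count $K$ correctly off the nested loops, (ii) recording the per-iteration budget as $\frac{1}{\varsigma^2}$ rather than $\frac{3}{2\varsigma^2}$, which relies on using the $\frac{T}{2\tau^2}$-zCDP bound (from the proof of Lemma~\ref{lem:weighted-nopmwu-privacy}) rather than the looser $\frac{T}{\tau^2}$, and (iii) using the adaptive rather than basic version of sequential composition since the primitives are chained.
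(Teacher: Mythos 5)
Your proposal is correct and follows the same composition-bookkeeping argument as the paper: decompose the algorithm into $K$ pairs of (Weighted-Noisy-Order-Packing-MWU, Weighted-Peeling), show each pair costs $\frac{1}{2\varsigma^2}+\frac{1}{2\varsigma^2}=\frac{1}{\varsigma^2}$-zCDP, apply adaptive sequential composition, and dismiss the final $\argmax$ as post-processing. You also correctly spot and navigate the factor-of-$2$ discrepancy between the stated bound in Lemma~\ref{lem:weighted-nopmwu-privacy} ($\frac{T}{\tau^2}$) and the $\frac{T}{2\tau^2}$ that its proof and the paper's own application actually deliver (the paper finesses this by instead citing Lemma~\ref{lem:DSG-LEDP-privacy}).
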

\iflong
\begin{proof}
    Note that Weighted-DSG-LEDP runs $K$ copies of Weighted-Noisy-Order-Packing followed by a run of Weighted-Peeling. By Lemma~\ref{lem:DSG-LEDP-privacy} each run of Weighted-Noisy-Order-Packing is $\frac{T}{2\tau^2}$-zCDP. Since $\tau = \sqrt{T} \varsigma$, this can be rewritten as $\frac{1}{2\varsigma^2}$-zCDP. Lemma~\ref{lem:peeling-DP} shows that each call to Weighted-Peeling is $\frac{1}{2\varsigma^2}$-zCDP. Thus, by the composition results for zCDP~\cite{bun2016concentrated}, we get that Weighted-DSG-LEDP is $\frac{K}{\varsigma^2}$-zCDP, as required.
\end{proof}

\begin{corollary} \label{cor:weighted-dsg-ledp-privacy}
    Let $\delta \in (0, 1)$ and $\epsilon \in (0, 8\log(1/\delta))$ be given privacy parameters. Set $\varsigma = \frac{4\sqrt{K \log(1/\delta)}}{\epsilon}$ in Weighted-DSG-LEDP. Then Weighted-DSG-LEDP is $(\epsilon, \delta)$-LEDP.
\end{corollary}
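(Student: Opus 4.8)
The plan is to mirror the proof of Corollary~\ref{cor:dsg-ledp-privacy} verbatim, with the zCDP parameter $c\log_2 n$ replaced by $K$. First I would invoke Lemma~\ref{lem:weighted-privacy-overall}, which says that Weighted-DSG-LEDP$(G, T, \varsigma, c, \beta)$ is $\frac{K}{\varsigma^2}$-zCDP. By the standard zCDP-to-RDP relationship, this gives an $(\alpha, \frac{\alpha K}{\varsigma^2})$-RDP guarantee for every $\alpha \geq 1$, and then the RDP-to-approximate-DP conversion (Proposition 3 of~\cite{Mironov17}) yields an $\left(\frac{\alpha K}{\varsigma^2} + \frac{\log(1/\delta)}{\alpha-1},\ \delta\right)$-DP guarantee for every $\alpha \geq 1$. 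One then optimizes the free parameter: taking $\alpha = \varsigma\sqrt{\frac{\log(1/\delta)}{K}} + 1$ balances the two $\alpha$-dependent terms and gives an $\left(\frac{2\sqrt{K\log(1/\delta)}}{\varsigma} + \frac{K}{\varsigma^2},\ \delta\right)$-DP guarantee. Substituting $\varsigma = \frac{4\sqrt{K\log(1/\delta)}}{\epsilon}$ turns the first term into $\epsilon/2$ and the second into $\frac{\epsilon^2}{16\log(1/\delta)}$, which is at most $\epsilon/2$ by the hypothesis $\epsilon < 8\log(1/\delta)$; so the total privacy loss is at most $\epsilon$, proving $(\epsilon,\delta)$-DP. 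This is a routine calculation, entirely parallel to the unweighted corollary; the only thing to watch is that $K$ here already bundles both the grid-search loop length $\log_{1+\beta}(2\cmax n)+1$ and the repetition count $c\log_2 n$, so Lemma~\ref{lem:weighted-privacy-overall} and the definition of $\varsigma$ must be used with exactly this $K$.

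The step requiring the most care is upgrading ``$(\epsilon,\delta)$-DP'' to ``$(\epsilon,\delta)$-LEDP'', i.e., arguing that the entire transcript, not just the output, is private. Here I would point out that Weighted-Noisy-Order-Packing-MWU is implementable in the LEDP model just as in the unweighted case: in each iteration the curator broadcasts the permutation $\sigma^{(t)}$, which is a function only of the publicly maintained distribution $p^{(t)}$ (the Hedge weights) and the public costs $c_v$, and each node $v$ locally computes and returns the single noisy value $\hat q(\sigma^{(t)})_v = q(\sigma^{(t)})_v + N(0,\nu^2)$; the rescaling by $1/c_v$ and the determination of the next ordering are post-processing performed by the curator and never touch a private adjacency list. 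Likewise, Weighted-Peeling is local (each node reports one Gaussian-perturbed degree; everything else is aggregation and post-processing), and the final $\argmax_{i,j}\widehat\density(S^{(i,j)})$ is post-processing. Hence the whole transcript is a post-processing of the per-node noisy-degree reports, whose joint privacy is exactly what the zCDP bound of Lemma~\ref{lem:weighted-privacy-overall} (via parallel composition within a round and sequential composition across the $K$ rounds) controls; therefore the bound derived above is a bound in the local model, giving $(\epsilon,\delta)$-LEDP.

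I do not expect a genuine obstacle: the arithmetic is identical to Corollary~\ref{cor:dsg-ledp-privacy} with $K$ in place of $c\log_2 n$, and the locality bookkeeping is the same as in that corollary. The mild subtlety — and the place a careless reader could slip — is simply confirming that in the weighted algorithm all uses of the private data are confined to the nodes' local Gaussian-perturbed degree reports, so that every other operation is genuinely post-processing; once that is observed the result follows immediately from Lemma~\ref{lem:weighted-privacy-overall} and the conversion lemmas.
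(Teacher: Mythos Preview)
Your proposal is correct and follows essentially the same approach as the paper's proof: invoke Lemma~\ref{lem:weighted-privacy-overall} for the zCDP bound, convert to RDP and then to $(\epsilon,\delta)$-DP via Proposition~3 of~\cite{Mironov17} with the same choice of $\alpha$, and finally observe that the algorithm runs in the local model. If anything, your write-up is more explicit than the paper's, which omits the verification that the two $\alpha$-dependent terms sum to at most $\epsilon$ and gives only a one-line locality justification.
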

\begin{proof}
The zCDP guarantee for Weighted-DSG-LEDP translates to $(\alpha, \frac{\alpha K}{\varsigma^2})$-RDP guarantee for any $\alpha \geq 1$, which in turn translates to $(\frac{\alpha K}{\varsigma^2} + \frac{\log(1/\delta)}{\alpha-1}, \delta)$-LEDP for any $\alpha \geq 1$ (see Proposition 3 in \cite{Mironov17}). Choosing $\alpha = \varsigma\sqrt{\frac{\log(1/\delta)}{K}} + 1$ for $\varsigma$ as specified above in the corollary statement, we get that Weighted-DSG-LEDP is $(\epsilon, \delta)$-DP.

Furthermore it is easy to note that Weighted-DSG-LEDP only does local operations. Thus Weighted-DSG-LEDP is $(\epsilon, \delta)$-LEDP.
\end{proof}
\fi

\subsubsection{Utility}

In order to prove the utility of Weighted-DSG-LEDP, we first need to prove the weighted equivalent of Lemma~\ref{lem:primal-to-discrete}: we need to show that a good solution to~\eqref{eq:plp-lambda} not only implies a good subgraph, but that it has a particular structure.

\begin{lemma} \label{lem:weighted-primal-to-discrete}
Given a feasible solution $x$ to~\eqref{eq:plp-lambda}, there exists a $\tau \in [0,1]$ such that the set $S_{\tau} =\{v \in V : x_v \geq \tau\}$ has density at least $\lambda$.
\end{lemma}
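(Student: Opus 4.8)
The plan is to prove this exactly as the weighted analogue of Lemma~4.3 of~\cite{CQT22}: a threshold‑rounding (probabilistic method) argument, using Lemma~\ref{lem:opt-ordering} to reduce the infinite family of constraints in~\eqref{eq:plp-lambda} to a single edge inequality. First I would record two preliminary facts about a feasible $x$ for~\eqref{eq:plp-lambda}. Since every $c_v \geq 1$ and $\sum_v c_v x_v = 1$ with $x \geq 0$, we get $x_v \leq 1/c_v \leq 1$ for all $v$; hence if $\tau$ is drawn uniformly from $[0,1]$ then $\Pr[\tau \leq x_v] = x_v$ exactly, and all relevant thresholds lie in $[0,1]$. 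Second, by Lemma~\ref{lem:opt-ordering}, the constraints $\langle x, q(\sigma)\rangle \geq \lambda$ for all $\sigma \in S_V$ are equivalent to the single inequality $\sum_{\{u,v\} \in E} \min\{x_u, x_v\} \geq \lambda$.

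Next I would argue by contradiction. For $\tau \in [0,1]$ put $S_\tau = \{v \in V : x_v \geq \tau\}$ and suppose $|E(S_\tau)| < \lambda\, c(S_\tau)$ for every $\tau \in [0,1]$. Drawing $\tau$ uniformly from $[0,1]$, linearity of expectation together with the first fact gives $\E[c(S_\tau)] = \sum_v c_v \Pr[\tau \leq x_v] = \sum_v c_v x_v = 1$, and $\E[|E(S_\tau)|] = \sum_{\{u,v\}\in E} \Pr[\tau \leq \min\{x_u,x_v\}] = \sum_{\{u,v\}\in E} \min\{x_u, x_v\} \geq \lambda$ by the second fact. On the other hand the contradiction hypothesis makes $\lambda\, c(S_\tau) - |E(S_\tau)|$ a function that is strictly positive everywhere on $[0,1]$, so it has strictly positive integral, yielding $\E[|E(S_\tau)|] < \lambda \E[c(S_\tau)] = \lambda$. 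Combining the two displays gives $\lambda \leq \E[|E(S_\tau)|] < \lambda$, a contradiction. Hence some $\tau \in [0,1]$ satisfies $|E(S_\tau)| \geq \lambda\, c(S_\tau)$, i.e.\ $\density(S_\tau) \geq \lambda$; and one may take $\tau$ to be one of the at most $n$ distinct values among the $x_v$, since $\density(S_\tau)$ is constant between consecutive such values.

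I do not expect a genuine obstacle here — the argument is essentially routine. The only two points needing a moment's care are: (i) invoking $c_v \geq 1$ to force $x_v \leq 1$, which is precisely what makes $\Pr[\tau \leq x_v] = x_v$ an exact identity (this is where the standing assumption $c_v \geq 1$ from Section~\ref{sec:weighted} is actually used); and (ii) upgrading a pointwise strict inequality to a strict inequality of expectations, which is valid because a nonnegative measurable function that is strictly positive everywhere on $[0,1]$ has strictly positive integral. If one prefers to avoid the strictness issue entirely, an alternative is to note that $\density(S_\tau)$ takes finitely many values as $\tau$ ranges over $[0,1]$ and pick the best threshold directly, so that $\max_\tau \density(S_\tau) \geq \E_\tau[\,|E(S_\tau)|\,]/\E_\tau[\,c(S_\tau)\,]$ after a suitable weighting — but the contradiction form above is cleanest.
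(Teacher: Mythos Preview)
Your proof is correct and follows essentially the same threshold-rounding/probabilistic-method argument as the paper's own proof: assume for contradiction that every level set has density below $\lambda$, take $\tau$ uniform on $[0,1]$, use Lemma~\ref{lem:opt-ordering} to get $\E[|E(S_\tau)|] \geq \lambda$, and compare with $\lambda\,\E[c(S_\tau)] = \lambda$ via $\sum_v c_v x_v = 1$. You are in fact slightly more explicit than the paper in two places---invoking $c_v \geq 1$ to ensure $x_v \leq 1$ (so that $\Pr[\tau \leq x_v] = x_v$ exactly), and addressing why a pointwise strict inequality yields a strict inequality in expectation---both of which the paper leaves implicit.
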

\begin{proof}
    For each $\tau \in [0,1]$, let $E_{\tau} = \{\{u,v\} \in E : \min(x_u, x_v) \geq \tau\}$ be the edges induced by $S_{\tau}$.  Suppose for contradiction that the lemma is false: $\density(S_{\tau})  = |E_{\tau}| / c(S_{\tau}) < \lambda$ for every $\tau \in [0,1]$.  Consider picking $\tau$ uniformly at random from $[0,1]$.  Since $x$ is feasible for~\eqref{eq:plp-lambda}, we know that $\langle x, q(\sigma) \rangle \geq \lambda$ for all $\sigma \in S_V$.  By Lemma~\ref{lem:opt-ordering} (which still applies since it is independent of the weights), this means that $\sum_{\{u,v\} \in E} \min(x_u, x_v) \geq \lambda$.  Since $\Pr[\{u,v\} \in E_{\tau}] = \min(x_u, x_v)$, we know from lienarity of expectations that $\E[|E_{\tau}|] \geq \lambda$.
    
    Now we have that
    \begin{align*}
        \lambda \leq \E[|E_{\tau}|] &= \E\left[c(S_{\tau}) \density(S_{\tau}) \right] < \E\left[ c(S_{\tau}) \lambda\right] \tag{assumption for contradiction} \\
        &= \lambda \E[c(S_{\tau})] = \lambda \sum_{v \in V} c_v \Pr[v \in S_{\tau}] \tag{linearity of expectations} \\
        &= \lambda \sum_{v \in V} c_v x_v = \lambda \tag{$\sum_{v \in V} c_v x_v = 1$ since $x$ is feasible for~\eqref{eq:plp-lambda}}
    \end{align*}

    This is a contradiction, since obviously it is not true that $\lambda < \lambda$.  Thus the lemma is true.
\end{proof}

We can now analyze the utility of Weighted-DSG-LEDP (Algorithm~\ref{alg:weighted-DSG-LEDP}).

\begin{theorem} \label{thm:weighted-dsg-ledp-utility}
    Suppose $T$ and $\varsigma$ in Weighted-DSG-LEDP are set so that $\tau = \sqrt{T}\varsigma \geq n$. Then with probability at least $1 - (2K+1)n^{-c}$, Weighted-DSG-LEDP returns a set $S$ with $\density(S) \geq (1-4\beta)\lambda^* - O\left(\tfrac{1}{\beta}\sqrt{\log n}\varsigma\right)$.
\end{theorem}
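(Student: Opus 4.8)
The plan is to show that the grid search in Weighted-DSG-LEDP effectively finds a near-correct guess for $\lambda^*$, runs Weighted-Noisy-Order-Packing-MWU on it enough times to succeed with high probability, and that Weighted-Peeling then converts the resulting permutation into a dense set at the cost of only $O(\sqrt{\log n}\,\varsigma)$ extra additive loss, with a single union bound controlling all the randomness.

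\emph{Locating the good grid point.} I would first assume $G$ has at least one edge (otherwise $\lambda^* = 0$ and the statement is trivial). Taking the two endpoints of any edge shows $\lambda^* \geq 1/(2\cmax) = \lambda_0$, while for any $S$ we have $\density(S) = |E(S)|/c(S) \leq \binom{|S|}{2}/|S| < n/2$ since each $c_v \geq 1$, so $\lambda^* < n = (1+\beta)^{\log_{1+\beta}(2\cmax n)}\lambda_0$, the largest grid point. Hence some grid point is $\geq \lambda^*$; let $\lambda_i$ be the smallest such. Since consecutive grid points differ by a factor $1+\beta$ and $\lambda_0 \leq \lambda^*$, we get $\lambda^* \leq \lambda_i < (1+\beta)\lambda^*$, i.e.\ $\lambda_i - \lambda^* < \beta\lambda^*$.

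\emph{Analyzing the runs at $\lambda_i$.} I would apply Theorem~\ref{thm:weighted-primal-solve-robust} to the runs Weighted-Noisy-Order-Packing-MWU$(G,\lambda_i,T,\tau)$, choosing the free parameter $L = \max\{\beta\lambda^*,\, 4(n+\tau)\sqrt{\log n/T}\}$: then $L \geq 4(n+\tau)\sqrt{\log n/T}$ and $\lambda^* \leq \lambda_i < \lambda^* + \beta\lambda^* \leq \lambda^* + L$, so the hypotheses hold (first disposing of the degenerate case $\lambda_i < 4L$, in which the conclusion is vacuous but also $(1-4\beta)\lambda^* < 0$ makes the theorem trivial). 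Using $\tau = \sqrt T\varsigma \geq n$, hence $n+\tau \leq 2\tau$, we have $4(n+\tau)\sqrt{\log n/T} \leq 8\sqrt{\log n}\,\varsigma$, so $4L \leq 4\beta\lambda^* + 32\sqrt{\log n}\,\varsigma$. One should also note that the internal settings $\rho = \frac{n+\tau}{\lambda_i}$, $\nu = \frac{\tau}{\rho\lambda_i}$ make $\nu < 1$ and $m^{(t)}_v \in [-1,1]$ (because $\lambda_i \leq n \leq \tau$ and $q(\sigma)_v \leq n$), so the invocation of Theorem~\ref{thm:MWU-main} inside Theorem~\ref{thm:weighted-primal-solve-robust} is legitimate. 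Thus each of the $c\log_2 n$ independent runs at $\lambda_i$ returns, with probability $\geq 1/2$, a point $x^{(i,j)}$ feasible for $\plp(\lambda_i - 4L)$; by Lemma~\ref{lem:weighted-primal-to-discrete} this yields a threshold set $S_\theta = \{v : x^{(i,j)}_v \geq \theta\}$ with $\density(S_\theta) \geq \lambda_i - 4L$, and since $\pi^{(i,j)}$ orders $V$ by non-increasing $x^{(i,j)}_v = p^{(i,j)}_v/c_v$ this $S_\theta$ is a prefix of $\pi^{(i,j)}$, so $\density(S^*_{\pi^{(i,j)}}) \geq \lambda_i - 4L \geq (1-4\beta)\lambda^* - O(\sqrt{\log n}\,\varsigma)$. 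Amplifying, with probability $\geq 1 - 2^{-c\log_2 n} = 1-n^{-c}$ at least one pair $(i,j)$ achieves this.

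\emph{Adding the peeling layer and finishing.} By Lemma~\ref{lem:weighted-peeling-approx} and a union bound over all $K$ calls to Weighted-Peeling, with probability $\geq 1 - 2Kn^{-c}$ every returned $S^{(i',j')}$ satisfies $|\widehat\density(S^{(i',j')}) - \density(S^{(i',j')})| \leq O(\sqrt{\log n}\,\varsigma)$ and $\density(S^{(i',j')}) \geq \density(S^*_{\pi^{(i',j')}}) - O(\sqrt{\log n}\,\varsigma)$. Intersecting with the previous event (total failure $\leq n^{-c} + 2Kn^{-c} = (2K+1)n^{-c}$): the good pair has $\widehat\density(S^{(i,j)}) \geq \density(S^*_{\pi^{(i,j)}}) - O(\sqrt{\log n}\,\varsigma) \geq (1-4\beta)\lambda^* - O(\sqrt{\log n}\,\varsigma)$, and since the algorithm outputs the set of largest $\widehat\density$, the output $S^{(i^*,j^*)}$ satisfies $\density(S^{(i^*,j^*)}) \geq \widehat\density(S^{(i^*,j^*)}) - O(\sqrt{\log n}\,\varsigma) \geq (1-4\beta)\lambda^* - O(\sqrt{\log n}\,\varsigma)$; since $\beta \leq 1$ this is at least $(1-4\beta)\lambda^* - O(\tfrac1\beta\sqrt{\log n}\,\varsigma)$, as claimed. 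The main obstacle is the middle step: choosing $L$ so that a geometric grid point with ratio $1+\beta$ lands in the window $[\lambda^*,\lambda^*+L)$ demanded by Theorem~\ref{thm:weighted-primal-solve-robust} even though $\lambda^*$ is unknown, while keeping $4L \leq 4\beta\lambda^* + O(\sqrt{\log n}\,\varsigma)$ — this is precisely where the multiplicative factor $1-4\beta$ and the additive loss come from, and where one must check that every hypothesis of Theorem~\ref{thm:weighted-primal-solve-robust} is genuinely satisfied. The prefix/threshold translation, the $O(\log n)$-fold amplification at $\lambda_i$, and the union bound over the $K$ peeling calls are routine bookkeeping.
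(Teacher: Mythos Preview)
Your proposal is correct and follows essentially the same approach as the paper's proof: locate a grid point $\lambda_i \in [\lambda^*, (1+\beta)\lambda^*)$, invoke Theorem~\ref{thm:weighted-primal-solve-robust} at that grid point, amplify via the $c\log_2 n$ independent runs, and then union-bound over all $K$ calls to Weighted-Peeling using Lemma~\ref{lem:weighted-peeling-approx}.

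The only substantive difference is in how the hypothesis $L \geq 4(n+\tau)\sqrt{\log n/T}$ of Theorem~\ref{thm:weighted-primal-solve-robust} is arranged. The paper first disposes of the case $\lambda^* < \tfrac{8\sqrt{\log n}\varsigma}{\beta}$ as trivial (this is exactly where the $\tfrac{1}{\beta}$ in the additive term comes from), and then sets $L = \beta\lambda^*$, which in the remaining case automatically satisfies the lower bound on $L$. You instead set $L = \max\{\beta\lambda^*, 4(n+\tau)\sqrt{\log n/T}\}$ and carry an extra $O(\sqrt{\log n}\varsigma)$ additive term through the chain, which is a clean way to avoid the case split. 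Your parenthetical about the degenerate case $\lambda_i < 4L$ is slightly imprecise (the claim ``$(1-4\beta)\lambda^* < 0$'' only covers the sub-case $L = \beta\lambda^*$; when $L$ equals the other branch of the max, triviality comes instead from $\lambda^* = O(\sqrt{\log n}\varsigma)$ making the full target bound negative), but this is harmless since Theorem~\ref{thm:weighted-primal-solve-robust} already handles $\lambda < 4L$ internally and your inequality $\lambda_i - 4L \geq (1-4\beta)\lambda^* - O(\sqrt{\log n}\varsigma)$ holds regardless. Your explicit verification that $\nu < 1$ and $m^{(t)}_v \in [-1,1]$ at $\lambda = \lambda_i$ is a nice touch that the paper's proof of Theorem~\ref{thm:weighted-primal-solve-robust} leaves implicit.
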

\begin{proof}
    We may assume without loss of generality that $\lambda^* \geq \tfrac{8\sqrt{\log n}\varsigma}{\beta}$; otherwise, the guarantee in the theorem holds trivially. 
    The geometric grid search for $\lambda^*$ ensures that there is at least one grid point, $\lambda_i$, which satisfies $\lambda^* \leq \lambda_i < (1+\beta)\lambda^*$. Note that $\beta \lambda^* \geq 8\sqrt{\log n} \varsigma \geq 4(n+\tau) \sqrt{\frac{\log n}{T}}$, since $\sqrt{T}\varsigma \geq n$. Thus, we can apply Theorem~\ref{thm:weighted-primal-solve-robust} with $L = \beta \lambda^*$ and $\lambda = \lambda_i$ to conclude that with probability at least $1/2$, Weighted-Noisy-Order-Packing-MWU$(G, \lambda_i, T, \tau)$ outputs a feasible solution $x$ to $\plp(\lambda_i - 4\beta \lambda^*)$, and its corresponding permutation $\sigma$. Recall that $S^*_{\sigma}$ is the prefix of $\sigma$ with largest density.  Since $\sigma$ is just nondecreasing order of $x$, Lemma~\ref{lem:weighted-primal-to-discrete} implies that 
    \[\density(S^*_{\sigma}) \geq \lambda_i - 4\beta \lambda^* \geq (1-4\beta)\lambda^*,\]
    since $\lambda_i \geq \lambda^*$. Since Weighted-DSG-LEDP runs $c \log_2 n$ independent copies of Weighted-Noisy-Order-Packing-MWU$(G, \lambda_i, T, \tau)$, we conclude that with probability at least $1 - n^{-c}$, there is at least one index $j \in [c \log_2 n]$ in which $\density(S^*_{\pi^{(i, j)}}) \geq (1-4\beta)\lambda^*$. 

    We do not know which specific pair of indices $(i, j)$ this is, but Weighted-DSG-LEDP then runs Weighted-Peeling (Algorithm~\ref{alg:peeling-weighted}) on each of the permutations obtained from Weighted-Noisy-Order-Packing-MWU. Using Lemma~\ref{lem:weighted-peeling-approx}, we conclude that via a union bound that with probability at least $1-2Kn^{-c}$, for all $i \in [c \log_2 n]$ and $j \in [\log_{1+\beta}(2\cmax n)+1]$, the set $S^{(i, j)}$ that we get from calling Weighted-Peeling has both true and estimated density within $O\left(\sqrt{\log n}\varsigma\right)$
    of $S^*_{\pi^{(i, j)}}$.
    Thus, with probability at least $1-(2K+1)n^{-c}$, we have that
    \begin{align*}
        \density(S) &\geq \max_{i \in [c \log n], j \in [\log_{1+\beta}(2\cmax n)+1]} \density(S^*_{\pi^{(i, j)}}) - O\left(\sqrt{\log n}\varsigma\right)  \\
        &\geq (1-4\beta)\lambda^* - O\left(\tfrac{1}{\beta}\sqrt{\log n}\varsigma\right), 
    \end{align*}
    as required.


\end{proof}
\begin{corollary}
    Let $\delta \in (0, 1)$ and $\epsilon \in (0, 8\log(1/\delta)$ be given privacy parameters. Set $\varsigma = \frac{4\sqrt{K\log(1/\delta)}}{\epsilon}$ and $T = \lceil\frac{n^2}{\varsigma^2}\rceil$ in Weighted-DSG-LEDP. Then Weighted-DSG-LEDP is $(\epsilon,\delta)$-LEDP, and with probability at least $1 - (2K+1)n^{-c}$, DSG-LEDP returns a set $S$ with $\density(S) \geq (1-4\beta)\lambda^* - O\left(\frac{\log(n)\sqrt{\log (\cmax n)\log(1/\delta)}}{\beta^{1.5}\epsilon}\right)$.
\end{corollary}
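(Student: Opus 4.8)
The plan is to obtain this corollary as a direct bookkeeping consequence of the two results already established for Weighted-DSG-LEDP: the privacy statement of Corollary~\ref{cor:weighted-dsg-ledp-privacy} and the utility statement of Theorem~\ref{thm:weighted-dsg-ledp-utility}. No new argument should be needed beyond substituting the declared parameter values and carrying out one asymptotic simplification of $K$.

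For the privacy half I would simply observe that the setting $\varsigma = \frac{4\sqrt{K\log(1/\delta)}}{\epsilon}$ together with the restriction $\epsilon \in (0, 8\log(1/\delta))$ is exactly the hypothesis of Corollary~\ref{cor:weighted-dsg-ledp-privacy}, which then gives that Weighted-DSG-LEDP is $(\epsilon,\delta)$-LEDP; this half is essentially a citation. (As a side remark, $T = \lceil n^2/\varsigma^2\rceil \leq n^2$ keeps the running time polynomial, though this is not needed for the statement.)

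For the utility half I would first check the precondition of Theorem~\ref{thm:weighted-dsg-ledp-utility}, namely $\tau = \sqrt{T}\varsigma \geq n$: this holds since $T = \lceil n^2/\varsigma^2\rceil \geq n^2/\varsigma^2$ forces $\sqrt{T}\varsigma \geq n$. Theorem~\ref{thm:weighted-dsg-ledp-utility} then yields, with probability at least $1-(2K+1)n^{-c}$, a set $S$ with $\density(S) \geq (1-4\beta)\lambda^* - O\!\left(\tfrac{1}{\beta}\sqrt{\log n}\,\varsigma\right)$. Substituting $\varsigma$ makes the additive loss of order $\frac{\sqrt{\log n}\,\sqrt{K\log(1/\delta)}}{\beta\epsilon}$. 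It then remains to bound $K$: recalling $K = c\log_2(n)\,(\log_{1+\beta}(2\cmax n)+1)$ and using $\log_{1+\beta}(2\cmax n) = \frac{\log(2\cmax n)}{\log(1+\beta)} = O\!\left(\frac{\log(\cmax n)}{\beta}\right)$ for $\beta \in (0,1)$ (since $\log(1+\beta) = \Theta(\beta)$ there), we get $K = O\!\left(\frac{\log(n)\log(\cmax n)}{\beta}\right)$ and hence $\sqrt{K} = O\!\left(\sqrt{\tfrac{\log(n)\log(\cmax n)}{\beta}}\right)$. Plugging this back in collapses the additive loss to $O\!\left(\frac{\log(n)\sqrt{\log(\cmax n)\log(1/\delta)}}{\beta^{1.5}\epsilon}\right)$, as claimed.

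There is no real obstacle here — the corollary is the specialization of earlier theorems to a concrete parameter choice. The only point needing a moment of care is the estimate $1/\log(1+\beta) = O(1/\beta)$, which is what combines the explicit $1/\beta$ from Theorem~\ref{thm:weighted-dsg-ledp-utility} with the $\sqrt{1/\beta}$ hidden inside $\sqrt{K}$ to produce the final $\beta^{1.5}$ in the denominator; I would also keep $\cmax$ explicit throughout rather than absorbing it into the $O(\cdot)$, so that the bound is stated in full generality for node-weighted graphs.
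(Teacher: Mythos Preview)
Your proposal is correct and follows essentially the same approach as the paper's own proof, which also simply cites Corollary~\ref{cor:weighted-dsg-ledp-privacy} for privacy and Theorem~\ref{thm:weighted-dsg-ledp-utility} for utility, then substitutes the bound $\varsigma = O\!\left(\frac{\sqrt{\log(n)\log(\cmax n)\log(1/\delta)}}{\sqrt{\beta}\,\epsilon}\right)$. Your write-up is in fact slightly more careful than the paper's, since you explicitly verify the precondition $\tau = \sqrt{T}\varsigma \geq n$ and spell out why $\log_{1+\beta}(2\cmax n) = O(\log(\cmax n)/\beta)$.
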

\begin{proof}
The privacy guarantee follows from Corollary~\ref{cor:weighted-dsg-ledp-privacy}, and the output guarantee from Theorem~\ref{thm:weighted-dsg-ledp-utility}, using the bound $\varsigma = \frac{4\sqrt{K \log(1/\delta)}}{\epsilon} = O\left(\frac{\sqrt{\log(n)\log(\cmax n)\log(1/\delta)}}{\sqrt{\beta}\epsilon}\right)$.
\end{proof}

\subsection{Improvement in the Centralized Setting}

Similar to the unweighted case, the Centralized-Weighted-DSG algorithm (Algorithm~\ref{alg:centralized-weighted-dsg}) runs Weighted-Noisy-Order-Packing-MWU (centralized) followed by Centralized-Peeling. The main twist here is that Weighted-Noisy-Order-Packing-MWU requires a guess $\lambda$ for $\lambda^*$. This is chosen randomly from an arithmetic grid with spacing $12\tau \sqrt{\frac{\log n}{T}}$ covering the entire range of $\lambda^*$ so as to ensure Theorem~\ref{thm:weighted-primal-solve-robust} can be applied.

\begin{algorithm}
    \caption{Centralized-Weighted-DSG-core($G, T,  \varsigma$)}
    \label{alg:centralized-weighted-dsg}
    \begin{algorithmic}[1]
        \STATE Set 
        $\tau = \sqrt{T}\varsigma$
        and $N = \left \lceil \frac{2\cmax n}{4(n+\tau) \sqrt{\log(n)/T}}\right\rceil$.

        \STATE Sample $k \sim [N]$ uniformly at random and set $\lambda = k \cdot 4(n+\tau) \sqrt{\frac{\log n}{T}}$.

        \STATE Run Weighted-Noisy-Order-Packing-MWU($G, \lambda, T, \tau$) and obtain a permutation $\sigma$.

        \STATE Output $(S, \tilde \density(S)) \leftarrow$ Weighted-Peeling($G, \sigma, \varsigma$).
    \end{algorithmic}
\end{algorithm}

The following lemma is proved exactly on the lines of  Lemma~\ref{lem:centralized-privacy}, together with the observation that the sampling of the index $k$ doesn't depend on any private information, i.e., the edges of $G$:
\begin{lemma} \label{lem:centralized-weighted-privacy}
    Centralized-Weighted-DSG-core($G, T, \varsigma)$ is $\frac{1}{\varsigma^2}$-zCDP.
\end{lemma}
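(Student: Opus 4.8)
The plan is to mirror the proof of Lemma~\ref{lem:centralized-privacy} essentially line for line, with the one extra observation flagged in the lemma statement: the grid point $\lambda$ used by Centralized-Weighted-DSG-core is chosen from a fixed distribution that does not look at the edge set of $G$, so it can be treated as public randomness and contributes nothing to the privacy budget. Concretely, Steps~1 and~2 only compute $\tau,N$ (functions of $n,T,\varsigma$ alone) and sample $k\sim[N]$ uniformly; none of this touches $E(G)$, so conditioned on the realized value of $\lambda$ the rest of the algorithm is a composition of two private sub-mechanisms followed by post-processing.

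For a fixed $\lambda$, Step~3 runs Weighted-Noisy-Order-Packing-MWU$(G,\lambda,T,\tau)$, which by Lemma~\ref{lem:weighted-nopmwu-privacy} (whose proof is identical to that of Lemma~\ref{lem:nopmwu-privacy}) is $\frac{T}{2\tau^2}$-zCDP; since $\tau=\sqrt{T}\varsigma$ this is $\frac{1}{2\varsigma^2}$-zCDP. Step~4 runs Weighted-Peeling$(G,\sigma,\varsigma)$, which by Lemma~\ref{lem:weighted-peeling-DP} is $\frac{1}{2\varsigma^2}$-zCDP. Adaptive sequential composition for zCDP (Theorem~\ref{thm:adseqcomp}) then gives that their composition is $\bigl(\frac{1}{2\varsigma^2}+\frac{1}{2\varsigma^2}\bigr)$-zCDP $=\frac{1}{\varsigma^2}$-zCDP, and the final $\argmax$/return step is post-processing (Theorem~\ref{thm:post-processing}), so the whole computation, conditioned on $\lambda$, is $\frac{1}{\varsigma^2}$-zCDP \emph{uniformly over all realizations of $\lambda$}.

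It remains to fold in the random choice of $\lambda$. Since the law of $k$ (hence $\lambda$) is the same for any two edge-neighboring graphs $G,G'$, I would argue that for every order $\alpha$ the R\'enyi divergence $D_\alpha(\mathcal{A}(G)\,\|\,\mathcal{A}(G'))$ is bounded by the supremum over $\lambda$ of the conditional R\'enyi divergence (equivalently, view $k$ as a public auxiliary input, or use quasi-convexity of R\'enyi divergence over the mixture index), which by the previous paragraph is at most $\alpha/\varsigma^2$. Hence Centralized-Weighted-DSG-core is $\frac{1}{\varsigma^2}$-zCDP. The only non-routine point — and thus the ``main obstacle'' — is stating this last step cleanly: one must be explicit that the mixture over the data-independent guess $\lambda$ does not inflate the zCDP parameter, which is exactly the point that zCDP (like pure/approximate DP) composes freely with randomness independent of the private input.
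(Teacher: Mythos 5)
Your proposal is correct and matches the paper's proof, which simply cites the composition argument from Lemma~\ref{lem:centralized-privacy} together with the observation that the index $k$ (hence $\lambda$) is sampled independently of the edge set. The paper treats the data-independent-randomness point in one sentence where you spell it out in more detail (via quasi-convexity of R\'enyi divergence over the public mixture index), but the reasoning and decomposition are the same.
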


\begin{lemma} \label{lem:weighted-centralized-dsg-ledp-utility}
    Suppose $T$ and $\varsigma$ in Centralized-Weighted-DSG-core are set so that $\tau = \sqrt{T}\varsigma \geq n$. Then with probability at least $\frac{1}{4N}$, Centralized-Weighted-DSG-core($G, T, \varsigma$) returns a set $S$ with $\tilde\density(S) \geq \density(G) - O(\sqrt{\log n}\varsigma)$.
\end{lemma}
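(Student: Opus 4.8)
The plan is to follow the proof of Lemma~\ref{lem:centralized-dsg-ledp-utility} essentially verbatim, paying an extra factor of $N$ for the fact that Algorithm~\ref{alg:centralized-weighted-dsg} now \emph{guesses} $\lambda^*$ from an arithmetic grid instead of knowing it. Write $L := 4(n+\tau)\sqrt{\log n/T}$ for the grid spacing used in step~2, so the grid is $\{L, 2L, \ldots, NL\}$ with $N = \lceil 2\cmax n / L\rceil$. We may assume $E(G)\neq\emptyset$ (otherwise $\density(G)=0$ and the claim is trivial). First I would check that the grid straddles $\lambda^* = \density(G)$: since $c_v\ge 1$ for every $v$, any $S$ has $\density(S) = |E(S)|/c(S) \le \binom{|S|}{2}/|S| < n/2 \le 2\cmax n$, so $\lambda^* \in (0, NL)$; hence $k^* := \lceil \lambda^*/L\rceil$ lies in $[N]$ and the grid point $\lambda_{k^*} = k^* L$ satisfies $\lambda^* \le \lambda_{k^*} < \lambda^* + L$. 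Since $k$ is sampled uniformly from $[N]$, with probability at least $1/N$ we have $k = k^*$.

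Conditioning on $k = k^*$, set $\lambda := \lambda_{k^*}$, which satisfies the hypothesis $\lambda^* \le \lambda < \lambda^* + L$ of Theorem~\ref{thm:weighted-primal-solve-robust} with this exact $L$. That theorem gives that, with probability at least $1/2$ over the internal randomness of Weighted-Noisy-Order-Packing-MWU($G,\lambda,T,\tau$), the returned point $x$ is feasible for $\plp(\lambda - 4L)$, and (by construction in Algorithm~\ref{alg:Weighted-DSG-MWU}) the returned permutation $\sigma$ orders $V$ in nonincreasing order of $x$. Applying Lemma~\ref{lem:weighted-primal-to-discrete} to $x$ produces a threshold $\tau'\in[0,1]$ for which $S_{\tau'} = \{v : x_v \ge \tau'\}$ --- a prefix of $\sigma$ --- has density at least $\lambda - 4L \ge \lambda^* - 4L$. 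Therefore $\density(S^*_\sigma) \ge \lambda^* - 4L = \density(G) - 4L$, and since $\tau = \sqrt{T}\varsigma \ge n$ we have $n + \tau \le 2\sqrt{T}\varsigma$, so $4L = 16(n+\tau)\sqrt{\log n/T} \le 32\varsigma\sqrt{\log n} = O(\sqrt{\log n}\,\varsigma)$.

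Finally I would handle the Weighted-Peeling call exactly as in Lemma~\ref{lem:centralized-dsg-ledp-utility}: conditioning further on the ``good'' $\sigma$ above, Lemma~\ref{lem:weighted-peeling-approx} with the constant $c = 2/\log n$ (so that $1 - 2n^{-c} = 1 - 2e^{-2} \ge 1/2$) gives, with probability at least $1/2$, a set $S$ with $\density(S) \ge \density(S^*_\sigma) - O(\sqrt{\log n}\,\varsigma)$ and $|\tilde\density(S) - \density(S)| \le O(\sqrt{\log n}\,\varsigma)$; chaining these inequalities yields $\tilde\density(S) \ge \density(S^*_\sigma) - O(\sqrt{\log n}\,\varsigma) \ge \density(G) - O(\sqrt{\log n}\,\varsigma)$. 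Multiplying the three success probabilities --- $\Pr[k = k^*] \ge 1/N$, the Theorem~\ref{thm:weighted-primal-solve-robust} event with probability $\ge 1/2$ given that, and the Weighted-Peeling event with probability $\ge 1/2$ given the produced $\sigma$ --- gives overall success probability at least $\frac1N\cdot\frac12\cdot\frac12 = \frac1{4N}$, as claimed.

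I do not expect a genuine obstacle here: all the substantive work (the noisy-PST analysis, the primal-to-discrete rounding, and the peeling concentration bound) is already packaged in Theorem~\ref{thm:weighted-primal-solve-robust} and Lemmas~\ref{lem:weighted-primal-to-discrete} and~\ref{lem:weighted-peeling-approx}, so the argument is bookkeeping. The only points needing a little care are (i) verifying that the width-$L$ grid both straddles $\lambda^*$ and stays within the index range $[N]$, which is where the assumption $c_v \ge 1$ (hence $\density(G) < n/2$) is used, and (ii) fixing a single constant in the $O(\sqrt{\log n}\,\varsigma)$ term large enough to simultaneously absorb $4L$ and the two peeling error terms --- this also covers the degenerate case where $\density(G) - O(\sqrt{\log n}\,\varsigma)$ is negative, since on the peeling event $\tilde\density(S) \ge \density(S) - O(\sqrt{\log n}\,\varsigma) \ge -O(\sqrt{\log n}\,\varsigma)$.
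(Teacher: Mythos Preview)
Your proposal is correct and follows essentially the same route as the paper's proof: guess $\lambda$ from the arithmetic grid (probability $1/N$ of hitting the right cell), invoke Theorem~\ref{thm:weighted-primal-solve-robust} and Lemma~\ref{lem:weighted-primal-to-discrete} to get $\density(S^*_\sigma)\ge \lambda^*-4L$ with probability $\ge 1/2$, then apply Lemma~\ref{lem:weighted-peeling-approx} with $c=2/\log n$ for another factor of $1/2$, and finally use $\tau\ge n$ to absorb $4L$ into $O(\sqrt{\log n}\,\varsigma)$. You add two details the paper leaves implicit---verifying that $\lambda^*$ actually falls within the grid's range via $c_v\ge 1$, and spelling out the choice of $c$ in the peeling lemma---but these only make the argument more complete.
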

\begin{proof}
Let $\lambda^* = \density(G)$. Note that with probability $\frac{1}{N}$, the guess $\lambda$ chosen in Centralized-Weighted-DSG-core satisfies $\lambda^* \leq \lambda < \lambda^* + 4(n+\tau) \sqrt{\frac{\log n}{T}}$. Conditioned on this happening,  Theorem~\ref{thm:weighted-primal-solve-robust} and Lemma~\ref{lem:weighted-primal-to-discrete} imply that with probability at least $1/2$, Weighted-Noisy-Order-Packing-MWU outputs a permutation $\sigma$ such that 
\[\density(S^*_\sigma) \geq \lambda - 16(n+\tau) \sqrt{\frac{\log n}{T}} \geq \lambda^* - 16(n+\tau) \sqrt{\frac{\log n}{T}}\] 
since $\lambda \geq \lambda^*$. Thus, using Lemma~\ref{lem:weighted-peeling-approx}, we conclude that with probability at least $\frac{1}{4N}$, the following holds for the set $S$ and estimated density $\tilde\density(S)$ output by Centralized-Weighted-DSG-core:
\begin{equation} \label{eq:centralized-weighted-utility}
\density(S) \geq \density(S^*_\sigma) - O(\sqrt{\log n} \varsigma) \geq \lambda^*  - O\left((n+\tau) \sqrt{\frac{\log n}{T}} + \sqrt{\log n} \varsigma\right)
\end{equation}
and
\begin{equation} \label{eq:centralized-weighted-estimated-density}
    |\density(S) - \tilde{\density}(S)| \leq O(\sqrt{\log n} \varsigma).
\end{equation}
For the specified value of $T$ and $\tau$, we have
\[
(n+\tau) \sqrt{\frac{\log n}{T}} + \sqrt{\log n} \varsigma = O(\sqrt{\log n} \varsigma).
\]
Hence Eq.~\eqref{eq:centralized-weighted-utility} and \eqref{eq:centralized-weighted-estimated-density} imply that
\[\tilde\density(S) \geq \lambda^* - O\left(\sqrt{\log(n)}\varsigma\right),\]
as required.
\end{proof}

Now, let Centralized-Weighted-DSG be the mechanism $\AQ$ obtained by applying the mechanism of Theorem~\ref{thm:papernot-steinke} with $Q = $ Centralized-Weighted-DSG-core, $s = S$, the set it outputs, and $q = \tilde{\density}(S)$, the estimated density it outputs, $q^* = \density(G) - O\left(\sqrt{\log(n)}\varsigma\right)$ as specified in Lemma~\ref{lem:weighted-centralized-dsg-ledp-utility}, and $\gamma = n^{-c}$, for any given constant $c$.

\begin{theorem} \label{thm:weighted-centralized-main}
    Let $\delta \in (0, 1)$ and $\epsilon > 0$ be given privacy parameters. Set $\varsigma = \frac{6\sqrt{\log(n^c/\delta)}}{\epsilon}$ and $T = \lceil \frac{n^2}{\varsigma^2}\rceil$ in Centralized-Weighted-DSG-core. Then Centralized-Weighted-DSG is $(\epsilon, \delta)$-DP, and with probability at least $1 - (4N+2)n^{-c}$, the density of the set it outputs is at least $\density(G) -O\left(\frac{\sqrt{\log(n)\log(n/\delta)}}{\epsilon}\right)$.
\end{theorem}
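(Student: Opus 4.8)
The plan is to follow the proof of Theorem~\ref{thm:centralized-main} essentially verbatim, since Centralized-Weighted-DSG is built the same way — by feeding $Q=$ Centralized-Weighted-DSG-core into the Papernot--Steinke amplifier of Theorem~\ref{thm:papernot-steinke} — and the only structural change is that the per-run success probability of the core routine is $\tfrac{1}{4N}$ (from Lemma~\ref{lem:weighted-centralized-dsg-ledp-utility}) rather than the constant $\tfrac14$ it was in the unweighted case, because the weighted core must guess $\lambda^*$ from an arithmetic grid of $N$ points. So the only real work is tracking the $N$ factor through the final union bound and checking that the chosen $\varsigma$ produces exactly the privacy parameter $\epsilon$.

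First I would line up the two inputs to Theorem~\ref{thm:papernot-steinke}: by Lemma~\ref{lem:centralized-weighted-privacy}, $Q$ is $\tfrac{1}{\varsigma^2}$-zCDP (so $\rho = 1/\varsigma^2$), and by Lemma~\ref{lem:weighted-centralized-dsg-ledp-utility}, using that $\tau=\sqrt{T}\varsigma\ge n$ holds for $T=\lceil n^2/\varsigma^2\rceil$, we have $\Pr_{(s,q)\sim Q(D)}[q\ge q^*]\ge\tfrac{1}{4N}$ for $q^* = \density(G) - O(\sqrt{\log n}\,\varsigma)$. Applying Theorem~\ref{thm:papernot-steinke} with $\gamma = n^{-c}$ then yields a $\left(\tfrac{6\sqrt{\log(1/(\gamma\delta))}}{\varsigma},\,\delta\right)$-DP guarantee; since $\log(1/(\gamma\delta)) = \log(n^c/\delta)$, plugging in $\varsigma = \tfrac{6\sqrt{\log(n^c/\delta)}}{\epsilon}$ makes the privacy loss exactly $\epsilon$, giving $(\epsilon,\delta)$-DP. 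The utility clause of the same theorem says the estimated density $\tilde\density(S)$ of the output is at least $q^*$ with probability at least $1 - \gamma/(1/(4N)) = 1 - 4N\gamma$.

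Next I would upgrade this to a bound on the true density $\density(S)$, exactly as in the unweighted proof. The number $J$ of copies of $Q$ run by the amplifier is geometric with success probability $\gamma$, so $\Pr[J>k] = (1-\gamma)^k \le e^{-\gamma k}$ and hence $\Pr\!\big[J \le \tfrac{\log(1/\gamma)}{\gamma}\big] \ge 1-\gamma$. Conditioning on that event and union bounding over the at most $\tfrac{\log(1/\gamma)}{\gamma} = \Theta(n^c\log n)$ calls to Weighted-Peeling, Lemma~\ref{lem:weighted-peeling-approx} (invoked with a sufficiently large constant $c'$, absorbed into the $O(\cdot)$) gives $|\density(S) - \tilde\density(S)| \le O(\sqrt{\log n}\,\varsigma)$ in every call with probability at least $1-\gamma$. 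A final union bound over the three bad events — Papernot--Steinke utility failure ($\le 4N\gamma$), the event $J > \tfrac{\log(1/\gamma)}{\gamma}$ ($\le\gamma$), and peeling-approximation failure ($\le\gamma$) — shows that with probability at least $1-(4N+2)\gamma = 1-(4N+2)n^{-c}$ the returned set satisfies $\density(S) \ge q^* - O(\sqrt{\log n}\,\varsigma) = \density(G) - O(\sqrt{\log n}\,\varsigma)$. Substituting $\varsigma = \tfrac{6\sqrt{\log(n^c/\delta)}}{\epsilon}$ and using $\log(n^c/\delta) = O(\log(n/\delta))$ for constant $c$ gives the claimed additive loss $O\!\left(\tfrac{\sqrt{\log(n)\log(n/\delta)}}{\epsilon}\right)$.

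I do not anticipate a genuine obstacle here — the argument is a carbon copy of Theorem~\ref{thm:centralized-main}. The only points needing a bit of care are the bookkeeping of the $N$ factor (the weaker $\tfrac{1}{4N}$ core success probability degrades the \emph{failure probability} to $(4N+2)n^{-c}$, not the utility, which is acceptable since $N$ is polynomial in $n$ and $\cmax$ and the statement keeps it explicit), and verifying that $\log(1/(\gamma\delta))$ collapses to $\log(n^c/\delta)$ so that the chosen $\varsigma$ gives precisely $\epsilon$ rather than a constant multiple of it.
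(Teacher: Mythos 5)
Your proposal is correct and follows essentially the same route as the paper's proof: apply Theorem~\ref{thm:papernot-steinke} with $\gamma = n^{-c}$ to the $\tfrac{1}{\varsigma^2}$-zCDP core routine, use Lemma~\ref{lem:weighted-centralized-dsg-ledp-utility} for the $\tfrac{1}{4N}$ per-run success probability, then bound $J$ by $\tfrac{\log(1/\gamma)}{\gamma}$ and union-bound the peeling errors. The only cosmetic difference is that you correctly cite Lemma~\ref{lem:weighted-peeling-approx} for the peeling error bound while the paper's text cites Lemma~\ref{lem:peeling-approx}; these coincide since the paper notes that the peeling lemmas hold unchanged in the weighted case.
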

\begin{proof}
    Using Lemma~\ref{lem:weighted-centralized-dsg-ledp-utility}, we have
    \[\Pr_{(s, q) \sim Q(D)}[q \geq q^*] \geq \frac{1}{4N}.\]
    Since Centralized-Weighted-DSG-core($G, \varsigma$) is $\frac{1}{\varsigma^2}$-zCDP, applying Theorem~\ref{thm:papernot-steinke}, we conclude that Centralized-Weighted-DSG is 
    \[
    \left(\tfrac{6\sqrt{\log(1/(\gamma\delta))}}{\varsigma}, \delta\right)\text{-DP} = (\epsilon, \delta)\text{-DP}
    \]
    for the specified value of $\varsigma$, and the estimated density $\tilde\density(S)$ of the set $S$ it outputs is at least $q^*$ with probability at least $1 - 4N\gamma$. Now, recall the random variable $J$ used in Centralized-Weighted-DSG which is drawn from the geometric distribution with success probability $\gamma$. Since $\Pr[J > k] = (1-\gamma)^k \leq \exp(-\gamma k)$, we conclude that $\Pr[J \leq \frac{\log(1/\gamma)}{\gamma}] \geq 1 - \gamma$. Conditioned on $J \leq \frac{\log(1/\gamma)}{\gamma}$, using Lemma~\ref{lem:peeling-approx} and a union bound over the $J$ calls to $Q$ in $\AQ$, we conclude that with probability at least $1-\gamma$, in each call to $Q$, we have
    \[|\density(S) - \tilde\density(S)| \leq O\left(\sqrt{\log n}\varsigma\right)\]
    for an appropriately chosen constant in the $O(\cdot)$ notation. Thus, overall, using the union bound, with probability at least $1-(4N+2)\gamma$, the true density of the set output by Centralized-Weighted-DSG is at least 
    \[\density(G) - O\left(\sqrt{\log n}\varsigma\right) = \density(G) - O\left(\frac{\sqrt{\log(n)\log(n/\delta)}}{\epsilon}\right). \qedhere\] 
\end{proof}



\section{Directed Densest Subgraph} \label{sec:directed}
In the directed version we are given a \emph{directed} graph $G = (V, E)$.  Given two subsets $S, T \subseteq V$, we let $E(S,T)$ denote the edges from $S$ to $T$.  The density of $S, T$ is defined as,
\begin{align*}
    \density_G(S,T) = \frac{|E(S,T)|}{\sqrt{|S||T|}},
\end{align*}
and the maximum subgraph density is $\density(G) = \max_{S, T \subseteq V} \density_G(S,T)$.  This problem was introduced by~\cite{KV99}, and has been received significant study since.  Charikar~\cite{Charikar00} showed how to solve this problem exactly using $O(n^2)$ separate linear programs, and also showed that it was possible to give a $1-\epsilon$ approximation while solving only $O(\log n / \epsilon)$ linear programs.  More recently, Sawlani and Wang~\cite{SW20} gave a \emph{black-box} reduction to the node-weighted undirected setting.  Since we designed an $(\epsilon, \delta)$-LEDP algorithm for the node-weighted undirected setting in Section~\ref{sec:weighted}, this reduction will form our starting point. As one piece of notation, for any node $v$ and subset $T \subseteq V$, we let $\deg_T(v) = |\{(v, u) : u \in T\}|$ be the number of edges from $v$ to nodes in $T$. 

\subsection{The reduction of~\texorpdfstring{Sawlani and Wang \cite{SW20}}{Sawlani and Wang}}
We now give a quick overview of the reduction from the directed case to the node-weighted undirected case given by~\cite{SW20}.  Suppose we are given a directed graph $G = (V, E)$.  Given a parameter $t$, we build a new node-weighted, undirected bipartite graph $G_t = (V_t, E_t, c_t)$.  We do this by setting $V_t = V_t^L \cup V_t^R$, where $V_t^L$ and $V_t^R$ are both copies of $V$.  For every edge $(u,v)$ of $E$, we create an edge $\{u,v\} \in E_t$ with $u \in V_t^L$ and $v \in V_t^R$ (i.e., we create an undirected edge where the left endpoint is the tail of the original directed edge and the right endpoint is the head of the original directed edge).  Our weight function $c_t$ sets $c_t(u) = 1/(2t)$ for $u \in V_t^L$, and $c_t(u) = t/2$ for $u \in V_t^R$.  

The first lemma says that no matter what the value of $t$, the density of sets in the directed graph is no smaller than the density of sets in the weighted undirected graph (and hence the optimal density of the weighted graph is at most the optimal density of the directed graph).  This is primarily a consequence of the AM-GM inequality.
\begin{lemma}[Lemma 5.1 of~\cite{SW20}, slightly rephrased] \label{lem:reduction1}
    Let $S, T \subseteq V$, let $t$ be arbitrary, let $S^L$ denote the copies of $S$ in $V_t^L$, and let $T^R$ denote the copies of $T$ in $V_t^R$.  Then $\density_G(S,T) \geq \density_{G_t}(S^L \cup T^R)$.
\end{lemma}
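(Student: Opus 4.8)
The plan is to prove this by directly computing both densities and comparing the denominators via the AM--GM inequality. The first step is to identify the edge set of $G_t$ induced by $W := S^L \cup T^R$. Since $G_t$ is bipartite with sides $V_t^L$ and $V_t^R$, every edge of $G_t$ has one endpoint in each side, and by construction the edges of $G_t$ are exactly $\{u,v\}$ with $u \in V_t^L$, $v \in V_t^R$, and $(u,v) \in E$. Such an edge has both endpoints in $W$ iff $u \in S^L$ and $v \in T^R$, i.e., iff the original directed edge $(u,v)$ has $u \in S$ and $v \in T$. Hence the induced edges are in bijection with $E(S,T)$, so $|E_t(W)| = |E(S,T)|$ and the two densities share the same numerator.

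Next I would compute the total node weight of $W$ in $G_t$. Using $c_t(u) = 1/(2t)$ for $u \in V_t^L$ and $c_t(u) = t/2$ for $u \in V_t^R$, we have $c_t(W) = \sum_{u \in S^L} c_t(u) + \sum_{v \in T^R} c_t(v) = \frac{|S|}{2t} + \frac{t|T|}{2}$, so that $\density_{G_t}(W) = |E(S,T)| \big/ \left(\frac{|S|}{2t} + \frac{t|T|}{2}\right)$. The key inequality is then AM--GM applied to the two summands in the denominator:
\[
\frac{|S|}{2t} + \frac{t|T|}{2} \;\geq\; 2\sqrt{\frac{|S|}{2t}\cdot\frac{t|T|}{2}} \;=\; \sqrt{|S|\,|T|}.
\]
Substituting this lower bound on the denominator yields $\density_{G_t}(W) \leq |E(S,T)| \big/ \sqrt{|S|\,|T|} = \density_G(S,T)$, which is exactly the claim.

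Honestly, there is no substantial obstacle here: this is a one-line calculation once the right quantities are written down. The only points requiring a little care are (i) correctly arguing the numerator equality using bipartiteness of $G_t$ (so that no edge with both endpoints on the same side can accidentally be counted), and (ii) the degenerate cases $S = \emptyset$ or $T = \emptyset$, where $|E(S,T)| = 0$ and both densities are $0$ (or one simply restricts attention to nonempty $S,T$, as the directed DSG objective is only interesting there). Note also that the inequality is tight exactly when $|S|/(2t) = t|T|/2$, i.e., $t = \sqrt{|S|/|T|}$, which foreshadows why one searches over values of $t$ in the reduction.
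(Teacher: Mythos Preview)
Your proof is correct and follows exactly the approach the paper indicates: the lemma is cited from \cite{SW20} without an explicit proof here, but the paper notes it ``is primarily a consequence of the AM--GM inequality,'' which is precisely the computation you carry out. Your observation about tightness at $t = \sqrt{|S|/|T|}$ is also exactly what underlies Lemma~\ref{lem:reduction3}.
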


It turns out that for the ``correct'' guess of $t$, we have equality.
\begin{lemma}[Lemma 5.2 of~\cite{SW20}, slightly rephrased] \label{lem:reduction3}
    Let $(S, T)$ be the optimal densest subgraph in $G$, i.e., $\density(G) = \density_G(S, T)$.  Let $t = \sqrt{|S|/|T|}$.  Let $S^L$ be the copies of $S$ in $V_t^L$, and let $T^R$ be the copies of $T$ in $V_t^R$. Then $\density_G(S,T) = \density_{G_t}(S^L \cup T^R)$.  
\end{lemma}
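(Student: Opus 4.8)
The plan is to verify the claimed equality by computing $\density_{G_t}(S^L \cup T^R)$ directly from the definitions and then substituting the chosen value of $t$; the reverse inequality $\density_G(S,T) \le \density_{G_t}(S^L\cup T^R)$ is not really needed separately, since an exact computation gives equality at once. First I would pin down $|E_{G_t}(S^L\cup T^R)|$. Because $G_t$ is bipartite with sides $V_t^L$ and $V_t^R$, every edge of $G_t$ joins the two sides, and by construction the map $(u,v)\mapsto\{u,v\}$ (left endpoint the tail, right endpoint the head) is a bijection between $E$ and $E_t$. Under this bijection an edge of $E_t$ has both endpoints in $S^L\cup T^R$ exactly when its left endpoint lies in $S^L$ and its right endpoint in $T^R$, i.e.\ exactly when the corresponding directed edge $(u,v)$ satisfies $u\in S$ and $v\in T$. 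Hence $|E_{G_t}(S^L\cup T^R)| = |E(S,T)|$. This combinatorial identification is the one place that deserves a careful sentence — in particular, if $G$ contains both $(u,v)$ and $(v,u)$, these produce two distinct edges of $G_t$, matching the two distinct directed edges, so there is no double counting.

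Next I would compute the vertex weight. Since $c_t\equiv 1/(2t)$ on $V_t^L$ and $c_t\equiv t/2$ on $V_t^R$, and $S^L, T^R$ are copies of $S, T$, we get $c_t(S^L\cup T^R) = \frac{|S|}{2t} + \frac{t|T|}{2}$, and therefore
\[
\density_{G_t}(S^L\cup T^R) \;=\; \frac{|E_{G_t}(S^L\cup T^R)|}{c_t(S^L\cup T^R)} \;=\; \frac{|E(S,T)|}{\frac{|S|}{2t} + \frac{t|T|}{2}}.
\]
Now substitute $t=\sqrt{|S|/|T|}$ (well defined: since $(S,T)$ is optimal we may assume $S,T\neq\emptyset$, as otherwise $G$ has no edges and everything is trivially zero). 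Then $\frac{|S|}{2t} = \tfrac12\sqrt{|S||T|} = \frac{t|T|}{2}$, so the denominator equals $\sqrt{|S||T|}$ and we obtain $\density_{G_t}(S^L\cup T^R) = |E(S,T)|/\sqrt{|S||T|} = \density_G(S,T)$, as claimed. Equivalently, this is exactly the equality case of the AM--GM bound $\frac{|S|}{2t}+\frac{t|T|}{2}\ge\sqrt{|S||T|}$ underlying Lemma~\ref{lem:reduction1}, which is tight precisely when $\frac{|S|}{2t}=\frac{t|T|}{2}$, i.e.\ $t=\sqrt{|S|/|T|}$.

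Honestly there is no substantive obstacle here: the argument is a one-line substitution once the denominator is written down. The only thing to be genuinely careful about is the edge identification in the bipartite double cover described in the first paragraph (and the degenerate empty-set case). I would also remark that optimality of $(S,T)$ plays no role in this particular computation — the identity $\density_{G_t}(S^L\cup T^R)=\density_G(S,T)$ holds for \emph{every} nonempty pair $S,T$ with $t=\sqrt{|S|/|T|}$ — and is only invoked when this lemma is combined with Lemma~\ref{lem:reduction1} to conclude that $\density(G_t)$ and $\density(G)$ agree for the right $t$.
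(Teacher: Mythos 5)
Your proof is correct, and since the paper merely cites Lemma~5.2 of~\cite{SW20} without reproducing an argument, your direct computation is exactly the right thing to do here. The three ingredients — the bijection $E \leftrightarrow E_t$ restricting to $E(S,T) \leftrightarrow E_{G_t}(S^L\cup T^R)$ because $G_t$ is bipartite, the weight total $\frac{|S|}{2t}+\frac{t|T|}{2}$, and the substitution $t=\sqrt{|S|/|T|}$ making both terms equal to $\tfrac12\sqrt{|S||T|}$ — are exactly what is in~\cite{SW20}, and your side remarks (that antiparallel directed edges cause no double counting, that the nonemptiness of $S,T$ is needed for $t$ to be defined, and that optimality of $(S,T)$ is not actually used in this identity but only when chaining with Lemma~\ref{lem:reduction1}) are all correct.
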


Since there are at most $n$ possible choices for $|S|$ and $n$ possible choices for $|T|$, there are at most $n^2$ different possible values of $t$ for us to try.  So the previous two lemmas imply that we can just try all $n^2$ of these possibilities and choose the best.  Not surprisingly, it was also shown in~\cite{SW20} that it is possible to lose a $(1-\epsilon)$-approximation factor and reduce this to only trying $O(\frac{1}{\epsilon} \log n)$ different values of $t$.  We will use this fact, but not as a black box.  

The main difficulty in combining this reduction with our node-weighted algorithm from Section~\ref{sec:weighted} to get an LEDP algorithm for directed DSG is that for any value of $t$ other than $1$, either $V_t^L$ or $V_t^R$ will have node weights that are less than $1$, and for the extreme value of $t$, the weights could be as small as $1/\sqrt{n}$.  But our algorithm from Section~\ref{sec:weighted} requires all node weights to be at least $1$.

\subsection{An LEDP algorithm for Directed DSG}

Our algorithm is relatively simple: the curator can try different values of $t$, and then we can run Weighted-DSG-LEDP on each of the node-weighted undirected graphs resulting from the reduction of~\cite{SW20}.  For each value of $t$ the curator tries, it can announce that value to the nodes, and every node $v$ will know the weight of its two copies.  Node $v$ can then act as if it is each of its copies when responding.  The only subtlety is that we need to rescale so all node weights are at least $1$, but this can clearly be done locally by the nodes since they will know $t$.  This algorithm is presented as Directed-DSG-LEDP in Algorithm~\ref{alg:directed-DSG-LEDP}. 

\begin{algorithm}
    \caption{Directed-DSG-LEDP($G, T', \varsigma,c, \beta$)}
    \label{alg:directed-DSG-LEDP}
    \begin{algorithmic}[1]
        \STATE Let $t_0 = 1/\sqrt{n}$
        \FOR{$i = 0$ \TO $\log_{1+\beta} n$}
            \STATE Let $t_i = (1+\beta)^i t_0$
            \STATE The curator announces $t_i$
            \STATE For every node $v$, the curator and $v$ both compute $\alpha_i = \min(1/(2t_i), t_i / 2)$, and compute the weight $c_{v,i}^{L} =  1/(2t_i \alpha_i)$ of its copy in $V_{t_i}^L$ and the weight $c_{v,i}^R = t_i / (2\alpha_i)$ of its copy in $V_{t_i}^R$.  Let $G'_{t_i}$ denote $G_{t_i}$ but with these weights (which are scaled up by exactly $1/\alpha_i$ from the weights in $G_{t_i}$)
            \STATE Let $U_i \leftarrow$ Weighted-DSG-LEDP($G'_{t_i}, T', \varsigma, c, \beta$)  
            \STATE Let $S_i = U_i \cap V_{t_i}^L$ and $T_i = U_i \cap V_{t_i}^R$.
            \STATE The curator announces $S_i$ and $T_i$
            \STATE Every node $v \in S_i$ draws a random value $N_v^{(i)} \sim N(0, \varsigma^2)$ and sends $\widehat \deg_{T_i}(v) = \deg_{T_i}(v) + N_v^{(i)}$ to the curator. 
            \STATE The curator computes $\widehat \density(S_i, T_i) = \frac{\sum_{v \in S_i} \widehat \deg_{T_i}(v)}{\sqrt{|S_i| |T_i|}}$
        \ENDFOR
        \STATE The curator computes $i^* = \argmax_i \widehat \density(S_i, T_i)$ and returns $(S_{i^*}, T_{i^*})$.
    \end{algorithmic}
\end{algorithm}

We begin by showing that this algorithm is private.  To simplify notation, we will set $M = 2c \log_2(n) \cdot (\log_{1+\beta}(2n^{3/2})+1) \cdot \log_{1+\beta}(n)$.

\begin{theorem} \label{thm:directed-LEDP}
    Directed-DSG-LEDP($G, T', \varsigma, c, \beta)$ satisfies $\frac{M}{\varsigma^2}$-zCDP.
\end{theorem}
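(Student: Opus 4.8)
The plan is to decompose Directed-DSG-LEDP into its privacy-consuming operations, bound the zCDP loss of each, and combine via composition for zCDP. Within the $i$-th iteration of the outer loop, almost everything is free: the announcement of $t_i$, the computation of $\alpha_i$ and the copy-weights $c^L_{v,i}, c^R_{v,i}$, and the construction of $G'_{t_i}$ depend only on $n$ and $t_i$ (not on $E(G)$), while the formation of $S_i, T_i$ from $U_i$, the computation of $\widehat \density(S_i, T_i)$, and the final choice of $i^*$ are post-processing of already-released quantities (Theorem~\ref{thm:post-processing}). The only two operations in iteration $i$ that touch private data are the call to Weighted-DSG-LEDP on $G'_{t_i}$ and the round in which each $v \in S_i$ reports $\widehat \deg_{T_i}(v) = \deg_{T_i}(v) + N_v^{(i)}$ with $N_v^{(i)} \sim N(0,\varsigma^2)$.

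For the Weighted-DSG-LEDP call I would first verify its hypotheses: rescaling $G_{t_i}$ by $1/\alpha_i$ with $\alpha_i = \min(1/(2t_i), t_i/2)$ turns the smaller weight type into exactly $1$ and the larger into $\max(t_i^2, 1/t_i^2)$, so every node weight of $G'_{t_i}$ is at least $1$ (the assumption of Section~\ref{sec:weighted}), and since the loop forces $t_i \in [1/\sqrt n, \sqrt n]$ we get $\cmax(G'_{t_i}) \le n$ while $|V(G'_{t_i})| = 2n$. Lemma~\ref{lem:weighted-privacy-overall} then gives that this call is $\frac{K_i}{\varsigma^2}$-zCDP with $K_i = c\log_2(|V(G'_{t_i})|)\big(\log_{1+\beta}(2\cmax(G'_{t_i})|V(G'_{t_i})|)+1\big)$, which by the preceding bounds is $c\log_2(n)\big(\log_{1+\beta}(\poly(n))+1\big)$, matching the per-iteration term implicit in $M$ up to the constant factors collected there. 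For the degree-reporting round, the key observation is that changing one directed edge of $G$ changes $\deg_{T_i}(v)$ by at most $1$ for exactly one node $v$ (its tail), so the reporting nodes operate on data partitioned by out-edge ownership; hence the Gaussian mechanism (Lemma~\ref{lem:gaussian}) with $\ell_2$-sensitivity $1$ together with parallel composition (Theorem~\ref{thm:parallel-comp}) makes the whole round $\frac{1}{2\varsigma^2}$-zCDP, and the subsequent computation of $\widehat \density(S_i, T_i)$ is again post-processing. Each node uses only its own adjacency list throughout, so the entire procedure is implementable in the local model.

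Finally I would assemble the bound: by adaptive sequential composition for zCDP (Theorem~\ref{thm:adseqcomp}) iteration $i$ is $\frac{K_i + 1/2}{\varsigma^2}$-zCDP, and since Directed-DSG-LEDP runs $\log_{1+\beta}(n)+1$ such iterations followed only by post-processing, the whole algorithm is $\big(\sum_i \frac{K_i + 1/2}{\varsigma^2}\big)$-zCDP, which is at most $\frac{M}{\varsigma^2}$ for $M$ as defined — the factor $2$ and the extra $\log_{1+\beta}(n)$ in $M$ absorbing the iteration count, the additive $\frac{1}{2\varsigma^2}$ per iteration, and the gap between $\log_2(2n)$ and $\log_2(n)$. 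The step I expect to be the main obstacle is controlling the rescaled node weights of the bipartite graphs $G'_{t_i}$: one needs both the lower bound $\ge 1$ (without which Weighted-DSG-LEDP cannot legitimately be invoked at all) and a polynomial upper bound on $\cmax(G'_{t_i})$ uniformly over the grid of $t_i$ values, and then careful bookkeeping so that the accumulated composition loss lands inside the stated $M$; the privacy arguments themselves are routine applications of the Gaussian mechanism, parallel composition, post-processing, and zCDP composition.
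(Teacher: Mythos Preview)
Your proposal is correct and follows essentially the same route as the paper: bound each iteration's two privacy-consuming steps (the Weighted-DSG-LEDP call via Lemma~\ref{lem:weighted-privacy-overall} and the noisy out-degree reports via the Gaussian mechanism plus parallel composition), treat everything else as post-processing, and combine over the $\log_{1+\beta}(n)$ iterations by zCDP sequential composition. You are in fact more careful than the paper in a couple of places---you explicitly verify the rescaled weights in $G'_{t_i}$ are $\geq 1$ and correctly observe $\cmax(G'_{t_i}) \leq n$ (the paper asserts $\cmax \leq \sqrt n$, which understates the true maximum but only affects constants inside $M$), and you get $\tfrac{1}{2\varsigma^2}$ rather than the paper's looser $\tfrac{1}{\varsigma^2}$ for the degree-reporting round.
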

\begin{proof}
    Fix an iteration $i$.  We know from Lemma~\ref{lem:weighted-privacy-overall} that $U_i$ is $\frac{K}{\varsigma^2}$-zCDP, and we know from the Gaussian mechanism (Lemma~\ref{lem:gaussian}) and parallel composition that the collection of all $\widehat \deg_{T_i}(v)$ values is $\frac{1}{\varsigma^2}$-zCDP.  Everything else in iteration $i$ is postprocessing, and hence each iteration is $\frac{2K}{\varsigma^2}$-zCDP.  Note that $K$ is a function of $\cmax$, but in our setting we know that $\cmax \leq \sqrt{n}$.  Hence each iteration is $\frac{2c \log_2(n) \cdot (\log_{1+\beta}(2n^{3/2})+1)}{\varsigma^2}$-zCDP.  Since there are $\log_{1+\beta}(n)$ iterations in total, sequential composition (Theorem~\ref{thm:adseqcomp}) implies the theorem.
\end{proof}

\begin{corollary} \label{cor:directed-private}
Let $\delta \in (0, 1)$ and $\epsilon \in (0, 8\log(1/\delta))$ be given privacy parameters. Set $\varsigma = \frac{4\sqrt{M \log(1/\delta)}}{\epsilon}$ in Directed-DSG-LEDP. Then Directed-DSG-LEDP is $(\epsilon, \delta)$-LEDP.
\end{corollary}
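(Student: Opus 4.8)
The plan is to follow exactly the template already used in Corollaries~\ref{cor:dsg-ledp-privacy} and~\ref{cor:weighted-dsg-ledp-privacy}: start from the zCDP bound of Theorem~\ref{thm:directed-LEDP}, convert it to an $(\epsilon,\delta)$ guarantee by passing through R\'enyi differential privacy, and then optimize the free R\'enyi order $\alpha$. Concretely, Theorem~\ref{thm:directed-LEDP} gives that Directed-DSG-LEDP is $\frac{M}{\varsigma^2}$-zCDP. By the definition of zCDP this is $(\alpha, \frac{\alpha M}{\varsigma^2})$-RDP for every $\alpha \geq 1$, and by Proposition 3 of~\cite{Mironov17} this in turn yields $\left(\frac{\alpha M}{\varsigma^2} + \frac{\log(1/\delta)}{\alpha - 1},\ \delta\right)$-DP for every $\alpha \geq 1$.

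Next I would choose $\alpha = \varsigma\sqrt{\frac{\log(1/\delta)}{M}} + 1$, which balances the two terms: the first term becomes $\frac{M}{\varsigma^2} + \frac{\sqrt{M\log(1/\delta)}}{\varsigma}$ and the second becomes $\frac{\sqrt{M\log(1/\delta)}}{\varsigma}$, so the total privacy loss is $\frac{M}{\varsigma^2} + \frac{2\sqrt{M\log(1/\delta)}}{\varsigma}$. Plugging in $\varsigma = \frac{4\sqrt{M\log(1/\delta)}}{\epsilon}$ gives $\frac{2\sqrt{M\log(1/\delta)}}{\varsigma} = \frac{\epsilon}{2}$ and $\frac{M}{\varsigma^2} = \frac{\epsilon^2}{16\log(1/\delta)}$; the hypothesis $\epsilon < 8\log(1/\delta)$ ensures $\frac{\epsilon^2}{16\log(1/\delta)} < \frac{\epsilon}{2}$, so the total is at most $\epsilon$, establishing $(\epsilon,\delta)$-DP.

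Finally I would argue that this is in fact $(\epsilon,\delta)$-LEDP and not merely centralized DP. For this I'd observe that every step of Directed-DSG-LEDP that touches private data is local: each inner call to Weighted-DSG-LEDP is LEDP by Corollary~\ref{cor:weighted-dsg-ledp-privacy} (and the rescaling factors $c_{v,i}^L, c_{v,i}^R$ are computed by each node from the publicly broadcast $t_i$), and in the degree-estimation step each node $v \in S_i$ only releases $\widehat{\deg}_{T_i}(v)$, which it computes from its own adjacency list together with the broadcast sets $S_i, T_i$; all remaining steps (announcing $t_i$, $S_i$, $T_i$, forming $\widehat\density(S_i,T_i)$, and taking the argmax) are curator-side post-processing of an already-private transcript. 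Since the whole transcript is generated by composition of LEDP mechanisms and post-processing, the $(\epsilon,\delta)$ bound derived above holds in the local model.

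The only place requiring any care — and hence the mild ``main obstacle'' — is bookkeeping: checking that the value of $M$ in Theorem~\ref{thm:directed-LEDP} already folds in the bound $\cmax \le \sqrt n$ (so the $\cmax$-dependence of $K$ is accounted for), and that the condition $\epsilon \in (0, 8\log(1/\delta))$ is precisely what makes the $\frac{M}{\varsigma^2}$ term at most $\frac{\epsilon}{2}$. No new ideas beyond the preceding two corollaries are needed.
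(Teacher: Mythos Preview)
Your proposal is correct and follows essentially the same approach as the paper: convert the $\frac{M}{\varsigma^2}$-zCDP guarantee of Theorem~\ref{thm:directed-LEDP} to RDP and then to $(\epsilon,\delta)$-DP via Proposition~3 of~\cite{Mironov17} with the choice $\alpha = \varsigma\sqrt{\log(1/\delta)/M}+1$, and separately verify that every private-data access in Directed-DSG-LEDP is local (nodes simulate their two copies in $G'_{t_i}$ and release $\widehat{\deg}_{T_i}(v)$ from their own adjacency lists). The only cosmetic difference is that the paper argues locality first and privacy second, and leaves the arithmetic verifying the $\epsilon$ bound implicit, whereas you spell it out.
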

\begin{proof}
    We first show that the algorithm can be implemented in the local model.  This is straightforward, as all computations are local, except the fact that it calls Weighted-DSG-LEDP in step $6$ on $G'_{t_i}$, rather than on $G$.  But note that every node knows the weight and incident edges of both of its copies in $G'_{t_i}$ and so can simulate both of its copies, and the curator also knows the weights and the nodes in $G'_{t_i}$ so can simulate the curator of Weighted-DSG-LEDP.  So Directed-DSG-LEDP can be implemented in the local model, and hence we only need to show differential privacy.

    Now we just need to transfer the zCDP guarantee of Theorem~\ref{thm:directed-LEDP} to $(\epsilon, \delta)$-DP. By definition, the $\frac{M}{\varsigma^2}$-zCDP guarantee of Theorem~\ref{thm:directed-LEDP} translates to a $\left(\alpha,  \frac{\alpha M}{\varsigma^2}\right)$-RDP guarantee for any $\alpha \geq 1$.  This in turn translates to $\left( \frac{\alpha M}{\varsigma^2} + \frac{\log(1/\delta)}{\alpha-1}, \delta\right)$-DP guarantee for any $\alpha \geq 1$ (see Propostion 3 in~\cite{Mironov17}).  Choosing $\alpha = \varsigma \sqrt{\frac{\log(1/\delta)}{M}} + 1$ for $\varsigma$ as specified above in the corollary statement, we get that Directed-DSG-LEDP is $(\epsilon, \delta)$-LEDP.
\end{proof}

We can now analyze the utility of Directed-DSG-LEDP.

\begin{theorem} \label{thm:directed-utility}
    Suppose $T'$ and $\varsigma$ are set so $\sqrt{T'} \varsigma \geq n$.  Then with probability at least $1 - O(n^{-c})$, Directed-DSG-LEDP returns $(S, T)$ such that
    \[
        \density_G(S,T) \geq (1-O(\beta)) \density(G) - O\left(\frac{1}{\beta} \varsigma \sqrt{\max\left(\frac{|S^*|}{|T^*|}, \frac{|T^*|}{|S^*|}\right)}  \sqrt{\log n}\right),
    \]
    where $(S^*, T^*)$ is the optimal solution.
\end{theorem}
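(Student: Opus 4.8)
The plan is to isolate, in the geometric grid $\{t_i\}_{i=0}^{\log_{1+\beta} n}$, a value $t_{i_0}$ that is within a $(1+\beta)$ factor of the ``correct'' balance parameter $t^\star = \sqrt{|S^*|/|T^*|}$ for the optimal directed pair $(S^*,T^*)$, show that the call to Weighted-DSG-LEDP on $G'_{t_{i_0}}$ recovers a near-optimal set, and then argue that the noisy $\argmax$ over the iterations at the end of the algorithm cannot do much worse. Since $|S^*|,|T^*|\in[1,n]$ we have $t^\star\in[1/\sqrt n,\sqrt n]$, which is exactly the interval covered by the grid (recall $t_0=1/\sqrt n$), so such an $i_0$ with $t_{i_0}\le t^\star<(1+\beta)t_{i_0}$ exists. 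First I would lower bound $\density(G_{t_{i_0}})$ by $(1-O(\beta))\density(G)$: by Lemma~\ref{lem:reduction3} the set $U^\star:=(S^*)^L\cup (T^*)^R$ has density exactly $\density(G)$ in $G_{t^\star}$, and its density in $G_{t_{i_0}}$ has denominator $|S^*|/(2t_{i_0})+|T^*|t_{i_0}/2$, which differs from $|S^*|/(2t^\star)+|T^*|t^\star/2=\sqrt{|S^*||T^*|}$ by at most a $(1+\beta)$ factor when $t_{i_0}$ and $t^\star$ are within a $(1+\beta)$ factor; hence $\density_{G_{t_{i_0}}}(U^\star)\ge \density(G)/(1+\beta)$, and Lemma~\ref{lem:reduction1} gives $\density(G_{t_{i_0}})\ge (1-\beta)\density(G)$.

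Next I would push the utility guarantee of Weighted-DSG-LEDP through the rescaling. The graph $G'_{t_{i_0}}$ is $G_{t_{i_0}}$ with every node weight multiplied by $1/\alpha_{i_0}$, so $\density(G'_{t_{i_0}})=\alpha_{i_0}\,\density(G_{t_{i_0}})$ and, symmetrically, $\density_{G_{t_{i_0}}}(U)=\density_{G'_{t_{i_0}}}(U)/\alpha_{i_0}$ for every $U$; moreover its largest node weight is $\max(t_{i_0}^2,1/t_{i_0}^2)\le n$, so Theorem~\ref{thm:weighted-dsg-ledp-utility} applies (using $\sqrt{T'}\varsigma\ge n$). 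It returns, with probability at least $1-(2K+1)n^{-c}$, a set $U_{i_0}$ with $\density_{G'_{t_{i_0}}}(U_{i_0})\ge (1-4\beta)\density(G'_{t_{i_0}})-O(\tfrac1\beta\sqrt{\log n}\,\varsigma)$. Dividing by $\alpha_{i_0}$, using $1/\alpha_{i_0}=\max(2t_{i_0},2/t_{i_0})=O(\max(t^\star,1/t^\star))=O\!\big(\sqrt{\max(|S^*|/|T^*|,|T^*|/|S^*|)}\big)$, and then applying Lemma~\ref{lem:reduction1} to $U_{i_0}=S_{i_0}^L\cup T_{i_0}^R$, I obtain
\[
\density_G(S_{i_0},T_{i_0})\ \ge\ \density_{G_{t_{i_0}}}(U_{i_0})\ \ge\ (1-O(\beta))\density(G)-O\!\left(\tfrac1\beta\,\varsigma\sqrt{\log n}\,\sqrt{\max\!\Big(\tfrac{|S^*|}{|T^*|},\tfrac{|T^*|}{|S^*|}\Big)}\right),
\]
combining the previous paragraph's bound on $\density(G_{t_{i_0}})$.

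It remains to handle the final noisy selection. For each iteration $i$, $\widehat\density(S_i,T_i)-\density_G(S_i,T_i)=\big(\sum_{v\in S_i}N_v^{(i)}\big)/\sqrt{|S_i||T_i|}$ is distributed as $N(0,\varsigma^2/|T_i|)$, so by the standard Gaussian tail bound $|\widehat\density(S_i,T_i)-\density_G(S_i,T_i)|\le O(\varsigma\sqrt{\log n})$ with probability $1-n^{-\Omega(c)}$; taking a union bound over the $O(\tfrac1\beta\log n)$ iterations (and over the internal failures of Weighted-DSG-LEDP and Weighted-Peeling, choosing the constant in those routines large enough) keeps the overall failure probability $O(n^{-c})$. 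Then the three-step chain
\[
\density_G(S_{i^*},T_{i^*})\ \ge\ \widehat\density(S_{i^*},T_{i^*})-O(\varsigma\sqrt{\log n})\ \ge\ \widehat\density(S_{i_0},T_{i_0})-O(\varsigma\sqrt{\log n})\ \ge\ \density_G(S_{i_0},T_{i_0})-O(\varsigma\sqrt{\log n}),
\]
where the middle inequality is the choice of $i^*$, together with the displayed bound for $i_0$ and the fact that $O(\varsigma\sqrt{\log n})$ is dominated by the stated additive term (since $\tfrac1\beta\sqrt{\max(\cdot)}\ge1$), yields the theorem.

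I expect the part requiring the most care is the second step: correctly accounting for how the \emph{approximate} guess $t_{i_0}$ of $t^\star$ interacts with the rescaling factor $\alpha_{i_0}$, so that the $(1-O(\beta))$ multiplicative loss and the balance-dependent factor $\sqrt{\max(|S^*|/|T^*|,|T^*|/|S^*|)}$ in the additive term both come out as claimed. In particular one must verify the denominator computation for $\density_{G_{t_{i_0}}}(U^\star)$ and that $1/\alpha_{i_0}=\Theta(\max(t^\star,1/t^\star))$ whenever $t_{i_0}$ is within a $(1+\beta)$ factor of $t^\star$; everything else is bookkeeping with union bounds and Gaussian concentration.
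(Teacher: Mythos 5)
Your proof is correct and follows essentially the same route as the paper's: identify a grid point near $t^\star = \sqrt{|S^*|/|T^*|}$, push the Weighted-DSG-LEDP guarantee through the $\alpha$-rescaling, apply Lemmas~\ref{lem:reduction1} and~\ref{lem:reduction3}, and clean up with Gaussian concentration over the final $\argmax$. One small slip: when you conclude $\density(G_{t_{i_0}}) \ge (1-\beta)\density(G)$ from $\density_{G_{t_{i_0}}}(U^\star) \ge \density(G)/(1+\beta)$, the cited Lemma~\ref{lem:reduction1} points in the wrong direction (it bounds directed density \emph{from below} by undirected); you only need the trivial fact that the optimum density of $G_{t_{i_0}}$ is at least the density of the specific set $U^\star$, so the conclusion is fine but the attribution should be dropped.
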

\begin{proof}
    We will argue that for the ``correct'' $i$ the algorithm will return a good solution, and that for any of the other $\log_{1+\beta} n$ values of $i$ where the solution is significantly worse, our estimate of their density is sufficiently accurate so that we will not be fooled.

    Let $(S^*, T^*)$ be the optimal solution, and let $t = \sqrt{|S^*| / |T^*|}$.  Clearly $1/\sqrt{n} \leq t \leq \sqrt{n}$, so by the definition of Directed-DSG-LEDP there is some $i$ such that $t \leq t_{i} \leq (1+\beta)t$.  Fix this $i$.  
    
    Note that all weights in $G_t$ and in $G_{t_i}$ are within $1+\beta$ of each other, so every set has density that is only different by at most a $1+\beta$ factor between the two.  Let $S^*_i$ denote the copies of $S^*$ in $V_{t_i}^L$, and let $T^*_i$ denote the copies of $T^*$ in $V_{t_i}^R$.  As in the algorithm, let $U_i = S_i \cup T_i$ be the set returned by Weighted-DSG-LEDP when called on $G'_{t_i}$, and let $U'_i$ be the optimal solution for $G'_{t_i}$.
    We know from the Theorem~\ref{thm:weighted-dsg-ledp-utility} that with probability at least $1- O(Kn^{-c})$,
    \begin{equation} \label{eq:loss-from-weighted}
        \density_{G'_{t_i}}(U_i) \geq (1-4\beta) \density_{G'_{t_i}}(U'_i) - O\left(\frac{1}{\beta} \varsigma \sqrt{\log n}\right).
    \end{equation}
    So we have:

    \begin{align*}
        \density_G(S_i, T_i) & \geq \density_{G_t}(U_i) \tag{Lemma~\ref{lem:reduction1}} \\
        & \geq \frac{1}{1+\beta} \density_{G_{t_i}}(U_i) \\
        &= \frac{1}{1+\beta} \cdot \frac{|E(U_i)|}{|S_i| \alpha_i c_{v,i}^L + |T_i| \alpha_i c_{v,i}^R} \\
        &= \frac{1}{1+\beta} \cdot \frac{1}{\alpha_i} \cdot \density_{G'_{t_i}}(U_i) \\
        &\geq \frac{1}{1+\beta} \cdot \frac{1}{\alpha_i} \cdot \left((1-4\beta) \density_{G'_{t_i}}(U'_i) - O\left(\frac{1}{\beta} \varsigma \sqrt{\log n}\right)\right) \tag{Eq.~\eqref{eq:loss-from-weighted}} \\
        &\geq \frac{1}{1+\beta} \cdot \frac{1}{\alpha_i} \cdot \left((1-4\beta) \density_{G'_{t_i}}(S^*_i \cup T^*_i) - O\left(\frac{1}{\beta} \varsigma \sqrt{\log n}\right)\right) \tag{$U'_i$ optimal for $G'_{t_i}$}\\
        &= \frac{1}{1+\beta} \cdot \frac{1}{\alpha_i} \cdot \left(\alpha_i (1-4\beta)  \density_{G_{t_i}}(S^*_i \cup T^*_i) - O\left(\frac{1}{\beta} \varsigma \sqrt{\log n}\right)\right) \\
        &\geq  \frac{1}{(1+\beta)^2} (1-4\beta)  \density_{G_{t}}(S^*_i \cup T^*_i) - O\left(\frac{1}{\alpha_i \beta} \varsigma \sqrt{\log n}\right) \\
        &\geq (1-O(\beta)) \density_{G_{t}}(S^*_i \cup T^*_i) - O\left(\frac{1}{\alpha_i \beta} \varsigma \sqrt{\log n}\right) \\
        &= (1-O(\beta)) \density_{G}(S^*, T^*) - O\left(\frac{1}{\alpha_i \beta} \varsigma \sqrt{\log n}\right) \tag{Lemma~\ref{lem:reduction3}} \\
        &\geq (1-O(\beta)) \density_{G}(S^*, T^*) - O\left( \frac{\varsigma \sqrt{\max\left(\frac{|S^*|}{|T^*|}, \frac{|T^*|}{|S^*|}\right)}  \sqrt{\log n}}{\beta}\right).
    \end{align*}

    Now consider an \emph{arbitrary} $i$ (not necessarily the ``right'' $i$ as above).  Using essentially the argument as in Lemma~\ref{lem:peeling-approx}, we know that with probability at least $1 - O(n^{-c})$, we have
    \[|\widehat \density(S_i, T_i) - \density(S_i, T_i)| \leq O\left( \varsigma \sqrt{\log n} \right).\]  
    Since this holds for all $i$, including the correct value of $i$, we get that with probability at least $1 - O(Kn^{-c})$,
    \begin{align*}
        \density_G(S,T) &\geq \density_G(S_i, T_i) - O\left(\varsigma \sqrt{\log n} \right) \\
        & \geq (1-O(\beta)) \density_{G}(S^*, T^*) - O\left( \frac{\varsigma \sqrt{\max\left(\frac{|S^*|}{|T^*|}, \frac{|T^*|}{|S^*|}\right)}  \sqrt{\log n}}{\beta}\right) - O\left(\varsigma\sqrt{\log n} \right) \\
        &\geq (1-O(\beta)) \density_{G}(S^*, T^*) - O\left( \frac{\varsigma \sqrt{\max\left(\frac{|S^*|}{|T^*|}, \frac{|T^*|}{|S^*|}\right)}  \sqrt{\log n}}{\beta}\right) \tag{$\because \beta \leq 1$} \\
        &= (1-O(\beta)) \density(G) - O\left( \frac{\varsigma \sqrt{\max\left(\frac{|S^*|}{|T^*|}, \frac{|T^*|}{|S^*|}\right)}  \sqrt{\log n}}{\beta}\right)
    \end{align*}
    as claimed.
\end{proof}

\begin{corollary}
    Let $\delta \in (0, 1)$ and $\epsilon \in (0, 8\log(1/\delta)$ be given privacy parameters. Set $\varsigma = \frac{4\sqrt{M\log(1/\delta)}}{\epsilon}$ and $T' = \lceil\frac{n^2}{\varsigma^2}\rceil$ in Directed-DSG-LEDP. Then Directed-DSG-LEDP is $(\epsilon,\delta)$-LEDP, and with probability at least $1 - O(Kn^{-c})$, DSG-LEDP returns a set $(S,T)$ with 
    \[\density(S,T) \geq (1-O(\beta)) \density(G) - O\left(\frac{1}{\beta^2\epsilon} \log^2 n \sqrt{\log(1/\delta)}\sqrt{\max\left(\frac{|S^*|}{|T^*|}, \frac{|T^*|}{|S^*|}\right)}  \right).\]
\end{corollary}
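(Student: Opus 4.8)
The plan is to combine the privacy and utility guarantees already established for Directed-DSG-LEDP and substitute the prescribed values of $\varsigma$ and $T'$; no genuinely new argument is needed. The privacy half is immediate: Corollary~\ref{cor:directed-private} says that with $\varsigma = \frac{4\sqrt{M\log(1/\delta)}}{\epsilon}$ the algorithm is $(\epsilon,\delta)$-LEDP, which is exactly the setting of this corollary, so that claim follows verbatim.

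For the utility half, I would first check the hypothesis of Theorem~\ref{thm:directed-utility}, namely $\sqrt{T'}\varsigma \ge n$: since $T' = \lceil n^2/\varsigma^2\rceil \ge n^2/\varsigma^2$, we get $\sqrt{T'}\varsigma \ge n$ immediately. Theorem~\ref{thm:directed-utility} then yields, with probability at least $1 - O(Kn^{-c})$, a returned pair $(S,T)$ with
\[
\density_G(S,T) \ge (1-O(\beta))\density(G) - O\!\left(\frac{1}{\beta}\,\varsigma\,\sqrt{\log n}\,\sqrt{\max\!\left(\tfrac{|S^*|}{|T^*|},\tfrac{|T^*|}{|S^*|}\right)}\right).
\]
It remains only to put the additive term into closed form, i.e.\ to bound $\frac{1}{\beta}\varsigma\sqrt{\log n}$.

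The one mildly technical step is estimating $M = 2c\log_2(n)\cdot(\log_{1+\beta}(2n^{3/2})+1)\cdot\log_{1+\beta}(n)$. Here it matters that in the directed reduction we always have $\cmax \le \sqrt n$, which is precisely what keeps $K$ (and hence $M$) polylogarithmic. Using $\log_{1+\beta}(x) = \frac{\ln x}{\ln(1+\beta)} = \Theta\!\left(\frac{\log x}{\beta}\right)$ for $\beta \in (0,1]$, we get $M = O\!\left(\frac{\log^3 n}{\beta^2}\right)$, so $\sqrt M = O\!\left(\frac{\log^{1.5} n}{\beta}\right)$ and therefore $\varsigma = O\!\left(\frac{\log^{1.5} n\,\sqrt{\log(1/\delta)}}{\beta\epsilon}\right)$. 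Substituting,
\[
\frac{1}{\beta}\,\varsigma\,\sqrt{\log n} = O\!\left(\frac{\log^2 n\,\sqrt{\log(1/\delta)}}{\beta^2\epsilon}\right),
\]
and multiplying by $\sqrt{\max(|S^*|/|T^*|,|T^*|/|S^*|)}$ gives exactly the additive loss in the corollary statement, while the multiplicative factor $(1-O(\beta))$ and the failure probability $1-O(Kn^{-c})$ carry over directly from Theorem~\ref{thm:directed-utility}.

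I do not expect a real obstacle here: the whole proof is bookkeeping on top of Corollary~\ref{cor:directed-private} and Theorem~\ref{thm:directed-utility}. The only place to be careful is the simplification of the two nested $\log_{1+\beta}$ factors inside $M$ (and remembering that this relies on $\cmax \le \sqrt n$ for the directed instances), since a sloppy estimate would produce the wrong power of $\beta$ or of $\log n$ in the final bound.
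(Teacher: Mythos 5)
Your proof is correct and follows the paper's own route exactly: invoke Corollary~\ref{cor:directed-private} for privacy, invoke Theorem~\ref{thm:directed-utility} for the utility bound, substitute the chosen $\varsigma$, and simplify $M$ using $\log_{1+\beta}(x) = \Theta(\log x / \beta)$ and $\cmax \le \sqrt n$. Your explicit verification of the hypothesis $\sqrt{T'}\varsigma \ge n$ is a small extra step that the paper leaves implicit, but otherwise the two arguments are the same bookkeeping.
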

\begin{proof}
    The privacy guarantee follows from Corollary~\ref{cor:directed-private}.  For the output guarantee, we have that 
    \begin{align*}
        \density(S,T) &\geq (1-O(\beta)) \density(G) - O\left(\frac{1}{\beta} \varsigma \sqrt{\max\left(\frac{|S^*|}{|T^*|}, \frac{|T^*|}{|S^*|}\right)}  \sqrt{\log n}\right)\tag{Theorem~\ref{thm:directed-utility}} \\
        &\geq (1-O(\beta)) \density(G) - O\left(\frac{1}{\beta\epsilon} \sqrt{M \log(1/\delta) \log n} \sqrt{\max\left(\frac{|S^*|}{|T^*|}, \frac{|T^*|}{|S^*|}\right)}  \right) \tag{def of $\varsigma$} \\
        &= (1-O(\beta)) \density(G) - O\left(\frac{1}{\beta\epsilon} \sqrt{\log_2(n) \cdot (\log_{1+\beta}(2n^{3/2})+1) \cdot \log_{1+\beta}(n) \log(1/\delta) \log n} \sqrt{\max\left(\frac{|S^*|}{|T^*|}, \frac{|T^*|}{|S^*|}\right)}  \right) \\
        &\geq (1-O(\beta)) \density(G) - O\left(\frac{1}{\beta^2\epsilon} \log^2 n \sqrt{\log(1/\delta)}\sqrt{\max\left(\frac{|S^*|}{|T^*|}, \frac{|T^*|}{|S^*|}\right)}  \right),
    \end{align*}
    as claimed.
\end{proof}

\subsection{Improvement in the Centralized Setting}

To obtain an improvement in the directed setting similar to what we obtained in the weighted setting, we can essentially replace the call to Weighted-DSG-LEDP in Directed-DSG-LEDP to a call to Centralized-Weighted-DSG, and then apply Theorem~\ref{thm:papernot-steinke}.  In order to more fully utilize the directed setting, though, we will actually call Weighted-Noisy-Order-Packing-MWU instead.  We also need to compute a new estimate of the (directed) density of the returned solution, since the estimate from the weighted reduction could be quite inaccurate if we choose the wrong parameters ($t$ in particular).  This gives the following algorithm.

\begin{algorithm}
    \caption{Centralized-Directed-DSG-core($G, T,  \varsigma$)}
    \label{alg:centralized-directed-core}
    \begin{algorithmic}[1]
        \STATE Set 
        $\tau = \sqrt{T}\varsigma$,
        and $N = \left \lceil \frac{2 n^{2}}{4(n+\tau) \sqrt{\log(n)/T}}\right\rceil$. 
        \STATE Sample $s', t' \sim [n]$ uniformly at random and set $t = \sqrt{s'/t'}$.
        \STATE Sample $k \sim [N]$ uniformly at random and set $\lambda = k \cdot 4(n+\tau) \sqrt{\frac{\log n}{T}}$.  
        \STATE Set $\alpha = \min(1/(2t), t / 2)$
        \STATE For each node $v$, compute the weight $c_{v}^{L} =  1/(2t \alpha)$ of its copy in $V_{t}^L$ and the weight $c_{v}^R = t / (2\alpha)$ of its copy in $V_{t}^R$.  Let $G'_{t}$ denote $G_{t}$ but with these weights (which are scaled up by exactly $1/\alpha$ from the weights in $G_{t}$)
        \STATE Let $\sigma \leftarrow$ Weighted-Noisy-Order-Packing-MWU($G'_t, \lambda, T, \tau$).
        \STATE Let $(U, \tilde \density(U)) \leftarrow $ Weighted-Peeling($G', \sigma, \varsigma$).
        \STATE Let $S' = U \cap V_{t}^L$ and $T' = U \cap V_{t}^R$.
        \STATE Let $\tilde \density(S', T') = \density(S',T')+ N\left(0,\varsigma^2\right)$
        \STATE Return $(S', T', \tilde \density(S,T))$.
    \end{algorithmic}
\end{algorithm}

The following lemma is proved exactly on the lines of  Lemma~\ref{lem:centralized-weighted-privacy}, together with the observation that the sampling of the values $s, t$ and index $k$  doesn't depend on any private information, i.e., the edges of $G$, and accounting for the privacy loss in line 9 of the algorithm:
\begin{lemma} \label{lem:centralized-directed-privacy}
    Centralized-Weighted-DSG-core($G, T, \varsigma)$ is $\frac{3}{2\varsigma^2}$-zCDP.
\end{lemma}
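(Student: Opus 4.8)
The plan is to decompose Centralized-Directed-DSG-core into its privacy-relevant pieces and invoke adaptive sequential composition for zCDP (Theorem~\ref{thm:adseqcomp}), exactly mirroring the proof of Lemma~\ref{lem:centralized-weighted-privacy}. The first observation is that lines~1--5 and line~8 involve only public randomness --- the samples $s', t', k$ --- together with deterministic functions of the vertex set and those samples: the parameter $t$, the scaling $\alpha$, the weighted bipartite graph $G'_t$, and the partition of $U$ into $S' = U \cap V_t^L$ and $T' = U \cap V_t^R$. None of these reads the edge set of $G$, so by post-processing (Theorem~\ref{thm:post-processing}) they contribute nothing to the privacy budget, and the same holds for the return in line~10. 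Thus only three operations actually depend on the edges: (i) the call to Weighted-Noisy-Order-Packing-MWU$(G'_t, \lambda, T, \tau)$ in line~6, (ii) the call to Weighted-Peeling$(G'_t, \sigma, \varsigma)$ in line~7, and (iii) the Gaussian perturbation $\tilde\density(S', T') = \density(S', T') + N(0, \varsigma^2)$ in line~9.

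For (i), I would note that the directed-to-undirected reduction maps a pair of edge-neighboring directed graphs $G, \tilde G$ to a pair of edge-neighboring weighted undirected graphs $G'_t, \tilde G'_t$ (one directed edge becomes one bipartite edge) and leaves all node weights unchanged, so the $\ell_2$-sensitivity analysis underlying Lemma~\ref{lem:weighted-nopmwu-privacy} transfers verbatim; with $\tau = \sqrt{T}\varsigma$ this gives $\tfrac{T}{2\tau^2} = \tfrac{1}{2\varsigma^2}$-zCDP. For (ii), Lemma~\ref{lem:weighted-peeling-DP} applied to $G'_t$ gives $\tfrac{1}{2\varsigma^2}$-zCDP directly. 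For (iii), with $S'$ and $T'$ regarded as fixed (they are outputs of earlier, already-accounted-for stages), the map $G \mapsto \density(S', T') = |E(S', T')|/\sqrt{|S'||T'|}$ has $\ell_2$-sensitivity at most $1/\sqrt{|S'||T'|} \le 1$ under a single directed-edge change, so by the Gaussian mechanism (Lemma~\ref{lem:gaussian}) line~9 is $\tfrac{1}{2\varsigma^2}$-zCDP.

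Chaining (i)--(iii) through adaptive sequential composition --- legitimate because $\sigma$, $S'$, $T'$, and the inputs to each successive stage are functions of the (possibly noisy) outputs of the earlier stages --- yields $3 \cdot \tfrac{1}{2\varsigma^2} = \tfrac{3}{2\varsigma^2}$-zCDP, as claimed. The only points needing genuine care, rather than being purely mechanical, are checking that the reduction preserves the edge-neighboring relation without perturbing node weights (so the weighted-section sensitivity bounds apply unchanged) and confirming that the data-dependence of $S', T'$ is handled correctly by the adaptive composition framework when bounding the sensitivity in line~9; these are the steps I would write out most carefully.
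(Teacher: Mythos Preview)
Your proposal is correct and follows essentially the same route as the paper, which merely states that the proof is ``exactly on the lines of Lemma~\ref{lem:centralized-weighted-privacy}, together with the observation that the sampling of the values $s', t'$ and index $k$ doesn't depend on any private information, i.e., the edges of $G$, and accounting for the privacy loss in line~9 of the algorithm.'' You have simply fleshed out these three ingredients --- the $\tfrac{1}{2\varsigma^2}$ from Weighted-Noisy-Order-Packing-MWU, the $\tfrac{1}{2\varsigma^2}$ from Weighted-Peeling, and the extra $\tfrac{1}{2\varsigma^2}$ from the Gaussian mechanism in line~9 --- and composed them, which is exactly what the paper intends.
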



\begin{lemma} \label{lem:directed-centralized-dsg-ledp-utility}
    Suppose $T$ and $\varsigma$ in Centralized-Directed-DSG-core are set so that $\tau = \sqrt{T}\varsigma \geq n$. Then with probability at least $\frac{1}{8Nn^2}$, Centralized-Weighted-DSG-core($G, T, \varsigma$) returns a pair of sets $(S', T')$ with $\tilde \density(S', T') -  O\left( \sqrt{\max\left(\tfrac{|S^*|}{|T^*|}, \tfrac{|T^*|}{|S^*|}\right)\log n} \cdot \varsigma\right)$.
\end{lemma}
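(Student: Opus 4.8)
The plan is to condition on a handful of favorable events whose success probabilities multiply to $\tfrac{1}{8Nn^2}$, and then chain the Sawlani--Wang reduction (Lemmas~\ref{lem:reduction1} and~\ref{lem:reduction3}) together with the weighted guarantees of Theorem~\ref{thm:weighted-primal-solve-robust} and Lemmas~\ref{lem:weighted-primal-to-discrete} and~\ref{lem:weighted-peeling-approx}. I read the statement as asserting $\tilde\density(S',T') \geq \density(G) - O\!\left(\sqrt{\max\!\left(\tfrac{|S^*|}{|T^*|},\tfrac{|T^*|}{|S^*|}\right)\log n}\cdot\varsigma\right)$, and I may assume $\density(G)$ is large enough that this lower bound is nonnegative (otherwise, since $\density_G(S',T')\geq 0$ always and line~9 adds only $N(0,\varsigma^2)$, the bound holds with probability $\geq\tfrac12$ by a Gaussian tail bound).

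First, fix an optimal directed solution $(S^*, T^*)$, so $\density(G) = \density_G(S^*, T^*)$, and let $t^* = \sqrt{|S^*|/|T^*|}$. Since $|S^*|, |T^*| \in [n]$, the sampled pair satisfies $(s', t') = (|S^*|, |T^*|)$ with probability exactly $1/n^2$; condition on this, so $t = t^*$. Now $\alpha = \min(1/(2t^*), t^*/2) = \tfrac12\min(t^*, 1/t^*)$, so dividing the two $G_{t^*}$-weights $\{1/(2t^*), t^*/2\}$ by $\alpha$ turns the smaller into $1$ and the larger into $\max(t^{*2},t^{*-2}) \geq 1$; hence $G'_{t^*}$ is a legal input for the weighted algorithm, with $\cmax = \max(t^{*2},t^{*-2}) = \max(|S^*|/|T^*|,|T^*|/|S^*|) \leq n$, and $\density_{G'_{t^*}}(W) = \alpha\,\density_{G_{t^*}}(W)$ for every $W$. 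Write $\lambda^* := \lambda^*_{G'_{t^*}}$ for the optimal weighted density of $G'_{t^*}$. Because all $G'_{t^*}$-weights are at least $1$, $\lambda^* \leq |E| \leq n^2$, so the arithmetic grid of $N$ points spaced $L := 4(n+\tau)\sqrt{\log n / T}$ covers $[L, NL] \supseteq [L, 2n^2]$ and thus contains a point in $[\lambda^*, \lambda^* + L)$; with probability $\geq 1/N$ the sampled $\lambda$ equals this point, which we condition on. Also, by Lemma~\ref{lem:reduction3} with $t = t^*$, $\density(G) = \density_{G_{t^*}}\!\big((S^*)^L \cup (T^*)^R\big) = \alpha^{-1}\density_{G'_{t^*}}\!\big((S^*)^L \cup (T^*)^R\big) \leq \alpha^{-1}\lambda^*$.

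Next, apply Theorem~\ref{thm:weighted-primal-solve-robust} to $G'_{t^*}$ with $L$ as above and $\lambda$ as sampled (which satisfies $\lambda^* \leq \lambda < \lambda^* + L$): with probability $\geq \tfrac12$, Weighted-Noisy-Order-Packing-MWU outputs a feasible solution $x$ to $\plp(\lambda - 4L)$ and the ordering $\sigma$ sorting by $x$, and Lemma~\ref{lem:weighted-primal-to-discrete} then gives a threshold prefix of $\sigma$, so $\density_{G'_{t^*}}(S^*_\sigma) \geq \lambda - 4L \geq \lambda^* - 4L$. Running Weighted-Peeling$(G'_{t^*}, \sigma, \varsigma)$ with $c = 2/\log n$, Lemma~\ref{lem:weighted-peeling-approx} gives, with probability $\geq 1 - 2e^{-2} \geq \tfrac12$, a set $U = S' \cup T'$ with $\density_{G'_{t^*}}(U) \geq \density_{G'_{t^*}}(S^*_\sigma) - O(\varsigma\sqrt{\log n})$. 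Chaining through the reduction via Lemma~\ref{lem:reduction1},
\[
\density_G(S',T') \;\geq\; \density_{G_{t^*}}(U) \;=\; \alpha^{-1}\density_{G'_{t^*}}(U) \;\geq\; \alpha^{-1}\big(\lambda^* - 4L - O(\varsigma\sqrt{\log n})\big) \;\geq\; \density(G) - \alpha^{-1}\big(4L + O(\varsigma\sqrt{\log n})\big).
\]
Since $\tau = \sqrt{T}\varsigma \geq n$ we have $L = 4(n+\tau)\sqrt{\log n/T} \leq 8\varsigma\sqrt{\log n}$, and $\alpha^{-1} = 2\max(t^*, 1/t^*) = 2\sqrt{\max(|S^*|/|T^*|, |T^*|/|S^*|)}$, so the additive loss is $O\!\big(\sqrt{\max(|S^*|/|T^*|, |T^*|/|S^*|)\log n}\cdot\varsigma\big)$. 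Finally, the noise $N(0,\varsigma^2)$ added in line~9 is $\geq -\varsigma$ with probability $\geq \tfrac12$, so $\tilde\density(S',T') = \density_G(S',T') + N(0,\varsigma^2)$ inherits the same bound (adjusting the constant). Conditioning successively, the overall probability is $\geq \tfrac{1}{n^2}\cdot\tfrac1N\cdot\tfrac12\cdot\tfrac12\cdot\tfrac12 = \tfrac{1}{8Nn^2}$.

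The main obstacle is bookkeeping rather than a new idea: one has to track the rescaling factor $\alpha^{-1}$ consistently through the two links $\density_G(S',T') \geq \density_{G_{t^*}}(U) = \alpha^{-1}\density_{G'_{t^*}}(U)$ and the matching equality $\density(G) = \alpha^{-1}\density_{G'_{t^*}}((S^*)^L\cup(T^*)^R)$ at the optimum, and verify that once $\sqrt T\varsigma \geq n$ all error terms collapse to $O(\varsigma\sqrt{\log n})$; a minor additional check is that the coarser grid (with $n^2$ in place of $\cmax n$) still covers $[\lambda^*, \lambda^*+L)$, which follows from $\lambda^* \leq n^2$.
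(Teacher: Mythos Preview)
Your proposal is correct and follows essentially the same approach as the paper: condition on $t=t^*$ (probability $1/n^2$), on the grid point $\lambda$ landing in $[\lambda^*,\lambda^*+L)$ (probability $1/N$), on Weighted-Noisy-Order-Packing-MWU succeeding (probability $1/2$ via Theorem~\ref{thm:weighted-primal-solve-robust} and Lemma~\ref{lem:weighted-primal-to-discrete}), on Weighted-Peeling succeeding (probability $\geq 1/2$ via Lemma~\ref{lem:weighted-peeling-approx}), and on the line-9 Gaussian noise being not too negative (probability $\geq 1/2$), then chain Lemmas~\ref{lem:reduction1} and~\ref{lem:reduction3} through the $\alpha^{-1}$ scaling. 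You even supply a couple of details the paper leaves implicit (that the grid covers $\lambda^*$ because $\lambda^* \leq n^2$, and the explicit value $\alpha^{-1}=2\sqrt{\max(|S^*|/|T^*|,|T^*|/|S^*|)}$).
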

\begin{proof}
We can now proceed with the utility analysis essentially as we did in the weighted case.  Let $S^*, T^*$ be the optimal solution, and let $t^* = \sqrt{|S^*|/|T^*|}$.  Then $t = t^*$ with probability at least $1/n^2$.  Let $\lambda^*$ denote the optimal density of $G'_{t^*}$, and note that by Lemma~\ref{lem:reduction1}, Lemma~\ref{lem:reduction3}, and the definition of $G'_{t^*}$ we have $\lambda^* = \alpha \density_G(S^*, T^*) = \alpha \density(G)$.  With probability $1/N$, the value $\lambda$ picked by Centralized-Directed-DSG-core satisfies $\lambda^* \leq \lambda < \lambda^* + 4(n+\tau)\sqrt{\frac{\log n}{T}}$.  

So the probability that both of these events occur is at least $1/(Nn^2)$.  Conditioned on this,  Theorem~\ref{thm:weighted-primal-solve-robust} and Lemma~\ref{lem:weighted-primal-to-discrete} imply that with probability at least $1/2$, Weighted-Noisy-Order-Packing-MWU outputs a permutation $\sigma$ such that 
\[\density(S^*_\sigma) \geq \lambda - 16(n+\tau) \sqrt{\frac{\log n}{T}} \geq \lambda^* - 16(n+\tau) \sqrt{\frac{\log n}{T}}\] 
since $\lambda \geq \lambda^*$. 

Thus, using Lemma~\ref{lem:weighted-peeling-approx}, we get that with probability at least $\frac{1}{4Nn^2}$, 
\begin{equation} \label{eq:centralized-directed-utility-weighted}
    \density(U) \geq \density(S^*_\sigma) - O(\sqrt{\log n} \varsigma) \geq \lambda^*  - O\left((n+\tau) \sqrt{\frac{\log n}{T}} + \sqrt{\log n} \varsigma\right) \geq \lambda^* - O\left(\sqrt{\log n} \varsigma\right)
\end{equation}
where the last inequality follows due to the specified value of $T$ and $\tau$.

Now we can reason about the density of the solution $(S', T')$ returned by the algorithm.  We have that with probability at least $\frac{1}{4Nn^2}$, 
\begin{align*}
    \density(S',T') &\geq \frac{1}{\alpha}\density(U) \tag{Lemma~\ref{lem:reduction1}, def of $G'_t$} \\
    &\geq \frac{1}{\alpha} \left(\lambda^*  - O\left(\sqrt{\log n} \varsigma\right)\right) \tag{Eq.~\eqref{eq:centralized-directed-utility-weighted}} \\
    &\geq \density(G) - O\left(\tfrac{1}{\alpha}\sqrt{\log n} \varsigma\right) \\
    &\geq \density(G) - O\left( \sqrt{\max\left(\tfrac{|S^*|}{|T^*|}, \tfrac{|T^*|}{|S^*|}\right)\log n} \cdot \varsigma\right). \tag{def of $\alpha$} 
\end{align*}

Now the standard concentration bound for the Gaussian mechanism (Lemma~\ref{lem:gaussian-concentration}) implies that with with probability at least $\frac{1}{8Nn^2}$, we not only have the above bound on $\density(S', T')$, but also have that
\begin{align*}
    \tilde \density(S', T') &\geq \density(G) -  O\left( \sqrt{\max\left(\tfrac{|S^*|}{|T^*|}, \tfrac{|T^*|}{|S^*|}\right)\log n} \cdot \varsigma\right). \qedhere
\end{align*}
\end{proof}

Now, let Centralized-Directed-DSG be the mechanism $\AQ$ obtained by applying the mechanism of Theorem~\ref{thm:papernot-steinke} with $Q = $ Centralized-Directed-DSG-core, $s = (S', T')$ (the pair of sets it outputs), $q = \tilde{\density}(S', T')$ (the estimated density it outputs), $q^* = \density(G) -  O\left( \sqrt{\max\left(\tfrac{|S^*|}{|T^*|}, \tfrac{|T^*|}{|S^*|}\right)\log n} \cdot \varsigma\right)$ as specified in Lemma~\ref{lem:directed-centralized-dsg-ledp-utility}, and $\gamma = n^{-c}$ for any given constant $c$.

\begin{theorem} \label{thm:directed-centralized-main}
    Let $\delta \in (0, 1)$ and $\epsilon > 0$ be given privacy parameters. Set $\varsigma = \frac{8\sqrt{\log(n^c/\delta)}}{\epsilon}$ and $T = \lceil \frac{n^2}{\varsigma^2}\rceil$ in Centralized-Weighted-DSG-core. Then Centralized-Weighted-DSG is $(\epsilon, \delta)$-DP, and with probability at least $1 - (8Nn^2+2)n^{-c}$, the density of the set it outputs is at least $\density(G) - O\left(\frac{\sqrt{\max\left(\tfrac{|S^*|}{|T^*|}, \tfrac{|T^*|}{|S^*|}\right)\log(n)\log(n/\delta)}}{\epsilon}\right)$.
\end{theorem}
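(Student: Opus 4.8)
The plan is to follow exactly the template used for Theorem~\ref{thm:centralized-main} and Theorem~\ref{thm:weighted-centralized-main}, instantiating the amplification mechanism of Theorem~\ref{thm:papernot-steinke} with $Q = $ Centralized-Directed-DSG-core, quality score $q = \tilde\density(S',T')$, target quality $q^{*} = \density(G) - O\!\left(\sqrt{\max\!\left(\tfrac{|S^{*}|}{|T^{*}|},\tfrac{|T^{*}|}{|S^{*}|}\right)\log n}\cdot\varsigma\right)$ as fixed in Lemma~\ref{lem:directed-centralized-dsg-ledp-utility}, and $\gamma = n^{-c}$. Since Lemma~\ref{lem:centralized-directed-privacy} tells us Centralized-Directed-DSG-core is $\tfrac{3}{2\varsigma^{2}}$-zCDP, I would take $\rho = \tfrac{3}{2\varsigma^{2}}$ in Theorem~\ref{thm:papernot-steinke}. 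First I would verify the hypothesis $\sqrt{T}\varsigma \geq n$ needed by Lemma~\ref{lem:directed-centralized-dsg-ledp-utility}: with $T = \lceil n^{2}/\varsigma^{2}\rceil$ we have $\tau = \sqrt{T}\varsigma \geq n$, so it applies.

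For privacy, Theorem~\ref{thm:papernot-steinke} directly gives that Centralized-Directed-DSG is $\left(6\sqrt{\rho\log\tfrac{1}{\gamma\delta}},\,\delta\right)$-DP, and $6\sqrt{\rho\log\tfrac{1}{\gamma\delta}} = \sqrt{54}\cdot\frac{\sqrt{\log(n^{c}/\delta)}}{\varsigma}$. Substituting $\varsigma = \frac{8\sqrt{\log(n^{c}/\delta)}}{\epsilon}$ makes this $\frac{\sqrt{54}}{8}\,\epsilon < \epsilon$, so by monotonicity of $(\epsilon,\delta)$-DP the mechanism is $(\epsilon,\delta)$-DP. (The constant in $\varsigma$ is $8$ here rather than $6$ as in Theorem~\ref{thm:centralized-main} precisely because of the extra $3/2$ factor in the zCDP bound, which in turn comes from the additional independent Gaussian noise added in line~9 of Algorithm~\ref{alg:centralized-directed-core}.)

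For utility, Lemma~\ref{lem:directed-centralized-dsg-ledp-utility} gives $\Pr_{(s,q)\sim Q(D)}[q \geq q^{*}] \geq \tfrac{1}{8Nn^{2}}$, so the quality guarantee of Theorem~\ref{thm:papernot-steinke} says the $q$-value of the output is at least $q^{*}$ with probability at least $1 - 8Nn^{2}\gamma$. This controls the \emph{estimated} directed density $\tilde\density(S',T')$ of the returned pair, and I still need to pass to the true density $\density(S',T')$. Recalling $\tilde\density(S',T') = \density(S',T') + N(0,\varsigma^{2})$, I would argue: $\Pr[J > \tfrac{\log(1/\gamma)}{\gamma}] \leq (1-\gamma)^{\log(1/\gamma)/\gamma} \leq \gamma$, so with probability $1-\gamma$ the amplifier makes at most $\tfrac{\log(1/\gamma)}{\gamma} = cn^{c}\log n$ internal calls to $Q$; conditioning on this, the standard Gaussian tail bound (Lemma~\ref{lem:gaussian-concentration}) gives $|\density(S',T') - \tilde\density(S',T')| \leq O(\sqrt{\log n}\,\varsigma)$ in each call with probability $1 - n^{-3c}$, and a union bound over the calls makes this hold simultaneously (hence for the selected output) with probability $1-\gamma$. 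Combining the three failure events ($8Nn^{2}\gamma$, $\gamma$, $\gamma$) yields total failure probability $(8Nn^{2}+2)\gamma = (8Nn^{2}+2)n^{-c}$, and on the good event
\[
\density(S',T') \;\geq\; \tilde\density(S',T') - O(\sqrt{\log n}\,\varsigma) \;\geq\; q^{*} - O(\sqrt{\log n}\,\varsigma) \;\geq\; \density(G) - O\!\left(\sqrt{\max\!\left(\tfrac{|S^{*}|}{|T^{*}|},\tfrac{|T^{*}|}{|S^{*}|}\right)\log n}\cdot\varsigma\right),
\]
using $\max(|S^{*}|/|T^{*}|,|T^{*}|/|S^{*}|) \geq 1$ to absorb the lower-order $O(\sqrt{\log n}\,\varsigma)$ term. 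Finally I would plug in $\varsigma = \frac{8\sqrt{\log(n^{c}/\delta)}}{\epsilon}$ and use $\log(n^{c}/\delta) = O(\log(n/\delta))$ for constant $c$ to obtain $\density(G) - O\!\left(\frac{\sqrt{\max(|S^{*}|/|T^{*}|,|T^{*}|/|S^{*}|)\log(n)\log(n/\delta)}}{\epsilon}\right)$.

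I do not expect a real obstacle: the whole argument is bookkeeping layered on Theorem~\ref{thm:papernot-steinke}, Lemma~\ref{lem:centralized-directed-privacy}, and Lemma~\ref{lem:directed-centralized-dsg-ledp-utility}, mirroring the proofs of Theorems~\ref{thm:centralized-main} and~\ref{thm:weighted-centralized-main} essentially verbatim. The one step needing slight care is that the quality score $q$ here is the \emph{freshly sampled} noisy directed density from line~9 of Algorithm~\ref{alg:centralized-directed-core} (not the Weighted-Peeling estimate), so the estimated-to-true-density step must invoke Lemma~\ref{lem:gaussian-concentration} directly on that single Gaussian rather than reusing Lemma~\ref{lem:peeling-approx}; one also has to keep in mind that this extra noise is already included in the $\tfrac{3}{2\varsigma^{2}}$-zCDP bound of Lemma~\ref{lem:centralized-directed-privacy}, which is exactly why the privacy constant comes out with the $8$ in the definition of $\varsigma$.
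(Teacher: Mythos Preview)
Your proposal is correct and matches the paper's own proof essentially verbatim: the same instantiation of Theorem~\ref{thm:papernot-steinke} with $\rho=\tfrac{3}{2\varsigma^2}$ and $\gamma=n^{-c}$, the same three-event union bound $(8Nn^2\gamma,\gamma,\gamma)$, and the same passage from $\tilde\density$ to $\density$ via Lemma~\ref{lem:gaussian-concentration} applied directly to the line-9 Gaussian rather than via Lemma~\ref{lem:peeling-approx}. You even flag the reason for the constant $8$ in $\varsigma$ (the extra $3/2$ in the zCDP budget from the additional noise in line~9), which the paper leaves implicit.
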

\begin{proof}
    Using Lemma~\ref{lem:directed-centralized-dsg-ledp-utility}, we have
    \[\Pr_{(s, q) \sim Q(D)}[q \geq q^*] \geq \frac{1}{8Nn^2}.\]
    Since Centralized-Directed-DSG-core($G, \varsigma$) is $\frac{3}{2\varsigma^2}$-zCDP, applying Theorem~\ref{thm:papernot-steinke}, we conclude that Centralized-Weighted-DSG is 
    \[
    \left(\tfrac{8\sqrt{\log(1/(\gamma\delta))}}{\varsigma}, \delta\right)\text{-DP} = (\epsilon, \delta)\text{-DP}
    \]
    for the specified value of $\varsigma$, and the estimated density $\tilde\density(S)$ of the pair of sets $(S', T')$ it outputs is at least $q^*$ with probability at least $1 - 8Nn^2\gamma$. Now, recall the random variable $J$ used in Centralized-Weighted-DSG which is drawn from the geometric distribution with success probability $\gamma$. Since $\Pr[J > k] = (1-\gamma)^k \leq \exp(-\gamma k)$, we conclude that $\Pr[J \leq \frac{\log(1/\gamma)}{\gamma}] \geq 1 - \gamma$. Conditioned on $J \leq \frac{\log(1/\gamma)}{\gamma}$, using the Gaussian concentration bound (Lemma~\ref{lem:gaussian-concentration}) and a union bound over the $J$ calls to $Q$ in $\AQ$, we conclude that with probability at least $1-\gamma$, in each call to $Q$, we have
    \[|\density(S', T') - \tilde\density(S', T')| \leq O\left(\sqrt{\log n}\varsigma\right)\]
    for an appropriately chosen constant in the $O(\cdot)$ notation. Thus, overall, using the union bound, with probability at least $1-(8Nn^2+2)\gamma$, the true density of the pair of sets $(S', T')$ output by Centralized-Weighted-DSG is at least 
    \[\density(G) -  O\left( \sqrt{\max\left(\tfrac{|S^*|}{|T^*|}, \tfrac{|T^*|}{|S^*|}\right)\log n} \cdot \varsigma\right) = \density(G) - O\left(\frac{\sqrt{\max\left(\tfrac{|S^*|}{|T^*|}, \tfrac{|T^*|}{|S^*|}\right)\log(n)\log(n/\delta)}}{\epsilon}\right).\]
\end{proof}


\section{A Simple \texorpdfstring{$\epsilon$}{epsilon}-LEDP Algorithm} 
\label{sec:pure}
We now show that a simple modification of the parallel version~\cite{BKV12} of Charikar's algorithm~\cite{Charikar00} can be implemented in the $\epsilon$-LEDP model with only minor accuracy loss.  We will simply add noise at each iteration in order to keep things private, and then appeal to parallel composition.  We note that the the algorithm of~\cite{BKV12} requires computing the average degree in the remaining graph in order to set the appropriate threshold, and this can easily be done in the $\epsilon$-DP model (with noise added to keep it private), but cannot easily be done in the LEDP model since the curator cannot compute the average degree.  Instead, we directly use the noisy individual degrees to compute a noisy average degree.  This actually makes the running time and privacy analysis \emph{simpler}, but makes it slightly more difficult to show the approximation bound.

\begin{algorithm}
    \caption{Simple $\epsilon$-LEDP}
    \label{alg:simple}
    \begin{algorithmic}[1]
        \STATE $S_1 := V$, $i := 1$ (this is public, or equivalently in the zero'th round the curator makes $V$ public).
        \WHILE{$S_i \neq \emptyset$}
            \STATE For every vertex $v \in S_i$, let $D_i(v)$ be a noisy version of its remaining degree: $D_i(v) := d_{S_i}(v) + \Geom(e^{\epsilon})$
            \STATE Every vertex $v \in S_i$ sends $D_i(v)$ to the curator.
            \STATE The curator computes an estimate of $\density(S_i)$ by setting $\hat \density(S_i) = \frac{1}{|S_i|}\sum_{v \in S_i} D_i(v) / 2$.
            \STATE The curator computes a noise threshold $T_i$: 
            \[
            T_i := (1+\eta) \cdot \frac{1}{|S_i|} \sum_{v \in S_i} D_i(v)
            \]
            \STATE The curator computes $L_i := \{v \in S_i : D_i(v) \leq T_i\}$
            \STATE The curator computes and makes public $S_{i+1} := S_i \setminus L_i$; $i = i+1$
        \ENDWHILE
        \STATE The curator selects whichever $S_i$ has largest $\hat \density(S_i)$.
    \end{algorithmic}
\end{algorithm}

We give our algorithm as Algorithm~\ref{alg:simple}.  It is easy to see that this algorithm can be implemented in the LEDP model, so we need to determine the running time (number of rounds), the privacy guarantee, and the approximation bound.  We begin with the running time, since it is simple and will be useful when arguing about privacy.

\begin{lemma} \label{lem:iterations}
The number of iterations is at most $O(\frac{1}{\eta}\log n)$.
\end{lemma}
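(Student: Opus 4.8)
The plan is to mimic the standard analysis of the parallel peeling algorithm of~\cite{BKV12} (itself a parallelization of Charikar's greedy~\cite{Charikar00}), adapted to the fact that here we peel using \emph{noisy} degrees. Everything reduces to a single per-round shrinkage estimate: whenever $S_i \neq \emptyset$, the next active set satisfies $|S_{i+1}| \leq \tfrac{1}{1+\eta}|S_i|$. Granting this, the lemma follows immediately. Writing $n_i = |S_i|$, we get $n_{i+1} \leq n_i/(1+\eta)$, hence $n_{k+1} \leq n/(1+\eta)^k$; since the $n_i$ are nonnegative integers, $n_{k+1} < 1$ forces $n_{k+1} = 0$, so the while-loop stops. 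This occurs once $k > \log n/\log(1+\eta)$, and since $\log(1+\eta) \geq \eta/2$ in the relevant regime $\eta \in (0,1)$, that happens for $k = O(\tfrac1\eta \log n)$, giving the claimed bound on the number of iterations.

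So the core step is the per-round estimate. Fix an iteration $i$ with $S_i \neq \emptyset$ and recall that $S_{i+1} = \{v \in S_i : D_i(v) > T_i\}$, where $T_i = (1+\eta)\mu_i$ and $\mu_i := \tfrac{1}{n_i}\sum_{v \in S_i} D_i(v)$ is the average noisy remaining degree. The values $D_i(v)$ are nonnegative (each is a remaining degree plus noise, and truncating at $0$ is harmless by post-processing), so a direct averaging/Markov argument applies: the number of $v \in S_i$ with $D_i(v) > (1+\eta)\mu_i$ is at most $\tfrac{\sum_{v\in S_i} D_i(v)}{(1+\eta)\mu_i} = \tfrac{n_i \mu_i}{(1+\eta)\mu_i} = \tfrac{n_i}{1+\eta}$. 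That count is exactly $|S_{i+1}|$, so $|S_{i+1}| \leq \tfrac{1}{1+\eta}|S_i|$, as needed. (The degenerate case $\mu_i = 0$ means every $D_i(v) = 0 \leq T_i$, so $L_i = S_i$ and $S_{i+1} = \emptyset$, and the bound holds trivially.)

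The one point that genuinely needs care — and the step I expect to be the main obstacle — is justifying nonnegativity when running the averaging bound: $\Geom(e^\epsilon)$ is a two-sided noise, so an individual $D_i(v)$, or even $\mu_i$, could be negative, in which case Markov's inequality does not literally apply. I would handle this by conditioning on the high-probability event that in each of the at most $n$ rounds the aggregate noise $\sum_{v\in S_i}\Geom(e^\epsilon)$ has magnitude $O(\sqrt{n_i\log n}/\epsilon + \log n/\epsilon)$ — a union bound over rounds using sub-exponential tail bounds for sums of geometric variables, failing with probability $1/\poly(n)$ — so that $\mu_i$ differs from $d_{avg}(S_i)$ by an amount that is $o(n_i)$ once $n_i$ is large, while rounds with $n_i$ small are already within the iteration budget. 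On this event one runs the averaging argument on the nonnegative truncations $\max(D_i(v),0)$, which perturbs $\mu_i$ only negligibly compared to the $(1+\eta)$ slack, so the geometric rate $1/(1+\eta)$ is preserved. Everything else is the textbook peeling calculation.
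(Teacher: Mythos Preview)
Your core argument --- per-round geometric shrinkage via the averaging/Markov bound, then $\log_{1+\eta} n = O(\tfrac{1}{\eta}\log n)$ total iterations --- is exactly what the paper does; its proof is a two-sentence appeal to Markov's inequality with no further discussion.

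You go beyond the paper in flagging the sign issue: the two-sided $\Geom(e^\epsilon)$ noise can make individual $D_i(v)$ (and even the average $\mu_i$) negative, and then Markov does not literally apply --- indeed one can cook up realizations where $|S_{i+1}| > |S_i|/(1+\eta)$, or even $L_i=\emptyset$ when $\mu_i<0$. The paper simply ignores this. Your truncation-at-zero suggestion is the clean fix and, once adopted, makes the shrinkage bound hold \emph{deterministically}; the concentration machinery you sketch in the last paragraph is then unnecessary for the iteration count (it would only be needed if you insisted on analyzing the algorithm exactly as written, in which case the bound degrades to a high-probability statement).
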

\begin{proof}
    Consider iteration $i$.  By definition, $T_i$ is $(1+\eta)$ times the average of the $D_i(v)$ values.  So a simple averaging argument (i.e., Markov's inequality) implies that $|S_i \setminus L_i| \leq \frac{1}{1+\eta} |S_i|$.  Thus the total number of iterations is at most $\log_{1+\eta} n = O(\frac{1}{\eta} \log n)$, as claimed.
\end{proof}

\subsection{Privacy Analysis}
We now show that our algorithm satisfies $\epsilon$-LEDP.  

\begin{theorem} \label{thm:privacy-simple}
The algorithm satisfies $O\left(\frac{\epsilon}{\eta} \log n\right)$-LEDP.
\end{theorem}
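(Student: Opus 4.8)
The plan is to bound the privacy loss one iteration at a time and then compose. The first step is to pin down the number of rounds: by Lemma~\ref{lem:iterations} the loop runs for at most $R = O(\tfrac1\eta\log n)$ iterations, and this bound is deterministic (it comes from the averaging argument $|S_i\setminus L_i|\le\tfrac{1}{1+\eta}|S_i|$, which holds for every graph). I would therefore regard the algorithm as running for exactly $R$ rounds, padding any iterations after $S_i=\emptyset$ with trivial rounds in which no node transmits; since the stopping rule is a function of the (public) transcript, the true transcript is a post-processing of this padded version.

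The heart of the argument is a per-round privacy claim. Condition on the transcript of rounds $1,\dots,i-1$. Every computation the curator performs --- $\hat\density(S_i)$, the threshold $T_i$, the peeled set $L_i$, and $S_{i+1}$ --- is post-processing of messages already in the transcript, so $S_i$ is a deterministic, \emph{graph-independent} function of that prefix. The only messages sent in round $i$ are the values $D_i(v)=d_{S_i}(v)+\Geom(e^\epsilon)$ for $v\in S_i$, with independent noise. Now take edge-neighboring $G,G'$ differing in an edge $\{a,b\}$. If $\{a,b\}\not\subseteq S_i$, then $d_{S_i}(w)$ is unchanged for every $w\in S_i$ (it counts only edges internal to $S_i$), so the round's message distribution is identical under $G$ and $G'$ and that round is $0$-DP. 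Otherwise $a,b\in S_i$, exactly the two coordinates $d_{S_i}(a)$ and $d_{S_i}(b)$ change, each by $1$, so the vector of true degrees has $\ell_1$-sensitivity $2$, and the geometric mechanism applied coordinatewise makes the round $2\epsilon$-DP.

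With the per-round claim in hand, I would finish by applying the adaptive sequential composition theorem over the $R$ rounds: the whole transcript is $2\epsilon R=O(\tfrac\epsilon\eta\log n)$-DP, the curator's final selection of the densest $S_i$ is post-processing, and each node's messages depend only on its own incident edges and the public transcript, so the whole thing runs in the local model. Hence the algorithm is $O(\tfrac\epsilon\eta\log n)$-LEDP.

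I expect the main obstacle --- really the only non-routine point --- to be handling the data dependence of the set $S_i$ correctly: one must not treat this as a ``data-dependent partition'' (which would break parallel composition), but instead observe that once the transcript prefix is fixed, $S_i$ is fixed independently of the graph, so that plain adaptive composition applies. A secondary, easily dispatched subtlety is that the number of rounds is itself data-dependent, which is exactly why the deterministic bound from Lemma~\ref{lem:iterations} together with padding is needed. I would also flag explicitly that a single edge change moves \emph{two} of the per-node reports in a round, which is the source of the factor $2$ --- harmless, since we only claim the bound up to constants.
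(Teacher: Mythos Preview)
Your proof is correct and follows essentially the same approach as the paper: bound the number of rounds via Lemma~\ref{lem:iterations}, show each round is $2\epsilon$-DP using the geometric mechanism (the factor $2$ arising because an edge change affects the degrees of both endpoints), then apply adaptive sequential composition and post-processing. Your padding argument for the data-dependent stopping time and the explicit observation that $S_i$ is determined by the transcript prefix are more careful than the paper's presentation, but the underlying argument is the same.
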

\begin{proof}
Let $k$ be the number of iterations in the algorithm. 
We know from Lemma~\ref{lem:iterations} that $k = O\left(\frac{1}{\eta} \log n\right)$.  Consider the entire sequence of values and sets chosen by the algorithm; we'll show that this sequence satisfies the claimed differential privacy bound.  Slightly more formally, let $\Gamma_0 = (S_1)$, and let $\Gamma_i = (\{D_i(v)\}_{v \in S_i}, \hat \density(S_i), T_i, L_i, S_{i+1})$ be the sets and values constructed in iteration $i$.  We claim that the sequence $(\Gamma_i)_{i \in [k]}$ is $\frac{2\epsilon}{\eta} \log n$-DP.  This implies that the entire algorithm is $\frac{2\epsilon}{\eta} \log n$-LEDP, since the final step is post-processing so by Theorem~\ref{thm:post-processing} does not affect the privacy.  

We claim by induction that the prefix $(\Gamma_j)_{0 \leq j \leq i}$ is $i\epsilon$-DP for all $0 \leq i \leq k$.  Plugging in $i = k$ then implies the claim.  For the base case, note that $(\Gamma_0)$ is clearly $0$-DP, since for neighboring graphs $G, G'$ we know that $G$ and $G'$ have the same vertex set $V = S_1$.  

Now for the inductive step, consider some $1 \leq i \leq \ell$ and assume that the prefix $(\Gamma_j)_{0 \leq j \leq i-1}$ satisfies $(i-1)\epsilon$-DP.  The parallel composition theorem (Theorem~\ref{thm:parallel-comp}) and the Geometric Mechanism (Lemma~\ref{lem:geo-mechanism}) imply that $\{D_i(v)\}_{v \in S_i}$ is $2\epsilon$-DP: The sensitivity of the degree is $1$, but, as discussed earlier, every edge is in two parts of the partition. Everything in iteration $i$ after computing the $\{D_i(v)\}$ values is post-processing, so Theorem~\ref{thm:post-processing} implies that $\Gamma_i$ is $2\epsilon$-DP.  Finally, sequential composition (Theorem~\ref{thm:adseqcomp}) and the inductive hypothesis imply that $(\Gamma_j)_{0 \leq j \leq i}$ is $2\epsilon$-DP as required.
\end{proof}

\subsection{Approximation}
We begin by arguing that the noisy average we use $\hat d_i = \frac{1}{|S_i}| \sum_{v \in S_i} D_i(v)$ is concentrated around $d_{avg}(S_i)$.  Before we can do this, though, we will need an additional probabilistic tool to give concentration bounds for averages of independent sub-exponential random variables: Bernstein's inequality (Theorem 2.8.3 of~\cite{vershynin}).

\begin{definition}[Sub-exponential Random Variable: Definition 2.7.5 of~\cite{vershynin}]
    A random variable $X$ is \emph{sub-exponential} if the moment-generating function of $X$ is bounded at some point, i.e., if $\E[\exp(|X|/K)] \leq 2$ for some constant $K$.  The sub-exponential norm of $X$ is the smallest $K$ for which this is true: more formally, it is $\inf \{t > 0 : \E[\exp(|X|/t)] \leq 2\}$.
\end{definition}

\begin{lemma}[Bernstein's inequality, Corollary 2.8.3 of~\cite{vershynin}] \label{lem:Bernstein}
Let $X_1, \dots, X_N$ be independent, mean zero, sub-exponential random variables. Then, for every $t \geq 0$, we have
\[
\Pr\left[ \left| \frac{1}{N} \sum_{i=1}^N X_i \geq t \right| \right] \leq 2 \cdot \exp\left( -c N \min\left( \frac{t^2}{K^2}, \frac{t}{K} \right) \right)
\]
for some absolute constant $c > 0$, where $K$ is the maximum sub-exponential norm of any $X_i$.
\end{lemma}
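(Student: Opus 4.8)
The plan is to prove this by the classical Chernoff--Bernstein route: bound the moment generating function (MGF) of each $X_i$ in a neighborhood of the origin, multiply over $i$ by independence, apply Markov's inequality, and optimize the resulting exponent in the MGF parameter. Since the conclusion is two-sided, it suffices to bound $\Pr\big[\tfrac1N\sum_i X_i \ge t\big]$, then repeat the argument for $-X_1,\dots,-X_N$ and take a union bound, which accounts for the factor $2$ in front.

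The first and main step is the single-variable estimate: if $X$ is mean-zero with sub-exponential norm at most $K$, then $\E[\exp(\lambda X)] \le \exp(C\lambda^2 K^2)$ for all $|\lambda| \le c_0/K$, where $C, c_0 > 0$ are absolute constants. To get this I would (i) turn the sub-exponential norm bound $\E[\exp(|X|/K)]\le 2$ into the tail bound $\Pr[|X| \ge s] \le 2\exp(-s/K)$ by Markov; (ii) integrate this tail bound to obtain the moment bounds $\E[|X|^p] \le 2\,p!\,K^p$ for every integer $p\ge 1$; and (iii) expand the MGF as a power series, using $\E[X]=0$:
\[
\E[\exp(\lambda X)] = 1 + \sum_{p\ge 2}\frac{\lambda^p\,\E[X^p]}{p!} \le 1 + \sum_{p\ge 2}\frac{|\lambda|^p\,\E[|X|^p]}{p!} \le 1 + 2\sum_{p\ge 2}(K|\lambda|)^p = 1 + \frac{2K^2\lambda^2}{1-K|\lambda|},
\]
which is valid for $K|\lambda| < 1$; restricting to $K|\lambda| \le 1/2$, the right-hand side is at most $1 + 4K^2\lambda^2 \le \exp(4K^2\lambda^2)$, giving the claim with $C = 4$, $c_0 = 1/2$.

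Next I would set $S := \sum_{i=1}^N X_i$ and use independence: for any $0 < \lambda \le c_0/K$, $\E[\exp(\lambda S)] \le \exp(C\lambda^2 K^2 N)$, so Markov gives $\Pr[S \ge Nt] \le \exp(-\lambda Nt + C\lambda^2 K^2 N)$. The unconstrained minimizer of the exponent is $\lambda^\star = t/(2CK^2)$, with value $-Nt^2/(4CK^2)$. If $\lambda^\star \le c_0/K$ (equivalently $t \le 2Cc_0 K$) this is admissible and yields $\Pr[S\ge Nt] \le \exp(-Nt^2/(4CK^2))$. Otherwise $t > 2Cc_0K$ and I take $\lambda = c_0/K$, giving exponent $-c_0Nt/K + Cc_0^2 N \le -c_0Nt/(2K)$ (the last inequality using $t \ge 2Cc_0 K$). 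Combining the two cases, $\Pr[S\ge Nt] \le \exp\!\big(-c' N\min(t^2/K^2,\, t/K)\big)$ with $c' = \min\{1/(4C),\, c_0/2\}$; the union bound with the lower tail produces the stated inequality.

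The hard part is the single-variable MGF bound and, in particular, keeping track of its regime of validity: the MGF of a genuinely sub-exponential variable is infinite for large $\lambda$, which is exactly why the conclusion involves a $\min$ of a sub-Gaussian ($t^2/K^2$) and a sub-exponential ($t/K$) exponent rather than a pure Gaussian tail — so the optimization over $\lambda$ must respect the constraint $\lambda \le c_0/K$, and one has to handle the two cases separately. The tail-to-moments-to-MGF chain and the scaling in $K$ are routine once that structure is pinned down. (Since this is Corollary~2.8.3 of~\cite{vershynin}, one may alternatively just invoke it, but the above is the self-contained proof.)
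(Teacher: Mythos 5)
Your proof is correct and is essentially the standard Chernoff--Bernstein argument given in the cited reference (Vershynin's book). The paper does not prove this lemma at all --- it simply cites it as Corollary~2.8.3 of~\cite{vershynin} --- so there is no in-paper proof to compare against; your self-contained version (tail bound from the $\psi_1$-norm, moment bounds by integration, MGF bound valid for $|\lambda| \le c_0/K$, then the two-regime optimization yielding the $\min(t^2/K^2,\, t/K)$ exponent) is the same route as the textbook.
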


With this tool, we will be able to prove concentration for $\hat d_i$.

\begin{lemma} \label{lem:noisy-avg-concentration}
Consider some iteration $i$.  Then $\Pr[ |\hat d_i - d_{avg}(S_i)| >  \Omega(\frac{1}{\epsilon} \log n)] \leq 1/n^4$
\end{lemma}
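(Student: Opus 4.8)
The plan is to express $\hat d_i - d_{avg}(S_i)$ as an average of i.i.d.\ mean-zero noise terms and then invoke Bernstein's inequality (Lemma~\ref{lem:Bernstein}). Concretely, since $D_i(v) = d_{S_i}(v) + G_v$ with the $G_v \sim \Geom(e^{\epsilon})$ drawn independently, and $d_{avg}(S_i) = \frac{1}{|S_i|}\sum_{v \in S_i} d_{S_i}(v)$, we have
\[
\hat d_i - d_{avg}(S_i) = \frac{1}{|S_i|}\sum_{v \in S_i} G_v,
\]
an average of $N := |S_i| \geq 1$ independent, mean-zero random variables, to which Lemma~\ref{lem:Bernstein} directly applies.

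First I would bound the sub-exponential norm of the two-sided geometric distribution $\Geom(e^{\epsilon})$. Since its PMF decays proportionally to $e^{-\epsilon|k|}$, a direct evaluation of the geometric series $\E[\exp(|G_v|/K)]$ shows it is at most $2$ once $K \geq c/\epsilon$ for a suitable absolute constant $c$; hence each $G_v$ has sub-exponential norm $K = O(1/\epsilon)$.

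Then I would apply Bernstein's inequality with this $K$, with $N = |S_i| \geq 1$, and with threshold $t = c'\frac{\log n}{\epsilon}$ for an appropriately large absolute constant $c'$. The key point is that for this choice $t/K = \Theta(c'\log n) \geq 1$ (for $n$ larger than an absolute constant), so $\min\{t^2/K^2,\, t/K\} = t/K$, and Bernstein's bound becomes $2\exp(-cN\cdot t/K) \leq 2\exp(-\Omega(c'\log n))$ using $N \geq 1$, which is at most $n^{-4}$ once $c'$ is chosen large enough. Note that landing in this linear-tail regime is exactly what lets the bound survive even the worst case $|S_i| = 1$: no $N$-dependence in the exponent is needed.

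The main obstacle is the bookkeeping around the sub-exponential norm of the discrete-Laplace / two-sided geometric noise and verifying that the chosen threshold $t = \Theta(\tfrac{\log n}{\epsilon})$ puts us in the linear ($t/K \geq 1$) regime of Bernstein's inequality, so that the concentration does not degrade when $|S_i|$ is small; the remaining steps are a routine substitution of constants.
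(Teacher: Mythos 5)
Your proposal is correct and takes essentially the same route as the paper: decompose $\hat d_i - d_{avg}(S_i)$ as an average of i.i.d.\ mean-zero two-sided geometric noise terms, bound the sub-exponential norm of $\Geom(e^\epsilon)$ by $\Theta(1/\epsilon)$, and apply Bernstein's inequality (Lemma~\ref{lem:Bernstein}) with $t = \Theta(\tfrac{1}{\epsilon}\log n)$. The only cosmetic difference is that the paper certifies the sub-exponential norm via the tail-bound characterization (Proposition 2.7.1 of Vershynin) rather than summing the MGF series directly, and the paper leaves implicit the observation you make explicit — that $\epsilon t = \Theta(\log n)$ puts the bound in the linear-tail regime so that $|S_i| \geq 1$ suffices — but that is the same calculation.
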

\begin{proof}
Clearly $\E[\hat d_i] = d_{avg}(S_i)$ by the symmetry of the added noise.  We also claim that the symmetric geometric distribution with parameter $e^{\epsilon}$ (i.e., the noise added to each degree) is sub-exponential with sub-exponential norm $\Theta(1/\epsilon)$.  This can easily be seen by the fact that the sub-exponential norm of a random variable $X$ is equivalent, up to a universal constant, to the value of $K$ such that $\Pr[|X| \geq t] \leq 2 \cdot \exp(-t/K)$ (see~\cite[Proposition 2.7.1]{vershynin}. 
So if we set $t = \frac{1}{\epsilon} \log \frac{1}{\sigma}$, Lemma~\ref{lem:geo-mechanism} implies that $\Pr[|\Geom(e^{\epsilon})| \geq t] \leq \sigma = e^{-\epsilon t}$.  Thus the sub-exponential norm of $\Geom(e^{\epsilon})$ is $\Theta(1/\epsilon)$.   
 So Bernstein's inequality implies that
\begin{align*}
    \Pr\left[ |\hat d_i - d_{avg}(S_i)| \geq  t \right] \leq 2 \cdot \exp\left( -c |S_i| \min\left( \epsilon^2 t^2, \epsilon t\right)\right).
\end{align*}

So if we set $t = \Theta(\frac{1}{\epsilon} \log n)$, we get that $\Pr[ |\hat d_i - d_{avg}(S_i)| \geq \Omega(\frac{1}{\epsilon} \log n)] \leq 1/n^4$ as desired.
\end{proof}

We now show that in each iteration, nodes with large degree are likely to survive to the next iteration.

\begin{lemma} \label{lem:LEDP-low-degree} 
Consider some iteration $i$.  Then there is some constant $c > 0$ such that for every $v \in S_i$, if
\[
d_{S_i}(v) > (1+\eta) d_{avg}(S_i) + c \frac{1+\eta}{\epsilon} \log n
\]
then $\Pr[v \in L_i] \leq 1/n^3$.
\end{lemma}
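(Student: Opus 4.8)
The plan is to unfold the definition of the event $\{v \in L_i\}$ and reduce it to a tail bound on the added noise. Write $G_u \sim \Geom(e^{\epsilon})$ for the noise added to vertex $u \in S_i$, so $D_i(u) = d_{S_i}(u) + G_u$, and set $\bar G = \frac{1}{|S_i|}\sum_{u\in S_i} G_u$. Then $\hat d_i := \frac{1}{|S_i|}\sum_{u\in S_i} D_i(u) = d_{avg}(S_i) + \bar G$ and $T_i = (1+\eta)\hat d_i$, so $v \in L_i$ exactly when $d_{S_i}(v) + G_v \le (1+\eta)\bigl(d_{avg}(S_i) + \bar G\bigr)$, i.e. when
\[
d_{S_i}(v) - (1+\eta)\,d_{avg}(S_i) \le (1+\eta)\bar G - G_v .
\]
Under the hypothesis of the lemma the left-hand side exceeds $c\,\frac{1+\eta}{\epsilon}\log n$, so $\Pr[v\in L_i] \le \Pr\bigl[(1+\eta)\bar G - G_v > c\,\frac{1+\eta}{\epsilon}\log n\bigr]$.

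Next I would split this with a union bound, which is valid even though $G_v$ also appears (with weight $1/|S_i|$) inside $\bar G$: setting $t = c\,\frac{1+\eta}{\epsilon}\log n$ we have $\{(1+\eta)\bar G - G_v > t\} \subseteq \{(1+\eta)\bar G > t/2\}\cup\{-G_v > t/2\}$, and since $1+\eta\ge 1$ the first event is contained in $\{\bar G > \frac{c}{2\epsilon}\log n\}$ and the second in $\{|G_v| > \frac{c}{2\epsilon}\log n\}$. For the first term, $\bar G = \hat d_i - d_{avg}(S_i)$ is an average of $|S_i|$ independent, mean-zero, sub-exponential random variables of sub-exponential norm $\Theta(1/\epsilon)$, so Bernstein's inequality (Lemma~\ref{lem:Bernstein}) — applied exactly as in the proof of Lemma~\ref{lem:noisy-avg-concentration} — gives $\Pr[\bar G > \frac{c}{2\epsilon}\log n] \le \frac{1}{2n^3}$ once $c$ is a large enough constant. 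For the second term, $G_v$ is a single symmetric geometric variable, and the tail bound for the geometric mechanism (Lemma~\ref{lem:geo-mechanism}) gives $\Pr[|G_v| \ge \frac{c}{2\epsilon}\log n] = O(n^{-c/2}) \le \frac{1}{2n^3}$, again for $c$ a sufficiently large constant. Adding the two bounds yields $\Pr[v\in L_i] \le \frac{1}{n^3}$.

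I do not expect a genuinely hard step here; the proof is essentially bookkeeping on top of the two concentration tools already developed. The two things to be careful about are (i) that $G_v$ sits both in $D_i(v)$ and in $\bar G$, which is harmless because we only invoke a union bound and never independence of those two pieces, and (ii) keeping the factor $(1+\eta)\ge 1$ in sight so that each half-threshold remains $\Omega(\frac{1}{\epsilon}\log n)$, which is exactly the scale both the Bernstein bound and the geometric tail bound require. A single constant $c$ can be fixed at the end that is large enough to make both tail probabilities at most $\frac{1}{2n^3}$.
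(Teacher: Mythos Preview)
Your proposal is correct and follows essentially the same approach as the paper: both decompose the event $\{v \in L_i\}$ into a tail bound on the average noise $\bar G = \hat d_i - d_{avg}(S_i)$ (handled via Bernstein/Lemma~\ref{lem:noisy-avg-concentration}) and a tail bound on the single geometric variable $G_v$ (handled via Lemma~\ref{lem:geo-mechanism}), then take a union bound. The only cosmetic difference is that the paper first conditions on $\hat d_i \le d_{avg}(S_i) + O(\tfrac{1}{\epsilon}\log n)$ and then argues about $G_v$, whereas you split $(1+\eta)\bar G - G_v$ symmetrically at $t/2$; the constants and logic are otherwise identical.
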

\begin{proof}
Let $v \in S_i$ be a node with $d_{S_i}(v) > (1+\eta) d_{avg}(S_i) + \Omega\left( \frac{1+\eta}{\epsilon} \log n\right)$.  Lemma~\ref{lem:noisy-avg-concentration} implies that $\hat d_i \leq d_{avg}(S_i) + O(\frac{1}{\epsilon} \log n)$ with probability at least $1-1/n^4$.  So we may assume that this event occurs by adding $1/n^4$ to our failure probability.  

By definition, $v \in L_i$ only if
\begin{align*}
    d_{S_i}(v) + \Geom(e^{\epsilon}) &\leq T_i = (1+\eta) \cdot \frac{1}{|S_i|} \sum_{v \in S_i} D_i(v) = (1+\eta) \hat d_i \\
    &\leq (1+\eta) \cdot \left(d_{avg}(S_i) + O\left(\frac{1}{\epsilon} \log n\right)\right).
\end{align*}

By our assumption on $d_{S_i}(v)$, this occurs only if
\begin{align*}
    \Geom(e^{\epsilon}) & \leq (1+\eta) \cdot \left(d_{avg}(S_i) + O\left(\frac{1}{\epsilon} \log n\right)\right) - d_{S_i}(v) \\
    &\leq (1+\eta) \cdot O\left(\frac{1}{\epsilon} \log n \right) - c\frac{1+\eta}{\epsilon} \log n.
\end{align*}

By setting $c$ to be a large enough constant (say, a large constant factor larger than the constant hidden in the $O(\cdot)$ notation in the above inequality), we get that $v \in L_i$ only if $\Geom(e^{\epsilon_1}) \leq -c' \frac{1+\eta}{\epsilon} \log n$ for any constant $c'$ that we want.  Lemma~\ref{lem:geo-mechanism} now implies that this happens with probability at most $1/n^4$.  Hence our total probability of $v$ being in $L_i$ is at most $1/n^4 + 1/n^4 \leq 1/n^3$, as claimed.
\end{proof}

This lemma immediately gives the following corollary.
\begin{corollary} \label{cor:degree}
With probability at least $1-1/n$, every $L_i$ consists entirely of nodes with $d_{S_i}(v) \leq (1+\eta)d_{avg}(S_i) + O\left(\frac{1+\eta}{\epsilon} \log n \right)$.
\end{corollary}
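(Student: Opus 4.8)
The plan is to read Corollary~\ref{cor:degree} off of Lemma~\ref{lem:LEDP-low-degree} by a single union bound over iterations and vertices. The ``bad'' event whose probability we must control is that there exist an iteration $i$ and a vertex $v \in L_i$ with $d_{S_i}(v) > (1+\eta) d_{avg}(S_i) + c\frac{1+\eta}{\epsilon}\log n$; if no such pair exists, then by definition every $L_i$ consists only of vertices satisfying the stated degree bound, which is exactly the conclusion.

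The one point needing care is that the sets $S_i$ are themselves random (they depend on the geometric noise drawn in earlier rounds), so Lemma~\ref{lem:LEDP-low-degree} must be invoked conditionally. I would fix an iteration index $i$ and condition on all the noise used in iterations $1,\dots,i-1$; this determines $S_i$, hence $d_{avg}(S_i)$, and hence the set $A_i \subseteq S_i$ of vertices that \emph{would} violate the degree bound in round $i$. For each $v \in A_i$, Lemma~\ref{lem:LEDP-low-degree} — which is itself proved by conditioning on exactly this history — gives $\Pr[v \in L_i] \le 1/n^3$, the probability now being over only the fresh noise drawn in round $i$. A union bound over the $|A_i| \le |S_i| \le n$ such vertices shows that, conditioned on the history, the probability that $L_i$ contains \emph{any} bad vertex is at most $1/n^2$; since this holds for every realization of the history, it holds unconditionally. (If $L_i = \emptyset$ in some round, that round contributes $0$.)

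Finally I would union bound over rounds. By Lemma~\ref{lem:iterations} there are only $O(\frac{1}{\eta}\log n)$ iterations — and in any case at most $n$ vertices are ever removed, so at most $n$ iterations have $L_i \neq \emptyset$ — so summing the per-iteration bound of $1/n^2$ over the (at most $n$) relevant iterations gives a total failure probability of at most $1/n$, which is the claim. I do not expect a substantive obstacle here: all of the real content is in Lemma~\ref{lem:LEDP-low-degree}, and this corollary is just the standard packaging of a per-(iteration, vertex) tail bound into a global high-probability guarantee; the only thing to watch is keeping the adaptive conditioning honest so that Lemma~\ref{lem:LEDP-low-degree} is applied to a fixed $S_i$ each time.
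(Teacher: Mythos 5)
Your proof is correct and matches the paper's: both combine Lemma~\ref{lem:LEDP-low-degree} with Lemma~\ref{lem:iterations} via a union bound over vertices and iterations. The paper's one-line proof leaves the adaptive-conditioning issue implicit; your explicit handling of it (fixing the history to make $S_i$ deterministic before invoking the per-vertex tail bound) is the correct reading of how Lemma~\ref{lem:LEDP-low-degree} is meant to be applied.
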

\begin{proof}
Use Lemma~\ref{lem:LEDP-low-degree} on every vertex in every iteration, using Lemma~\ref{lem:iterations} to bound the number of iterations, and take a union bound.
\end{proof}

We can now prove the main approximation bound.

\begin{theorem} \label{thm:simple-approx}
With probability at least $1-2/n$, the density of the subset returned by the algorithm is at least 
\[
\frac{OPT}{2(1+\eta)} - O\left(\frac{1}{\epsilon} \log n \right).
\]
\end{theorem}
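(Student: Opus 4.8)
The plan is to follow the classical analysis of Charikar's peeling $2$-approximation~\cite{Charikar00}, while carefully tracking the two sources of error introduced by the noisy degrees: the noisy thresholds $T_i$ and the noisy density estimates $\hat\density(S_i)$. Let $S^*$ be an optimal densest subgraph, so $\density(S^*) = OPT$; if $OPT = 0$ the claim is trivial, so assume $OPT > 0$ and hence $S^* \neq \emptyset$. The key structural fact, exactly as in the non-private setting, is that every $v \in S^*$ satisfies $d_{S^*}(v) \geq \density(S^*) = OPT$: otherwise removing $v$ from $S^*$ would strictly increase the density, contradicting optimality.

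Since $S^* \subseteq S_1 = V$ and the algorithm runs until $S_i = \emptyset$, there is a first iteration $i$ in which some vertex $v \in S^*$ is placed in $L_i$ (that is, $S^* \subseteq S_i$ but $S^* \not\subseteq S_{i+1}$). First I would condition on the event of Corollary~\ref{cor:degree}, which holds with probability at least $1 - 1/n$ and guarantees that every $L_i$ contains only vertices $w$ with $d_{S_i}(w) \leq (1+\eta)d_{avg}(S_i) + O\!\left(\frac{1+\eta}{\epsilon}\log n\right)$. Applying this to $v$, and using $d_{S_i}(v) \geq d_{S^*}(v) \geq OPT$ (valid because $S^* \subseteq S_i$), I obtain $OPT \leq (1+\eta)d_{avg}(S_i) + O\!\left(\frac{1+\eta}{\epsilon}\log n\right)$. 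Dividing through by $1+\eta$ collapses the additive term, giving $d_{avg}(S_i) \geq \frac{OPT}{1+\eta} - O\!\left(\frac1\epsilon\log n\right)$ and hence $\density(S_i) = \tfrac12 d_{avg}(S_i) \geq \frac{OPT}{2(1+\eta)} - O\!\left(\frac1\epsilon\log n\right)$. So at least one set considered by the algorithm has true density this large.

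It remains to argue that selecting the set with the largest \emph{estimated} density does not cost much more. Here I would invoke Lemma~\ref{lem:noisy-avg-concentration}: for each iteration $j$, $|\hat d_j - d_{avg}(S_j)| \leq O\!\left(\frac1\epsilon\log n\right)$ with probability at least $1 - 1/n^4$, so by Lemma~\ref{lem:iterations} (which bounds the number of iterations) a union bound yields, with probability at least $1 - 1/n^3$, that $|\hat\density(S_j) - \density(S_j)| \leq O\!\left(\frac1\epsilon\log n\right)$ for \emph{every} iteration $j$. Condition on this event too, and let $S_{j^\star}$ be the returned set, i.e., the one maximizing $\hat\density$. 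Then
\[
\density(S_{j^\star}) \;\geq\; \hat\density(S_{j^\star}) - O\!\left(\tfrac1\epsilon\log n\right) \;\geq\; \hat\density(S_i) - O\!\left(\tfrac1\epsilon\log n\right) \;\geq\; \density(S_i) - O\!\left(\tfrac1\epsilon\log n\right) \;\geq\; \frac{OPT}{2(1+\eta)} - O\!\left(\tfrac1\epsilon\log n\right),
\]
where the first and third inequalities use the concentration event, the second uses that $j^\star$ maximizes $\hat\density$, and the last uses the bound from the previous paragraph. The two bad events have combined probability at most $1/n + 1/n^3 \leq 2/n$, which proves the theorem.

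The main obstacle — indeed essentially the only point that goes beyond the textbook Charikar argument — is handling the gap between the estimated and the true densities: the algorithm irrevocably commits to its output based on $\hat\density$, so the concentration of $\hat d_j$ must hold simultaneously across all $O(\frac1\eta \log n)$ iterations, and the inequalities must be chained so that the $O(\frac1\epsilon\log n)$ slack from the density estimates and the $O(\frac{1+\eta}{\epsilon}\log n)$ slack from the noisy thresholds (the latter becoming $O(\frac1\epsilon\log n)$ after the division by $1+\eta$) both collapse into a single $O(\frac1\epsilon\log n)$ additive loss. Everything else is routine.
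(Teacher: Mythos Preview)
Your proof is correct, but it takes a different (though equally classical) route from the paper's own argument. You use the standard ``first time a vertex of $S^*$ is removed'' argument: every $v\in S^*$ has $d_{S^*}(v)\ge OPT$, so at the iteration $i$ where the first such $v$ is peeled you combine Corollary~\ref{cor:degree} with $d_{S_i}(v)\ge d_{S^*}(v)\ge OPT$ to lower-bound $\density(S_i)$. The paper instead uses Charikar's orientation lemma: it orients every edge toward the endpoint removed earlier, observes that Corollary~\ref{cor:degree} bounds all in-degrees by $(1+\eta)d_{avg}(S_i)+\Lambda$, and then appeals to the fact that $\density(G)$ is at most the maximum in-degree in any orientation to get $\density(G)\le 2(1+\eta)\max_i\density(S_i)+\Lambda$. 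Both arguments are textbook proofs of Charikar's $2$-approximation and yield the same bound here; your approach is arguably more self-contained (no external lemma needed), while the paper's approach makes the connection to low-outdegree orientations explicit. Your handling of the gap between $\hat\density$ and $\density$ via Lemma~\ref{lem:noisy-avg-concentration} and a union bound over the $O(\frac{1}{\eta}\log n)$ iterations is exactly what the paper does.
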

\begin{proof}
We first argue that, like in the non-DP case~\cite{BKV12}, the best of the iterations is a good approximation.  Then we argue that we return a solution that is essentially as good.

Let $\Lambda = O\left(\frac{1+\eta}{\epsilon} \log n \right)$ be the additive loss from Corollary~\ref{cor:degree}.  We know from Corollary~\ref{cor:degree} that with probability at least $1-1/n$ when every node is removed it has degree at most $d_{S_i}(v) \leq (1+\eta)d_{avg}(S_i) + \Lambda$ (where $i$ is the iteration in which $v$ is removed).  Now orient every edge towards whichever of the endpoints is removed earlier, making an arbitrary choice if both endpoints are removed in the same iteration.  So the in-degree in this orientation of every node removed in iteration $i$ is at most $(1+\eta)d_{avg}(S_i) + \Lambda$.  Lemma 3 of~\cite{Charikar00} implies that the density of $G$ is upper bounded by the maximum in-degree in any orientation (in particular the above orientation), i.e., $\density(G) \leq \max_{v \in V} d_{in}(v)$.  Hence we have that
\begin{align} \label{eq:simple-density}
    \density(G) &\leq \max_{v \in V} d_{in}(v) \leq \max_i ((1+\eta)d_{avg}(S_i) + \Lambda) \leq 2(1+\eta) \max_i \density(S_i) + \Lambda.
\end{align}

So we know the best of the $k = O(\frac{1}{\eta} \log n)$ subgraphs is a good approximation (where we are using Lemma~\ref{lem:iterations} to bound the number of iterations).  However, we do not return the $S_i$ with maximum $\density(S_i)$, but rather the $S_i$ with maximum estimated density $\hat \density(S_i)$.  But Lemma~\ref{lem:noisy-avg-concentration} implies that $\Pr[|\hat \density (S_i) - \density(S_i)| \geq \Omega\left(\frac{1}{\epsilon} \log n\right)] \leq 1/n^3$ for each iteration $i$.  So a trivial union bound implies that $|\hat \density (S_i) - \density(S_i)| \leq O\left(\frac{1}{\epsilon} \log n\right)$ for all $i$ with probability at least $1-1/n^2$.  If this happens, then combined with~\eqref{eq:simple-density} we get that
\begin{align*}
    \density(G) \leq 2(1+\eta) \max_i \density(S_i) + \Lambda \leq 2(1+\eta) \max_i \hat \density(S_i) + 2\Lambda.
\end{align*}
This clearly implies the theorem.
\end{proof}

\paragraph{Putting it all together.} We can now combine all of this  into one easy-to-use corollary.
\begin{corollary} \label{cor:simple-main}
Let $\epsilon, \eta > 0$.  There is an $\epsilon$-LEDP algorithm for Densest Subgraph which runs in $O\left(\frac{1}{\eta} \log n\right)$ rounds and returns a set $S \subseteq V$ such that 
\[
\density(G) \leq 2(1+\eta) \density(S) + O\left( \frac{1}{\epsilon \eta} \log^2 n \right)
\]
with probability at least $1-2/n$.
\end{corollary}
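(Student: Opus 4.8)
The plan is to obtain Corollary~\ref{cor:simple-main} by running Algorithm~\ref{alg:simple} with an appropriately rescaled internal privacy parameter and then reading off the three guarantees that Section~\ref{sec:pure} has already established. First I would fix the constant $c_0$ implicit in Theorem~\ref{thm:privacy-simple}, so that Algorithm~\ref{alg:simple} run with geometric noise $\Geom(e^{\epsilon_0})$ is $c_0\tfrac{\epsilon_0}{\eta}\log n$-LEDP. Given the target privacy level $\epsilon$, I would instantiate the algorithm with $\epsilon_0 = \tfrac{\epsilon\eta}{c_0\log n}$; by Theorem~\ref{thm:privacy-simple} the resulting algorithm is then exactly $\epsilon$-LEDP. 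This rescaling is the only genuine choice in the proof, and the rest is bookkeeping.

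Next I would verify the round complexity. Lemma~\ref{lem:iterations} bounds the number of iterations by $O(\tfrac{1}{\eta}\log n)$, and this bound comes purely from the averaging argument on the threshold $T_i$ and does not depend on the noise parameter at all. Since each iteration of Algorithm~\ref{alg:simple} is one round of local communication (each surviving vertex reports its noisy degree, the curator aggregates and publishes the next set), the instantiated algorithm runs in $O(\tfrac{1}{\eta}\log n)$ rounds regardless of the value of $\epsilon_0$. Then I would transfer the utility bound: Theorem~\ref{thm:simple-approx}, applied with internal parameter $\epsilon_0$, says that with probability at least $1-2/n$ the returned set $S$ satisfies
\[
\density(G) \leq 2(1+\eta)\density(S) + O\!\left(\tfrac{1}{\epsilon_0}\log n\right).
\]
Substituting $\epsilon_0 = \tfrac{\epsilon\eta}{c_0\log n}$ converts the additive term into $O\!\left(\tfrac{\log^2 n}{\epsilon\eta}\right)$, which is exactly the claimed bound, while the multiplicative factor $2(1+\eta)$ is untouched.

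The one point that needs a small amount of care — and the closest thing to an obstacle — is confirming that the $1-2/n$ failure probability in Theorem~\ref{thm:simple-approx} is robust to taking $\epsilon_0$ small. The underlying concentration statements, Lemma~\ref{lem:noisy-avg-concentration} and Lemma~\ref{lem:LEDP-low-degree}, are proved by choosing the deviation threshold proportional to $\tfrac{1}{\epsilon_0}\log n$, which yields a $1/\poly(n)$ tail for \emph{any} $\epsilon_0 > 0$ (the sub-exponential norm $\Theta(1/\epsilon_0)$ enters the bound, but the threshold scales with it), so the union bound over the $O(\tfrac{1}{\eta}\log n)$ iterations still gives failure probability $O(1/n)$. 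Hence nothing in the argument degrades under the rescaling, and the corollary follows by simply combining the privacy statement of Theorem~\ref{thm:privacy-simple}, the round bound of Lemma~\ref{lem:iterations}, and the utility statement of Theorem~\ref{thm:simple-approx} with the choice $\epsilon_0 = \Theta(\epsilon\eta/\log n)$.
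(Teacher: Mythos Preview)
Your proposal is correct and follows essentially the same approach as the paper: instantiate Algorithm~\ref{alg:simple} with internal privacy parameter $\epsilon' = \Theta(\epsilon\eta/\log n)$, then invoke Lemma~\ref{lem:iterations} for the round bound, Theorem~\ref{thm:privacy-simple} for the $\epsilon$-LEDP guarantee, and Theorem~\ref{thm:simple-approx} for the density bound. The paper's proof is a one-line sketch of exactly this; your version just fills in the substitution and the robustness check on the failure probability more explicitly.
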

\begin{proof}
Use our algorithm but use DP parameter setting $\epsilon' = \Theta(\epsilon \eta / \log n)$.  Then the number of rounds is implied by Lemma~\ref{lem:iterations}, Theorem~\ref{thm:privacy-simple} implies that the algorithm is $\epsilon$-LEDP, and Theorem~\ref{thm:simple-approx} implies the density bound.
\end{proof}



\section{Private Density Approximation}
In this section we show that if we want to return the \emph{density} of the densest subgraph $\density(G)$, rather than the set of nodes itself, then it is straightforward to have accuracy $O(\sqrt{1/\epsilon})$ in expectation or $O\left(\sqrt{\frac{\log n}{\epsilon}}\right)$ with high probability in the centralized edge-DP model.  Notably, the lower bound of~\cite{NV21} implies that if we want to output a \emph{set} $S \subseteq V$ of maximum density, then our expected additive loss must be at least $\Omega(\sqrt{\log n / \epsilon})$.  Hence our upper bound implies that the expected additive loss must be strictly larger for outputting the set than for outputting its density. 

To do this, we can use a variant of the propose-test-release~\cite{DL09} mechanism which takes advantage of the fact that a graph can only have large density if its sensitivity is low.  This allows us to actually simplify the framework by just using a function with smaller global sensitivity to approximate the density.

More formally, for $x \in \mathbb{R}^+$, let $\density_x(G) = \max(\density(G), x)$ (where recall that $\density(S) = |E(S)|/|S|$ and $\density(G) = \max_{S \subset V} \density(S)$).  We will compute a differentially private approximation to $\density_x(G)$ for appropriate $x$, and then claim that this is a good approximation to $\density(G)$.  Let us first analyze the sensitivity of the $\density_x$ function.

\begin{lemma} \label{lem:global-sensitivity}
    The sensitivity of $\density_x$ is at most $\frac{1}{2x-1}$.
\end{lemma}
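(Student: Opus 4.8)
The plan is to work directly from the definition of global sensitivity: fix two edge-neighboring graphs $G$ and $G'$ and bound $|\density_x(G) - \density_x(G')|$. Since edge-neighboring graphs share a vertex set and differ in exactly one edge, one of them contains the other; adding an edge can only increase $|E(S)|/|S|$ for any fixed vertex set $S$, hence cannot decrease $\density$, so without loss of generality I may assume $G = G' + e$ for a single edge $e$, and it suffices to upper bound $\density_x(G) - \density_x(G') \ge 0$. The key structural observation is that if $S$ is a densest subgraph of $G$ (so $\density(G) = |E_G(S)|/|S|$), then deleting $e$ removes at most one edge from the subgraph induced by $S$, whence $\density(G') \ge |E_{G'}(S)|/|S| \ge (|E_G(S)| - 1)/|S| = \density(G) - 1/|S|$.

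From here I would split on where $x$ sits. If $\density(G) < x$, then $\density(G') \le \density(G) < x$ as well, so $\density_x(G) = \density_x(G') = x$ and the difference is $0$. Otherwise $\density(G) \ge x$, so $\density_x(G) = \density(G)$; and since $\density_x(G') = \max(\density(G'), x) \ge \max(\density(G) - 1/|S|,\, x) \ge \density(G) - 1/|S|$ (checking both the case $\density(G') \ge x$ and the case $\density(G') < x$, in which $x > \density(G') \ge \density(G) - 1/|S|$), we get $\density_x(G) - \density_x(G') \le 1/|S|$. So in every case the sensitivity is at most $1/|S|$ for the densest subgraph $S$ of $G$, and this quantity is only nonzero when $\density(G) \ge x$.

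It remains to lower bound $|S|$ in that regime. Here $x \le \density(G) = |E_G(S)|/|S| \le \binom{|S|}{2}/|S| = (|S|-1)/2$, so $|S| \ge 2x + 1$, and therefore $1/|S| \le 1/(2x+1) \le 1/(2x-1)$, the last step using $x > 1/2$ (the only regime in which the claimed bound is meaningful). This gives $|\density_x(G) - \density_x(G')| \le 1/(2x-1)$, as required. I do not expect a serious obstacle; the only points needing care are the reduction to the edge-addition case via monotonicity of $\density$ under edge insertion, and the small case analysis on the position of $x$ relative to $\density(G)$ and $\density(G')$ so that the $\max$'s behave correctly. The conceptual heart of the lemma --- and the reason $\density_x$ is far more stable than $\density$ itself --- is precisely that a density of at least $x$ certifies a subgraph on $\Omega(x)$ vertices, which damps the per-edge change to $O(1/x)$.
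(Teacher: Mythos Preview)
Your proof is correct and follows essentially the same approach as the paper: reduce by monotonicity to the edge-addition case, bound the change by $1/|S|$ where $S$ is a densest subgraph of the larger graph, and lower bound $|S|$ via $\density(S) \le (|S|-1)/2$. Your case split (on whether $\density$ of the \emph{larger} graph is at least $x$) is slightly cleaner than the paper's (which splits on whether $\density$ of the \emph{smaller} graph exceeds $x-1$) and in fact yields the marginally tighter bound $1/(2x+1)$, but the idea is the same.
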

\begin{proof}
    Let $G$ and $G'$ be two graphs that differ in exactly one edge $e = \{u,v\}$, which without loss of generality is contained in $G'$ and not contained in $G$.  We break into two cases depending on $\density(G)$.

    First, suppose that $\density(G) \leq x-1$.  Then clearly $\density(G') \leq x$, and hence $\density_x(G) = \density_x(G') = x$.  Thus the sensitivity is $0$.

    For the more interesting case, suppose that $\density(G) > x-1$.  Let $S \subseteq V$ be the densest subgraph of $G'$.  If $\{u,v\} \not\subseteq S$ then $S$ has the same density in both $G$ and $G'$ and no other set has larger density in $G$ than in $G'$.  Hence $\density(G) = \density(G')$ and so $\density_x(G) = \density_x(G')$ and we are done.  So assume without loss of generality that $u,v \in S$.  Then $\density(G) \geq \frac{E_{G'}(S) - 1}{|S|} = \density(G') - \frac{1}{|S|}$.  Moreover, since $|E(S)| \leq \binom{|S|}{2}$ for any $S \subseteq V$, we have that $\density(G') = \density(S) \leq \binom{|S|}{2} / |S| = \frac{|S|-1}{2}$, and so $|S| \geq 2 \density(G') + 1 \geq 2 \density(G) + 1 > 2x-1$.  Hence 
    \begin{equation*}
        \density(G) \geq \density(G') - \frac{1}{2x-1},
    \end{equation*}
    and thus $\density_x(G') - \density_x(G) \leq \frac{1}{2x-1}$ as claimed.
\end{proof}

Now we can define our algorithm: we use the Laplace mechanism on $\density_x(G)$ (see Lemma~\ref{lem:laplace} and~\cite{DworkMNS06}).  Slightly more formally, we compute $\density_x(G)$, draw noise $N \sim \text{Lap}\left( \frac{1}{(2x-1)\epsilon}\right)$, and return $\widehat \density_x(G) = \density_x(G) + N$.

\begin{lemma}
    This algorithm is $\epsilon$-edge DP.
\end{lemma}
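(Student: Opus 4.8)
The plan is to invoke the standard Laplace mechanism guarantee (Lemma~\ref{lem:laplace}) together with the sensitivity bound just established in Lemma~\ref{lem:global-sensitivity}. First I would recall that for any real-valued function $f$ with $\ell_1$-sensitivity at most $\Delta$, releasing $f(G) + N$ with $N \sim \text{Lap}(\Delta/\epsilon)$ is $\epsilon$-DP; this is exactly the content of Lemma~\ref{lem:laplace}. Then I would set $f = \density_x$ and apply Lemma~\ref{lem:global-sensitivity}, which tells us that the sensitivity of $\density_x$ is at most $\Delta := \frac{1}{2x-1}$.

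Next I would observe that the noise we add is drawn from $\text{Lap}\left(\frac{1}{(2x-1)\epsilon}\right) = \text{Lap}(\Delta/\epsilon)$, so the algorithm is precisely the Laplace mechanism applied to $\density_x$, and hence is $\epsilon$-edge DP by Lemma~\ref{lem:laplace}. (If one prefers, the fact that the quoted sensitivity is only an \emph{upper} bound causes no difficulty: using a noise scale at least as large as $\Delta/\epsilon$ still yields $\epsilon$-DP.) The computation of $\density_x(G)$ itself uses only the graph, and the only output is the single noisy scalar $\widehat\density_x(G)$, so there is no post-processing step to worry about. I do not expect any real obstacle here: all the work was done in proving the sensitivity bound of Lemma~\ref{lem:global-sensitivity}, and this lemma is a one-line consequence of it.
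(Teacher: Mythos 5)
Your proof is correct and takes the same approach as the paper: both combine the sensitivity bound from Lemma~\ref{lem:global-sensitivity} with the standard Laplace mechanism guarantee (Lemma~\ref{lem:laplace}). Your version simply spells out the substitution more explicitly.
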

\begin{proof}
    This follows directly from Lemma~\ref{lem:global-sensitivity} and the standard analysis of the Laplace mechanism (Lemma~\ref{lem:laplace}).  
\end{proof}

\begin{lemma}
    The expectation of $|\widehat \density_x(G) - \density(G)|$ is at most $O\left( \frac{1}{(2x-1)\epsilon}\right) + x$.  And with high probability, $|\widehat \density_x(G) - \density(G)| \leq O\left( \frac{\log n}{(2x-1)\epsilon}\right) + x$. 
\end{lemma}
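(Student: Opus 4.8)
The plan is to split the error $|\widehat{\density}_x(G) - \density(G)|$ into a deterministic ``surrogate bias'' term $|\density_x(G) - \density(G)|$ and a random ``noise'' term $|N|$, and bound each separately. First I would note that since $\density_x(G) = \max(\density(G), x)$, we always have $0 \leq \density_x(G) - \density(G) \leq x$: if $\density(G) \geq x$ the difference is exactly $0$, and otherwise it equals $x - \density(G) \leq x$ (densities are nonnegative). Then by the triangle inequality, $|\widehat{\density}_x(G) - \density(G)| \leq |\density_x(G) - \density(G)| + |N| \leq x + |N|$, where $N \sim \text{Lap}(\tfrac{1}{(2x-1)\epsilon})$.

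Next I would bound $|N|$ using standard facts about the Laplace distribution (as recalled in the preliminaries, cf. Lemma~\ref{lem:laplace}). For the expectation bound, $\E[|N|]$ equals the scale parameter $\tfrac{1}{(2x-1)\epsilon}$, so taking expectations in the inequality above gives $\E[|\widehat{\density}_x(G) - \density(G)|] \leq x + \tfrac{1}{(2x-1)\epsilon} = O\!\left(\tfrac{1}{(2x-1)\epsilon}\right) + x$, which is the first claim. For the high-probability bound, the standard Laplace tail inequality gives $\Pr[|N| > t] = \exp(-(2x-1)\epsilon\, t)$; choosing $t = \Theta\!\left(\tfrac{\log n}{(2x-1)\epsilon}\right)$ makes this at most $1/\poly(n)$. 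On the complementary event, $|N| \leq O\!\left(\tfrac{\log n}{(2x-1)\epsilon}\right)$, and plugging into the triangle-inequality bound yields $|\widehat{\density}_x(G) - \density(G)| \leq x + O\!\left(\tfrac{\log n}{(2x-1)\epsilon}\right)$, as claimed.

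There is essentially no obstacle here: the only thing one must be careful about is correctly identifying the surrogate bias $|\density_x(G) - \density(G)|$ as being at most $x$ (rather than something larger), which follows immediately from the definition $\density_x(G) = \max(\density(G), x)$ and nonnegativity of density. Everything else is a direct invocation of the privacy-calibrated Laplace-mechanism facts already stated, combined with the global sensitivity bound $\tfrac{1}{2x-1}$ from Lemma~\ref{lem:global-sensitivity}.
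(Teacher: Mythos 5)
Your proof is correct and follows essentially the same route as the paper: triangle inequality to split the error into the (deterministic) truncation bias, which is at most $x$, plus the Laplace noise $|N|$, then apply the standard Laplace expectation/tail facts. The only difference is expository—you spell out why the bias is in $[0,x]$ and state the tail bound explicitly, whereas the paper compresses these steps.
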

\begin{proof}
    Clearly $\E[|\widehat \density_x(G) - \density(G)|]\leq x + \E[|N|] = O\left( \frac{1}{(2x-1)\epsilon}\right) + x$ (see Lemma~\ref{lem:laplace}).
    Moreover, standard tail bounds for the Laplace distribution imply that $|N| \leq O\left( \frac{\log n}{(2x-1)\epsilon}\right)$ with high probability.  If this event occurs, then we have that 
    \begin{align*}
        |\widehat \density_x(G) - \density(G)| &\leq N + \density_x(G) - \density(G) \leq  O\left( \frac{\log n}{(2x-1)\epsilon}\right) + x
    \end{align*}
    as claimed.
\end{proof}

It immediately follows that if we set $x = \Theta\left( \sqrt{\frac{\log n}{\epsilon}}\right)$, then with high probability $|\widehat \density_x(G) - \density(G)| \leq O\left(  \sqrt{\frac{\log n}{\epsilon}} \right)$, so we have the following corollary:
\begin{corollary}
    There is an $\epsilon$-edge DP algorithm which  outputs a value $\widehat \density$ such that $|\widehat \density - \density(G)| \leq O\left(  \sqrt{\frac{\log n}{\epsilon}} \right)$ with high probability.
\end{corollary}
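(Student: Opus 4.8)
The plan is simply to instantiate the mechanism already described --- compute $\density_x(G)$ and release it with Laplace noise of scale $\frac{1}{(2x-1)\epsilon}$ --- with the particular choice $x = \Theta\!\left(\sqrt{\log n / \epsilon}\right)$, and then invoke the two lemmas proved just above. No genuinely new argument is needed: the corollary is just a matter of balancing the two sources of error in the accuracy bound.

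For privacy, the preceding lemma already establishes that the mechanism is $\epsilon$-edge DP for \emph{every} fixed value of $x$ (immediate from Lemma~\ref{lem:global-sensitivity}, which bounds the global sensitivity of $\density_x$ by $\frac{1}{2x-1}$, combined with the standard Laplace-mechanism guarantee), so the privacy claim transfers verbatim for our chosen $x$. For accuracy, I would use the high-probability bound $|\widehat\density_x(G) - \density(G)| \le O\!\left(\frac{\log n}{(2x-1)\epsilon}\right) + x$ from the last lemma and optimize over $x$: the first term is decreasing in $x$ and the additive ``bias'' term $x$ (incurred by replacing $\density$ with $\max(\density,x)$) is increasing, and setting $x(2x-1) = \Theta\!\left(\frac{\log n}{\epsilon}\right)$, i.e.\ $x = \Theta\!\left(\sqrt{\log n / \epsilon}\right)$, equalizes them up to constants. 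With this choice both terms are $O\!\left(\sqrt{\log n / \epsilon}\right)$, which gives the stated bound. (The same computation with the expectation bound $O\!\left(\frac{1}{(2x-1)\epsilon}\right) + x$ yields $x = \Theta(\sqrt{1/\epsilon})$ and expected loss $O(\sqrt{1/\epsilon})$, recovering the other half of the claimed separation.)

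There is essentially no obstacle here; the only small point to check is that the chosen $x$ is admissible, i.e.\ $x > 1/2$ so the sensitivity bound $\frac{1}{2x-1}$ is finite and positive. Since $x = \Theta(\sqrt{\log n/\epsilon})$, this holds whenever $\log n/\epsilon$ exceeds a fixed constant. In the complementary regime $\log n/\epsilon = O(1)$ we have $\sqrt{\log n/\epsilon} \ge \log n/\epsilon$, so the ordinary Laplace mechanism applied directly to $\density(G)$ (which has global sensitivity $1$) already achieves additive loss $O(\log n/\epsilon) = O(\sqrt{\log n/\epsilon})$ with high probability. Combining the two regimes yields the corollary in full generality.
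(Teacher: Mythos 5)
Your proposal is correct and matches the paper's own argument: the paper likewise just plugs $x = \Theta\bigl(\sqrt{\log n/\epsilon}\bigr)$ into the preceding privacy and accuracy lemmas and notes the two error terms balance. The only thing you add beyond the paper is the careful handling of the regime where $x \le 1/2$, which the paper leaves implicit; this is a reasonable bit of extra care but not a different approach.
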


On the other hand, if we set $x = \Theta(\sqrt{1/\epsilon})$, we get the following corollary.
\begin{corollary}
    There is an $\epsilon$-edge DP algorithm which  outputs a value $\widehat \density$ such that $\E[|\widehat \density - \density(G)|] \leq O\left(  \sqrt{\frac{1}{\epsilon}} \right)$.

\end{corollary}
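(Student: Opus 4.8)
The plan is to instantiate the parametrized mechanism analyzed in the preceding two lemmas (the Laplace mechanism applied to $\density_x$) at the value of $x$ that balances the two error terms. Recall that for any fixed $x$ with $2x-1>0$, the algorithm that releases $\widehat\density_x(G) = \density_x(G) + N$ with $N \sim \Lap\!\left(\frac{1}{(2x-1)\epsilon}\right)$ is $\epsilon$-edge DP (by Lemma~\ref{lem:global-sensitivity} and the standard Laplace analysis), and its expected additive error against $\density(G)$ is at most $O\!\left(\frac{1}{(2x-1)\epsilon}\right) + x$, where the first term is $\E[|N|]$ and the second bounds the gap $\density_x(G) - \density(G) \le x$ (nonzero only when $\density(G) < x$).

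First I would fix the hidden constant so that $x = \Theta(\sqrt{1/\epsilon})$ is at least $1$; this holds whenever $\epsilon$ is below an absolute constant, which is the regime of interest, and otherwise one simply takes $x = 1$ and the bound is trivial. With this choice $2x-1 = \Theta(x) = \Theta(\sqrt{1/\epsilon})$, so the noise term becomes $O\!\left(\frac{1}{\sqrt{1/\epsilon}\cdot\epsilon}\right) = O(\sqrt{1/\epsilon})$, matching the truncation term $x = \Theta(\sqrt{1/\epsilon})$ in order of magnitude. Summing the two terms gives expected error $O(\sqrt{1/\epsilon})$, which is exactly the claimed bound, and the $\epsilon$-edge DP guarantee is inherited directly from the lemma.

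I do not anticipate any real obstacle: all the work is already done in Lemma~\ref{lem:global-sensitivity} (the sensitivity bound $\tfrac{1}{2x-1}$ for $\density_x$) and in the expectation estimate for $|\widehat\density_x(G)-\density(G)|$. The only point requiring mild care is ensuring the chosen $x$ keeps the sensitivity bound meaningful ($x > 1/2$), which is guaranteed by the constant selected above, so the corollary follows as a one-line substitution into the preceding lemma.
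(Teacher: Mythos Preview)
Your proposal is correct and matches the paper's approach exactly: the paper also just sets $x = \Theta(\sqrt{1/\epsilon})$ and reads off the bound from the expectation lemma. Your extra remark about ensuring $x > 1/2$ is a reasonable sanity check the paper leaves implicit.
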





\bibliographystyle{abbrv}
\bibliography{refs}

\newpage
\appendix
\section{Differential Privacy Basics} \label{app:DP}

In this appendix we provide details about the basic DP definitions and primitives that we use.  We begin with a formal discussion of the LEDP model.

\subsection{Local Edge Differential Privacy}
We use the formalization of \emph{local} edge-differential privacy from~\cite{dhulipala2022differential}, suitably modified to handle $(\epsilon, \delta)$-DP rather than just pure $\epsilon$-DP.  In this formalization, LEDP algorithms are described in terms of an (untrusted) curator, who does not have access to the graph’s edges, and individual nodes. During each round, the curator first queries a set of nodes for information.   Individual nodes, which have access only to their own (private) adjacency lists (and whatever information was sent by the curator), then release information via local randomizers, defined next.

\begin{definition}[Local Randomizer~\cite{dhulipala2022differential}]
    An \emph{$(\epsilon, \delta)$-local randomizer} $R : a \rightarrow \mathcal Y$ for node $v$ is an $(\epsilon,\delta)$-edge DP algorithm that takes as input the set of its neighbors $N(v)$, represented by an adjacency list $a = (b_1, \dots, b_{|N(v)|})$. 
\end{definition}

In other words, a local randomizer is just an edge-DP algorithm where the private input is the adjacency list of $v$.  Note that such an algorithm can use other (public) information, such as whatever has been broadcast by the curator.  The information released via local randomizers is public to all nodes and the curator. The curator performs some computation on the released information and makes the result public. The overall computation is formalized via the notion of the transcript.

\begin{definition}[LEDP \cite{dhulipala2022differential}] \label{def:LEDP}
    A \emph{transcript} $\pi$ is a vector consisting of $4$-tuples $(S_U^t, S_R^t, S_{\epsilon}^t, S_Y^t)$ encoding the set of parties chosen, set of randomizers assigned, set of randomizer privacy parameters, and set of randomized outputs produced for each round $t$.  Let $S_{\pi}$ be the collection of all transcripts, and $S_R$ be the collection of all randomizers.  Let $\bot$ denote a special character indicating that the computation halts. A \emph{protocol} is an algorithm $\mathcal A : S_{\pi} \rightarrow (2^{[n]} \times 2^{S_R} \times 2^{\mathbb{R}^{\geq 0}} \times 2^{\mathbb{R}^{\geq 0}}) \cup \{\bot\}$ mapping transcripts to sets of parties, randomizers, and randomizer privacy parameters.  The length of the transcript, as indexed by $t$, is its round complexity.

    Given $\epsilon, \delta \geq 0$, a randomized protocol $\mathcal A$ on a (distributed) graph $G$ is $(\epsilon, \delta)$-locally edge differentially private ($(\epsilon, \delta)$-LEDP) if the algorithm that outputs the entire transcript generated by $\mathcal A$ is $(\epsilon, \delta)$-edge differentially private on graph $G$.
\end{definition}

As in~\cite{dhulipala2022differential}, we assume each user
can see the public information for each round on a public ``bulletin board''.  If an algorithm is $(\epsilon, 0)$-LEDP then we say that it is $\epsilon$-LEDP.  

This definition is somewhat unwieldy, and moreover does not neatly align with the intuitive explanation given in terms of an untrusted curator (the curator does not appear at all in the above definition of a protocol).  Fortunately, it is not hard to see the correspondence.  This was implicit in~\cite{dhulipala2022differential}, but we make it explicit here.

Suppose that we have an untrusted curator which initially knows only $V$, and each node initially knows only $V$ and its incident edges.  Suppose that every node runs an $(\epsilon, \delta)$-edge DP algorithm in every round, and broadcasts the output of this algorithm.  Then in each round, the algorithm run by a node is aware of all of the public information and the (private) local edge information.  If we think of this as an algorithm which has the public information hard-coded in and which takes the local edge information as input, then this is an $(\epsilon, \delta)$-local randomizer.  So the local randomizer in every round is exactly a function of the previous public information, as required.  And the curator's choice of who to query and with what question is a function from the previous transcript to a set of parties and randomizers, and hence this satisfies the definition of a ``protocol'' from Definition~\ref{def:LEDP}.  Hence we will give our algorithms in terms of a multi-round algorithm with an untrusted curator and $(\epsilon, \delta)$-edge DP algorithms at each node, rather than through local randomizers and protocols.

\subsection{Useful DP Primitives}
We first give the simple and well-known differential private mechanisms that form the building blocks of our algorithms, and then the various composition theorems that we will use to combine them.

\subsubsection{Basic Mechanisms} 
For a function $f$ from datasets to $\mathbb{R}$, let $\Delta_f$ denote the global sensitivity of $f$, i.e., the maximum over neighboring datasets $X, X'$ of $f(X) - f(X')$.  All of the basic mechanisms will depend on $\Delta_f$.

The first mechanism is the standard Laplace mechanism.

\begin{lemma}[Laplace Mechanism~\cite{DworkMNS06}] \label{lem:laplace}
Adding noise drawn from the Laplace distribution with parameter $\Delta_f / \epsilon$ satisfies $\epsilon$-differential privacy.  Moreover, the expected additive loss of the Laplace mechanism is $O(\Delta_f / \epsilon)$, and with high probability the loss is at most $O((\Delta_f / \epsilon) \log n)$.
\end{lemma}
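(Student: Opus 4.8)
The plan is to follow the classical two-part argument: first establish $\epsilon$-differential privacy via a pointwise bound on the output density, and then bound the additive error using standard properties of the Laplace distribution.

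For the privacy claim, I would write the mechanism as $\mathcal{M}(X) = f(X) + N$, where $N$ is drawn from the Laplace distribution with scale parameter $b := \Delta_f/\epsilon$ and density $p(x) = \frac{1}{2b}\exp(-|x|/b)$. Fix neighboring datasets $X, X'$ and a point $z \in \mathbb{R}$. The density of $\mathcal{M}(X)$ at $z$ equals $\frac{1}{2b}\exp(-|z-f(X)|/b)$, so
\[
\frac{p_{\mathcal{M}(X)}(z)}{p_{\mathcal{M}(X')}(z)} = \exp\!\left(\frac{|z-f(X')| - |z-f(X)|}{b}\right) \le \exp\!\left(\frac{|f(X)-f(X')|}{b}\right) \le \exp\!\left(\frac{\Delta_f}{b}\right) = e^{\epsilon},
\]
using the reverse triangle inequality and then the definition of global sensitivity. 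Integrating this pointwise inequality over an arbitrary measurable set $S$ yields $\Pr[\mathcal{M}(X)\in S] \le e^\epsilon\Pr[\mathcal{M}(X')\in S]$, i.e., $(\epsilon,0)$-DP.

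For utility, the additive loss is exactly $|N|$. I would use the fact that $\E[|N|] = b = \Delta_f/\epsilon$, which gives the $O(\Delta_f/\epsilon)$ expectation bound immediately. For the high-probability bound I would invoke the exponential tail $\Pr[|N|\ge t] = e^{-t/b}$: choosing $t = (c+1)\,b\ln n = O((\Delta_f/\epsilon)\log n)$ makes the failure probability at most $n^{-(c+1)}$, so $|N| = O((\Delta_f/\epsilon)\log n)$ with high probability.

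Since this is a textbook statement, there is no real obstacle; the only care needed is in passing from the pointwise density ratio to the DP guarantee for all measurable events (which is just integration of a nonnegative inequality) and in being explicit about the constant implicit in ``high probability.''
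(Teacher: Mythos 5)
Your proof is correct and is the standard textbook argument; the paper itself does not prove this lemma but simply cites~\cite{DworkMNS06}, so your write-up fills in exactly the expected details (pointwise density ratio via the triangle inequality for privacy, and the mean and exponential tail of the Laplace distribution for utility).
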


The next basic mechanism is essentially the discrete analog of the standard Laplace mechanism.

\begin{definition}[Symmetric geometric distribution~\cite{GRS09,BV18}] Let $\gamma > 1$. The symmetric geometric distribution $\Geom(\gamma)$ takes integer values such that the probability mass function at $k$ is $\frac{\gamma-1}{\gamma+1} \cdot \gamma^{-|k|}$.
\end{definition}

\begin{lemma}[Geometric Mechanism (see~\cite{FHS22,BV18,GRS09})] \label{lem:geo-mechanism}
The Geometric Mechanism for query $f$ is the function $f'(x) + \Geom(\exp(\epsilon/\Delta_f))$.  The geometric mechanism satisfies $\epsilon$-DP.  Moreover, for every input $x$ and $\sigma \in (0,1)$, the error $|f(x) - f'(x)|$ of the geometric mechanism is at most $O(\frac{\Delta_f}{\epsilon} \cdot \log\frac{1}{\sigma})$ with probability at least $1-\sigma$.
\end{lemma}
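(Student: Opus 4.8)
The plan is to verify the two claims separately: the $\epsilon$-DP guarantee and the high-probability accuracy bound. Both follow from elementary properties of the symmetric geometric distribution, so the argument is essentially a direct calculation; there is no real conceptual obstacle, only some care with the triangle inequality and with summing a geometric series. Throughout, write $\gamma = \exp(\epsilon/\Delta_f)$, let $Z \sim \Geom(\gamma)$, and let the mechanism output on dataset $x$ be the integer-valued random variable $f'(x) := f(x) + Z$, whose probability mass at an integer $t$ equals $\frac{\gamma-1}{\gamma+1}\gamma^{-|t - f(x)|}$. (If $f$ is real-valued rather than integer-valued, the same bounds hold after fixing an integer grid; I would note this but not dwell on it.)

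For privacy, fix neighboring datasets $x, x'$, so $|f(x) - f(x')| \le \Delta_f$ by definition of global sensitivity. Then for every integer $t$,
\[
\frac{\Pr[f'(x) = t]}{\Pr[f'(x') = t]} = \gamma^{\,|t - f(x')| - |t - f(x)|} \le \gamma^{\,|f(x) - f(x')|} \le \gamma^{\Delta_f} = e^{\epsilon},
\]
where the first inequality is the reverse triangle inequality $\bigl|\,|t - f(x')| - |t - f(x)|\,\bigr| \le |f(x) - f(x')|$. Since the output distribution is discrete, summing this pointwise likelihood-ratio bound over $t \in S$ gives $\Pr[f'(x) \in S] \le e^{\epsilon}\,\Pr[f'(x') \in S]$ for every event $S$, i.e. pure $\epsilon$-DP with $\delta = 0$.

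For accuracy, bound the tail of $|Z|$ by a geometric series: for any integer $\kappa \ge 1$,
\[
\Pr[Z \ge \kappa] = \frac{\gamma-1}{\gamma+1}\sum_{k \ge \kappa}\gamma^{-k} = \frac{\gamma-1}{\gamma+1}\cdot\frac{\gamma^{1-\kappa}}{\gamma-1} = \frac{\gamma^{1-\kappa}}{\gamma+1},
\]
and by symmetry of $\Geom(\gamma)$ around $0$, $\Pr[|Z| \ge \kappa] = 2\Pr[Z \ge \kappa] = \frac{2}{\gamma+1}\gamma^{1-\kappa} \le \gamma^{1-\kappa}$ since $\gamma > 1$. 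Choosing $\kappa = 1 + \bigl\lceil \frac{\log(1/\sigma)}{\log\gamma}\bigr\rceil = 1 + \bigl\lceil \frac{\Delta_f}{\epsilon}\log(1/\sigma)\bigr\rceil$ (using $\log\gamma = \epsilon/\Delta_f$ and the paper's convention that $\log$ is the natural logarithm) makes $\gamma^{1-\kappa} \le \sigma$, so with probability at least $1-\sigma$ the additive error $|f(x) - f'(x)| = |Z|$ is at most $\kappa = O\bigl(\frac{\Delta_f}{\epsilon}\log\frac{1}{\sigma}\bigr)$ (absorbing the additive $+1$, which is negligible whenever $\frac{\Delta_f}{\epsilon}\log\frac1\sigma \gtrsim 1$), which is exactly the claimed bound.

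The only steps that require any attention are the directions of the two inequalities: in the privacy computation one must invoke the reverse triangle inequality to see that shifting the argument of $|\cdot|$ by at most $\Delta_f$ changes its value by at most $\Delta_f$, and in the accuracy computation one must correctly evaluate $\sum_{k \ge \kappa}\gamma^{-k} = \gamma^{-\kappa}/(1-\gamma^{-1}) = \gamma^{1-\kappa}/(\gamma-1)$ and simplify. Everything else is bookkeeping, and the statement is standard, appearing in \cite{GRS09,BV18,FHS22}.
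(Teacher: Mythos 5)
Your proof is correct, and the paper itself offers none — Lemma~\ref{lem:geo-mechanism} is stated in Appendix~A as a known fact with citations to \cite{FHS22,BV18,GRS09}, so you are filling in a deliberately omitted standard argument. Both halves check out: the privacy bound via the reverse triangle inequality on $|t-f(x')|-|t-f(x)|$ is exactly the classical argument for the discrete/geometric Laplace mechanism, and the tail calculation $\Pr[|Z|\ge\kappa]=\tfrac{2}{\gamma+1}\gamma^{1-\kappa}\le\gamma^{1-\kappa}$ with $\kappa=\Theta\bigl(\tfrac{\Delta_f}{\epsilon}\log\tfrac1\sigma\bigr)$ is the standard geometric-series bound, matching the claimed $O(\tfrac{\Delta_f}{\epsilon}\log\tfrac1\sigma)$. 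Your caveat about integer-valued $f$ is well placed: the symmetric geometric distribution lives on $\mathbb{Z}$, so the pointwise ratio $\gamma^{|t-f(x')|-|t-f(x)|}$ only makes sense once $f(x)$ and $f(x')$ are integers (which is the case in this paper, where $f$ is always a degree count). The only tiny thing worth flagging is that absorbing the additive $+1$ into the $O(\cdot)$ is harmless here since $\Delta_f\ge 1$ and $\sigma<1$ force $\tfrac{\Delta_f}{\epsilon}\log\tfrac1\sigma$ to dominate for the parameter regimes the paper uses, so the simplification is safe.
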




The following mechanism allows us to add noise to achieve zCDP rather than pure DP.

\begin{lemma}[Gaussian Mechanism for zCDP~\cite{bun2016concentrated}] \label{lem:gaussian}
Let $\rho \in (0,1)$ be arbitrary.  Adding noise drawn from $N(0, \sigma^2)$ is $(\Delta_f^2 / 2\sigma^2)$-zCDP.
\end{lemma}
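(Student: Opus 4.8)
The plan is to verify the zCDP condition directly, by computing the R\'enyi divergence between the two output distributions of the mechanism on an arbitrary pair of edge-neighboring datasets. Fix edge-neighboring $G,G'$ and write $\mu = f(G)$, $\mu' = f(G')$, so that $\mathcal A(G) \sim N(\mu,\sigma^2)$, $\mathcal A(G') \sim N(\mu',\sigma^2)$, and $|\mu - \mu'| \le \Delta_f$ by the definition of global sensitivity. It therefore suffices to establish the closed form
\[
D_\alpha\big(N(\mu,\sigma^2)\,\|\,N(\mu',\sigma^2)\big) \;=\; \frac{\alpha\,(\mu-\mu')^2}{2\sigma^2}
\]
for every order $\alpha$ in the range used by the zCDP definition, since then $D_\alpha(\mathcal A(G)\|\mathcal A(G')) \le \alpha \Delta_f^2/(2\sigma^2)$, which is exactly the inequality defining $\big(\Delta_f^2/(2\sigma^2)\big)$-zCDP, and this holds for all neighboring $G,G'$.

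The heart of the argument is a single Gaussian integral. Using $D_\alpha(P\|Q) = \tfrac{1}{\alpha-1}\log \int p(x)^\alpha q(x)^{1-\alpha}\,dx$ and plugging in the two Gaussian densities, the exponent of the integrand is $-\tfrac{1}{2\sigma^2}\big(\alpha(x-\mu)^2 + (1-\alpha)(x-\mu')^2\big)$. Completing the square in $x$ rewrites this quadratic as $\big(x - (\alpha\mu + (1-\alpha)\mu')\big)^2 + \alpha(1-\alpha)(\mu-\mu')^2$, so the integrand factors as a normalized Gaussian density in $x$ (centered at $\alpha\mu + (1-\alpha)\mu'$, variance $\sigma^2$) times the $x$-independent factor $\exp\!\big(-\tfrac{\alpha(1-\alpha)(\mu-\mu')^2}{2\sigma^2}\big)$. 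Integrating the Gaussian part to $1$ leaves $\int p^\alpha q^{1-\alpha} = \exp\!\big(\tfrac{\alpha(\alpha-1)(\mu-\mu')^2}{2\sigma^2}\big)$, and applying $\tfrac{1}{\alpha-1}\log(\cdot)$ yields the displayed formula. The only points needing (minor) care are that the integral converges — it does for every $\alpha$, since the coefficient of $x^2$ in the exponent is $\alpha + (1-\alpha) = 1$ times $1/(2\sigma^2) > 0$ — and the completing-the-square bookkeeping; both are routine.

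Combining, for all edge-neighboring $G,G'$ and all $\alpha$ we obtain $D_\alpha(\mathcal A(G)\|\mathcal A(G')) = \alpha(\mu-\mu')^2/(2\sigma^2) \le \alpha\Delta_f^2/(2\sigma^2)$, so the mechanism is $\big(\Delta_f^2/(2\sigma^2)\big)$-zCDP. If one instead has a vector-valued $f$ with $\ell_2$-sensitivity $\Delta_f$ and adds $N(0,\sigma^2 I)$, the same conclusion follows: by rotational invariance of the spherical Gaussian the divergence depends on the means only through $\|\mu - \mu'\|_2 \le \Delta_f$, which reduces to the scalar computation along the line through $\mu$ and $\mu'$. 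I do not expect a genuine obstacle here — this is a classical fact and the proof is a short explicit computation; the only thing to stay vigilant about is matching the sign and normalization conventions for $D_\alpha$ (and the admissible range of $\alpha$) used in the zCDP definition stated in the excerpt.
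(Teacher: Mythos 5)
The paper does not prove this lemma: it is stated as a direct citation of Bun and Steinke~\cite{bun2016concentrated}, and no argument appears in the text. Your proof is correct and is in fact the canonical one from that reference — compute the closed-form R\'enyi divergence between equal-variance Gaussians, $D_\alpha\big(N(\mu,\sigma^2)\,\|\,N(\mu',\sigma^2)\big)=\alpha(\mu-\mu')^2/(2\sigma^2)$, bound $|\mu-\mu'|$ (or $\|\mu-\mu'\|_2$ in the vector case, by rotational invariance of the spherical Gaussian) by $\Delta_f$, and read off the $\big(\Delta_f^2/(2\sigma^2)\big)$-zCDP guarantee. Your caution about the admissible range of $\alpha$ is well placed: the paper's own restatement of the zCDP definition reads ``$\alpha\in(0,1)$,'' whereas Bun--Steinke require the bound for $\alpha\in(1,\infty)$; this is evidently a transcription slip in the paper, and since your divergence identity holds for every $\alpha\neq 1$, your argument is unaffected either way.
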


The following standard concentration bound for useful will be useful when analyzing the Gaussian Mechanism.

\begin{lemma} \label{lem:gaussian-concentration}
    Let $X \sim N(0, \sigma^2)$.  Then $\Pr[|X - \E[X]| \geq t] \leq 2\exp(-t^2 / \sigma^2)$
\end{lemma}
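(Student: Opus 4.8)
The final statement is the classical Gaussian (sub-Gaussian) tail bound, and the plan is to establish it by a Chernoff-type moment-generating-function argument followed by a symmetrization step. First I would note that since $X \sim N(0,\sigma^2)$ we have $\E[X] = 0$, so it suffices to bound $\Pr[X \ge t]$ and $\Pr[X \le -t]$ separately and then take a union bound. The key ingredient is the Gaussian moment generating function: for every $\lambda \in \mathbb{R}$, $\E[\exp(\lambda X)] = \exp(\sigma^2 \lambda^2 / 2)$, which I would derive by completing the square inside the Gaussian density integral (or simply quote as standard).

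Next, for any $\lambda > 0$ I would apply Markov's inequality to the nonnegative random variable $\exp(\lambda X)$: $\Pr[X \ge t] = \Pr[\exp(\lambda X) \ge \exp(\lambda t)] \le \exp(-\lambda t)\,\E[\exp(\lambda X)] = \exp\!\left(\tfrac{\sigma^2 \lambda^2}{2} - \lambda t\right)$. The exponent is a convex quadratic in $\lambda$, minimized at $\lambda = t/\sigma^2$, which gives $\Pr[X \ge t] \le \exp(-t^2/(2\sigma^2))$. Since $-X$ has the same law as $X$, the identical bound holds for $\Pr[X \le -t]$, so a union bound yields $\Pr[|X - \E[X]| \ge t] \le 2\exp(-t^2/(2\sigma^2))$, which is the desired inequality up to the harmless constant discussed below.

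I expect essentially no real obstacle here: this is a two-line textbook computation, and in fact one could simply cite it (e.g.\ from~\cite{vershynin}). The only point worth flagging is the exact constant in the exponent: the argument above naturally produces $-t^2/(2\sigma^2)$, whereas the statement is written with $-t^2/\sigma^2$. Since every invocation of this lemma in the paper instantiates it with $t = \Theta(\sqrt{\log n}\,\sigma)$ and absorbs the resulting tail probability into an $O(\cdot)$ term or a ``with high probability'' claim, this factor of $2$ is immaterial, and one may freely state it in whichever form is convenient. If one prefers to avoid the MGF entirely, an alternative route is to bound the tail integral $\int_t^\infty \tfrac{1}{\sqrt{2\pi}\,\sigma}\, e^{-x^2/(2\sigma^2)}\,dx$ directly, using $x/t \ge 1$ on the domain of integration to insert an extra factor of $x$, after which the integral evaluates in closed form to the same bound.
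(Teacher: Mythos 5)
Your MGF/Chernoff argument is correct and is the standard way to prove this bound; the paper itself offers no proof of Lemma~\ref{lem:gaussian-concentration}, simply quoting it as a well-known fact, so there is no paper proof to compare against. You are also right that your calculation yields $\Pr[|X|\ge t] \le 2\exp(-t^2/(2\sigma^2))$ while the lemma is written with $2\exp(-t^2/\sigma^2)$.

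However, you should flag this discrepancy more forcefully than ``a harmless constant one may state in whichever form is convenient.'' Those two expressions are not interchangeable: the paper's form is a strictly \emph{stronger} claim than the true tail bound, and it is in fact false. For $X\sim N(0,1)$ and $t=2$ one has $\Pr[|X|\ge 2]\approx 0.0455$ while $2e^{-4}\approx 0.0366$; the inequality fails for all $t$ beyond roughly $1.9\,\sigma$, which is precisely the regime $t=\Theta(\sqrt{\log n}\,\sigma)$ in which the lemma is invoked in the proof of Lemma~\ref{lem:peeling-approx} (and in the analogous concentration steps in Sections~\ref{sec:weighted} and~\ref{sec:directed}). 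Your substantive conclusion is nevertheless correct: replacing the exponent by the correct $-t^2/(2\sigma^2)$ only rescales each threshold $t$ by $\sqrt{2}$, which is absorbed into the $O(\cdot)$ constants, so every downstream high-probability statement in the paper survives unchanged. The honest conclusion of your write-up should therefore be that the lemma as stated needs its exponent corrected to $-t^2/(2\sigma^2)$ (or to the sharper $\Pr[|X|\ge t]\le e^{-t^2/(2\sigma^2)}$, which holds for all $t\ge 0$), rather than that the two forms are equally acceptable.
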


\subsubsection{Composition Theorems}  
We will need a few different composition theorems.  All of them are standard and well-known, and can be found in standard textbooks~\cite{DR14,Vad17}.  

\begin{theorem}[Adaptive Sequential Composition~\cite{AdSeqComp,mcsherry09,bun2016concentrated}] \label{thm:adseqcomp}
Suppose $M = (M_1, \dots, M_k)$ is a sequence of $(\epsilon, \delta)$-differentially private algorithms, potentially chosen sequentially and adaptively. Then $M$ is $(k\epsilon, k\delta)$-differentially private.  If each $M_i$ is $\rho$-zCDP, then $M$ is $k\rho$-zCDP.
\end{theorem}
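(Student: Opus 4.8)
The plan is to prove both claims by the standard induction on $k$ (see~\cite{DR14} for the $(\epsilon,\delta)$ statement and~\cite{bun2016concentrated} for the zCDP statement), so I will only sketch the argument. The base case $k=1$ is vacuous. For the inductive step, write $M_{1:k}=(M_{1:k-1},M_k)$, where $M_k$ takes the already-released output $y$ of $M_{1:k-1}$ as an auxiliary input; by the inductive hypothesis $M_{1:k-1}$ is $((k-1)\epsilon,(k-1)\delta)$-DP (resp.\ $(k-1)\rho$-zCDP), and for every fixed $y$ the mechanism $M_k(\cdot;y)$ is $(\epsilon,\delta)$-DP (resp.\ $\rho$-zCDP). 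Thus it suffices to prove the two-fold case: composing an $(\epsilon_1,\delta_1)$-DP mechanism with an adaptively chosen $(\epsilon_2,\delta_2)$-DP mechanism yields an $(\epsilon_1+\epsilon_2,\delta_1+\delta_2)$-DP mechanism, and analogously for zCDP with parameters $\rho_1,\rho_2$ adding.

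For the pure case $\delta_1=\delta_2=0$, fix neighboring datasets $X,X'$ and an outcome $(y_1,y_2)$. The density of $M_{1:2}(X)$ at $(y_1,y_2)$ factors as $p_{M_1(X)}(y_1)\cdot p_{M_2(X;y_1)}(y_2)$, and likewise for $X'$; the ratio of the two densities is therefore the product of the likelihood ratio of $M_1$ at $y_1$ (at most $e^{\epsilon_1}$) and the likelihood ratio of $M_2(\cdot;y_1)$ at $y_2$ (at most $e^{\epsilon_2}$), hence at most $e^{\epsilon_1+\epsilon_2}$; integrating this pointwise bound over a measurable event $S$ gives the claim. For $\delta_i>0$ I would first invoke the standard fact that $(\epsilon,\delta)$-DP implies that for every neighboring pair there is an event of probability at least $1-\delta$ under $M(X)$ (and symmetrically under $M(X')$) outside of which the $e^{\epsilon}$ pointwise likelihood-ratio bound holds; applying this to $M_1$ and to $M_2(\cdot;y_1)$, running the pure-DP argument on the intersection of the two good events, and union-bounding the at most $\delta_1+\delta_2$ mass of the bad events yields $(\epsilon_1+\epsilon_2,\delta_1+\delta_2)$-DP. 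Getting the $\delta$-bookkeeping exactly right --- in particular that the good event for $M_2$ is allowed to depend on $y_1$ yet still integrates correctly against the first factor --- is the one genuinely delicate point, and is where I expect to have to be careful.

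For the zCDP statement the key tool is the adaptive composition inequality for R\'enyi divergence from~\cite{bun2016concentrated}: for every order $\alpha$,
\[
D_\alpha\big((M_1(X),M_2(X;\cdot))\,\big\|\,(M_1(X'),M_2(X';\cdot))\big)\ \le\ D_\alpha(M_1(X)\,\|\,M_1(X'))\ +\ \sup_{y_1}D_\alpha(M_2(X;y_1)\,\|\,M_2(X';y_1)),
\]
which follows by expressing the R\'enyi divergence of the joint law as an expectation over $y_1$ of a conditional $\alpha$-moment and bounding it by the worst-case conditional divergence. If each $M_i$ is $\rho$-zCDP then each term on the right is at most $\alpha\rho$, so iterating over the $k$ rounds gives $D_\alpha(M_{1:k}(X)\,\|\,M_{1:k}(X'))\le\alpha k\rho$ for every $\alpha$, i.e.\ $M_{1:k}$ is $k\rho$-zCDP. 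Establishing this displayed chain rule is the main obstacle for this half; the surrounding induction is identical to the one above. Finally, the whole argument carries over to the LEDP model, since there the object that must be differentially private is the full transcript (Definition~\ref{def:LEDP}), and the transcript of a composed protocol is exactly the concatenation of the per-round transcripts, with the $i$-th round playing the role of $M_i$.
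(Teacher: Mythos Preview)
The paper does not actually prove this theorem: it is stated in Appendix~\ref{app:DP} as a standard background result, with citations to \cite{AdSeqComp,mcsherry09,bun2016concentrated} in lieu of a proof. So there is no ``paper's own proof'' to compare against.

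That said, your sketch is the standard argument and is essentially correct. The $(\epsilon,\delta)$ half is exactly the usual two-step reduction (induction to the $k=2$ case, then factor the joint density and bound each likelihood ratio), and your caveat about the $\delta$-bookkeeping is apt: the clean way to handle the dependence of the ``good event'' for $M_2$ on $y_1$ is to write $\Pr[M_{1:2}(X)\in S]$ as an integral over $y_1$ of the conditional probability, apply the $(\epsilon_2,\delta_2)$ bound inside the integral for each fixed $y_1$, and then apply the $(\epsilon_1,\delta_1)$ bound to the resulting integral over $y_1$; the two $\delta$'s then add directly without any subtlety about measurability. For the zCDP half, the displayed chain rule for R\'enyi divergence is precisely Lemma~2.2 (or the adaptive version thereof) in \cite{bun2016concentrated}, and once you have it the induction is immediate. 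Your final remark about LEDP is also correct and matches how the paper uses the theorem.
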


\begin{theorem}[Advanced Composition~\protect{\cite[Corollary 3.21]{DR14}}] \label{thm:adv-composition}
Given target privacy parameters $0 < \epsilon' < 1$ and $\delta' > 0$, to ensure $(\epsilon', k\delta + \delta')$ cumulative privacy loss over $k$ mechanisms, it suffices that each mechanism is $\left(\frac{\epsilon'}{2\sqrt{2k  \log(1/\delta')}}, \delta \right)$-differentially private.
\end{theorem}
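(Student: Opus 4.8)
The plan is to derive this corollary from the \emph{general} advanced composition theorem (\cite[Theorem 3.20]{DR14}): the $k$-fold adaptive composition of $(\epsilon,\delta)$-edge DP mechanisms is $(\tilde\epsilon,\, k\delta+\delta')$-DP, where $\tilde\epsilon = \epsilon\sqrt{2k\log(1/\delta')} + k\epsilon(e^{\epsilon}-1)$. Given that statement, the corollary is pure arithmetic: substituting $\epsilon = \epsilon'/(2\sqrt{2k\log(1/\delta')})$ makes the first term exactly $\epsilon'/2$, and since this $\epsilon$ is at most $1$ we may use $e^{\epsilon}-1 \le 2\epsilon$ to bound the second term by $2k\epsilon^2 = \epsilon'^2/(4\log(1/\delta'))$, which is at most $\epsilon'/2$ as long as $\epsilon' \le 2\log(1/\delta')$ (true in the intended regime, since $\epsilon' < 1$ and $\delta'$ is small). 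Adding the two bounds gives $\tilde\epsilon \le \epsilon'$, while the additive term $k\delta+\delta'$ is exactly as claimed.

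So the real content is proving the general theorem, for which I would follow the privacy-loss-random-variable approach. Fix a pair of edge-neighboring graphs $G, G'$ and the adaptively chosen sequence $M_1,\dots,M_k$; write $h_{<i}$ for the transcript of the first $i-1$ outputs. First treat the pure case $\delta = 0$. Let $Z_i$ be the log-ratio of the probability that $M_i$ (run on $G$, conditioned on $h_{<i}$) outputs $o_i$ to the probability that $M_i$ (run on $G'$, conditioned on $h_{<i}$) outputs $o_i$, where $o_i$ is drawn from the former distribution. Pure $\epsilon$-DP gives the pointwise bound $|Z_i| \le \epsilon$, and a short calculation bounding a KL-type quantity between two distributions whose density ratio lies in $[e^{-\epsilon},e^{\epsilon}]$ gives $\E[Z_i \mid h_{<i}] \le \epsilon(e^{\epsilon}-1)$. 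Then $Y_j = \sum_{i\le j}\big(Z_i - \E[Z_i\mid h_{<i}]\big)$ is a martingale with increments of bounded range, so Azuma's inequality gives $\Pr\!\big[\sum_{i} Z_i > k\epsilon(e^{\epsilon}-1) + t\big] \le \exp(-t^2/(2k\epsilon^2))$ (using the recentering of~\cite{DR14} to get the sharp constant); taking $t = \epsilon\sqrt{2k\log(1/\delta')}$ makes the right-hand side $\delta'$. Since the total privacy loss of the composed mechanism between $G$ and $G'$ is exactly $\sum_i Z_i$, and it exceeds $\tilde\epsilon$ with probability at most $\delta'$, the standard equivalence between a bounded privacy-loss distribution and $(\tilde\epsilon,\delta')$-DP finishes the pure case.

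To upgrade from $\delta = 0$ to general $\delta$, I would use the standard decomposition (cf.\ \cite[Lemma 3.17]{DR14}): for an $(\epsilon,\delta)$-DP mechanism $M$, the output distribution of $M$ on $G$ can be written as a mixture $(1-\delta')P_{\mathrm{good}} + \delta' P_{\mathrm{bad}}$ with $\delta' \le \delta$, where $P_{\mathrm{good}}$ has density ratio in $[e^{-\epsilon},e^{\epsilon}]$ against the corresponding mixture for $G'$. Running the martingale argument on the ``good'' components of all $k$ mechanisms and union-bounding over the event that some mechanism lands in its ``bad'' component (total probability $\le k\delta$) produces the extra additive $k\delta$, giving $(\tilde\epsilon,\, k\delta+\delta')$-DP.

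The main obstacle is handling adaptivity and the $(\epsilon,\delta)$-to-composition step cleanly and simultaneously: one must carry the conditioning on $h_{<i}$ through the decomposition lemma, ensure the ``bad'' events are jointly union-boundable across adaptively chosen rounds, and apply the recentering needed for the sharp $\sqrt{2k\log(1/\delta')}$ constant (rather than a weaker black-box Hoeffding bound) correctly. The first layer — the arithmetic reduction — is routine once the admissible range of $(\epsilon',\delta')$ is pinned down; since both layers are precisely \cite[Theorem 3.20 and Corollary 3.21]{DR14}, in practice I would cite that proof and only reproduce the arithmetic of the reduction.
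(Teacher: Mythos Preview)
Your proposal is correct and follows exactly the standard derivation in \cite[Corollary~3.21]{DR14}; the paper itself does not prove this statement but simply cites it as a known composition theorem, so there is nothing further to compare.
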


\begin{theorem}[Parallel Composition~\cite{mcsherry09}] \label{thm:parallel-comp}
Let $M_i$ each provide $(\epsilon, \delta)$-DP.  Let $D_i$ be arbitrary disjoint subsets of the input domain $D$, and let $X \subseteq D$ be a database.  Then the sequence of $M_i(X \cap D_i)$ is $(\epsilon,\delta)$-differentially private.  If each $M_i$ is $\rho$-zCDP, then the sequence of $M_i(X \cap D_i)$ is $\rho$-zCDP.
\end{theorem}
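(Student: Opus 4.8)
The plan is to prove both halves from the single structural fact that separates parallel composition from sequential composition: because the parts $D_i$ are \emph{disjoint}, changing one element of the database touches at most one part. Concretely, let $X, X'$ be neighboring databases differing in a single element $z$ (in the graph applications $z$ is a single edge). If $z$ lies in no $D_i$, then $X \cap D_i = X' \cap D_i$ for every $i$, so the tuples $(M_i(X \cap D_i))_i$ and $(M_i(X' \cap D_i))_i$ are identically distributed and both claims are immediate. Otherwise there is a unique index $i_0$ with $z \in D_{i_0}$; then $X \cap D_i = X' \cap D_i$ for all $i \neq i_0$, while $X \cap D_{i_0}$ and $X' \cap D_{i_0}$ differ in exactly one element and hence are themselves neighboring. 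Since the mechanisms $M_i$ act on the disjoint inputs $X \cap D_i$ using independent random coins, the joint output law is the product $P := \prod_i P_i$ with $P_i = \mathrm{law}(M_i(X \cap D_i))$, and similarly $Q := \prod_i Q_i$ for $X'$; by the above, $P_i = Q_i$ for every $i \neq i_0$.

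For the $(\epsilon,\delta)$-DP part I would fix a measurable event $S$ in the product range and condition on the output coordinates $y = (y_i)_{i \neq i_0}$, which have the same law under $P$ and under $Q$. Writing $S_y = \{w : (w, y) \in S\}$ for the section of $S$ in the $i_0$-th coordinate, independence gives $\Pr_{P}[S] = \E_{y}\big[P_{i_0}(S_y)\big]$ and $\Pr_{Q}[S] = \E_{y}\big[Q_{i_0}(S_y)\big]$, with the expectation taken over the common law of $y$. Applying the $(\epsilon,\delta)$-DP guarantee of $M_{i_0}$ on the neighboring inputs $X \cap D_{i_0}$, $X' \cap D_{i_0}$ pointwise in $y$ yields $P_{i_0}(S_y) \leq e^{\epsilon} Q_{i_0}(S_y) + \delta$, and taking expectations over $y$ gives $\Pr_P[S] \leq e^{\epsilon}\Pr_Q[S] + \delta$. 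The reverse inequality is symmetric, so the tuple of outputs is $(\epsilon,\delta)$-DP.

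For the zCDP part I would invoke additivity of the R\'enyi divergence across product distributions: for any order $\alpha \neq 1$ and product measures $P = \prod_i P_i$, $Q = \prod_i Q_i$, the likelihood ratio $dP/dQ$ factors as $\prod_i dP_i/dQ_i$, so the defining integral $\int (dP/dQ)^{\alpha}\, dQ$ factors into a product of the per-coordinate integrals and hence $D_\alpha(P\|Q) = \sum_i D_\alpha(P_i\|Q_i)$. Since $P_i = Q_i$ forces $D_\alpha(P_i\|Q_i) = 0$ for each $i \neq i_0$, while $\rho$-zCDP of $M_{i_0}$ on the neighboring inputs $X \cap D_{i_0}$, $X' \cap D_{i_0}$ gives $D_\alpha(P_{i_0}\|Q_{i_0}) \leq \alpha\rho$, we obtain $D_\alpha(P\|Q) \leq \alpha\rho$ for all $\alpha \in (0,1)$, which is exactly the $\rho$-zCDP condition for the output tuple.

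There is no genuine obstacle here: the only conceptual move is the reduction, via disjointness, to a single changed mechanism composed with independent and unchanged outputs of the others, after which both cases follow from the definition of DP (respectively, from product-additivity of $D_\alpha$) applied to that one mechanism; the rest is routine measure-theoretic bookkeeping (sections of product events and an application of Fubini/independence). The one point worth stating carefully in the graph setting is that the ``single changed element'' is a single edge and that the collection $\{D_i\}$ genuinely assigns each edge to at most one part — precisely what the callers verify (e.g., in the Peeling algorithm, Algorithm~\ref{alg:peeling}, each edge $\{u,v\}$ with $u \prec_\sigma v$ contributes only to $q(\sigma)_v$).
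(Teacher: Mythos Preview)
Your proof is correct and is the standard argument for parallel composition. Note, however, that the paper does not actually prove this theorem: it is stated in the appendix as a known result (cited from~\cite{mcsherry09} and~\cite{bun2016concentrated}) and used as a black box throughout. So there is no ``paper's own proof'' to compare against; your write-up simply supplies the routine justification the paper omits.
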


In a few places we will use a slight extension of parallel composition: if every element of $D$ is in at most \emph{two} of the $D_i$ sets (rather than at most $1$, in the disjoint setting), then the sequence of $(M_i(X \cap D_i))$ is $(2\epsilon, 2\delta)$-DP.  This is a straightforward consequence of Theorem~\ref{thm:adseqcomp} and Theorem~\ref{thm:parallel-comp}.

\begin{theorem}[Post-Processing] \label{thm:post-processing}
Let $M$ be an $(\epsilon, \delta)$-differentially private mechanism, and let $f$ be a randomized or deterministic function whose domain is the range of $M$.  Then the randomized function $g(x) =f(M(x))$ is $(\epsilon, \delta)$-differentially private.  If $M$ is $\rho$-zCDP, then $g$ is $\rho$-zCDP.
\end{theorem}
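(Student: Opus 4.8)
The plan is to prove both halves by the same two-step reduction: first dispatch the case where $f$ is deterministic, then lift to randomized $f$ by conditioning on its internal coins. For the $(\epsilon,\delta)$-DP statement, fix any pair of neighboring databases $x,x'$ and any measurable event $S$ in the range of $g$. When $f$ is deterministic, set $T = f^{-1}(S) = \{y \in R(M) : f(y) \in S\}$, which is a measurable subset of the range of $M$ by the standing assumption that $f$ is measurable. Then $g(x) \in S$ if and only if $M(x) \in T$, so $\Pr[g(x) \in S] = \Pr[M(x) \in T]$ and likewise for $x'$; applying the $(\epsilon,\delta)$-DP guarantee of $M$ to the event $T$ gives $\Pr[M(x) \in T] \le e^{\epsilon}\Pr[M(x') \in T] + \delta$, which is exactly $\Pr[g(x) \in S] \le e^{\epsilon}\Pr[g(x') \in S] + \delta$.

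To remove the determinism assumption, I would write $f(y) = f_\omega(y)$ where $\omega$ is $f$'s internal random string, drawn independently of $M$, and $f_\omega(\cdot)$ is deterministic for each fixed $\omega$. By the previous paragraph, for every fixed $\omega$ the map $x \mapsto f_\omega(M(x))$ is $(\epsilon,\delta)$-DP, hence $\Pr[f_\omega(M(x)) \in S] \le e^{\epsilon}\Pr[f_\omega(M(x')) \in S] + \delta$. Because $\omega$ is independent of $M$, we have $\Pr[g(x) \in S] = \E_\omega[\Pr[f_\omega(M(x)) \in S]]$, so taking expectation over $\omega$ on both sides and using linearity of expectation preserves the inequality, establishing that $g$ is $(\epsilon,\delta)$-DP.

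For the zCDP statement I would invoke the data-processing inequality for R\'enyi divergence. A post-processing map (randomized or deterministic) is a Markov kernel $K$, and for any two distributions $P,Q$ and any order $\alpha$ one has $D_\alpha(K \circ P \,\|\, K \circ Q) \le D_\alpha(P \,\|\, Q)$. Since $K \circ M(x)$ is the distribution of $g(x)$, this gives, for every $\alpha \in (0,1)$, $D_\alpha(g(x) \,\|\, g(x')) \le D_\alpha(M(x) \,\|\, M(x')) \le \alpha\rho$, where the last inequality is precisely the $\rho$-zCDP hypothesis on $M$. As this holds for all neighboring $x,x'$ and all $\alpha \in (0,1)$, $g$ is $\rho$-zCDP. (Alternatively, one could reduce $\rho$-zCDP to the family of $(\epsilon,\delta)$ guarantees it implies and reuse the first half, but directly citing the data-processing inequality is cleaner.)

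There is no genuine obstacle here — this is a standard closure property — and the only points needing care are: the measurability of $f^{-1}(S)$, which is handled by treating $f$ as a measurable function (or by restricting to a countably generated $\sigma$-algebra on the output space); the independence of the post-processing randomness from the output of $M$, which is what makes the averaging-over-$\omega$ step valid; and, for the second half, correctly quoting the data-processing inequality for R\'enyi divergence rather than re-deriving it.
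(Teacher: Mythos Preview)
The paper does not actually prove this theorem: it is listed in Appendix~\ref{app:DP} among a collection of ``standard and well-known'' composition and closure results, with no argument given beyond citations to \cite{DR14,Vad17} and \cite{bun2016concentrated}. Your write-up is exactly the standard textbook proof --- preimage under deterministic $f$, averaging over the independent internal coins for randomized $f$, and the data-processing inequality for R\'enyi divergence for the zCDP half --- so you have correctly supplied what the paper merely cites.
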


\subsubsection{Connection to LEDP}  
All of the previous composition theorems are phrased in terms of centralized differential privacy.  It is not hard, however, to see that they continue to hold in the LEDP model.

To see this, consider an algorithm in the LEDP model: an untrusted curator knowing only public information ($V$, and whatever is output by the nodes and the curator itself along the way), and an algorithm at each node that has access to the public information and its incident edges.  Suppose that each node runs an $(\epsilon, \delta)$-edge DP algorithm on its incident edges.  Then when the computation for this round is viewed as a whole, every edge is in the private information of exactly two nodes, and hence parallel composition and sequential composition imply that the combined output of a round (the collection of all outputs from all nodes) is $(2\epsilon, 2\delta)$-edge DP.  The computation done at the curator is simply post-processing (since it does not access any private information directly), so by Theorem~\ref{thm:post-processing} we can view each round as a $(2\epsilon, 2\delta)$-edge DP algorithm in the traditional centralized DP model, and then sequential composition and advanced composition apply as usual.  Since we will not care about constant factors in the privacy parameters, we will simply treat each round as being $(\epsilon, \delta)$-edge DP (we can always go back and set our true privacy parameters to half of whatever we would set them to).

\section{Proofs from Section~\ref{sec:DP-hedge}} \label{app:DP-hedge}

\begin{proof}[Proof of Theorem~\ref{thm:MWU-main}]
As in the standard analysis of the Multiplicative Weights method, we use $\Phi^{(t)} = \log(\sum_{i=1}^n w^{(t)}_i)$ as a potential function and track its evolution over time. We have
\begin{align*}
\Phi^{(t+1)} = \log \left(\langle \exp(-\eta \hat{m}^{(t)}), w^{(t)}\rangle\right) = \Phi^{(t)} + \log\left( \langle \exp(-\eta \hat{m}^{(t)}), p^{(t)}\rangle\right) 
\end{align*}
In the following, let $\vec{1}$ denote the all 1's vector, and for a vector $v$, let $v^2$ denote the vector obtained by squaring $v$ coordinate-wise. Let ${\E}^{(t)}[\cdot]$ denote the expectation conditioned on all the randomness up to and including round $t$. Then we have
\begin{align*}
{\E}^{(t)}[\Phi^{(t+1)}] &= {\E}^{(t)}\left[\Phi^{(t)} + \log\left(\langle \exp(-\eta \hat{m}^{(t)}), p^{(t)}\rangle \right)\right]\\
&\leq \Phi^{(t)} + \log {\E}^{(t)}\left[\left(\langle \exp(-\eta \hat{m}^{(t)}), p^{(t)}\rangle \right)\right] \qquad \text{(Jensen's inequality: $\log \E[X] \geq \E[\log X]$)}\\
&= \Phi^{(t)} + \log \left(\left\langle {\E}^{(t)}\left[\exp(-\eta \hat{m}^{(t)})\right], p^{(t)}\right\rangle \right) \qquad \text{($p^{(t)}$ is deterministic conditioned on the past)}\\
&= \Phi^{(t)} + \log\left(\langle \exp\left(-\eta m^{(t)} + \tfrac{\eta^2\nu^2}{2}\vec{1}\right), p^{(t)}\rangle\right) \qquad \text{(using the formula for the mgf of a normal distribution)} \\
&\leq \Phi^{(t)} + \log\left(\left\langle \vec{1} -\eta m^{(t)} + \tfrac{\eta^2\nu^2}{2}\vec{1} + (-\eta m^{(t)} + \tfrac{\eta^2\nu^2}{2}\vec{1})^2, p^{(t)}\right\rangle \right) \\
& \qquad \text{(since $|m^{(t)}_i|, \nu, \eta \in [0, 1]$, and using the fact that $\exp(x) \leq 1 + x + x^2$ for $|x| \leq 1.5$)} \\
&\leq \Phi^{(t)} + \log(1 - \eta \langle m^{(t)}, p^{(t)}\rangle + 3\eta^2) \qquad \text{(since $|m^{(t)}_i|, \nu, \eta \in [0, 1]$)} \\
&\leq \Phi^{(t)} - \eta \langle m^{(t)}, p^{(t)}\rangle + 3\eta^2, 
\end{align*}
where the final inequality follows from the fact that $\log(1+x) \leq x$ for all $x > -1$. Taking expectations on both sides of the above inequality over the randomness up to and including round $t$, we get
\[\E[\Phi^{(t+1)}] \leq \E[\Phi^{(t)}] - \eta \E[\langle m^{(t)}, p^{(t)}\rangle] + 3\eta^2.\]
Thus, by induction, using the fact that $\Phi^{(1)} = \log(n)$, we have
\[\E[\Phi^{(T+1)}] \leq \log(n) - \eta \sum_{t=1}^T \E[\langle m^{(t)}, p^{(t)}\rangle] + 3\eta^2 T.\]
On the other hand, for any given index $i$, we have
\[\Phi^{(T+1)} = \log \left(\textstyle\sum_{i'} \exp\left(\textstyle\sum_{t=1}^T -\eta \hat{m}^{(t)}_{i'}\right)\right) \geq -\eta \sum_{t=1}^T \hat{m}^{(t)}_i,\]
and hence
\[\E[\Phi^{(T+1)}] \geq -\eta \E[\textstyle \sum_{t=1}^T \hat{m}^{(t)}_i] = -\eta \E[\textstyle \sum_{t=1}^T m^{(t)}_i].\]
Putting the above inequalities together, and simplifying, we get 
\[
\E\left[\sum_{t=1}^T \langle m^{(t)}, p^{(t)} \rangle\right] \leq \E\left[\sum_{t=1}^T m_i^{(t)}\right] + 3\eta T + \frac{\log n}{\eta}.
\]
Using the value $\eta = \sqrt{\frac{\log n}{T}}$, we get the stated regret bound.
\end{proof}

\iflong \else

\section{Proofs From Section~\ref{sec:LEDP}} \label{app:LEDP}

\subsection{Proofs from Section~\ref{sec:peeling}}

\begin{proof}[Proof of Lemma~\ref{lem:peeling-DP}]
    Since every edge can contribute to  $q(\sigma)_v$ for exactly one $v$, the Gaussian mechanism (Lemma~\ref{lem:gaussian}) and parallel composition (Theorem~\ref{thm:parallel-comp}) imply that step 2 is $(\epsilon, \delta)$-DP.  Steps 3 and 4 are post-processing, so we can apply Theorem~\ref{thm:post-processing} to get that the entire algorithm is $(\epsilon, \delta)$-DP.  Note that this algorithm is implementable in the local model, so is $(\epsilon,\delta)$-LEDP. In fact, step 2 is local and the other steps are only aggregation and post-processing steps.
\end{proof}

\begin{proof}[Proof of Lemma~\ref{lem:peeling-approx}]
    Consider some $x \in V$.  Let $N = \sum_{v \in S^{\sigma}_x} N_v$ be the total noise added to nodes in $S^{\sigma}_x$.  Then $N$ is distributed as $N(0, |S^{\sigma}_x| 4 \log(1.25/\delta) / \epsilon^2)$.  So Lemma~\ref{lem:gaussian-concentration} implies that
    \begin{align*}
    \Pr[|N| \geq \sqrt{(1+c)\log n}  \sqrt{|S^{\sigma}_x|} 2 \sqrt{\log(1.25/\delta)} / \epsilon] &\leq 2\cdot \exp\left(-\frac{\left(\sqrt{(1+c)\log n} \sqrt{|S^{\sigma}_x|} 2 \sqrt{\log(1.25/\delta)} / \epsilon\right)^2}{|S^{\sigma}_x| 4 \log(1.25/\delta) / \epsilon^2}\right) \\
    &= 2 \cdot \exp(-c \log n) = 2n^{-(1+c)}.
    \end{align*}
    Note that 
    \begin{align*}
        \widehat \density(S^{\sigma}_x) &= \frac{\sum_{v \preceq_{\sigma} x} \hat q(\sigma)_v}{|S^{\sigma}_x|} = \frac{\sum_{v \preceq_{\sigma} x} (q(\sigma)_x + N_x)}{|S^{\sigma}_x|} = \frac{|E^{\sigma}_x| + N}{|S^{\sigma}_x|} = \density(S^{\sigma}_x) + \frac{N}{|S^{\sigma}_x|}.
    \end{align*}
    Hence we have that
    \begin{align*}
        \Pr\left[|\widehat \density(S^{\sigma}_x) - \density(S^{\sigma}_x)| \geq \frac{2\sqrt{c\log n}   \sqrt{\log(1.25/\delta)}}{\epsilon \sqrt{|S^{\sigma}_x|}}\right] \leq 2n^{-(1+c)}
    \end{align*}
    Taking a union bound over all $x \in V$ implies that with probability at least $1-2n^{-c}$, we have $|\widehat \density(S^{\sigma}_x) - \density(S^{\sigma}_x)| \leq O\left(\frac{\sqrt{\log n \cdot \log(1/\delta)}}{\epsilon}\right)$ for all $x \in V$.  This clearly implies the lemma, since every prefix has estimated density within $O\left(\frac{\sqrt{\log n \cdot \log(1/\delta)}}{\epsilon}\right)$ of its true density and the algorithm returns the prefix with the highest estimated density.
\end{proof}

\subsection{Proofs from Section~\ref{sec:alg-main}}
\begin{proof}[Proof of Lemma~\ref{lem:privacy-overall}]
    Note that DSG-LEDP runs $c\log_2 n$ copies of DSG-LEDP-core followed by a run of Peeling. We show now the DSG-LEDP-core and Peeling parts separately satisfy $(\frac{\epsilon}{2}, \frac{\delta}{2})$-LEDP overall, thus implying that DSG-LEDP satisfies $(\epsilon, \delta)$-LEDP.
    
    We first consider the DSG-LEDP-core part. Each DSG-LEDP-core internally runs $T$ iterations which access private information. Thus, there are $c \log_2(n) T$ iterations in total which access private information. The advanced composition theorem (Theorem~\ref{thm:adv-composition}) implies that it is sufficient for every iteration to be $\left(\epsilon' =  \frac{\epsilon}{4\sqrt{2Tc \log(n)  \log(4/\delta)}}, \delta' = \frac{\delta}{4Tc \log(n) }\right)$-DP for all the DSG-LEDP-cores runs to be $(\frac{\epsilon}{2}, \frac{\delta}{2})$-DP.  Since the sensitivity of the degree is $1$ (as each edge is oriented), combining parallel composition (Theorem~\ref{thm:parallel-comp}) with standard bounds on the Gaussian mechanism (Lemma~\ref{lem:gaussian}) implies that we can achieve this by using the Gaussian Mechanism at each node with standard deviation of 
\begin{align*}
    \frac{\sqrt{2\log(1.25 / \delta')}}{\epsilon'} &= \sqrt{2\log(5 T c\log(n)/\delta)} \cdot \frac{4\sqrt{2Tc \log(n) \log(4/\delta)}}{\epsilon} \leq \frac{C \sqrt{T\log(n)\log^2(T/\delta) }}{\epsilon}
\end{align*}
for a sufficiently large constant $C$ depending on $c$. This gives the choice of $\tau$ in Algorithm~\ref{alg:core}.

Furthermore it is easy to note that DSG-LEDP-core only does local operations. In fact, in step 5 each node computes a noisy estimate of its $q(\pi^{(t)})$ and all the remaining steps are post-processing done by a central coordinator. Thus the algorithm is $(\frac{\epsilon}{2}, \frac{\delta}{2})$-LEDP.

Turning to the $c\log_2 n$ runs of Peeling: each run is $\left(\frac{\epsilon}{4\sqrt{2c\log(n) \log(4/\delta)}}, \frac{\delta}{4c\log n}\right)$-LEDP due to Lemma~\ref{lem:peeling-DP}. Thus, by the advanced composition theorem (Theorem~\ref{thm:adv-composition}), all the runs put together are $(\frac{\epsilon}{2}, \frac{\delta}{2})$-LEDP.

Combining the above privacy guarantees, the proof is complete.
\end{proof}

\subsection{Proofs from Section~\ref{sec:MWU-private}}

\begin{proof}[Proof of Theorem~\ref{thm:MWU-main}]
As in the standard analysis of the Multiplicative Weights method, we use $\Phi^{(t)} = \log(\sum_i w^{(t)}_i)$ as a potential function and track its evolution over time. We have
\begin{align*}
\Phi^{(t+1)} = \log \left(\langle \exp(-\eta \hat{m}^{(t)}), w^{(t)}\rangle\right) = \Phi^{(t)} + \log\left( \langle \exp(-\eta \hat{m}^{(t)}), p^{(t)}\rangle\right) 
\end{align*}
In the following, let $\vec{1}$ denote the all 1's vector, and for a vector $v$, let $v^2$ denote the vector obtained by squaring $v$ coordinate-wise. Let ${\E}^{(t)}[\cdot]$ denote the expectation conditioned on all the randomness up to and including round $t$. Then we have
\begin{align*}
{\E}^{(t)}[\Phi^{(t+1)}] &= {\E}^{(t)}\left[\Phi^{(t)} + \log\left(\langle \exp(-\eta \hat{m}^{(t)}), p^{(t)}\rangle \right)\right]\\
&\leq \Phi^{(t)} + \log {\E}^{(t)}\left[\left(\langle \exp(-\eta \hat{m}^{(t)}), p^{(t)}\rangle \right)\right] \qquad \text{(Jensen's inequality: $\log \E[X] \geq \E[\log X]$)}\\
&= \Phi^{(t)} + \log \left(\left\langle {\E}^{(t)}\left[\exp(-\eta \hat{m}^{(t)})\right], p^{(t)}\right\rangle \right) \qquad \text{($p^{(t)}$ is deterministic conditioned on the past)}\\
&= \Phi^{(t)} + \log\left(\langle \exp\left(-\eta m^{(t)} + \tfrac{\eta^2\varsigma^2}{2}\vec{1}\right), p^{(t)}\rangle\right) \qquad \text{(using the formula for the mgf of a normal distribution)} \\
&\leq \Phi^{(t)} + \log\left(\left\langle \vec{1} -\eta m^{(t)} + \tfrac{\eta^2\varsigma^2}{2}\vec{1} + (-\eta m^{(t)} + \tfrac{\eta^2\varsigma^2}{2}\vec{1})^2, p^{(t)}\right\rangle \right) \\
& \qquad \text{(since $|m^{(t)}_i|, \varsigma, \eta \in [0, 1]$, and using the fact that $\exp(x) \leq 1 + x + x^2$ for $|x| \leq 1.5$)} \\
&\leq \Phi^{(t)} + \log(1 - \eta \langle m^{(t)}, p^{(t)}\rangle + 3\eta^2) \qquad \text{(since $|m^{(t)}_i|, \varsigma, \eta \in [0, 1]$)} \\
&\leq \Phi^{(t)} - \eta \langle m^{(t)}, p^{(t)}\rangle + 3\eta^2, 
\end{align*}
where the final inequality follows from the fact that $\log(1+x) \leq x$ for all $x > -1$. Taking expectations on both sides of the above inequality over the randomness up to and including round $t$, we get
\[\E[\Phi^{(t+1)}] \leq \E[\Phi^{(t)}] - \eta \E[\langle m^{(t)}, p^{(t)}\rangle] + 3\eta^2.\]
Thus, by induction, using the fact that $\Phi^{(1)} = \log(n)$, we have
\[\E[\Phi^{(T+1)}] \leq \log(n) - \eta \sum_t \E[\langle m^{(t)}, p^{(t)}\rangle] + 3\eta^2 T.\]
On the other hand, for any given index $i$, we have
\[\Phi^{(T+1)} = \log \left(\textstyle\sum_{i'} \exp\left(\textstyle\sum_t -\eta \hat{m}^{(t)}_{i'}\right)\right) \geq -\eta \sum_t \hat{m}^{(t)}_i,\]
and hence
\[\E[\Phi^{(T+1)}] \geq -\eta \E[\textstyle \sum_t \hat{m}^{(t)}_i] = -\eta \E[\textstyle \sum_t m^{(t)}_i].\]
Putting the above inequalities together, and simplifying, we get the stated bound.
\end{proof}

We will use the following result from~\cite{CQT22}.

\begin{lemma}[Lemma 4.3 of~\cite{CQT22}] \label{lem:primal-to-discrete}
Given a feasible solution $x$ to the primal LP, there exists a $\tau \in [0,1]$ such that the set $S_{\tau} =\{v \in V : x_v \geq \tau\}$ has density at least $\lambda^* / \sum_{v \in V} x_v$.
\end{lemma}

This allows us to prove Theorem~\ref{thm:NOP-MWU}.

\begin{proof}[Proof of Theorem~\ref{thm:NOP-MWU}]
It is easy to see from the definition of Noisy-Order-Packing-MWU that it returns $(w, \sigma)$ such that $\sigma$ is just $V$ in nonincreasing order of $w$.

If $\alpha \geq 1/2$ then $(1-2\alpha)\lambda^* \leq 0$, and so the theorem is trivially true; any permutation $\sigma$ will work.  On the other hand, if $\alpha < 1/2$ then we can combine Theorem~\ref{thm:primal-solve-robust} and Lemma~\ref{lem:primal-to-discrete} to get that with probability at least $1/2$, Noisy-Order-Packing-MWU returns weights $w$ such that there is a $\tau \in [0,1]$ where the set $S_{\tau} = \left\{v \in V : \frac{(1+2\alpha) w_v}{\sum_{u \in V} w_u} \geq \tau \right\}$ has density at least 
\begin{align*}
\frac{\lambda^*}{1+2\alpha} &\geq \lambda^*(1 - 2\alpha).
\end{align*}
Since $\sigma$ is just non-increasing order of $w$, this implies that there is some prefix of $\sigma$ with the same density, i.e., $\density(S^*_{\sigma}) \geq (1-2\alpha)\lambda^*$ as claimed.
\end{proof}

\subsection{Proofs from Section~\ref{sec:true-algorithm-analysis}}

\begin{proof}[Proof of Theorem~\ref{thm:alg-relationship}]
    We use induction on $t$.  This is obviously true for $t=1$, since $w_v^{(1)} = \ell_v^{(1)} = 1$ and both algorithms use the same consistent tiebreaking.  
    
    Now consider some $t > 1$.  By definition of the weight updates in Noisy-Order-Packing-MWU, we know that 
    \begin{align*}
        w^{(t)}_v &= \prod_{i=1}^{t-1} e^{-\eta \widehat m_v^{(i)}} = \exp\left(-\eta \sum_{i=1}^t \widehat m_v^{(i)}\right) = \exp\left( -\eta \sum_{i=1}^{t-1} \frac{1}{\rho}\left( 1 - \frac{q(\sigma^{(i)})_v + N_v^{(i)}}{\lambda^*}\right) \right) \\
        &= \exp\left(-\frac{\eta}{\rho} (t-1) + \eta \sum_{i=1}^{t-1} \frac{q(\sigma^{(i)})_v + N_v^{(i)}}{\rho\lambda^*}\right) \\
        &= \exp\left(-\frac{\eta}{\rho} (t-1) + \eta \sum_{i=1}^{t-1} \frac{q(\pi^{(i)})_v + N_v^{(i)}}{\rho\lambda^*}\right) \tag{induction} \\
        &= \exp\left(-\frac{\eta}{\rho} (t-1) + \eta \sum_{i=1}^{t-1} \frac{\hat q(\pi^{(i)})_v}{\rho\lambda^*}\right) \tag{def of $\hat q(\pi^{(i)})_v$} \\
        &= \exp\left(-\frac{\eta}{\rho} (t-1) + \frac{\eta}{\rho\lambda^*} \ell_v^{(t)} \right). \tag{def of $\ell_v^{(t)}$}
    \end{align*}

  Since $\eta, \rho$, and $\lambda^*$ are independent of $v$, this means that $w_v^{(t)} < w_{v'}^{(t)}$ if and only if $\ell_v^{(t)} < \ell_{v'}^{(t)}$.  Since $\sigma^{(t)}$ is by definition the ordering of $V$ in non-increasing order of $w_v^{(t)}$, and $\pi^{(t)}$ is the ordering of $V$ in non-increasing order of $\ell_v^{(t)}$, and we break ties in the same consistent way in both algorithms, this implies that $\sigma^{(t)} = \pi^{(t)}$. 
\end{proof}

\begin{proof}[Proof of Theorem~\ref{thm:dsg-ledp-utility}]
    Since DSG-LEDP-core chooses a random $t \in [T]$ and returns the permutation from it, Theorem~\ref{thm:alg-relationship} and Theorem~\ref{thm:NOP-MWU} imply that with probability at least $1/2$, DSG-LEDP-core returns a permutation $\sigma$ with 
    \[\density(S^*_{\sigma}) \geq (1-2\alpha)\lambda^* \geq \lambda^* - O\left(\frac{\log(n) \log(n/\delta)}{\epsilon}\right),\]
    where the second inequality follows since
    \[
    \alpha = 8\rho \sqrt{\frac{\log n}{T}} 
    = O\left(\frac{\tau}{\lambda^*} \sqrt{\frac{\log n}{T}}\right) 
    = O\left(\frac{\sqrt{T\log(n)\log^2(T/\delta) }}{\lambda^*\epsilon} \sqrt{\frac{\log n}{T}}\right)
    = O\left(\frac{\log(n) \log(n/\delta)}{\lambda^* \epsilon}\right).
    \]
     Since DSG-LEDP runs $c \log_2 n$ independent copies of DSG-LEDP-core, we conclude that with probability at least $1 - n^{-c}$, there is at least one index $i \in [c \log_2 n]$ in which $\density(S^*_{\pi^{(i)}}) \geq \lambda^* - O\left( \frac{\log(n) \log(n/\delta) }{\epsilon}\right)$.  We do not know which iteration this is, but DSG-LEDP then runs Peeling (Algorithm~\ref{alg:peeling}) on each of these permutations. By choosing appropriate constants in the $O(\cdot)$ notation in the statement of Lemma~\ref{lem:peeling-approx}, we conclude that with probability at least $1-2n^{-c}$, for all $i \in [c \log_2 n]$, the set $S^{(i)}$ that we get from calling Peeling has both true and estimated density within $O\left(\frac{\log n \cdot \sqrt{\log(1/\delta)} \cdot \sqrt{\log((\log n) / \delta)}}{\epsilon} \right)$ of $S^*_{\pi^{(i)}}$.  Thus with probability at least $1-3n^{-c}$, we have
    \begin{align*}
        \density(S) &\geq \max_{i \in [c \log n]} \density(S^*_{\pi^{(i)}}) - O\left(\frac{\log n \cdot \sqrt{\log(1/\delta)} \cdot \sqrt{\log((\log n) / \delta)}}{\epsilon} \right)  \\
        &\geq \lambda^* - O\left( \frac{\log(n) \log(n/\delta) }{\epsilon}\right) - O\left(\frac{\log n \cdot \sqrt{\log(1/\delta)} \cdot \sqrt{\log((\log n) / \delta)}}{\epsilon} \right) \\
        &\geq \lambda^* - O\left( \frac{\log(n) \log(n/\delta) }{\epsilon}\right)
    \end{align*}
    as claimed.
\end{proof}
\fi

\end{document}